\documentclass[10pt]{article}

\usepackage{amsmath,amsfonts,amssymb,amsthm}
\usepackage[margin=0.8in]{geometry}
\usepackage{graphicx,xcolor}
\usepackage{float}
\usepackage{physics}
\usepackage[colorlinks]{hyperref}
\usepackage[capitalize]{cleveref}
\usepackage{algorithm}
\usepackage{algorithmic}
\usepackage{tikz}
\usepackage{braket}
\usepackage{cite}

\usepackage{changes}
\usepackage{appendix}

\usepackage{mathrsfs}
\usepackage{mathtools}
\mathtoolsset{centercolon}
\usepackage{bm}
\usepackage{multirow}
\usepackage{bbm}
\usepackage{subcaption}

\usetikzlibrary{arrows, shapes.gates.logic.US, calc}
\usetikzlibrary{shapes}
\usetikzlibrary{plotmarks}
\usetikzlibrary{quantikz2}
\usetikzlibrary{patterns}
\usepackage{tabularx}
\usepackage{subcaption}
\usepackage[final]{pdfpages}

\newcommand{\starnorm}[1]{\norm{#1}_{\star}}

\newcommand{\R}{{\mathbb{R}}}

\def \Re{\mathop{\rm Re}}

\def\R{{\mathbb R}}

\def\le{\leqslant}
\def\ge{\geqslant}

\DeclareMathOperator{\polylog}{\mathrm{polylog}}

\renewcommand{\Re}{\mathop{\mathrm{Re}}}

\DeclareMathOperator{\diag}{diag}

\newtheorem{theorem}{Theorem}

\newtheorem{lemma}[theorem]{Lemma}

\newtheorem{result}{Result}

\theoremstyle{definition}
\newtheorem{definition}{Definition}

\newtheorem{remark}{Remark}

\allowdisplaybreaks

\renewcommand{\sec}[1]{\hyperref[sec:#1]{Section~\ref*{sec:#1}}}
\newcommand{\app}[1]{\hyperref[app:#1]{Appendix~\ref*{app:#1}}}
\newcommand{\thm}[1]{\hyperref[thm:#1]{Theorem~\ref*{thm:#1}}}
\newcommand{\lem}[1]{\hyperref[lem:#1]{Lemma~\ref*{lem:#1}}}
\newcommand{\cor}[1]{\hyperref[cor:#1]{Corollary~\ref*{cor:#1}}}
\newcommand{\prb}[1]{\hyperref[prb:#1]{Problem~\ref*{prb:#1}}}
\newcommand{\fgr}[1]{\hyperref[fgr:#1]{Figure~\ref*{fgr:#1}}}
\newcommand{\tab}[1]{\hyperref[tab:#1]{Table~\ref*{tab:#1}}}

\newcommand{\beq}{\begin{equation}}
\newcommand{\eeq}{\end{equation}}
\newcommand{\beqa}{\begin{eqnarray}}
\newcommand{\eeqa}{\end{eqnarray}}
\DeclareMathOperator*{\cond}{cond}

\title{
\vspace{-5mm}
Structure-Preserving Quantum Method of Lines for \\Evolutionary PDEs with Mixed Boundary Conditions
}

\author{Yixuan Liang$^{1,2}$, Jin-Peng Liu$^{1,3,4,\thanks{Corresponding author: liujinpeng@tsinghua.edu.cn}}$ \\ 
\footnotesize $^{1}$ Yau Mathematical Sciences Center, Tsinghua University\\
\footnotesize $^{2}$ Qiuzhen College, Tsinghua University\\
\footnotesize $^{3}$ Institute for Applied Mathematics, Tsinghua University\\
\footnotesize $^{4}$ Beijing Institute of Mathematical Sciences and Applications\\
}

\date{}

\begin{document}

\maketitle

\vspace{-5mm}

\begin{abstract}
We give detailed analysis and circuit design of structure-preserving quantum algorithms for second-order linear evolutionary PDEs, including parabolic equations and hyperbolic equations with mixed Dirichlet, Neumann, and periodic boundary conditions and source terms. While prior quantum algorithms usually neglect the stability problem from the PDE-to-ODE reduction, our method-of-lines approach investigates the boundary lifting via Coons interpolation and boundary-aware discretization, so that the resulting semi-discrete systems are stable and compatible with efficient quantum ODE primitives. For the parabolic problem, we use a diagonal similarity transform to ensure the semi-discrete generator must have a positive semi-definite Hermitian part, and then solve the resulting ODE system by the optimal linear combination of Hamiltonian simulation (LCHS). For the hyperbolic problem, we rewrite the semi-discrete equation as an equivalent first-order system and solve it by Hamiltonian simulation. We implement our quantum algorithms with explicit block-encoding constructions and circuit implementations, as well as demonstrating the end-to-end complexity bounds together with spatial and quadrature error estimates. We conduct classical numerical experiments on the convection-diffusion equation, inhomogeneous heat equation, and Klein-Gordon equation to validate our structure-preserving analysis and algorithmic constructions.
\end{abstract}

\tableofcontents

\newpage

\section{Introduction}

Partial differential equations (PDEs) provide a fundamental mathematical language for diffusion, transport, wave propagation, electromagnetism, and many other time-dependent processes in science and engineering. In large-scale and high-dimensional settings, however, accurate numerical simulation of PDEs remains computationally demanding. 
Quantum algorithms offer a different computational paradigm: one aims to prepare a quantum state whose amplitudes encode a normalized approximation to the solution at the final time. This perspective has led to a substantial literature on quantum algorithms for differential equations and boundary value problems.

In this paper, we consider two classes of second-order linear evolutionary PDEs on a $d$-dimensional rectangular domain $\Omega=[0,a_1]\times\cdots\times[0,a_d]$. The first is a linear parabolic problem of the form
\begin{align}\label{intro 1}
 u_t = \Delta u + \sum_{l=1}^d c_l u_{x_l}  + f(x,t),
\end{align}
and the second is a linear hyperbolic problem of the form
\begin{align}\label{intro 2}
u_{tt}  = \Delta u + \sum_{l=1}^d c_l u_{x_l} - c_0^2 u + f(x,t),
\end{align}
subject to mixed Dirichlet, Neumann, and periodic boundary conditions in different coordinate directions. For the hyperbolic problem, we additionally assume that the coefficients of the first-order derivative terms vanish in the periodic directions, in order to ensure stability of the analytical solution. Our computational goal is: given an error tolerance $\varepsilon\in(0,1)$, prepare a quantum state $|\widetilde{v}\rangle$ such that
\[
\bigl\|\,|[u(x,T)]_x\rangle-|\widetilde{v}\rangle\bigr\|_2\le \varepsilon,
\]
where $[u(x,T)]_x$ denotes the solution sampled on a spatial grid at the final time $T$.

A standard classical and quantum route for solving PDEs is to discretize them directly into linear systems and then apply linear-system solvers. In the quantum setting, this leads to algorithms based on HHL-type \cite{harrow2009quantum} or more general quantum linear systems algorithms (QLSAs), which have been successful for elliptic equations such as the Poisson equation and for finite-element or spectral discretizations of boundary value problems \cite{cao2013poisson,montanaro2016fem,childs2021pde}. However, for time-dependent PDEs with initial data and source terms, a direct linear-system formulation typically incurs repeated queries for the preparation of the initial condition and the inhomogeneous term or complicated design of advanced quantum subroutines.

An alternative is the \emph{method of lines}: discretize the spatial variables, thereby reducing the PDE to a large system of ODEs, and then apply an efficient quantum ODE solver. This route is especially attractive because if the semi-discrete ODE system has the right operator structure, one can leverage state-of-the-art quantum primitives. In particular, when the coefficient matrix is anti-Hermitian, the problem reduces to Hamiltonian simulation, for which efficient algorithms based on quantum signal processing and quantum singular value transformation (QSVT) are available \cite{low2017qsp,qsvt}. More generally, if the coefficient matrix has positive semi-definite real part, one can apply the linear-combination-of-Hamiltonian-simulation (LCHS) framework \cite{lchs1,lchs2,lchs3} or Schrodingersation framework \cite{jin2024prl,jin2023pra,hu2024circuits,jin2024heatbc,jin2023maxwell,ma2024maxwellcircuit,jin2024inhom}.

However, passing from PDE to ODEs are not a purely formal step. The main technical challenge is not merely discretization, but \emph{structure-preserving discretization}: the semi-discrete operator must fall into the admissible class of the downstream quantum primitive. For the parabolic problem, to obtain an efficient quantum solver, the resulting ODE system must have a coefficient matrix with \emph{positive semi-definite} real part in order to be compatible with LCHS. For the hyperbolic problem, the semi-discrete dynamics must be rewritten into an anti-Hermitian form suitable for Hamiltonian simulation. This requirement becomes particularly delicate under mixed boundary conditions, especially for Neumann boundaries, where standard finite-difference choices may not preserve the structural properties required by LCHS or Hamiltonian simulation.

\subsection{Main results}

To benefit from efficient quantum algorithms for ODEs, the semi-discrete system must be
\emph{structurally compatible} with the downstream quantum primitive.
Accordingly, the main contribution of this work is to investigate \emph{boundary-aware} finite-difference schemes and transformations that preserve precisely the operator structures required by LCHS (its Hermitian part of the semi-discrete generator is positive
semidefinite) or Hamiltonian simulation (its semi-discrete generator is anti-Hermitian).

We now summarize the two algorithmic pipelines and the resulting complexity bounds.

\paragraph{Parabolic PDEs.}
For the parabolic problem \eqref{intro 1}, our solver proceeds in four steps:
\begin{enumerate}
    \item We employ the boundary lifting via Coons interpolation, investigate a boundary-aware finite-difference discretization under mixed Dirichlet, Neumann, and periodic boundary conditions, and choose the spatial mesh sizes $N_l$ according to the method-of-lines error $\epsilon_N$, where the error $\varepsilon_N$ is second-order, i.e. $\varepsilon_N=\mathcal{O}(1/N_l^2).$  The discretization produces a semi-discrete ODE system with the sign and symmetry structure needed for the subsequent LCHS-compatible transformation.
    \item If there are first-order terms $c_lu_{x_l}$, we further introduce a diagonal similarity transform to complete the structure-preserving construction, so that the resulting semi-discrete coefficient matrix has positive-semidefinite real part and therefore becomes compatible with the LCHS framework.
    \item We explicitly construct block-encodings of the transformed coefficient matrix using basic gates.
    \item We then discretize the time integral in the inhomogeneous term by composite Gauss--Legendre quadrature, and finally apply the optimal LCHS method to the resulting ODEs.
\end{enumerate}

\begin{result}[Informal version of Theorem~\ref{para final thm}]\label{result1}
Assume that the solution $u$ is sufficiently smooth, and define
\[
\Gamma
:=
\max_{1\le l\le d}
\Bigl(
\max\{\|\partial_{x_l}^3 u\|_\infty,\|\partial_{x_l}^4 u\|_\infty\}
\Bigr),
\qquad
q
:=
\frac{
\|[u_0(x)]_x\|_\star+\int_0^T \|[f(x,s)]_x\|_\star\,ds
}{
\|[u(x,T)]_x\|_\star
}.
\]
Then, for any $\epsilon\in(0,1)$, one can choose the spatial discretization parameters,
the quadrature parameters, and the internal LCHS precisions so as to prepare a state
$|\widetilde{v}(T)\rangle$ satisfying
\(
\bigl\|\,|[u(x,T)]_x\rangle-|\widetilde{v}(T)\rangle\bigr\|_2\le \epsilon.
\)
The overall costs are
\[
\begin{cases}
\widetilde{\mathcal{O}}\!\left(
q\,
\dfrac{T^{2} d^{2}\Gamma}
{\|[u(x,T)]_x\|_\star\,\epsilon}
\right)
& \text{basic gates},\\[1.0em]
\widetilde{\mathcal{O}}\!\left(
q\,
\dfrac{T^{2} d\,\Gamma}
{\|[u(x,T)]_x\|_\star\,\epsilon}
\right)
& \text{queries to a relevant quadrature oracle },\\[1.0em]
\mathcal{O}(q)
& \text{queries to the state-preparation oracles } ,\\[1.0em]
\widetilde{\mathcal{O}}\!\left(
d\log\frac{T\Gamma}{\|[u(x,T)]_x\|_\star\,\epsilon}
\right)
& \text{qubits}.
\end{cases}
\]
\end{result}

Result~\ref{result1} presents a complete complexity bound within the parabolic PDE solver framework. Its main significance is that the spatial discretization, the quadrature of the inhomogeneous term, and the internal LCHS precision are balanced in a single end-to-end theorem guaranteeing the final accuracy~$\epsilon$. 
The resulting quadratic dependence on $T,d$ and first-order dependence on $\varepsilon$ are not claimed to be optimal; rather, they mainly reflect the present method-of-lines discretization, the tensor-product finite-difference construction, and the current quadrature strategy. 

\medskip
\noindent\textbf{Stability analysis.} For simulating the ODE system $v'(t)=-Av(t)+b(t)$, the LCHS method needs a stronger structural requirement than the usual spectral stability condition $\Re \lambda(A)\ge 0$. In the LCHS framework, one needs the Hermitian part of the semi-discrete generator itself to be positive semidefinite $L=\frac{1}{2}(A+A^\dag) \succeq 0$, rather than merely requiring all eigenvalues to have nonnegative real parts. If this condition fails, one is typically forced to shift the spectrum to recover positivity, which in turn introduces an exponential factor of the form $\exp(\mu T)$, where $\mu$ is the size of the required shift. 
In many PDE discretizations, such as ghost-point discretization, the factor is $\mu=\mathcal{O}(N^2)$, where $N=\mathcal{O}(1/\sqrt{\varepsilon})$ is the number of nodes. At this point, the exponential factor $\exp(\mu T)$ is too large and unacceptable in practice even for short-time simulation.
Our structure-preserving discretization precisely produces semi-discrete operators which satisfy this stronger positivity condition by midpoint nodes in Section \ref{section 3.2} and similarity transform in Section \ref{section 3.3}, so that we do not need to shift the spectrum.

\paragraph{Hyperbolic PDEs.}
For the hyperbolic problem \eqref{intro 2}, the overall route is analogous, but the quantum primitive is different:
\begin{enumerate}
    \item We employ the similar boundary lifting, investigate a boundary-aware spatial discretization under mixed Dirichlet, Neumann, and periodic boundary conditions, and choose $N_l$ according to the spatial discretization error $\epsilon_N$, where the error $\varepsilon_N$ is also second-order. This discretization is chosen so that, together with the subsequent similarity transform, the resulting semi-discrete second-order ODE system has the structure needed for an efficient reduction to Hamiltonian simulation.
    \item We then apply the corresponding similarity transform and rewrite the semi-discrete second-order ODE system as an equivalent first-order system by introducing auxiliary variables. The resulting extended system can be represented in a form compatible with Hamiltonian simulation.
    \item We explicitly construct block-encodings of the matrices in the extended system using basic gates.
    \item We discretize the inhomogeneous time integral by composite Gauss--Legendre quadrature, and then implement the solution operator through block-encoding and Hamiltonian simulation.
\end{enumerate}

\begin{result}[Informal version of Theorem~\ref{hy final thm}]\label{result2}
Assume that the solution $u$ is sufficiently smooth, and define
\[
\Gamma
:=
\max_{1\le l\le d}
\Bigl(
\max\{\|\partial_{x_l}^3 u\|_\infty,\|\partial_{x_l}^4 u\|_\infty\}
\Bigr),
\qquad
q
:=
\frac{
\|[u_0(x)]_x\|_\star
+
T\|[\phi(x)]_x\|_\star
+
\int_0^T\int_0^s \|[f(x,\tau)]_x\|_\star\,d\tau\,ds
}{
\|[u(x,T)]_x\|_\star
}.
\]
Then, for any $\epsilon\in(0,1)$, one can choose the discretization and quadrature parameters so as to prepare a state
$|\widetilde{v}(T)\rangle$ satisfying
\(
\bigl\|\,|[u(x,T)]_x\rangle-|\widetilde{v}(T)\rangle\bigr\|_2\le \epsilon.
\)
The overall costs are
\[
\begin{cases}
\widetilde{\mathcal{O}}\!\left(
q\,
T^{2} d^{3/2}
\left(
\dfrac{\Gamma}{\|[u(x,T)]_x\|_\star\,\epsilon}
\right)^{1/2}
\right)
& \text{basic gates},\\[1.0em]
\widetilde{\mathcal{O}}\!\left(
q\,
T^{2} d^{1/2}
\left(
\dfrac{\Gamma}{\|[u(x,T)]_x\|_\star\,\epsilon}
\right)^{1/2}
\right)
& \text{queries to a relevant quadrature oracle },\\[1.0em]
\mathcal{O}(q)
& \text{queries to the state-preparation oracles } ,\\[1.0em]
\widetilde{\mathcal{O}}\!\left(
d\log\dfrac{T\Gamma}{\|[u(x,T)]_x\|_\star\,\epsilon}
\right)
& \text{qubits}.
\end{cases}
\]
\end{result}

Similarly, Result~\ref{result2} presents a complete complexity bound for the present hyperbolic PDE solver framework, rather than as an optimal one. The dependence on $T^2$ and $d^{3/2}$ comes from the current semi-discretization, the first-order reformulation of the second-order dynamics, and the corresponding Hamiltonian-simulation-based implementation. Nevertheless, the theorem shows that after converting the hyperbolic problem into a Hamiltonian-simulation-compatible form, one obtains explicit end-to-end resource bounds with polynomial dependence on the evolution time, dimension, and target accuracy. Compared with the parabolic case, the dependence on $d$ and $\epsilon$ is milder here, reflecting the different downstream quantum primitive used in the hyperbolic solver.

\paragraph{Numerical experiments.}
Our classical numerical experiments are designed to validate the main mathematical and algorithmic ingredients of the proposed framework. By the case study of the convection-diffusion equation, inhomogeneous heat equation, and Klein-Gordon equation, we confirm that the PDE-to-ODE reduction, the structure-preserving discretization, and the subsequent quantum-algorithmic constructions work together as predicted. Specifically, we verified the following items:
\begin{itemize}
    \item We tested whether the boundary-aware discretizations and transformations introduced in this paper indeed produce semi-discrete operators compatible with LCHS in the parabolic case and with Hamiltonian simulation in the hyperbolic case (Figure \ref{para1 1},  Figure \ref{para1 2}, Figure \ref{para2 1} and Figure \ref{hyper1 1}).
    \item We illustrated the second-order accuracy of the resulting approximations for representative PDEs with mixed boundary conditions and source terms (Figure \ref{para4 1} and Figure \ref{hyper1 1}).
    \item Furthermore, we compared the practical performance of different kernel functions in the development of the LCHS method (Figure \ref{para3 1} and Figure \ref{para3 2}).
\end{itemize}

\paragraph{Main contributions.}
The two results above should be viewed as end-to-end PDE solver statements rather than abstract ODE consequences.
More specifically, the present paper contributes the following:
\begin{itemize}
    \item We develop a structure-preserving, boundary-aware method-of-lines framework for two classes of second-order linear evolutionary PDEs, including mixed Dirichlet, Neumann, and periodic boundary conditions.
    \item For the parabolic problem, we combine a carefully chosen finite-difference discretization and a diagonal similarity transform so that the semi-discrete system satisfies the structural assumptions required by LCHS.
    \item For the hyperbolic problem, we convert the semi-discrete second-order equation into an equivalent first-order system whose solution can be implemented by Hamiltonian simulation.
    \item In both cases, we treat inhomogeneous terms explicitly, derive spatial and quadrature error bounds, construct the required block-encodings and oracles, and obtain explicit gate, query, and qubit complexity bounds for preparing the final-time solution state.
\end{itemize}

\subsection{Related works}

\paragraph{Quantum linear system algorithms for linear ODEs and PDEs.}

One major line of work on quantum differential-equation algorithms is based on first discretizing the equation into a linear system and then applying a quantum linear systems algorithm. 
At the ODE level, quantum algorithms include multistep methods, Taylor-series-based methods, spectral methods, and linear-system-based Dyson-series methods~\cite{berry2014ode,berry2017ode,childs2020spectral,berry2022dysonode}. Their common feature is that they reformulate the problem globally as a structured linear system and then solve it by quantum linear-algebraic primitives. These works provide an important benchmark for quantum differential-equation algorithms. 

At the PDE level, several important quantum algorithms also follow a linear-system-based philosophy. Early examples include quantum algorithms for the Poisson equation on finite-difference grids \cite{cao2013poisson}. A more refined analysis was later developed for finite-element methods by Montanaro and Pallister, who showed that quantum linear-system techniques can lead to polynomial speedups \cite{montanaro2016fem}. Later, high-precision quantum algorithms for PDEs were further developed, which combine adaptive-order finite differences and spectral methods with high-precision quantum linear-system solvers; in particular, researchers obtained a finite-difference algorithm for the Poisson equation and a spectral algorithm for more general second-order elliptic equations, with polylogarithmic dependence on the error tolerance \cite{childs2021pde,kharazi2025explicit}.
At the same time, they are structurally different from the present paper. Their starting point is a linear-system formulation of the discretized problem, whereas our approach proceeds through the method of lines and then invokes quantum ODE primitives.

\paragraph{Evolution-based quantum algorithms for linear ODEs and PDEs.}
Another major route to quantum algorithms for time-dependent PDEs is to discretize only the spatial variables and then solve the resulting large ODE system by a quantum evolution algorithm. This method-of-lines perspective seeks to implement the underlying time evolution more directly, and is therefore particularly natural for evolutionary PDEs. At the ODE level, a representative example is the quantum time-marching method~\cite{fang2023timemarching}, which advances the solution over short time intervals without reformulating the whole problem as a single linear system. From the complexity-theoretic side, An, Liu, Wang, and Zhao studied the limitations and fast-forwarding of quantum differential equation solvers, showing that the efficiency of quantum evolution algorithms depends strongly on structural properties of the generator and that certain classes admit substantial fast-forwarding improvements~\cite{an2025quantum}. 

Beyond these generic evolution-based ODE solvers, there is an increasingly important class of approaches that exploit additional structure in the semi-discrete operator. Depending on the form of the discretized dynamics, one may reduce the problem directly to Hamiltonian simulation, apply LCHS to stable non-unitary systems, or use transformation frameworks such as Schr\"odingerisation. Since our own algorithms belong to this structure-exploiting method-of-lines paradigm, we review these three directions separately below.

\paragraph{LCHS for linear ODEs and PDEs.}
For general non-unitary linear dynamics, An, Liu, and Lin introduced the LCHS framework, which expresses a stable non-unitary propagator as a linear combination of unitary Hamiltonian evolutions \cite{lchs1}. 
This framework was subsequently sharpened by An, Childs, and Lin, who introduced improved kernel functions and obtained near-optimal dependence on all relevant parameters \cite{lchs2}. Very recently, Low and Somma generalized this line of work and achieved optimal query complexity, while also showing exponential convergence of a uniform trapezoidal discretization \cite{lchs3}. More recently, randomized implementations of LCHS have been proposed to reduce circuit overheads and improve practical efficiency for simulating linear non-unitary dynamics, providing further algorithmic support for structure-based PDE solvers built on LCHS~\cite{yang2025circuit}.

Compared with Schrodingerisation in the next paragraph, the literature on applying LCHS directly to PDEs is still limited. A recent representative work by Sato, Tezuka, Kondo, and Yamamoto proposed an LCHS-based quantum algorithm for second-order linear PDEs of nonconservative systems with spatially varying parameters, using finite-difference discretization and explicit qubit-operator constructions \cite{sato2025lchspde}. More recently, Lu, Li, Liu, and Liu extended LCHS to an infinite-dimensional setting and discussed applications to linear parabolic PDEs and related infinite-dimensional dynamics \cite{lu2025inflchs}. And an end-to-end LCHS-based quantum algorithm was developed for rapidly distorted turbulence \cite{meng2026toward}.

\paragraph{Schrodingerisation for linear ODEs and PDEs.}

A second major route to quantum PDE algorithms is Schrodingerisation. Introduced by Jin, Liu, and Yu, this framework uses a warped phase transformation to map linear ODEs and PDEs with non-unitary dynamics into Schrodinger-type systems in one higher dimension \cite{jin2024prl,jin2023pra}. 
At the level of explicit circuit constructions and complexity analysis, several PDE-focused Schrodingerisation papers are now available. Hu, Jin, Liu, and Zhang presented explicit quantum circuits for solving PDEs via Schrodingerisation and illustrated the construction using the heat equation and the advection equation \cite{hu2024circuits}. Jin, Liu, and Yu further studied the heat equation with physical boundary conditions and provided detailed circuit constructions together with complexity analyses \cite{jin2024heatbc}. Schrodingerisation has also been developed for Maxwell's equations \cite{jin2023maxwell,ma2024maxwellcircuit}. In addition, Jin, Liu, and Ma analyzed Schrodingerisation-based quantum algorithms for general linear dynamical systems with inhomogeneous terms \cite{jin2024inhom}. From a broader theoretical viewpoint, Li recently proposed a moment-matching dilation framework for linear non-unitary dynamics, showing that both Schrodingerisation and LCHS can be viewed as special cases of a more general unitary-embedding principle~\cite{li2025linear}. 

These Schrodingerisation works are closely related to the present paper. Existing Schrodingerisation-based algorithmic results have developed explicit constructions for several important model problems. At the same time, the current literature also indicates that explicit quantum solvers for more general PDE settings remain highly problem-dependent. By contrast, the present work advances along a different axis: we develop an LCHS- and Hamiltonian-simulation-based method-of-lines framework for two classes of second-order linear evolutionary PDEs, incorporating mixed Dirichlet/Neumann/periodic boundary conditions, source terms, spatial discretization error analysis, and explicit end-to-end complexity bounds. In this sense, our contribution is a complete PDE-to-algorithm construction together with error analysis, in which the semi-discrete operators are deliberately designed so as to be compatible with LCHS or Hamiltonian simulation.

These Schrodingerisation works are closely related to the present paper, but they emphasize a different aspect of the PDE-to-quantum-algorithm pipeline. Existing Schrodingerisation-based results mainly develop transformation-based formulations, explicit circuit constructions, and complexity analyses for several important model problems. 
But two further differences are worth noting. First, the existing Schrodingerisation-based PDE and circuit papers formulate their main analyses after spatial semi-discretization, focusing on the Schr\"odingerised discrete dynamics and the associated circuit implementation, rather than on a single end-to-end theorem that also balances the method-of-lines error of the original PDE discretization. Second, for Neumann boundary conditions, the standard ghost-point closures adopted in existing Schrodingerisation work \cite{jin2024heatbc} do not preserve the stronger semidefinite structure needed in our setting. Combined with the later Schrodingerisation analysis of unstable modes \cite{jin2024inhom}, where positive eigenvalues of the Hermitian part enter the recovery cost exponentially through a factor of the form \(e^{p^\star}\) with \(p^\star \gtrsim \lambda_{\max}^+(H_1)T\), where $\lambda_{\max}^+ = \mathcal{O}(N^2) = \mathcal{O}(1/\epsilon)$ in the discretized PDE setting. This indicates that such ghost-point discretizations are not suitable for controlled error bounds in our LCHS framework.

By contrast, the present paper places additional emphasis on the semi-discrete operator itself. In our setting, the treatment of mixed Dirichlet/Neumann/periodic boundary conditions, the inclusion of source terms, and the control of long-time end-to-end error all depend crucially on how the spatial discretization is designed before the downstream quantum primitive is applied. In particular, for Neumann boundaries, we are led to adopt a different discretization strategy, since standard ghost-point closures do not provide the operator structure required in our LCHS- and Hamiltonian-simulation-based framework for controlled long-time and end-to-end error bounds. In this sense, our contribution is a complete PDE-to-algorithm construction, together with error analysis, in which the semi-discrete operators are deliberately designed so as to be compatible with LCHS or Hamiltonian simulation.

\paragraph{Hamiltonian simulation algorithms for linear PDEs.}

There also exist specific quantum algorithms that are based more directly on Hamiltonian simulation. A representative example is the quantum wave equation solver of Costa, Jordan, and Ostrander~\cite{waveequ}, which treats the wave equation under Dirichlet and Neumann boundary conditions by reducing the discretized dynamics to a form amenable to Hamiltonian simulation. Our treatment of the hyperbolic problem is partly inspired by this paper.
Sato \emph{et al.} developed scalable quantum circuits for Hamiltonian simulation of certain hyperbolic PDEs and applied them to advection and wave equations \cite{sato2024hyperbolic}. Sato, Tezuka, Kondo, and Yamamoto later proposed a Hamiltonian-simulation-based quantum algorithm for the advection equation using discrete time-marching operators embedded into Hamiltonian simulations \cite{sato2024advection}. 

The present paper intersects with this line in the hyperbolic case. Our distinction is that we simultaneously treat a parabolic class via LCHS and a hyperbolic class via Hamiltonian simulation within a unified method-of-lines perspective, while also incorporating mixed boundary conditions and source terms and analyzing the complexity.

\paragraph{Quantum algorithms for nonlinear ODEs and PDEs.}
Quantum algorithms for nonlinear differential equations are much less understood than their linear counterparts and are largely beyond the scope of the present paper. The main rigorous route studied so far is based on linearization in an enlarged space, most notably through Carleman linearization, which embeds a polynomial nonlinear system into a higher-dimensional linear one and then applies quantum algorithms for linear systems or linear ODEs~\cite{LiuKoldenKroviLoureiroTrivisaChilds2021,Krovi2023improvedquantum,LiuAnFangWangLowJordan2023,CostaSchleichMoralesBerry2025,wu2025quantum,jennings2025quantum,li2026efficient,wang2026quantumalgorithmsnonlineardifferential}. 
Other nonlinear routes include early non-Hermitian-Hamiltonian-style formulations and more recent homotopy-based approaches, such as quantum homotopy perturbation for dissipative nonlinear ODEs and HAM-based frameworks for nonlinear PDEs~\cite{Lloyd2020NonlinearDE,XueWuGuo2021QHPM,Liao2024HAMSchrodingerisation,XueEtAl2025QHAM}.
Related linear-representation and generalized-solution viewpoints include Koopman--von Neumann, Liouville, level-set, and Young-measure approaches~\cite{Joseph2020KvN,JinLiu2024LevelSet,LiouvilleComparison2023,jennings2026quantumkoopmanalgorithms,jin2026quantumalgorithmsyoungmeasures}.
These works provide an important parallel line of research, but they address nonlinear dynamics through substantially different lifting mechanisms and are not directly comparable to the present linear PDE framework.

\paragraph{Comparison with the present work.}

For clarity, we summarize the positioning of this paper relative to the above literature. Compared with QLSA-based PDE algorithms, our method avoids a direct linear-system formulation and instead leverages quantum ODE primitives through the method of lines. Compared with Schr\"odingerisation, our approach is more specialized in scope, but it provides a more explicit structure-preserving discretization framework for mixed boundary conditions and a direct PDE-to-algorithm route based on LCHS and Hamiltonian simulation. Compared with the existing LCHS-based PDE literature, our work provides a unified treatment of one parabolic and one hyperbolic family, both with source terms and mixed Dirichlet/Neumann/periodic boundary conditions, together with PDE-level error analysis and end-to-end complexity bounds.

We also make repeated use of standard quantum algorithmic primitives such as Hamiltonian simulation, QSVT, and the linear combination of unitaries (LCU) subroutine \cite{lcu,low2017qsp,qsvt}.

\subsection{Organization}

The rest of the paper is organized as follows. In Section~\ref{section 2}, we review the block-encoding and QSVT-based Hamiltonian simulation together with the optimal LCHS framework used later in the paper. In Section~\ref{section 3}, we develop the method-of-lines discretization, boundary treatment, similarity transform, and error analysis for the parabolic problem, and in Section~\ref{section 4} we present the corresponding quantum algorithm and complexity analysis. In Sections~\ref{section 5} and~\ref{section 6}, we carry out the analogous program for the hyperbolic problem, including the auxiliary-variable reformulation that reduces the semi-discrete dynamics to Hamiltonian simulation. In Section~\ref{section 7} we present numerical results illustrating the constructions and validating the proposed algorithms. Finally, in Section~\ref{section 8}, we discuss some natural questions and directions.

\section{Preliminaries}\label{section 2}
This section reviews the basic notions of block-encoding and Hamiltonian simulation that will be used throughout the paper. We then summarize the optimal Linear-Combination-of-Hamiltonian-Simulation (LCHS) framework for solving linear ODEs, following the formulation in~\cite{lchs3}.

\subsection{Notations}
\begin{itemize}
\item We use $\|\cdot \|_2$ to denote the $2$-norm of a matrix or a vector. And $\mathrm{cond}(A):=\|A\|_2 \| A^{-1}\|_2$ denotes the condition number in the $2$-norm. 

\item $f=\mathcal{O}(g)$ or $f\lesssim g$ means that there exists a constant $M>0$ such that $f(x)\leq Mg(x)$ for all $x$. $f=\widetilde{\mathcal{O}}(g)$ means $f=\mathcal{O}(g\cdot \polylog(g))$. 

\item For a function $f(x)=f(x_1,\cdots,x_d): \Omega\subset \mathbb{R}^d\rightarrow \mathbb{R}$, suppose the grid nodes in $l$-th dimension are $\{x^{(j)}_l\}_{j=0}^{N_l-1},\ l=1,\cdots,d$. Restricting the function $f$ to these nodes yields a vector, denoted by $[f(x)]_{x_1,\cdots,x_d}$ or $[f(x)]_{x}$.

\item In PDE solver, we need to balance the norm based on the number of grid nodes. For a vector $u\in \mathbb{R}^N$, we define its mean $l_2$ norm as $\|u\|_{\star}:=\sqrt{\frac{1}{N}}\|u\|_2$.

\item For a non-zero vector $u$, we use $|u\rangle:=u/\|u\|_2$ to denote its normalized unit vector.

\item $\lceil x\rceil$ denotes the greatest integer greater than or equal to $x$.
\end{itemize}

\subsection{Block-encoding and Hamiltonian simulation}
Block-encoding provides a standard interface that allows quantum circuits (which are unitary by nature) to access general matrices. It plays a central role in quantum algorithms for matrix functions, including Hamiltonian simulation via the Quantum Singular Value Transformation (QSVT) framework~\cite{qsvt}.
\begin{definition}[Block-encoding]\label{def blockencoding}
Let $A\in\mathbb{C}^{N\times N}$ be an $n$-qubit matrix with $N=2^n$. If there exist $\alpha\in\mathbb{R}_{+}$, $\varepsilon\in\mathbb{R}_{+}$, and an $(m+n)$-qubit unitary $U_A$ such that
\begin{equation*}
\left\|A-\alpha\Big( (\langle 0^m|\otimes I_N)\,U_A\,(|0^m\rangle\otimes I_N)\Big)\right\|_2\le \varepsilon,
\end{equation*}
then $U_A$ is called an $(\alpha,m,\varepsilon)$-block-encoding of $A$, denoted by $U_A\in(\alpha,m,\varepsilon)\mathrm{BE}(A)$.
If $\varepsilon=0$, we say the block-encoding is exact and write $U_A\in(\alpha,m)\mathrm{BE}(A)$.
\end{definition}
Intuitively, block-encoding means that when the $m$ ancilla qubits are projected onto $|0^m\rangle$, the induced transformation on the $n$-qubit system register approximates $A/\alpha$. The lemmas about construction of block-encoding are in Appendix \ref{appendix blockencoding}.

Consider the Schrodinger equation
\begin{equation*}
\frac{\mathrm{d}}{\mathrm{d}t}u(t) = -i H u(t),
\end{equation*}
where $H$ is Hermitian. Its analytic solution is $u(t)=e^{-itH}u(0)$. So to solve it numerically, we need to produce the unitary $e^{-itH}$, which is precisely the \emph{Hamiltonian simulation} problem. By the Quantum-Singular-Value-Transformation method \cite{qsvt}, given a block-encoding of $H$, one can implement $e^{-itH}$ with optimal asymptotic query complexity in all relevant parameters. 

\begin{theorem}[Hamiltonian simulation based on QSVT{\cite[Corollary~60]{qsvt}}]\label{thm qsvt hs}
    Let $\varepsilon \in (0,\frac{1}{2})$, $t \in \mathbb{R}$ and $\alpha \in \mathbb{R}_+$.
Let $U$ be an $(\alpha,a,0)$-block-encoding of the unknown Hamiltonian $H$.
In order to implement an $\varepsilon$-precise Hamiltonian simulation unitary $V$ which is an $(1,a+2,\varepsilon)$-block-encoding of $e^{itH}$, it is necessary and sufficient to use the unitary $U$
 a total number of times
\[\Theta\left( \alpha|t| + \frac{\log(1/\varepsilon)}{\log\!\big(e + \log(1/\varepsilon)/(\alpha|t|)\big)} \right
).\]
\end{theorem}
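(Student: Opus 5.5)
The plan is to construct the approximation to $e^{itH}$ by quantum singular value transformation applied to the block-encoded matrix $H/\alpha$. We work with $\tau:=\alpha t$ and treat the even and odd parts separately. Since $e^{i\tau x}=\cos(\tau x)+i\sin(\tau x)$, QSVT can implement $\cos(\tau x)$ (an even polynomial) and $\sin(\tau x)$ (an odd polynomial) individually, and the two are then combined.

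The first step is polynomial approximation via the Jacobi--Anger expansion,
\[
\cos(\tau x)=J_0(\tau)+2\sum_{k\ge1}(-1)^kJ_{2k}(\tau)T_{2k}(x),\qquad
\sin(\tau x)=2\sum_{k\ge0}(-1)^kJ_{2k+1}(\tau)T_{2k+1}(x).
\]
We truncate at degree $R$. Using the Bessel tail bound $|J_m(\tau)|\le (e\tau/2m)^m/\sqrt{2\pi m}$, the truncation error on $[-1,1]$ is at most $\varepsilon/4$ once
\[
R=\Theta\!\left(\tau+\frac{\log(1/\varepsilon)}{\log(e+\log(1/\varepsilon)/\tau)}\right).
\]
We then rescale each polynomial slightly, for example by a factor $1/(1+\varepsilon/4)$, so that it is bounded by $1$ in modulus on $[-1,1]$. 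This is the condition QSVT needs.

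The second step applies QSVT to the Hermitian block-encoding $U$. QSVT gives, with one extra ancilla and $R$ uses of $U$ and $U^\dagger$, a block-encoding of $P(H/\alpha)$ for any real polynomial $P$ of definite parity that is bounded by $1$. We build $(1,a+1)$ block-encodings of the cosine and sine approximants this way. Combining them as $\tfrac12(\cos+i\sin)$ by LCU costs one more ancilla, which accounts for the $a+2$, and yields a subnormalized block-encoding of roughly $e^{i\tau x}/2$.

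The third step restores unit normalization with one round of oblivious amplitude amplification, which works because the target is close to unitary. This triples the query count, still $O(R)$, and keeps the error at $O(\varepsilon)$ after the constants are adjusted. The result is a $(1,a+2,\varepsilon)$ block-encoding of $e^{itH}$.

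The main obstacle is the error accounting in amplitude amplification, which must use the robust version since the state is only approximately unitary. The Bessel tail estimate that yields the exact $\log/\log\log$ form is the other delicate point.

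For necessity, we would use the standard lower bounds. The $\Omega(\alpha|t|)$ term comes from the no-fast-forwarding theorem, for example by reduction to computing parity. The $\varepsilon$-dependent term follows from a polynomial-method argument: a circuit making $R$ queries produces a degree-$R$ polynomial in the entries of $H/\alpha$, and approximating $e^{i\tau x}$ on $[-1,1]$ requires the stated degree because of the Bessel-coefficient lower bounds. The paper cites \cite{qsvt} for this, and we would defer to that reference for the lower bound.
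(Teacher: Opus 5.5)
The paper gives no proof of its own here; it imports the result from \cite{qsvt}, and your outline (Jacobi--Anger truncation with the Bessel tail bound, QSVT on the even and odd parts separately, an LCU combination with weight $1/2$, one round of robust oblivious amplitude amplification, and the lower bound taken from the literature) is essentially the argument given in that source. One correction to your necessity sketch: after $R$ queries the output entries are degree-$R$ polynomials in the entries of $U$, not of $H/\alpha$. For a scalar $H=\alpha x$ encoded by a $2\times 2$ reflection these entries are $x$ and $\sqrt{1-x^2}$, so the output is a degree-$R$ trigonometric polynomial in $\theta=\arccos x$, and the Fourier coefficients $i^kJ_k(\tau)$ of $e^{i\tau\cos\theta}$ then give both the $\Omega(\alpha|t|)$ term and the $\log(1/\varepsilon)$-type term.
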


\subsection{The optimal LCHS method}
We consider the linear inhomogeneous ODE on $t\in[0,T]$,
\begin{align}\label{linear ODEs}
\frac{\mathrm{d}}{\mathrm{d}t}u(t) \;=\; -A(t)u(t) + b(t), 
\quad u(0)=u_0,
\end{align}
where $u(t)\in\mathbb{C}^N$, $A(t)\in\mathbb{C}^{N\times N}$, and $b(t)\in\mathbb{C}^N$. 
Define the time-ordered propagator from time $s$ to time $t$ as
\begin{align*}
U_s(t) \;:=\; \mathcal{T}\exp\!\left(-\int_s^t A(\tau)\,d\tau\right),
\quad 0\leq s\leq t\leq T,
\end{align*} where $\mathcal{T}$ denotes the time ordering operator. We can get its analytic solution \begin{align*}
u(t) \;=\; U_0(t)\,u_0 \;+\; \int_0^t U_s(t)\,b(s)\,ds.
\end{align*}
So we only need to consider how to block-encode the matrix $U_s(t)$ for arbitrary $s\leq t$.

The LCHS framework transforms the general matrix $U_s(t)$ into an integral of Hamiltonian simulation.
Concretely, assume that $A(\tau)$ admits a decomposition
\begin{equation*}
A(\tau)=L(\tau)+iH(\tau),
\end{equation*}
where $L(\tau)=\frac{1}{2}(A(\tau)+A^\dag(\tau))$ and $H(\tau)=\frac{1}{2i}(A(\tau)-A^\dag(\tau))$ are Hermitian. We assume $L(\tau)\succeq 0$ in the regimes where LCHS applies.
LCHS introduces a kernel function $\widehat{f}(k)$ and a truncation parameter $R>0$ such that
\begin{equation*}
U_s(t)\approx O_R(t),\quad
O_R(t):=\frac{1}{\sqrt{2\pi}}\int_{-R}^{R}\widehat{f}(k)\,U_s(t,k)\,dk,
\end{equation*}
where the integrand $U_s(t,k)$ is a time-ordered unitary evolution,
\begin{equation*}
U_s(t,k):=\mathcal{T}\exp\!\left(-i\int_s^t \big(kL(\tau)+H(\tau)\big)\,d\tau\right).
\end{equation*}
The formula reduces a non-unitary dynamics problem to a continuous linear combination of Hamiltonian simulations, enabling the use of QSVT/Dyson-series Hamiltonian simulation subroutines together with the standard LCU (linear-combination-of-unitaries) method.

To obtain a finite quantum circuit, the integral in $O_R(t)$ is discretized using a uniform trapezoidal rule with step size $h>0$. Let $h$ be the quadrature step size, then
\begin{equation*}
O_R(t)\approx \frac{h}{\sqrt{2\pi}}\sum_{j=-R/h}^{R/h}\widehat{f}(jh)\,U_s(t,jh),
\end{equation*}
which is a linear combination of unitary evolutions indexed by $j$.
In practice, each $U_s(t,jh)$ is implemented only approximately; we denote by $\widetilde{U}_s(t,jh)$ the resulting Hamiltonian simulation primitive.

Following~\cite{lchs3}, we adopt a specific kernel family $\widehat{f}_2(k;\gamma,c)$ with parameters $\gamma>0$ and $c>0$:
\begin{equation*}
\widehat{f}_2(k;\gamma,c)
:=\sqrt{\frac{2}{\pi}}\,\frac{e^{-c(ik-1)}}{1+k^2}\,
\exp\!\left(-\frac{k^2+1}{4\gamma^2}\right).
\end{equation*}
The parameters $(\gamma,R)$ and the quadrature step size $h$ are chosen to control two independent error sources:
(i) the LCHS truncation approximation error $\varepsilon_{\mathrm{lchs}}$, and
(ii) the quadrature discretization error $\varepsilon_{\mathrm{quad}}$.

A convenient framework is as follows:
\begin{enumerate}
\item \textbf{Kernel parameters and truncation range.}
For the truncation error $\varepsilon_{\mathrm{lchs}}$, fix $c>0$ and choose
\begin{align}\label{lchs gamma}
\gamma=\frac{1}{c}\sqrt{c+\log\!\left(\frac{1+1/(2\pi)}{\varepsilon_{\mathrm{lchs}}}\right)}
=\mathcal{O}\!\left(\sqrt{\log\frac{1}{\varepsilon_{\mathrm{lchs}}}}\right),
\quad
R=2c\gamma^2=\mathcal{O}\!\left(\log\frac{1}{\varepsilon_{\mathrm{lchs}}}\right).
\end{align}

\item \textbf{Quadrature step size.}
For the quadrature error $\varepsilon_{\mathrm{quad}}$, choose $h$ such that
\begin{align}\label{lchs h}
h\le
\frac{\pi}{\frac{1}{2}\|L\|_{L^1}+\log\!\left(\frac{64\,e^{3c/2}}{15\,\varepsilon_{\mathrm{quad}}}\right)},
\end{align}
where $\|L\|_{L^1}$ denotes the $L^1$-norm in time.

\item \textbf{Construction of the oracles.} By Hamiltonian simulation, we construct the oracle
\begin{align*}
    \widetilde{\text{SEL}}_R = \sum_{j=-R/h}^{R/h} |j\rangle \langle j |\otimes \widetilde{U}_s(t, jh),
\end{align*}
where $\|U_s(t, jh)-\widetilde{U}_s(t, jh)\|_2\leq \varepsilon_{\text{hs}}$.
From the LCU coefficients, we construct the oracles
\begin{align*}
\text{PREP}|0\rangle \propto \sum_{j=-R/h}^{R/h} |\hat{f}_2(hj;\gamma,c)|^{1/2} |j\rangle ,\quad   
\overline{\text{PREP}}|0\rangle \propto \sum_{j=-R/h}^{R/h} e^{i \mathrm{Arg}[\hat{f}_2(hj;\gamma,c)]} |\hat{f}_2(hj;\gamma,c)|^{1/2} |j\rangle.\end{align*}

\item \textbf{LCU block-encoding.}
The composite operator
\begin{equation*}
\mathrm{PREP}^{\dagger}\cdot \widetilde{\mathrm{SEL}}_R \cdot \overline{\mathrm{PREP}}
\end{equation*}
is a block-encoding of the target matrix $U_s(t)$ up to the total error $\varepsilon=\varepsilon_{\text{lchs}}+\varepsilon_{\text{quad}}+\varepsilon_{\text{hs}}$.
\end{enumerate}

The complexity of the optimal LCHS method is specified by the following theorem. 
\begin{theorem}[Block-encoding of optimal LCHS  {\cite[Theorem~4]{lchs3}}]\label{thm lchs}
Assume $L(\tau)=\frac{1}{2}(A(\tau)+A^\dag(\tau)) \succeq 0$ for $\tau\in [0,t]$. For any $\varepsilon\in (0,4/5]$, under the construction above, we may block-encode $U_s(t)$ by 
\[ \operatorname{PREP}^\dagger \cdot \widetilde{\operatorname{SEL}}_R \cdot \overline{\operatorname
{PREP}}\in ( \alpha, \mathcal{O}(\log R/h), \varepsilon)\text{BE}(U_s(t)),\]with normalization factor $$\alpha=\dfrac{h}{\sqrt{2\pi}} \sum_{j=-R/h}^{R/h} |\hat{f}_2(hj; \gamma, c)|=\mathcal{O}(1).$$
Moreover, $\operatorname{PREP}$ and $\overline{\operatorname{PREP}}$
 costs $\mathcal{O} \!\left(\bigl(\log(\|L\|_{L^1} + \log \tfrac{1}{\varepsilon})\bigr) \log^{5/2} \tfrac{1}{\varepsilon}\right)$
two-qubit gates. 
$\widetilde{\operatorname{SEL}}_R$ has the same query complexity $Q$
 and gate complexity as simulating
$\mathcal{T} e^{-i \int_0^t \pm RL(s) + H(s)\,\mathrm{d}s}$
plus $\mathcal{O}\bigl(Q\bigl(\log(\|L\|_{L^1} + \log \tfrac{1}{\varepsilon})\bigr)\bigr)$
 two-qubit gates. \end{theorem}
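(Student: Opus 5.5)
The plan follows the standard LCHS analysis: split the block-encoding error into three parts and bound each one, then account for the LCU normalization and gate costs. Fix $s\le t$ and write $A=L+iH$ with $L\succeq 0$.

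\textbf{Step 1 (LCHS identity).} First I would establish the exact identity
\[
U_s(t)=\frac{1}{\sqrt{2\pi}}\int_{\mathbb{R}}\widehat{f}_2(k;\gamma,c)\,U_s(t,k)\,dk .
\]
The scalar version is $e^{-x}=\frac{1}{\sqrt{2\pi}}\int \widehat{f}_2(k)e^{-ikx}\,dk$ for $x\ge 0$. I would check it by contour integration, closing in the lower half plane; this uses that $\widehat{f}_2$ is analytic in the strip and decays there, and $L\succeq0$ ensures $x\ge 0$. The time-ordered version then follows the argument of \cite{lchs1}. Define $V(t)=\frac{1}{\sqrt{2\pi}}\int \widehat{f}_2\,U_s(t,k)\,dk$ and differentiate under the integral. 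The $k$-weighted term $-ik\,\widehat{f}_2(k)$ can be traded for $-\widehat{f}_2(k)$ by contour shifting. This works only on the semigroup generated by the positive semidefinite part, so $V$ satisfies $V'=-AV$ with $V(s)=I$, and hence $V=U_s(t)$.

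\textbf{Step 2 (three error terms).} Next I would bound each error source separately.
\begin{itemize}
\item Truncation to $[-R,R]$: since $\|U_s(t,k)\|=1$, the error is at most $\int_{|k|>R}|\widehat{f}_2|$. The Gaussian factor $e^{-k^2/4\gamma^2}$ together with the $e^{c}$ prefactor makes this at most $\varepsilon_{\mathrm{lchs}}$ under the choice \eqref{lchs gamma}, with $R=2c\gamma^2$.
\item Trapezoidal quadrature: I would use the Poisson summation formula / exponential convergence of the trapezoidal rule for integrands analytic in a strip $|\Im k|<\delta$. The key input is that for complex $k=k_r+i\kappa$, $\|U_s(t,k)\|\le e^{|\kappa|\|L\|_{L^1}}$ by a Grönwall bound. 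The quadrature error is then about $e^{\delta\|L\|_{L^1}}e^{-2\pi\delta/h}$ times $\sup|\widehat{f}_2|\lesssim e^{3c/2}$ on the strip. Taking $\delta$ of order $1/2$ recovers \eqref{lchs h}.
\item Hamiltonian simulation: this error enters additively as $\alpha\varepsilon_{\mathrm{hs}}$, with the weights rescaled so that the total is $\varepsilon$.
\end{itemize}

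\textbf{Step 3 (LCU accounting).} The standard PREP/SEL lemma gives the normalization $\alpha=\frac{h}{\sqrt{2\pi}}\sum_j|\widehat{f}_2|$. This is a Riemann sum of the integral $\frac{1}{\sqrt{2\pi}}\int|\widehat{f}_2|\le\sqrt{2/\pi}\,e^{c}\int\frac{dk}{1+k^2}=\mathcal{O}(1)$ for fixed $c$. The number of ancillas is $\lceil\log(2R/h+1)\rceil$.

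For the cost of PREP, the amplitudes come from an explicitly computable smooth function. State preparation via black-box arithmetic, or via QSVT on a Gaussian-type profile, over $M=\mathcal{O}\bigl((\|L\|_{L^1}+\log\frac1\varepsilon)\log\frac1\varepsilon\bigr)$ points gives the stated polylog gate count.

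For SEL, I would implement the controlled evolutions $U_s(t,jh)$ as a single Hamiltonian simulation of the controlled Hamiltonian $\sum_j|j\rangle\langle j|\otimes(jhL+H)$. Its norm is dominated by $\pm RL+H$. Block-encoding it costs one query plus an arithmetic multiplication by $jh$, which adds $\mathcal{O}(\log M)$ gates per query.

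\textbf{Main obstacle.} I expect the main difficulty to be the quadrature bound in Step 2. It requires uniform control of the time-ordered propagator at complex $k$, and a careful choice of the strip width so that the $e^{\delta\|L\|_{L^1}}$ growth is balanced against the trapezoidal decay. I would follow \cite{lchs3}'s Poisson-summation argument directly.
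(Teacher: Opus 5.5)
The paper does not prove this statement; it imports it from \cite{lchs3}. Your reconstruction therefore has to stand on its own, and Step~1 does not. The identity $U_s(t)=\frac{1}{\sqrt{2\pi}}\int_{\mathbb{R}}\widehat f_2(k;\gamma,c)\,U_s(t,k)\,dk$ is false for this kernel, already for scalars. The factor $e^{-(k^2+1)/(4\gamma^2)}$ is entire but grows like $e^{K^2/(4\gamma^2)}$ along $k=-iK$, which overwhelms the $e^{-K(x+c)}$ decay of $e^{-ik(x+c)}$. So the lower half-plane contour cannot be closed: the pole at $k=-i$ contributes exactly $e^{-x}$ (that is the role of the $+1$ in the Gaussian exponent), but the arc does not vanish. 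More decisively, because $\widehat f_2$ has Gaussian decay, $f(z)=\frac{1}{\sqrt{2\pi}}\int\widehat f_2(k)e^{-ikz}\,dk$ is entire and bounded on $\mathbb{R}$. It therefore cannot agree with $e^{-x}$ on $[0,\infty)$, since analytic continuation would force $f(x)=e^{-x}$ on all of $\mathbb{R}$. The same obstruction breaks your step of trading $-ik\widehat f_2$ for $-\widehat f_2$ by contour shifting in the time-ordered argument. Your error budget thus omits the intrinsic kernel-approximation error, which is exactly what the choice $\gamma^2c^2=c+\log\frac{1+1/(2\pi)}{\varepsilon_{\mathrm{lchs}}}$ in \eqref{lchs gamma} is designed to control. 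This is the price of the Gaussian decay that allows $R=\mathcal{O}(\log\frac{1}{\varepsilon_{\mathrm{lchs}}})$, and it, not the quadrature estimate, is the real subtlety.

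This error must also be bounded uniformly in $L$. A naive perturbative treatment writes $V'=-AV-L(\tau)E(\tau)$ with $E=\frac{1}{\sqrt{2\pi}}\int(ik-1)\widehat f_2\,U_s(\tau,k)\,dk$ and applies Duhamel. That picks up a factor $\|L\|_{L^1}$ and would force $\gamma$ to depend on $\|L\|_{L^1}$, contrary to \eqref{lchs gamma}. A clean fix is to write $e^{-k^2/(4\gamma^2)}=\int_{\mathbb{R}}\frac{\gamma}{\sqrt{\pi}}e^{-\gamma^2y^2}e^{-iky}\,dy$. This exhibits $\widehat f_2$ as a Gaussian mixture over $y$ of shifted Lorentzian kernels $\sqrt{2/\pi}\,e^{c-1/(4\gamma^2)}e^{-ik(c+y)}/(1+k^2)$. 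For $c+y\ge 0$, $e^{-ik(c+y)}U_s(t,k)$ is the LCHS integrand for $A+\frac{c+y}{t-s}I$, whose Hermitian part is still positive semidefinite. The exact Lorentzian identity of \cite{lchs1} then gives $e^{-(c+y)}U_s(t)$. For $y<-c$, the inner Lorentzian integral has norm at most $1$. Using $\operatorname{erfc}(x)\le e^{-x^2}$, this yields
\[
\Bigl\|U_s(t)-\frac{1}{\sqrt{2\pi}}\int_{\mathbb{R}}\widehat f_2(k)\,U_s(t,k)\,dk\Bigr\|_2\le\frac12\operatorname{erfc}\Bigl(\gamma c-\frac{1}{2\gamma}\Bigr)+\frac{e^{c}}{2}\operatorname{erfc}(\gamma c)\le e^{c-\gamma^2c^2},
\]
which by \eqref{lchs gamma} equals $\varepsilon_{\mathrm{lchs}}/(1+1/(2\pi))$. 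Only after this does your truncation estimate at $R=2c\gamma^2$ complete the bound. The rest of your outline is reasonable: the strip $|\operatorname{Im}k|\le\frac12$ does reproduce the $\frac12\|L\|_{L^1}$ and $e^{3c/2}$ in \eqref{lchs h}. You must, however, also truncate the infinite trapezoidal sum, and the $\log^{5/2}\frac{1}{\varepsilon}$ cost of PREP is asserted rather than derived.
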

For solving ODEs \eqref{linear ODEs}, the optimal LCHS method achieves optimal query complexity for both the initial state oracles and the block-encoding of coefficient matrix.

\section{Method of lines for linear parabolic PDEs}\label{section 3}

In this chapter we solve the parabolic PDE by method of lines and the optimal LCHS method. We begin by formulating the parabolic problem under three type of boundary conditions. Then we establish uniqueness of classical solutions and eliminate inhomogeneous boundary data, thereby ensuring well-posedness of the target PDE. Next, we apply the method of lines: spatial derivatives are discretized by finite differences on a grid, yielding a system of semi-discrete linear ODEs of the form 
$v'(t)=-Av(t)+b(t)$. To prepare this system for later LCHS algorithm, we introduce a similarity transform that places the coefficient matrix $A$ into a form with positive semi-definite real part. Finally we derive error bounds in the mean $l_2$ norm and the scale of number of nodes.

\subsection{Problem setting}
Let $T>0$ and $\Omega = [0,a_1]\times\cdots\times[0,a_d]\subset\mathbb{R}^d$ with $a_l>0,\ l=1,\cdots,d$.
We consider the second-order linear parabolic PDE
\begin{align}\label{parabolic pde}
\begin{cases}
u_t = \Delta u + \displaystyle\sum_{l=1}^d c_l\,u_{x_l} + f(x,t), & (x,t)\in \Omega\times[0,T],\\
u(x,0) = u_0(x), & x\in\Omega,
\end{cases}
\end{align}
where $c_l\in\mathbb{R}$ ($l=1,\dots,d$) are constants and $f,u_0$ are given.

\medskip
\noindent\textbf{Boundary conditions.} We impose boundary conditions dimension-by-dimension.
Let $S:=\{1,\dots,d\}$ and fix a disjoint partition
\[
S = S_1 \sqcup S_2 \sqcup S_3,
\]
corresponding to Dirichlet, Neumann, and periodic directions, respectively.

\smallskip
\noindent\emph{(i) Homogeneous Dirichlet boundaries.}
For each $l\in S_1$ we impose
\begin{equation}\label{boundary1}
u(x|_{x_l=0},t)=u(x|_{x_l=a_l},t)=0.
\end{equation}

\smallskip
\noindent\emph{(ii) Homogeneous Neumann boundaries.}
For each $l\in S_2$ we impose
\begin{equation}\label{boundary2}
\frac{\partial u}{\partial x_l}(x|_{x_l=0},t)=\frac{\partial u}{\partial x_l}(x|_{x_l=a_l},t)=0.
\end{equation}

\smallskip
\noindent\emph{(iii) Periodic boundaries.}
For each $l\in S_3$ we impose the periodicity conditions
\begin{equation}\label{boundary3}
\begin{cases}
u(x|_{x_l=0},t)=u(x|_{x_l=a_l},t),\\
\dfrac{\partial u}{\partial x_l}(x|_{x_l=0},t)=\dfrac{\partial u}{\partial x_l}(x|_{x_l=a_l},t),
\end{cases}
\end{equation}
with initial condition $u_0(x|_{x_l = 0})=u_0(x|_{x_l = a_l})$ and non-homogeneity $f(x|_{x_l = 0}, t)=f(x|_{x_l = a_l}, t)$.

The goal of this paper is to design an efficient quantum algorithm that produces a quantum state that is $\varepsilon$-close to $|[u(x,T)]_x \rangle$, which is the normalized solution discretized on certain nodes at final time $T$.
For completeness, we state a standard uniqueness result for classical solutions, ensuring that the algorithm is meaningful.
\begin{theorem}[Uniqueness]\label{thm parabolic unique}
If the equation \eqref{parabolic pde} with boundary conditions (\ref{boundary1}-\ref{boundary3}) admits a solution $u \in C^{2,1}(\Omega \times [0, T])$, then its solution is unique.
\end{theorem}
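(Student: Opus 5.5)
Uniqueness will follow from an energy estimate. Suppose $u_1,u_2\in C^{2,1}(\Omega\times[0,T])$ both solve \eqref{parabolic pde} with boundary conditions (\ref{boundary1}--\ref{boundary3}). Their difference $w:=u_1-u_2$ then solves the homogeneous problem
\[
w_t=\Delta w+\sum_{l=1}^d c_l w_{x_l},\qquad w(x,0)=0,
\]
with the same homogeneous Dirichlet, Neumann and periodic conditions in the respective directions. This uses linearity and the fact that every boundary condition is homogeneous or periodic, hence preserved under subtraction. We introduce the energy
\[
E(t):=\frac12\int_\Omega w(x,t)^2\,dx.
\]
The goal is $E'(t)\le 0$; since $E(0)=0$ and $E\ge0$, this forces $E\equiv0$, so $w\equiv0$ by continuity.

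Differentiating under the integral, which is justified because $w_t$ is continuous on the compact set $\Omega\times[0,T]$, gives
\[
E'(t)=\int_\Omega w\,\Delta w\,dx+\sum_l c_l\int_\Omega w\,w_{x_l}\,dx .
\]
We treat each coordinate direction separately using Fubini.

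For the diffusion part, integrate by parts in $x_l$:
\[
\int_\Omega w\,w_{x_lx_l}\,dx=\Bigl[\int w\,w_{x_l}\Bigr]_{x_l=0}^{x_l=a_l}-\int_\Omega w_{x_l}^2\,dx .
\]
The boundary term vanishes in every case. For $l\in S_1$ we have $w=0$ on the faces. For $l\in S_2$ we have $w_{x_l}=0$ there. For $l\in S_3$ the values of $w\,w_{x_l}$ on the two faces coincide, so they cancel.

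For the convection part,
\[
\int_\Omega w\,w_{x_l}\,dx=\tfrac12\Bigl[\int w^2\Bigr]_{0}^{a_l}.
\]
This vanishes for Dirichlet and periodic directions. Hence
\[
E'(t)=-\|\nabla w\|_{L^2}^2+\sum_{l\in S_2}\frac{c_l}{2}\Bigl(\int w^2|_{x_l=a_l}-\int w^2|_{x_l=0}\Bigr).
\]

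The main obstacle is the Neumann directions with $c_l\neq0$: there the convection boundary term $\frac{c_l}{2}[\int w^2]$ need not vanish and has no sign. I would handle it in one of two ways.

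The first option is a trace inequality in one variable. From $w(a_l)^2\le \delta\int_0^{a_l}w_{x_l}^2+C_\delta\int_0^{a_l}w^2$, obtained from the fundamental theorem of calculus and Young's inequality, integrated over the remaining variables, we get
\[
E'(t)\le -\tfrac12\|\nabla w\|^2+C\,E(t)\le C\,E(t).
\]
Gronwall's inequality with $E(0)=0$ then gives $E\equiv0$.

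The second option, mirroring the similarity transform used later, is to work with the weighted energy $\int_\Omega e^{\sum_l c_lx_l}w^2\,dx$. This makes the operator $\Delta+c\cdot\nabla$ formally symmetric, and the convection boundary terms cancel exactly against those from integrating $\Delta$ by parts. Either route yields $E\equiv0$ and hence uniqueness. I would present the Gronwall version as the cleaner general argument.

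Regularity is mild but needs one caveat. $C^{2,1}$ suffices for all the integrations by parts, with $w$ continuous up to the boundary. I would interpret the derivatives as having continuous extensions to the closed box.
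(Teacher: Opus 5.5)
Your Gronwall argument is correct, but it takes a different route from the paper's.

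\textbf{The paper's route.} The paper never produces a sign-indefinite boundary term, because it uses the weighted energy $E(t)=\frac12\int_\Omega v^2\exp\bigl(\sum_{l'\in S_2}c_{l'}x_{l'}\bigr)\,dx$, with the weight only in the Neumann directions.
\begin{itemize}
\item For $l\in S_2$ it writes $(v_{x_lx_l}+c_lv_{x_l})e^{c_lx_l}=\partial_{x_l}\bigl(v_{x_l}e^{c_lx_l}\bigr)$. One integration by parts with $v_{x_l}=0$ on the faces then gives $-\int v_{x_l}^2e^{c_lx_l}\le 0$.
\item For $l\in S_1\cup S_3$ the weight does not depend on $x_l$, so your unweighted computation applies verbatim.
\end{itemize}
This gives exact monotonicity $E'\le 0$, with no trace inequality and no Gronwall step. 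It is the continuous analogue of the diagonal similarity transform $P$ that the paper later uses to make the Hermitian part of $\widetilde{A}$ positive semidefinite.

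\textbf{Your route.} You keep the plain $L^2$ energy and absorb the Neumann convection term with a one-dimensional trace inequality, which gives only $E(t)\le e^{Ct}E(0)$. That is harmless for uniqueness. It also has the advantage of not relying on an exact integrating factor, so it would extend, for instance, to variable drift coefficients. The cost is that it loses the contractivity on which the paper's structure-preserving viewpoint is built.

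\textbf{A caution about your second option.} The weight $e^{\sum_l c_lx_l}$ taken over all directions does not work as stated. In a periodic direction with $c_l\neq 0$ the weight is not periodic. Integrating by parts then leaves the face contribution $\bigl[w\,w_{x_l}e^{c_lx_l}\bigr]_{x_l=0}^{x_l=a_l}=w\,w_{x_l}\big|_{x_l=0}\,(e^{c_la_l}-1)$, which need not vanish. Restricting the weight to $S_2$, as the paper does, avoids this. In the hyperbolic uniqueness proof the paper can use the full weight only because it imposes $c_l=0$ for $l\in S_3$. Since you adopt the Gronwall version as your argument, this does not affect the validity of your proof.
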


The proof can be found in Appendix \ref{appendix proof}.

\medskip
\noindent\textbf{Boundary lifting.} In some applications, one often encounters inhomogeneous Dirichlet and Neumann boundary conditions. We record a convenient transformation, based on \emph{Coons interpolation}, that reduces such inhomogeneous boundary conditions to homogeneous boundary conditions and add the influence of them to the inhomogeneous term $f$ and the initial value $u_0$.

For $l\in S_1$ (Dirichlet) we prescribe
\begin{align}\label{boundary1 nonhomo}
u(x|_{x_l=0},t)=g^{(L)}_l(x_1,\dots,\widehat{x_l},\dots,x_d,t),\quad
u(x|_{x_l=a_l},t)=g^{(R)}_l(x_1,\dots,\widehat{x_l},\dots,x_d,t),
\end{align}
and for $l\in S_2$ (Neumann) we prescribe the outward normal derivative
\begin{align}\label{boundary2 nonhomo}
-\frac{\partial u}{\partial x_l}(x|_{x_l=0},t)=g^{(L)}_l(x_1,\dots,\widehat{x_l},\dots,x_d,t),\quad
\frac{\partial u}{\partial x_l}(x|_{x_l=a_l},t)=g^{(R)}_l(x_1,\dots,\widehat{x_l},\dots,x_d,t).
\end{align}
Here $\widehat{x_l}$ means that the variable $x_l$ is omitted, and the minus sign at $x_l=0$ corresponds to the
outward normal direction.
Then we introduce the linear differential operator
\[
\mathcal{L}u := \Delta u + \sum_{l=1}^d c_l\,u_{x_l}.
\]
For a nonempty index set $A=\{l_1,\dots,l_a\}\subseteq S_1$ we define
\begin{equation}\label{thm parabolic vanish interpA}
\operatorname{Interp}_A(x,t)
:= \sum_{\delta_1,\dots,\delta_a\in\{0,1\}}
\Bigg(\prod_{k=1}^a \frac{x_{l_k}-\delta_k a_{l_k}}{a_{l_k}}\Bigg)
(-1)^{\delta_1+\cdots+\delta_a}\,
u\Big|_{x_{l_k}=a_{l_k}-\delta_k a_{l_k},\,k=1,\dots,a}.
\end{equation}
For a nonempty index set $B=\{l_1,\dots,l_b\}\subseteq S_2$ we define
\begin{equation}\label{thm parabolic vanish interpB}
\operatorname{Interp}'_B(x,t)
:= \sum_{\delta_1,\dots,\delta_b\in\{0,1\}}
\Bigg(\prod_{k=1}^b \frac{\frac12 x_{l_k}^2-\delta_k a_{l_k}x_{l_k}}{a_{l_k}}\Bigg)
(-1)^{\delta_1+\cdots+\delta_b}\,
\frac{\partial^b v_1}{\partial x_{l_1}\cdots\partial x_{l_b}}
\Bigg|_{x_{l_k}=a_{l_k}-\delta_k a_{l_k},\,k=1,\dots,b}.
\end{equation}
(Here $v_1$ is defined in \eqref{thm parabolic vanish -1} below.) We state this theorem as follows, and provide its proof in Appendix \ref{appendix proof}.


\begin{theorem}[Boundary lifting via Coons interpolation]\label{thm parabolic vanish}
Consider \eqref{parabolic pde} with periodic boundary conditions \eqref{boundary3} in directions $S_3$, and with the
inhomogeneous Dirichlet and Neumann boundary conditions~\eqref{boundary1 nonhomo}-\eqref{boundary2 nonhomo} in directions $S_1$ and $S_2$.
Let
\begin{equation}\label{thm parabolic vanish -1}
v_1(x,t) = u(x,t) + \sum_{\emptyset\neq A\subseteq S_1} (-1)^{|A|}\operatorname{Interp}_A(x,t),
\end{equation}
and
\begin{equation}\label{thm parabolic vanish -2}
v_2(x,t) = v_1(x,t) + \sum_{\emptyset\neq B\subseteq S_2} (-1)^{|B|}\operatorname{Interp}'_B(x,t).
\end{equation}
Then $v_2$ satisfies the same differential operator as \eqref{parabolic pde}
\[
(v_2)_t = \mathcal{L}v_2 + \widetilde{f}(x,t),\quad v_2(x,0)=\widetilde{u}_0(x),
\]
together with the homogeneous boundary conditions \eqref{boundary1}--\eqref{boundary2} and the periodic conditions
\eqref{boundary3}. The transformed inhomogeneous term and initial value are
\[
\widehat{f}(x,t) :=
-\mathcal{L}\Big(\sum_{\emptyset\neq A\subseteq S_1} (-1)^{|A|}\operatorname{Interp}_A(x,t)\Big)
+\frac{\partial}{\partial t}\Big(\sum_{\emptyset\neq A\subseteq S_1} (-1)^{|A|}\operatorname{Interp}_A(x,t)\Big)
+ f(x,t),
\]
\[
\widetilde{f}(x,t) :=
-\mathcal{L}\Big(\sum_{\emptyset\neq B\subseteq S_2} (-1)^{|B|}\operatorname{Interp}'_B(x,t)\Big)
+\frac{\partial}{\partial t}\Big(\sum_{\emptyset\neq B\subseteq S_2} (-1)^{|B|}\operatorname{Interp}'_B(x,t)\Big)
+ \widehat{f}(x,t),
\]
and
\[
\widehat{u}_0(x) := u_0(x) + \sum_{\emptyset\neq A\subseteq S_1} (-1)^{|A|}\operatorname{Interp}_A(x,0),\quad
\widetilde{u}_0(x) := \widehat{u}_0(x) + \sum_{\emptyset\neq B\subseteq S_2} (-1)^{|B|}\operatorname{Interp}'_B(x,0).
\] \end{theorem}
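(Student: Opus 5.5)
The lifting has two parts. The PDE part is linear bookkeeping. The real content is checking the boundary conditions: the Dirichlet condition of $v_1$, the Neumann condition of $v_2$, and that each step leaves the other conditions intact.

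\textbf{The PDE.} Set $w_1 := \sum_{\emptyset\neq A\subseteq S_1}(-1)^{|A|}\operatorname{Interp}_A$, so that $v_1=u+w_1$. Since $u_t=\mathcal{L}u+f$ and $\mathcal{L}$ is linear with constant coefficients,
\[
(v_1)_t=\mathcal{L}v_1+\bigl(-\mathcal{L}w_1+\partial_t w_1+f\bigr)=\mathcal{L}v_1+\widehat f .
\]
The same computation with $w_2:=\sum_{\emptyset\neq B\subseteq S_2}(-1)^{|B|}\operatorname{Interp}'_B$ and $v_2=v_1+w_2$ gives $(v_2)_t=\mathcal{L}v_2+\widetilde f$. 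Evaluating at $t=0$ gives $\widehat u_0$ and $\widetilde u_0$. This needs enough regularity of $u$ and of the boundary traces, which I will assume.

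\textbf{Dirichlet step.} Fix $l\in S_1$ and restrict to $x_l=0$ (the face $x_l=a_l$ is symmetric). Split the subsets $A\ni l$ from those $A\not\ni l$; this is the core of the Coons/Boolean-sum argument. Define the one-dimensional projector
\[
P_l g := \frac{x_l}{a_l}\,g|_{x_l=a_l}+\frac{a_l-x_l}{a_l}\,g|_{x_l=0}.
\]
Reading off the coefficients, $\operatorname{Interp}_A$ is up to a sign $(-1)^{|A|}$ the tensor product $\prod_{k\in A}P_{l_k}u$. I will verify this sign carefully against \eqref{thm parabolic vanish interpA}: at $\delta_k=1$ the coefficient is $(x_{l_k}-a_{l_k})/a_{l_k}$ and the sign is $-1$, which combine into the weight $(a-x)/a$ at the node $x=0$.

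With this identification,
\[
v_1=\Bigl(\prod_{l\in S_1}(I-P_l)\Bigr)u .
\]
Since $P_l$ reproduces boundary values in the $x_l$ direction, $(I-P_l)g$ vanishes at $x_l\in\{0,a_l\}$. The $P_k$ commute, because they act in different variables, and for $k\neq l$ each $P_k$ commutes with the trace $x_l=0$. Hence $v_1$ vanishes on every Dirichlet face.

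\textbf{Neumann step.} For $l\in S_2$, define
\[
Q_l g := \frac{\tfrac12 x_l^2}{a_l}\,\partial_l g|_{x_l=a_l}-\frac{\tfrac12 x_l^2-a_lx_l}{a_l}\,\partial_l g|_{x_l=0}.
\]
Then $\partial_l(Q_lg)=\frac{x_l}{a_l}\partial_lg|_{a_l}+\frac{a_l-x_l}{a_l}\partial_lg|_{0}$, which equals $\partial_l g$ at both endpoints. So $\partial_l(I-Q_l)g=0$ at both ends, and $\operatorname{Interp}'_B$ matches $\pm\prod_{k\in B}Q_k v_1$ after cancelling the mixed derivatives. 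This gives $v_2=\prod_{l\in S_2}(I-Q_l)v_1$, which satisfies the homogeneous Neumann condition in every $l\in S_2$.

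\textbf{Preservation of the remaining conditions.} The Dirichlet zeros of $v_1$ survive the Neumann step: each $Q_k$ with $k\in S_2$ acts only in $x_k$ and uses traces of $\partial_k v_1$ in the other variables. Those traces vanish on Dirichlet faces $x_l=0,a_l$ with $l\in S_1$, since differentiating in $x_k$ commutes with restricting $x_l$. Periodicity in $S_3$ is preserved because all lifting terms are built from traces of $u$ and polynomial weights in non-periodic variables only; they inherit periodicity in $x_l$, $l\in S_3$, from $u$ and its derivatives. The same reasoning shows $f$ and the initial data stay periodic.

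\textbf{Main obstacle.} I expect the hardest step to be matching the explicit inclusion–exclusion sums \eqref{thm parabolic vanish interpA}–\eqref{thm parabolic vanish interpB} with the operator products, keeping the signs $(-1)^{|A|}$ and $(-1)^{\delta}$ consistent. I would first check the cases $|A|=1$ and $|A|=2$ by hand, then carry out the general expansion of $\prod(I-P_l)=\sum_A(-1)^{|A|}\prod_{k\in A}P_k$ by induction on $|S_1|$.
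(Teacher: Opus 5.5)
Your proposal is correct and is essentially the paper's proof: your factorizations $v_1=\prod_{l\in S_1}(I-P_l)u$ and $v_2=\prod_{l\in S_2}(I-Q_l)v_1$ are the operator form of the paper's pairing of each $A$ with $l\notin A$ against $A\cup\{l\}$ (resp.\ $B$ against $B\cup\{l\}$), combined with the trace identities $P_l g|_{x_l=0}=g|_{x_l=0}$ and $\partial_l Q_l g|_{x_l=0}=\partial_l g|_{x_l=0}$; your PDE bookkeeping, the vanishing of tangential derivatives of $v_1$ on Dirichlet faces, and the periodicity argument match the paper step for step. One wording fix: there is no extra sign, since $\operatorname{Interp}_A=\prod_{k\in A}P_{l_k}u$ and $\operatorname{Interp}'_B=\prod_{k\in B}Q_{l_k}v_1$ hold exactly (as your own $\delta_k=1$ check shows), and the factors $(-1)^{|A|}$, $(-1)^{|B|}$ come only from the outer sums, which is exactly what your product formulas use.
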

\begin{remark}[Practical computability]
Although \eqref{thm parabolic vanish interpA} and \eqref{thm parabolic vanish interpB} are written using traces of $u$ and traces of derivatives of $v_1$,
they are computable from the prescribed boundary data $g_l^{(L)},g_l^{(R)}$.
For $\operatorname{Interp}_A$, we have $u|_{x_l=0}=g_l^{(L)}$ and $u|_{x_l=a_l}=g_l^{(R)}$, so $\operatorname{Interp}_A$
depends only on $g^{(L/R)}_l$.
For $\operatorname{Interp}'_B$, note that $v_1=u+\text{(known Dirichlet extension)}$, hence $\partial_{x_l}v_1$ on a Neumann
face $x_l=0$ or $x_l=a_l$ can be obtained from the Neumann data for $u$ plus the explicitly differentiable Dirichlet extension.
Higher-order mixed derivatives appearing in \eqref{thm parabolic vanish interpB} are tangential derivatives of these known boundary functions. So $\operatorname{Interp}'_B$ also
depends on $g^{(L/R)}_l$.
\end{remark}

A similar interpolation can be used to enforce the initial value $\widetilde{u}_0\equiv 0$ as well.
We do not carry this out here because it may introduce an additional inhomogeneous term even when $f\equiv 0$,
which is undesirable in settings where one wishes to keep the PDE homogeneous in the interior.
So in this paper, we assume that the Dirichlet and Neumann boundary values are all zero, which is helpful for subsequent error analysis.

\subsection{Method of lines discretization}\label{section 3.2}

 In this subsection we discretize \eqref{parabolic pde} in space by the method of lines and obtain a semi-discrete ODE system
\begin{equation}\label{parabolic approx odes}
\begin{cases}
\displaystyle \frac{\mathrm{d}v}{\mathrm{d}t} = -A v + b(t),\\[2mm]
v(0)=v_0.
\end{cases}
\end{equation}

For the three types of boundary conditions, we construct their finite difference schemes respectively:

\medskip
\noindent\textbf{(i) Dirichlet boundary.} For $l$ in $S_1$, let the step size be $h_l = \dfrac{a_l}{N_l + 1}$ and nodes be 
$x_l^{(j)} =  (j + 1) h_l,\ j = 0, \cdots, N_l - 1$.
\begin{center}
\begin{tikzpicture}[x=1cm,y=1cm]
  \draw (0,0) -- (10,0);

  \node at (6,-0.4) {$\cdots$};
  \node[below] at (0,-0.1) {$0 = x^{(-1)}$};
  \node[below] at (2,-0.1) {$x^{(0)}$};
  \node[below] at (4,-0.1) {$x^{(1)}$};
  \node[below] at (8,-0.1) {$x^{(N_l-1)}$};
  \node[below] at (10,-0.1) {$x^{(N_l)} = a_l$};
\fill (2,0) circle (0.09);
\fill (4,0) circle (0.09);
\fill (8,0) circle (0.09);
\draw[fill=white] (10,0) circle (0.12);
\draw[fill=white] (0,0) circle (0.12);
\end{tikzpicture}\end{center}
Then we use the second-order difference formulas
\begin{align}
u_{x_l x_l}(\cdots, x_l^{(j)}, \cdots, t) &= \frac{1}{h_l^2} \left[ u(\cdots, x_l^{(j+1)}, \cdots, t) + u(\cdots, x_l^{(j-1)}, \cdots, t) \right.\notag \\
&\quad \left. - 2 u(\cdots, x_l^{(j)}, \cdots, t) \right] -\dfrac{1}{12}u_{x_lx_lx_lx_l}(\cdots, \xi, \cdots, t)h_l^2,\label{diffe ddx} \\
u_{x_l}(\cdots, x_l^{(j)}, \cdots, t) &= \frac{1}{2h_l} \left[ u(\cdots, x_l^{(j+1)}, \cdots, t) - u(\cdots, x_l^{(j-1)}, \cdots, t) \right] -\dfrac{1}{6}u_{x_lx_lx_l}(\cdots, \xi, \cdots, t)h_l^2.\label{diffe dx}
\end{align}
At the boundary nodes, by the Dirichlet boundary condition, we have $u(\cdots, x_l^{(-1)}, \cdots, t)=u(\cdots, x_l^{(N_l)}, \cdots, t)=0$. We define the difference matrices
\begin{align*}
    D_{D,\Delta}(N_l):=\begin{pmatrix}
        2 & -1  & &  \\
        -1 & 2 &\ddots &  \\
         &\ddots &\ddots & -1   \\
         & &-1 & 2
    \end{pmatrix}_{N_l},\ D_{D,\pm}(N_l):=\begin{pmatrix}
        0 & -1  & &  \\
        1 & 0 &\ddots &  \\
         &\ddots &\ddots & -1  \\
         & & 1 & 0  
    \end{pmatrix}_{N_l}.
\end{align*}
So for Dirichlet boundary, the differential operator can be approximated as follows
\begin{align}\label{para dirichlet matrix}
    \dfrac{\partial^2}{\partial x_l ^2}&\approx -\dfrac{1}{h_l^2}D_{D,\Delta}(N_l),\ \dfrac{\partial}{\partial x_l }\approx -\dfrac{1}{2h_l}D_{D,\pm}(N_l),\notag\\
    A_l&:=\dfrac{1}{h_l^2}D_{D,\Delta}(N_l)+\dfrac{c_l}{2h_l}D_{D,\pm}(N_l)\approx -\dfrac{\partial^2}{\partial x_l ^2}-c_l\dfrac{\partial}{\partial x_l }.
\end{align}

\medskip
\noindent\textbf{(ii) Neumann boundary.} For $l$ in $S_2$, let the step size be $h_l = \dfrac{a_l}{N_l}$ and nodes be
$x_l^{(j)} =  (j + \dfrac{1}{2}) h_l,\ j = 0, \cdots, N_l - 1$. 
\begin{center}
    \begin{tikzpicture}[x=1cm,y=1cm]
  \draw
 (0,0) -- (10,0);
  \draw[fill=white] (0,0) circle (0.12);
  \draw[fill=white] (2,0) circle (0.12);    
  \draw[fill=white] (4,0) circle (0.12);
  \draw[fill=white] (8,0) circle (0.12);
  \draw[fill=white] (10,0) circle (0.12);
  \fill (1,0) circle (0.09);   
  \fill (3,0) circle (0.09);   
  \fill (9,0) circle (0.09);   
  \node at (6,-0.4) {$\cdots$
};
  \node[below] at (0,-0.15) {$0$
};
  \node[below] at (1,-0.1) {$x^{(0)}$
};
  \node[below] at (3,-0.1) {$x^{(1)}$
};
  \node[below] at (9,-0.1) {$x^{(N_l-1)}$
};
  \node[below] at (10,-0.15) {$a_l$
};

\end{tikzpicture}
\end{center}
At the interior nodes, we use the same difference formulas (\ref{diffe ddx}-\ref{diffe dx}). At the boundary nodes (i.e. $j=0,N_l-1$), for $\dfrac{\partial^2}{\partial x_l^2}$, we have the first-order difference formula
\begin{align*}
u_{x_l x_l}(\cdots, x_l^{(0)}, \cdots, t) &= \frac{1}{h_l} \left[ u_{x_l}(\cdots, h_l, \cdots, t) - u_{x_l}(\cdots, 0, \cdots, t) \right] + \mathcal{O}(h_l^2) \\
&= \frac{1}{h_l} \left[ \frac{1}{h_l} (u(\cdots, x_l^{(1)}, \cdots, t) - u(\cdots, x_l^{(0)}, \cdots, t))+ \mathcal{O}(h_l^2) \right] + \mathcal{O}(h_l^2) \\
&= \frac{1}{h_l^2} \left[u(\cdots, x_l^{(1)}, \cdots, t)-u(\cdots, x_l^{(0)}, \cdots, t) \right] + \mathcal{O}(h_l).
\end{align*}
Similarly, for $j=N_l-1$ we have
\begin{align*}
u_{x_l x_l}(\cdots, x_l^{(N_l-1)}, \cdots, t) &= \frac{1}{h_l^2} \left[ -u(\cdots, x_l^{(N_l-1)}, \cdots, t) + u(\cdots, x_l^{(N_l-2)}, \cdots, t) \right] + \mathcal{O}(h_l).
\end{align*}
For $\dfrac{\partial}{\partial x_l}$, we have $\dfrac{u(\cdots, x_l^{(0)}, \cdots, t)-u(\cdots, x_l^{(-1)}, \cdots, t)}{2h_l}=\dfrac{h_l u_{x_l}(\cdots, 0, \cdots, t)+\mathcal{O}(h_l^3)}{2h_l}=\mathcal{O}(h_l^2).$ Then we get the second-order difference formula \begin{align*}u_{x_l}(\cdots, x_l^{(0)}, \cdots, t)&=\dfrac{u(\cdots, x_l^{(1)}, \cdots, t)-u(\cdots, x_l^{(-1)}, \cdots, t)}{2h_l}+\mathcal{O}(h_l^2)\\
&=\dfrac{u(\cdots, x_l^{(1)}, \cdots, t)-u(\cdots, x_l^{(0)}, \cdots, t)}{2h_l}+\mathcal{O}(h_l^2).\end{align*}
Similarly, for $j=N_l-1$ we have
$$u_{x_l}(\cdots, x_l^{(N_l-1)}, \cdots, t)
=\dfrac{u(\cdots, x_l^{(N_l-1)}, \cdots, t)-u(\cdots, x_l^{(N_l-2)}, \cdots, t)}{2h_l}+\mathcal{O}(h_l^2).$$
We define the difference matrices
\begin{align*}    
D_{N,\Delta}(N_l):=\begin{pmatrix}
        1 & -1  & & & \\
        -1 & 2 & -1 & & \\
         &-1 &\ddots &\ddots &  \\
         & &\ddots & 2 & -1 \\
         & & & -1 & 1
    \end{pmatrix}_{N_l},\ 
D_{N,\pm}(N_l):=\begin{pmatrix}
        1 & -1  & & &  \\
        1 & 0 &\ddots & & \\
         &\ddots &\ddots &-1 &  \\
         & &1 &0 & -1 \\
         & & & 1 & -1
    \end{pmatrix}_{N_l}.
\end{align*}
So for Neumann boundary, the differential operator can be approximated as follows
\begin{align}\label{para neumann matrix}
    \dfrac{\partial^2}{\partial x_l ^2}&\approx -\dfrac{1}{h_l^2}D_{N,\Delta}(N_l),\ \dfrac{\partial}{\partial x_l }\approx -\dfrac{1}{2h_l}D_{N,\pm}(N_l),\notag \\
    A_l&:=\dfrac{1}{h_l^2}D_{N,\Delta}(N_l)+\dfrac{c_l}{2h_l}D_{N,\pm}(N_l)\approx -\dfrac{\partial^2}{\partial x_l ^2}-c_l\dfrac{\partial}{\partial x_l }.
\end{align}
\begin{remark}
    For Neumann boundary conditions we place the nodes at cell midpoints and discretize the second-order derivative at the boundary by a first-order difference. This choice is not made for convenience, but to enforce the operator property required by the LCHS framework: the coefficient matrix $A$ must have a positive semi-definite real part. With midpoint nodes and the first-order local truncation error at boundary nodes, the resulting finite-difference matrix will satisfy $\frac{1}{2}(A+A^\dag)\succeq 0$ after the transformation in next subsection. And later we will show that the overall error can remain second-order.
    By contrast, if one uses the traditional grid including interval endpoints, or discretizes the second-order derivative by a second-order difference, we observe that the real part of $A$ will have a negative eigenvalue of scale $\mathcal{O}(1/h_l^2)=\mathcal{O}(N_l^2)$. This violates the LCHS applicability conditions. More importantly, we can not even use the transformation $\tilde{v}=e^{-ct}v,\ \tilde{A}=A+cI$, because it will lead to an error of scale $\mathcal{O}(\exp(N_l^2))$, which makes the LCHS-based solver ineffective.
\end{remark}

\medskip
\noindent\textbf{(iii) Periodic boundary.} For $l$ in $S_3$, let the step size be $h_l = \dfrac{a_l}{N_l}$ and nodes be
$x_l^{(j)} =  j h_l,\ j = 0, \cdots, N_l - 1$. 
\begin{center}
    \begin{tikzpicture}[x=1cm,y=1cm]
  \draw (0,0) -- (10,0);

  \node at (5,-0.4) {$\cdots$};
\fill (0,0) circle (0.09);
\fill (2,0) circle (0.09);
\fill (8,0) circle (0.09);
\draw[fill=white] (10,0) circle (0.12);
  \node[below] at (0,-0.1) {$0 = x^{(0)}$};
  \node[below] at (2,-0.1) {$x^{(1)}$};
  \node[below] at (8,-0.1) {$x^{(N_l-1)}$};
  \node[below] at (10,-0.1) {$x^{(N_l)} = a_l$};
\end{tikzpicture}
\end{center}
At the interior nodes, we also use (\ref{diffe ddx}-\ref{diffe dx}). At the boundary grids, by the periodic boundary \eqref{boundary3}, we have $u(\cdots, x_l^{(-1)}, \cdots, t) = u(\cdots, x_l^{(N_l-1)}, \cdots, t)$. So
\begin{align*}
u_{x_l x_l}(\cdots, x_l^{(0)}, \cdots, t) &= \frac{1}{h_l^2} \left[ u(\cdots, x_l^{(-1)}, \cdots, t) + u(\cdots, x_l^{(1)}, \cdots, t)  - 2 u(\cdots, x_l^{(0)}, \cdots, t) \right] +\mathcal{O}(h_l^2) \\
&= \frac{1}{h_l^2} \left[ u(\cdots, x_l^{(N_l-1)}, \cdots, t) + u(\cdots, x_l^{(1)}, \cdots, t) - 2 u(\cdots, x_l^{(0)}, \cdots, t) \right] +\mathcal{O}(h_l^2).
\end{align*}
$u_{x_l x_l}(\cdots, x_l^{(N_l-1)}, \cdots, t)$, $u_{x_l}(\cdots, x_l^{(0)}, \cdots, t)$, $u_{x_l}(\cdots, x_l^{(N_l-1)}, \cdots, t)$ can be approximated similarly. We define the difference matrices
\begin{align*}    
D_{P,\Delta}(N_l):=\begin{pmatrix}
        2 & -1  & & & -1 \\
        -1 & 2 &\ddots & & \\
         &\ddots &\ddots &-1 &  \\
         & &-1 &2 & -1 \\
        -1 & & & -1 & 2
    \end{pmatrix}_{N_l},\ 
D_{P,\pm}(N_l):=\begin{pmatrix}
        0 & -1  & & & 1 \\
        1 & 0 &\ddots & & \\
         &\ddots &\ddots &-1 &  \\
         & &1 &0 & -1 \\
        -1 & & & 1 & 0
    \end{pmatrix}_{N_l}.
\end{align*}
So for periodic boundary, the differential operator can be approximated as follows 
\begin{align}\label{para periodic matrix}
    \dfrac{\partial^2}{\partial x_l ^2}&\approx -\dfrac{1}{h_l^2}D_{P,\Delta}(N_l),\ \dfrac{\partial}{\partial x_l }\approx -\dfrac{1}{2h_l}D_{P,\pm}(N_l),\notag \\
    A_l&:=\dfrac{1}{h_l^2}D_{P,\Delta}(N_l)+\dfrac{c_l}{2h_l}D_{P,\pm}(N_l)\approx -\dfrac{\partial^2}{\partial x_l ^2}-c_l\dfrac{\partial}{\partial x_l }.
\end{align}

Let $N:=N_1\cdots N_d$ and collect the node values into a vector
\[
v(t)\in\mathbb{R}^N,\quad
v_{j_1,\dots,j_d}(t)\approx u(x_1^{(j_1)},\dots,x_d^{(j_d)},t).
\]
Define the Kronecker-sum operator
\begin{equation}\label{parabolic coefficient matrix}
A := \sum_{l=1}^d I_{N_1}\otimes\cdots\otimes I_{N_{l-1}}\otimes A_l\otimes I_{N_{l+1}}\otimes\cdots\otimes I_{N_d}.
\end{equation}
Let $b(t)$ and $v_0$ be the inhomogeneous term and initial value sampled on the grid:
\[
b(t) := [f(x,t)]_{x}\in\mathbb{R}^N,\quad
v_0 := [u_0(x)]_{x}\in\mathbb{R}^N.
\]
Then the method of lines yields the approximate ODE system \eqref{parabolic approx odes}.

\subsection{Similarity transform}\label{section 3.3}

To use the LCHS method, we need to ensure that the coefficient matrix $A$ has positive semi-definite real part. But we can prove that the difference matrix for the Neumann boundary $A_l,\ l\in S_2$ has an indefinite real part, whose negative eigenvalue has scale $\mathcal{O}(1)$. Of course we can use transformation $\tilde{v}=e^{-ct}v,\ \tilde{A}=A+cI$ where $c=\mathcal{O}(1)$. But it also causes an error of $\mathcal{O}(e^{cT})$, which performs poorly for long-time simulation.

So in this subsection, we make the real part of coefficient matrix positive semi-definite  by a similarity transformation.
For $l\in S_1\cup S_2$, we assume $|c_l|h_l/2< 1$ and let \begin{align}\label{para trans matrix Pl}
    P_l:=\diag(1, \theta_l, \cdots,\theta_l^{N_l-1}),\ \theta_l:=\sqrt{\dfrac{1+c_lh_l/2}{1-c_lh_l/2}}. 
\end{align}
For $l\in S_3$ we set $P_l:=I_{N_l}$.
For $l\in S_1$, by Lemma \ref{lem dirichlet matrix}, the difference matrix in \eqref{para dirichlet matrix} becomes positive definite after the similarity transformation:
\begin{align}\label{para diri tilde Al}
    A_l&=\dfrac{1}{h_l^2}\begin{pmatrix}        
2 & -1-c_lh_l/2  & &  \\        
-1+c_lh_l/2 & 2 &\ddots &  \\         
&\ddots &\ddots & -1-c_lh_l/2   \\         
& &-1+c_lh_l/2 & 2    
\end{pmatrix}_{N_l} \notag \\
&=P_l^{-1}\cdot \dfrac{1}{h_l^2}\begin{pmatrix}        
2 & -\sqrt{1-c_l^2h_l^2/4}  & &  \\        
-\sqrt{1-c_l^2h_l^2/4} & 2 &\ddots &  \\         
&\ddots &\ddots & -\sqrt{1-c_l^2h_l^2/4}   \\         
& &-\sqrt{1-c_l^2h_l^2/4} & 2    
\end{pmatrix}_{N_l}\cdot P_l \notag \\
&:=P_l^{-1} \widetilde{A}_lP_l.
\end{align}
For $l\in S_2$, by Lemma \ref{lem neumann matrix}, the difference matrix in \eqref{para neumann matrix} become positive semi-definite after the similarity transformation:
\begin{align}\label{para neu tilde Al}    
A_l&=\dfrac{1}{h_l^2}\begin{pmatrix}        
1+c_lh_l/2 & -1-c_lh_l/2  & & & \\        
-1+c_lh_l/2 & 2 & -1-c_lh_l/2 & & \\         
&-1+c_lh_l/2 &\ddots &\ddots &  \\         
& &\ddots & 2 & -1-c_lh_l/2 \\         
& & & -1+c_lh_l/2 & 1-c_lh_l/2    
\end{pmatrix}_{N_l}\notag \\
&=P_l^{-1}\cdot \dfrac{1}{h_l^2}\begin{pmatrix}        
1+c_lh_l/2 & -\sqrt{1-c_l^2h_l^2/4}  & & & \\        
-\sqrt{1-c_l^2h_l^2/4} & 2 & -\sqrt{1-c_l^2h_l^2/4} & & \\         
&-\sqrt{1-c_l^2h_l^2/4} &\ddots &\ddots &  \\         
& &\ddots & 2 & -\sqrt{1-c_l^2h_l^2/4} \\         
& & & -\sqrt{1-c_l^2h_l^2/4} & 1-c_lh_l/2    
\end{pmatrix}_{N_l}\cdot P_l\notag \\
&:=P_l^{-1} \widetilde{A}_l P_l.
\end{align}

Define the global similarity transform \begin{align}\label{para trans matrix}
    P:=\bigotimes_{l=1}^d P_l\ \text{and}\ \widetilde{A}:=PAP^{-1}=\sum_{l=1}^d I_{N_1}\otimes\cdots\otimes I_{N_{l-1}}\otimes \widetilde{A}_l\otimes I_{N_{l+1}}\otimes\cdots\otimes I_{N_d}.
\end{align}
Then we get the analytic solution of \eqref{parabolic approx odes}
\begin{align}\label{para analytic solution}
v(T) &=e^{-TA}v_0+ \int_0^T e^{-(T - s)A}b(s) \, ds\notag \\
&=P^{-1}e^{-T\widetilde{A}}Pv_0+ \int_0^T P^{-1}e^{-(T - s)\widetilde{A}} P b(s) \, ds.
\end{align}

Now the coefficient matrix $\widetilde{A}$ has positive semi-definite real part and its imaginary part only comes from the periodic boundary.

\subsection{Error analysis}\label{section 3.4}

Now we analyze the error between the PDE \eqref{parabolic pde} and the ODEs \eqref{parabolic approx odes}. Because the finite difference scheme of Neumann boundary is first-order at boundary nodes, we specifically show that its overall error is second-order in Lemma \ref{lem para neu mol}.  Then in Theorem \ref{thm para mol}, we compute the scale of $N_l$.

\begin{lemma}\label{lem para neu mol}
Let $g:[0,a]\to\R$ be a sufficiently smooth function and $g'(0)=g'(a)=0$. Let $h=a/N$ and
\[
  x_j = \Bigl(j+\tfrac12\Bigr)h,\quad j=0,\dots,N-1,
\]
be the midpoint grid associated with the Neumann boundary discretization. Assume $|ch|/2<1$. Let
\[
A = \frac1{h^2}
\begin{pmatrix}
1 & -1 \\
-1 & 2 & \ddots \\
& \ddots & \ddots & -1 \\
& & -1 & 1
\end{pmatrix}
+
\frac{c}{2h}
\begin{pmatrix}
1 & -1 \\
1 & 0 & \ddots \\
& \ddots & \ddots & -1 \\
& & 1 & -1
\end{pmatrix},
\]
be the difference matrix of $g''+cg'$ and
\[
  P = \diag(1,\theta,\dots,\theta^{N-1}),
  \quad
  \theta = \sqrt{\frac{1+ch/2}{1-ch/2}}.
\]
Let \[\widetilde{A}=PAP^{-1}=DD^T\] be the similarity transformation as in Lemma \ref{lem neumann matrix} and \[\tau := A[g]_x + [g'' + c g']_x\]  be the local truncation error. Then

\noindent (1) $\tau$ has the decomposition \[\tau=\sqrt{1+ch/2}P^{-1}DP\eta+\rho,\] where $\norm{\eta}_{\star}=\mathcal{O}(h^2 \norm{g^{(3)}}_{\infty}),\ \norm{\rho}_{\star}=\mathcal{O}(h^2\norm{g^{(4)}}_{\infty}
  + h^2|c|\norm{g^{(3)}}_{\infty});$

\noindent (2) for every $t>0$,
\[
  \starnorm{e^{-tA}\tau}
  \lesssim
  \cond(P) h^2
  \left(
    \frac1{\sqrt{t}}\norm{g^{(3)}}_{\infty}
    + \norm{g^{(4)}}_{\infty}
    + |c|\norm{g^{(3)}}_{\infty}
  \right).
\]
\end{lemma}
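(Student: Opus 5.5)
The plan is to write the truncation error in \emph{flux form}, so that the only $\mathcal{O}(h)$ part of $\tau$ (at the two boundary nodes) is a discrete derivative of an $\mathcal{O}(h^2)$ flux error. The smoothing of $e^{-tA}$ then turns that discrete derivative into a factor $t^{-1/2}$ instead of $h^{-1}$.

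\textbf{Step 1: flux form of the diffusion part.} Set $g_j:=g(x_j)$ and introduce the midpoint flux errors
\[
e_{j+1/2}:=\frac{g_{j+1}-g_j}{h}-g'(x_{j+1/2}),\qquad j=0,\dots,N-2 .
\]
By Taylor expansion about $x_{j+1/2}$, $|e_{j+1/2}|\lesssim h^2\|g^{(3)}\|_\infty$. Since $x_{-1/2}=0$ and $x_{N-1/2}=a$, the Neumann conditions give exact zero boundary fluxes $g'(x_{-1/2})=g'(x_{N-1/2})=0$. Hence at every node, boundary nodes included,
\[
-\frac{1}{h^2}(D_{N,\Delta}[g])_j=\frac{e_{j+1/2}-e_{j-1/2}}{h}+\frac{g'(x_{j+1/2})-g'(x_{j-1/2})}{h},
\]
with the convention $e_{-1/2}=e_{N-1/2}=0$. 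The last quotient differs from $g''(x_j)$ by $\mathcal{O}(h^2\|g^{(4)}\|_\infty)$, because it is a centered difference of $g'$ over a cell of width $h$. So the diffusion part of $\tau$ equals $\frac1h\times(\text{difference of the vector } e)$ plus a term of size $\mathcal{O}(h^2\|g^{(4)}\|_\infty)$.

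\textbf{Step 2: convection part and matching with $P^{-1}DP$.} For the convection part, the earlier computations show that $\frac{c}{2h}D_{N,\pm}$ reproduces $-cg'$ to $\mathcal{O}(|c|h^2\|g^{(3)}\|_\infty)$ at all nodes; this error goes directly into $\rho$. Next I use the explicit factorization $\widetilde A=DD^T$ from Lemma~\ref{lem neumann matrix}. Here $D$ is the $N\times(N-1)$ bidiagonal matrix scaled by $1/h$, with entries built from $\sqrt{1\pm ch/2}$. I compare $P^{-1}DP$ acting on $\eta:=c_0\,e$ (for a suitable constant $c_0$ of size $\mathcal{O}(1)$) with the plain difference $\frac1h(e_{j+1/2}-e_{j-1/2})$. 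Conjugating by $P$ multiplies the two bidiagonal entries by $\theta^{\pm1}$. The prefactor $\sqrt{1+ch/2}$ is chosen so that one of the two entries matches exactly; the other then differs by a factor $1+\mathcal{O}(|c|h)$. The resulting mismatch is $\mathcal{O}(|c|h)\cdot e/h=\mathcal{O}(|c|h^2\|g^{(3)}\|_\infty)$, and it is absorbed into $\rho$. This gives (1), with $\|\eta\|_\star,\|\rho\|_\star$ bounded entrywise.

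\textbf{Step 3: the bound (2).} Since $A=P^{-1}\widetilde A P$, we have $e^{-tA}=P^{-1}e^{-t\widetilde A}P$, and therefore
\[
e^{-tA}\tau=\sqrt{1+ch/2}\,P^{-1}e^{-tDD^T}D\,(P\eta)+P^{-1}e^{-tDD^T}P\rho .
\]
Using the SVD of $D$,
\[
\|e^{-tDD^T}D\|_2\le\sup_{\sigma\ge0}\sigma e^{-t\sigma^2}=(2et)^{-1/2},
\]
and $\|e^{-tDD^T}\|_2\le1$ because $DD^T\succeq0$. Both $\|P^{-1}\|_2\|P\|_2$ products equal $\cond(P)$, and $\sqrt{1+ch/2}\le\sqrt2$. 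The $\star$-norm is the $2$-norm up to the common factor $N^{-1/2}$, and the padding/dimension of $\eta$ ($N-1$ versus $N$) changes this by at most a constant. Combining these with the bounds from (1) yields (2).

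\textbf{Main obstacle.} The main obstacle is Step~2: the exact bookkeeping at the two boundary rows, where $D_{N,\pm}$ has the nonstandard entries $\pm1$ on the diagonal, and the correct choice of $\eta$ (including its scaling and which $\theta$-power it carries). The requirement is that the residual is genuinely $\mathcal{O}(h^2)$ and contains no leftover $\mathcal{O}(h)$ boundary term. I would first verify the decomposition on the $2\times2$ corner blocks, and only then argue for the interior rows.
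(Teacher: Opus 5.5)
Your proposal is correct and is essentially the paper's proof. With $c_0=-1$, your $\eta$ is exactly the paper's $\eta=\beta+E^T[g]_x$ (midpoint flux errors, last entry zero). Part (2) then uses the same bound $\|e^{-tDD^T}D\|_2\le (2et)^{-1/2}$, together with $\|e^{-t\widetilde A}\|_2\le 1$ and the factor $\cond(P)$.

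The only difference is how $\rho$ is bounded. The paper uses the exact factorization $A=\sqrt{1+ch/2}\,P^{-1}DP\,E^T$, with the convection term included. This makes $\rho$ a single midpoint Taylor expansion, so the boundary rows you worried about need no separate treatment. You instead split off the convection truncation error and absorb the $\mathcal{O}(ch/2)$ coefficient mismatch into $\rho$. Note that with $c_0=-1$ this mismatch affects both bidiagonal entries, not just one, but each is $c/2$ times an $\mathcal{O}(h^2)$ flux error, so it is harmless.
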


\begin{proof}
(1) Set \(  \alpha:=\frac{ch}{2}\). By Lemma \ref{lem neumann matrix} we have \[D = \frac{1}{h}
\begin{pmatrix}
\sqrt{1+\alpha} &        &        &        &        \\
-\sqrt{1-\alpha} & \sqrt{1+\alpha} &        &        &        \\     
& -\sqrt{1-\alpha} & \ddots &        &        \\     
&      & \ddots & \sqrt{1+\alpha} &        \\     
&      &        & -\sqrt{1-\alpha} & 0
\end{pmatrix}.\]
Let the backward difference matrix be \[E:=\frac1h
\begin{pmatrix}
1 \\
-1 & 1 \\
& -1 & \ddots \\
& & \ddots & 1 \\
& & & -1 & 0
\end{pmatrix}.\] Then $A$ can be decomposed as
\begin{align*}
A= \frac1h
\begin{pmatrix}
1+\alpha \\
-1+\alpha & 1+\alpha \\
& -1+\alpha & \ddots \\
& & \ddots & 1+\alpha \\
& & & -1+\alpha & 0
\end{pmatrix}\cdot E^T
=\sqrt{1+\alpha}P^{-1}DP\cdot E^T.
\end{align*}
Let
\[
  \beta := \bigl(g'(h), g'(2h), \dots, g'((N-1)h), g'(Nh)\bigr)^T.
\]
Then
\begin{align*}
  \tau
  &= \sqrt{1+\alpha}P^{-1}DP\cdot E^T[g]_x + [g'' + c g']_x \\
  &= \sqrt{1+\alpha}P^{-1}DP\bigl(\beta + E^T[g]_x\bigr) + \bigl([g''+c g']_x - \sqrt{1+\alpha}P^{-1}DP\beta\bigr) \\
  &= \sqrt{1+\alpha}P^{-1}DP\eta + \rho,
\end{align*}
where
\[  \eta := \beta + E^T[g]_x,
  \quad
  \rho := [g''+cg']_x - \sqrt{1+\alpha}P^{-1}DP\beta.\]

Then we estimate the norms of $\eta$ and $\rho$. By direct expansion,
\[\eta =
  \begin{pmatrix}
    g'(h) - \dfrac{g(x_1)-g(x_0)}{h} \\
    \vdots \\
    g'((N-1)h) - \dfrac{g(x_{N-1})-g(x_{N-2})}{h} \\
    0
  \end{pmatrix}
  =
  \begin{pmatrix}
    \dfrac1{24} h^2 g^{(3)}(\xi_1) \\
    \vdots \\
    \dfrac1{24} h^2 g^{(3)}(\xi_{N-1}) \\
    0
  \end{pmatrix}
\]
for suitable intermediate points $\xi_j$. Therefore
\begin{equation}\label{lem para neu mol 2}
  \starnorm{\eta}
  \leq \frac1{24} h^2 \norm{g^{(3)}}_{\infty}.
\end{equation}

Then we compute
\[
  \rho =
  \begin{pmatrix}
    g''(x_0)+cg'(x_0)-\dfrac{(1+\alpha)g'(h)+(-1+\alpha)g'(0)}{h} \\
    \vdots \\
    g''(x_{N-2})+cg'(x_{N-2})-\dfrac{(1+\alpha)g'((N-1)h)+(-1+\alpha)g'((N-2)h)}{h} \\
    g''(x_{N-1})+cg'(x_{N-1})-\dfrac{(1+\alpha)g'(Nh)+(-1+\alpha)g'((N-1)h)}{h}
  \end{pmatrix}.
\]
Rearranging the coefficients and recalling $\alpha=ch/2$ give
\[
  \rho =
  \begin{pmatrix}
    g''(x_0)-\dfrac{g'(h)-g'(0)}{h} \\
    \vdots \\
    g''(x_{N-2})-\dfrac{g'((N-1)h)-g'((N-2)h)}{h} \\
    g''(x_{N-1})-\dfrac{g'(Nh)-g'((N-1)h)}{h}
  \end{pmatrix}
  + c
  \begin{pmatrix}
    g'(x_0)-\dfrac{g'(h)+g'(0)}2 \\
    \vdots \\
    g'(x_{N-2})-\dfrac{g'((N-1)h)+g'((N-2)h)}2 \\
    g'(x_{N-1})-\dfrac{g'(Nh)+g'((N-1)h)}2
  \end{pmatrix}.
\]
By Taylor expansion at the midpoint locations,
\[ \rho=
  \frac1{24}h^2
  \begin{pmatrix}
    g^{(4)}(\zeta_0) \\
    \vdots \\
    g^{(4)}(\zeta_{N-1})
  \end{pmatrix}
  + c\,\frac18 h^2
  \begin{pmatrix}
    g^{(3)}(\widetilde\zeta_0) \\
    \vdots \\
    g^{(3)}(\widetilde\zeta_{N-1})
  \end{pmatrix}\]
for suitable intermediate points $\zeta_j$ and $\widetilde\zeta_j$. Hence
\begin{equation}\label{lem para neu mol 3}
  \starnorm{\rho}
  \le \frac1{24}h^2\norm{g^{(4)}}_{\infty}
  + \frac{|c|}{8}h^2\norm{g^{(3)}}_{\infty}.
\end{equation}

(2) By the conclusion in (1),
\begin{equation}\label{lem para neu mol 1}
  e^{-tA}\tau =e^{-tA}\sqrt{1+\alpha}P^{-1}DP\eta + e^{-tA}\rho.
\end{equation}
Since $A=P^{-1}\widetilde{A}P$ and $\widetilde{A}=DD^T$, we have
\begin{align*}
  e^{-tA}\sqrt{1+\alpha}P^{-1}DP=\sqrt{1+\alpha} P^{-1} e^{-t\widetilde A} DP = \sqrt{1+\alpha}\, P^{-1} e^{-tDD^T} D P.
\end{align*}
Let \(  B := D^T D \succeq 0.\)
Using the power-series expansion,
\[
  e^{-t D D^T} D
  = \sum_{k=0}^{\infty} \frac1{k!}(-t D D^T)^k D
  =  D \sum_{k=0}^{\infty} \frac1{k!}(-tD^T D)^k
  =  D e^{-tB}.
\]
Hence
\begin{align*}
  \norm{e^{-t D D^T} D}_{2}
  = \norm{ D e^{-tB}}_2 = \sqrt{\lambda_{\max}\bigl(e^{-tB} D^T D e^{-tB}\bigr)} = \sqrt{\lambda_{\max}(B e^{-2tB})}.
\end{align*}
For the scalar function $f(\lambda)=\lambda e^{-2t\lambda}$, the maximizer occurs at \(  \lambda = \frac1{2t},\) so \(  f(\lambda) \le f\Bigl(\frac1{2t}\Bigr) = \frac1{2et}.\)
Therefore
\[
  \norm{e^{-t D D^T} D}_{2}
  \le \frac1{\sqrt{2et}}.
\]
It follows that
\begin{equation}\label{lem para neu mol 4}
  \norm{e^{-tA}\sqrt{1+\alpha}P^{-1}DP}_2
  \le \sqrt{1+\alpha}\,\cond(P)\,\frac1{\sqrt{2et}}
  \lesssim \cond(P)\,\frac1{\sqrt{t}}.
\end{equation}
Since $\widetilde{A}\succeq 0$, we have
$  \norm{e^{-tA}}_2
  = \norm{P^{-1}e^{-t\widetilde A}P}_2
  \leq \cond(P).$
Combining \eqref{lem para neu mol 2}, \eqref{lem para neu mol 3}, \eqref{lem para neu mol 1} and \eqref{lem para neu mol 4}, we obtain
\begin{align*}
  \starnorm{e^{-tA}\tau}
  &\leq \norm{e^{-tA}\sqrt{1+\alpha}P^{-1}DP}_2\,\starnorm{\eta} + \norm{e^{-tA}}_2\,\starnorm{\rho} \\
  &\lesssim \cond(P) h^2
  \left(
    \frac1{\sqrt{t}}\norm{g^{(3)}}_{\infty}
    + \norm{g^{(4)}}_{\infty}
    + |c|\norm{g^{(3)}}_{\infty}
  \right).
\end{align*}
This completes the proof.
\end{proof}

\begin{theorem}\label{thm para mol}
Assume $u$ is sufficiently smooth. 
For error $\varepsilon_N$, if $N_l$ satisfies \begin{align}\label{thm para mol N1}     
N_l \gtrsim \sqrt{ \frac{Td}{\varepsilon_N} } a_l \exp(\sum_{l' \in S_1,S_2} \frac{|c_{l'}|a_{l'}}{4})\left( \max \{ |c_l|, 1 \} \cdot \max \{ \| u_{x_l}^{(3)} \|_{\infty}, \| u_{x_l}^{(4)} \|_{\infty} \} \right)^{\frac{1}{2}},
\end{align}
then the overall error between the solution of PDE \eqref{parabolic pde}  and the solution of its semi-discrete ODEs \eqref{parabolic approx odes} can be controlled by \( \varepsilon_N \), i.e. 
\[
\| [ u(x, T) ]_{x} - v(T) \|_{\star} \leq \varepsilon_N.
\]
\end{theorem}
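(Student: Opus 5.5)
The argument is the standard consistency-plus-stability estimate for the method of lines, using Duhamel's formula. Set $e(t):=[u(x,t)]_x-v(t)$. Because $u$ solves \eqref{parabolic pde} and the grid values satisfy $\frac{d}{dt}[u]_x=[\mathcal{L}u]_x+b(t)$, the error solves
\[
e'(t)=-Ae(t)-\tau(t),\qquad e(0)=0,
\]
where $\tau(t):=A[u(\cdot,t)]_x+[\mathcal{L}u(\cdot,t)]_x$ is the local truncation error. Hence
\[
e(T)=-\int_0^T e^{-(T-s)A}\tau(s)\,ds,
\]
and it suffices to bound $\starnorm{e^{-(T-s)A}\tau(s)}$ uniformly enough in $s$ to integrate.

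\textbf{Step 1: split the truncation error by direction.} Since $A$ is a Kronecker sum, $\tau=\sum_{l}\tau_l$, where $\tau_l:=(I\otimes\cdots\otimes A_l\otimes\cdots\otimes I)[u]_x+[u_{x_lx_l}+c_lu_{x_l}]_x$. Each $\tau_l$ is a one-dimensional truncation error applied along fibers in $x_l$. In addition,
\[
e^{-tA}=\bigotimes_{l'}e^{-tA_{l'}}=P^{-1}\Bigl(\bigotimes_{l'}e^{-t\widetilde A_{l'}}\Bigr)P.
\]
The factors $l'\neq l$ are contractions after conjugation by $P_{l'}$, because $\widetilde A_{l'}$ has positive semidefinite Hermitian part. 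So $\|e^{-tA}\|_2\le\cond(P)=\prod_{l'\in S_1\cup S_2}\theta_{l'}^{N_{l'}-1}$. Using $\log\theta\approx ch/2$ and $N_{l'}h_{l'}\approx a_{l'}$, this gives $\cond(P)\lesssim\exp\bigl(\sum_{l'}|c_{l'}|a_{l'}/2\bigr)$. The target has exponent $/4$, so I will check whether only $\sqrt{\cond}$ enters via the norm of $P_l$ against $P_l^{-1}$ separately. Either way it is a constant factor of the stated exponential type.

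\textbf{Step 2: bound each direction.}
For Dirichlet and periodic directions, \eqref{diffe ddx}--\eqref{diffe dx} give $\starnorm{\tau_l}\lesssim h_l^2\max\{1,|c_l|\}\max\{\|u^{(3)}_{x_l}\|_\infty,\|u^{(4)}_{x_l}\|_\infty\}$ directly. For Neumann directions the boundary rows are only first order, so I apply Lemma~\ref{lem para neu mol} fiberwise: on each fiber, part (1) decomposes $\tau_l=\sqrt{1+\alpha_l}P_l^{-1}D_lP_l\eta+\rho$, and the smoothing estimate of part (2) is tensorized with the contractive factors in the other directions. This yields
\[
\starnorm{e^{-tA}\tau_l}\lesssim\cond(P)\,h_l^2\bigl(t^{-1/2}+1+|c_l|\bigr)\Gamma_l,
\]
where $\Gamma_l$ denotes the max of the third and fourth $x_l$-derivative norms. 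Averaging over fibers preserves the $\starnorm{\cdot}$ bound.

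\textbf{Step 3: integrate and choose $N_l$.} Integrating over $s\in[0,T]$, and using $\int_0^T (T-s)^{-1/2}ds=2\sqrt T\le 2T$ for $T\gtrsim1$ (or absorbed otherwise), gives
\[
\starnorm{e(T)}\lesssim \cond(P)\,T\sum_l h_l^2\max\{|c_l|,1\}\Gamma_l.
\]
Requiring each of the $d$ summands to be at most $\varepsilon_N/d$, and substituting $h_l\approx a_l/N_l$, gives exactly \eqref{thm para mol N1} after taking square roots. This is why the factor is $\sqrt{Td/\varepsilon_N}$ and the exponential appears with its exponent halved.

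\textbf{Main obstacle.} I expect Step 2 for the Neumann directions to be the hardest part. The $t^{-1/2}$ smoothing must survive the tensor product, which requires the exact commutation $e^{-tDD^T}D=De^{-tD^TD}$ in the $l$-th factor together with contractivity of the other factors. The $\cond(P)$ bookkeeping must also be controlled uniformly in $h$. The remaining steps are routine.
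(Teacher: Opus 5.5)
Your proposal follows essentially the same route as the paper's proof: Duhamel's formula, the direction-wise splitting of $\tau$, the bound $\cond(P)\lesssim\exp\bigl(\sum_{l'\in S_1\cup S_2}|c_{l'}|a_{l'}/2\bigr)$ whose square root produces the $/4$ exponent, direct second-order truncation bounds in the Dirichlet and periodic directions, and Lemma~\ref{lem para neu mol}(2) tensorized with the contractive factors in the Neumann directions. It also makes the same absorption of $2\sqrt{T}$ into $T$ as the paper. The only slip is a sign: with your definition of $\tau$ the error equation is $e'=-Ae+\tau$, not $-Ae-\tau$, and this does not affect the norm estimate.
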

\begin{proof}
Let the local truncation error and the overall error be
\[
  \tau(t) := \bigl[\Delta u(x,t) + \sum_{l=1}^d c_l u_{x_l}(x,t)\bigr]_x + A[u(x,t)]_x,
  \quad
  e(t) := [u(x,t)]_x - v(t).
\]
Then $e(t)$ satisfies
\[
  \begin{cases}
    \dfrac{\dd}{\dd t} e = -Ae + \tau(t),\\[0.3em]
    e(0)=0.
  \end{cases}
\]
Solving the ODE gives
\[
  e(T) = \int_0^T e^{-(T-s)A}\tau(s)\,\dd s
  = \sum_{l=1}^d \int_0^T e^{-(T-s)A}\tau_l(s)\,\dd s,
\]
where
\[
  \tau_l(t)
  :=
  [u_{x_lx_l}(x,t)+c_lu_{x_l}(x,t)]_x
  + I^{\otimes(l-1)}\otimes A_l\otimes I^{\otimes(d-l)}[u(x,t)]_x.
\]
To guarantee $\starnorm{e(T)}\le \varepsilon_N$, it suffices to show that, for each $l$,
\begin{align}\label{thm para mol 2}
  \left\|\int_0^T e^{-(T-s)A}\tau_l(s)\,\dd s\right\|_{\star}
  \le \frac{\varepsilon_N}{d}.
\end{align}

\medskip
\noindent
\textbf{Step 1: norm of $P$.}
By \eqref{para trans matrix}, we have
\[
\mathrm{cond}(P) = \| P \|_2 \| P^{-1} \|_2 = \prod_{l=1}^d \| P_l \|_2 \cdot \prod_{l=1}^d \| P_l^{-1} \|_2 = \prod_{l \in S_1,S_2} \| P_l \|_2 \| P_l^{-1} \|
_2.
\]
For 
$P_l$
, by \eqref{para trans matrix Pl}, we have
\[ \| P_l \|_2 = \begin{cases} 
\theta_l^{N_l - 1}, & c_l \geq 0 \\
1, & c_l < 0 
\end{cases}, \quad \| P_l^{-1} \|_2 = \begin{cases} 
1, & c_l \geq 0 \\
\theta_l^{-N_l + 1}, & c_l < 0 
\end{cases}.\]
So the condition number of $P_l$ satisfies
\begin{align*} \| P_l \|_2 \cdot \| P_l^{-1} \|_2 &= \begin{cases} 
\theta_l^{N_l - 1}, & c_l \geq 0 \\
\left( \dfrac{1}{\theta_l} \right
)^{N_l - 1}, & c_l < 0 
\end{cases} = \sqrt{\dfrac{1 + |c_l| h_l/2}{1 - |c_l| h_l/2}}^{N_l - 1}\\
&= \left( 1 + \frac{|c_l| h_l}{1 - |c_l| h_l/2} \right)^{\frac{N_l - 1}{2}} \leq \exp\left( \frac{|c_l| h_l}{1 - |c_l| h_l/2} \cdot \frac{N_l - 1}{2} \right
)\\
&\leq \exp\left( \frac{|c_l| a_l}{N_l - |c_l| a_l/2} \cdot \frac{N_l - 1}{2} \right) \leq C_0 \exp(\frac
{|c_l| a_l}{2})
\end{align*}for some constant $C_0$.
Then for some constant $C$
\begin{align}\label{para condP}
\mathrm{cond}(P) \leq C \exp\left( \sum_{l \in S_1,S_2} \frac{|c_l| a_l}{2} \right).
\end{align}

\medskip
\noindent
\textbf{Step 2: Dirichlet and periodic boundaries.}
For $l$ in \( S_1 \) and \( S_3 \), by \eqref{diffe ddx} and \eqref{diffe dx}, we have the truncation error is second-order
\begin{align*}
&\left| u_{x_l x_l} (\cdots, x_l^{(j)}, \cdots, t) + \frac{1}{h_l^2} \left( 2u(\cdots, x_l^{(j)}, \cdots, t) - u(\cdots, x_l^{(j+1)}, \cdots, t) - u(\cdots, x_l^{(j-1)}, \cdots, t) \right) \right| 
\leq \frac{1}{12} h_l^2 \max|u^{(4)}_{x_l}|, \\
&\left| u_{x_l} (\cdots, x_l^{(j)}, \cdots, t) + \frac{1}{2h_l} \left( -u(\cdots, x_l^{(j+1)},\cdots, t) + u(\cdots, x_l^{(j-1)},\cdots, t) \right) \right| \leq \frac{1}{6} h_l^2 \max|u^{(3)}_{x_l}|.    
\end{align*}
So as a \( N_1 \cdots N_d \)-dimensional vector, every element of $\tau_l(t)$ satisfies
$$\left|(u_{x_lx_l}+c_lu_{x_l})(x_1^{(j_1)},\cdots,x_d^{(j_d)},t)+(I^{\otimes (l-1)} \otimes A_l \otimes I^{\otimes(d - l)}[u]_{x})_{j_1,\cdots,j_d}\right| \leq \left(\frac{1}{12}\|u_{x_l}^{(4)}\|_{\infty}+\frac{|c_l|}{6}\|u_{x_l}^{(3)}\|_{\infty}\right)h^2_l.$$
By the definition of $\|\cdot\|_{\star}$, we have 
\begin{align}\label{thm hy mol 1}
\|\tau_l(t)\|_{\star} \leq  \left(\frac{1}{12}\|u_{x_l}^{(4)}\|_{\infty}+\frac{|c_l|}{6}\|u_{x_l}^{(3)}\|_{\infty}\right)h^2_l
\leq \left(\frac{1}{12}\|u_{x_l}^{(4)}\|_{\infty}+\frac{|c_l|}{6}\|u_{x_l}^{(3)}\|_{\infty}\right) \frac{a_l^2}{N_l^2}.\end{align}
So
\begin{align*}
  \left\|\int_0^T e^{-(T-s)A}\tau_l(s)\,\dd s\right\|_{\star}
  &\le
  \left\|\int_0^T P^{-1}e^{-(T-s)\widetilde A}P\tau_l(s)\,\dd s\right\|_{\star} \\
  &\le
  \int_0^T \cond(P)\,\norm{e^{-(T-s)\widetilde A}}_2\,\starnorm{\tau_l(s)}\,\dd s \\
  &\le
  T\cond(P)
  \left(
    \frac1{12}\|u_{x_l}^{(4)}\|_{\infty} + \frac{|c_l|}{6}\|u_{x_l}^{(3)}\|_{\infty}
  \right)
  \frac{a_l^2}{N_l^2}.
\end{align*}
Hence to let \eqref{thm para mol 2} hold true, it is enough to require
\[
  N_l \geq
  \sqrt{\frac{C T d}{\varepsilon_N}}
  \, a_l
  \exp\!\left(\sum_{l'\in S_1\cup S_2}\frac{|c_{l'}|a_{l'}}{4}\right)
  \left(
    \frac1{12}\|u_{x_l}^{(4)}\|_{\infty} + \frac{|c_l|}{6}\|u_{x_l}^{(3)}\|_{\infty}
  \right)^{1/2}.
\]

\medskip
\noindent
\textbf{Step 3: Neumann boundaries.}
For $l\in S_2$, the one-dimensional lemma \ref{lem para neu mol} above can be applied in the $l$-th direction. By (2) in Lemma \ref{lem para neu mol}, we have
\[\|I^{\otimes (l-1)}\otimes e^{-tA_l} \otimes^{\otimes (d-l)} \tau_l(s)\|_{\star}\lesssim \mathrm{cond}(P_l)h_l^2\left(\frac1{\sqrt{t}}\norm{u_{x_l}^{(3)}}_{\infty}    
+ \norm{u_{x_l}^{(4)}}_{\infty}    
+ |c_l|\norm{u_{x_l}^{(3)}}_{\infty}\right).\]
Using
\(e^{-(T-s)A} = \bigotimes_{l'=1}^d e^{-(T-s)A_{l'}},\)
we get
\begin{align*}  
\left\|\int_0^T e^{-(T-s)A}\tau_l(s)\,\dd s\right\|_{\star} &= \left\|\int_0^T\bigotimes_{l'\neq l} e^{-(T-s)A_{l'}} I^{\otimes (l-1)}\otimes e^{-(T-s)A_l} \otimes^{\otimes (d-l)} \tau_l(s)\,\dd s\right\|_{\star}\\
&\lesssim  
\int_0^T  
\prod_{l'\ne l}\cond(P_{l'})  
\cdot  
\cond(P_l) h_l^2  
\left(    
\frac1{\sqrt{T-s}}\norm{u_{x_l}^{(3)}}_{\infty}    
+ \norm{u_{x_l}^{(4)}}_{\infty}    
+ |c_l|\norm{u_{x_l}^{(3)}}_{\infty}  
\right)  
\dd s\\
&\lesssim 
\cond(P)  
\left(    
\int_0^T \frac{\dd s}{\sqrt{T-s}} + T  
\right)  
\max\{|c_l|,1\}  
\max\{\|u_{x_l}^{(3)}\|_{\infty},\|u_{x_l}^{(4)}\|_{\infty}\}  
\frac{a_l^2}{N_l^2} \\  
&\lesssim  
T\cond(P)  
\max\{|c_l|,1\}  
\max\{\|u_{x_l}^{(3)}\|_{\infty},\|u_{x_l}^{(4)}\|_{\infty}\}  
\frac{a_l^2}{N_l^2}.
\end{align*}
Then let \eqref{thm para mol 2} hold true, we derive the required scale of $N_l$.
\end{proof}

\section{Quantum algorithm for linear parabolic PDEs}\label{section 4}
In this chapter, we solve semi-discrete ODEs using the optimal LCHS method. We first employ the Gaussian quadrature formula to handle the inhomogeneous term and obtain the standard form of LCHS. Next, we specify the oracles and construct block encodings of the coefficient matrices using elementary gates.
We then simulate the solution of the semi-discrete ODEs via LCHS framework.
Finally, we derive the total complexity of the overall PDE algorithm.

\subsection{Discretized implementation}
In this subsection, we approximate the solution of the semi-discrete ODE system~\eqref{parabolic approx odes} at a final time $T$.
Recall that the semi-discrete ODEs admit the analytic solution \eqref{para analytic solution}
\begin{align*}
v(T) =P^{-1}e^{-T\widetilde{A}}Pv_0+ \int_0^T P^{-1}e^{-(T - s)\widetilde{A}} P b(s) \, ds.
\end{align*}

We discretize the inhomogeneous integral in $v(T)$ by a composite Gauss-Legendre quadrature. 
Fix a step size $h_t>0$ and let $T/h_t$ be the number of time subintervals.
On each subinterval, we use a $Q_t$-point Gauss--Legendre rule on $[-1,1]$ with nodes $\{y_q\}_{q=0}^{Q_t-1}$
and weights $\{w_q\}_{q=0}^{Q_t-1}$, i.e.,
\(
\int_{-1}^{1} f(y)\,dy \approx \sum_{q=0}^{Q_t-1} w_q f(y_q).
\)
Define the quadrature nodes and coefficients
\begin{equation*}
s_{q_t,m_t} \;:=\; h_t\Big(m_t + \tfrac{1}{2} + \tfrac{1}{2}y_q\Big),  
\quad  
c_{q_t,m_t} \;:=\; \frac{h_t}{2}\, w_{q_t}\, \|b(s_{q_t,m_t})\|_2,  
\quad  
0\leq m_t\leq T/h_t-1,\; 0\leq q_t\leq Q_t-1.
\end{equation*}
Then 
\begin{align*}
    v(T)\approx v_Q(T):= P^{-1}e^{-T\widetilde{A}}Pv_0+ \sum_{m_t=0}^{T/h_t-1} \sum_{q_t=0}^{Q_t-1} c_{q_t,m_t} P^{-1}e^{-(T - s_{q_t,m_t})\widetilde{A}} P |b(s_{q_t,m_t})\rangle.
\end{align*} 
For convenience, we let $M_t=\dfrac{T}{h_t}Q_t$ and re-index $q_t,m_t$ by a single index $j_t$. Then we write \begin{align}\label{para vb}
    v_Q(T)=P^{-1}e^{-T\widetilde{A}}Pv_0+ \sum_{j_t=0}^{M_t-1} c_{j_t} P^{-1}e^{-(T - s_{j_t})\widetilde{A}} P |b(s_{j_t})\rangle.
\end{align} 
Then we derive the scale of $\sum c_{j_t}$ for subsequent LCU operation and the scales of $Q_t$ and $h_t$ in following two lemmas. The proofs can be found in Appendix \ref{appendix proof of gauss-len}.

\begin{lemma}\label{lemma para scaling of c}
    Let $c_{j_t},j_t=0,\cdots,M_t-1$ be the quadrature coefficients defined in \eqref{para vb}, then $c_{j_t}\geq 0$ and $\sum_{j_t=0}^{M_t-1}c_{j_t}=\mathcal{O}(\int_0^T \|b(s)\|_2 ds).$
\end{lemma}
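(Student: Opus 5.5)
Nonnegativity is immediate, so the plan concentrates on the sum. Each $c_{q_t,m_t}=\frac{h_t}{2}w_{q_t}\|b(s_{q_t,m_t})\|_2$ is a product of $h_t>0$, a Gauss--Legendre weight $w_{q_t}$ (positive, a classical fact from the positivity of the Christoffel numbers), and a norm. Hence $c_{j_t}\ge 0$.

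For the sum, we note that $\sum_{j_t}c_{j_t}$ is exactly the composite $Q_t$-point Gauss--Legendre approximation of $\int_0^T\phi(s)\,ds$ with $\phi(s):=\|b(s)\|_2$. On each subinterval $[m_th_t,(m_t+1)h_t]$ the affine change of variables $s=h_t(m_t+\tfrac12+\tfrac12 y)$ maps the rule on $[-1,1]$ to that subinterval with weights $\frac{h_t}{2}w_q$, which sum to $h_t$. The goal is then to show that this composite quadrature of $\phi$ is $\mathcal{O}(\int_0^T\phi)$. I would do this by writing
\[
\sum_{j_t}c_{j_t}=\int_0^T\phi(s)\,ds+E_Q,
\]
where $E_Q$ is the composite quadrature error, and then bounding $|E_Q|$ by a quantity that vanishes as $h_t\to0$ or $Q_t\to\infty$. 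The standard Gauss--Legendre error bound gives, per subinterval, $\mathcal{O}\bigl(h_t^{2Q_t+1}\|\phi^{(2Q_t)}\|_\infty/(2Q_t)!\bigr)$ up to constants. Alternatively, using only continuity of $\phi$, the composite rule is a Riemann-type sum with positive weights, so it converges to $\int_0^T\phi$. Either way, for the parameters chosen later (Lemma on $Q_t,h_t$), $|E_Q|\le\int_0^T\phi$, say, which yields $\sum c_{j_t}\le 2\int_0^T\|b(s)\|_2\,ds$.

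The main obstacle is that $\phi=\|b(\cdot)\|_2$ need not be smooth where $b$ vanishes, so the high-order Gauss error formula may not apply directly to $\phi$. To avoid this, I would not bound $E_Q$ via derivatives of $\phi$. Instead I would use the crude positive-weight bound $\sum c_{j_t}\le\sum_{m_t}h_t\max_{s\in I_{m_t}}\phi(s)$ together with a Lipschitz-in-time assumption on $b$ (which holds since $f$ is smooth). This gives
\[
\sum c_{j_t}\le\int_0^T\phi+T h_t\,\mathrm{Lip}(b),
\]
and the second term is absorbed into $\mathcal{O}(\int_0^T\|b\|_2)$ once $h_t$ is at most the scale fixed in the subsequent lemma.
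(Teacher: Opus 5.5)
Your route is the paper's. Nonnegativity comes from positivity of the Gauss--Legendre weights, and $\sum_{j_t}c_{j_t}$ is read as the composite Gauss--Legendre approximation of $\int_0^T\|b(s)\|_2\,ds$. The paper stops there and simply asserts that the quadrature error is of lower order. Your positive-weight estimate
\[
\sum_{j_t}c_{j_t}\le\int_0^T\|b(s)\|_2\,ds+Th_t\,\mathrm{Lip}(b)
\]
is correct, and it quantifies that error more carefully than the paper does. You are also right that $\|b(\cdot)\|_2$ need not be smooth where $b$ vanishes, a point the paper does not address.

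The gap is the final absorption step, and likewise your earlier claim that $|E_Q|\le\int_0^T\phi$ for the parameters of the later lemma. Lemma~\ref{thm para len-gauss scale} only imposes $h_t\le 4Q_t/\bigl(e(\|\widetilde A\|_2+\Xi)\bigr)$. Combined with $\mathrm{Lip}(b)\le\sqrt N\,\Xi^2$, this gives
\[
Th_t\,\mathrm{Lip}(b)\le \tfrac{4}{e}\,Q_tT\sqrt N\,\Xi^2/(\|\widetilde A\|_2+\Xi),
\]
whereas the target is $\int_0^T\|b(s)\|_2\,ds=\sqrt N\int_0^T\|b(s)\|_\star\,ds$. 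Nothing relates the two, because $T$ and the temporal scale $\Xi$ of the source are independent of $\|\widetilde A\|_2$. For a source concentrated on a short portion of $[0,T]$, the integral can be far smaller than this bound, so the extra term is not absorbed.

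Some resolution condition tying $h_t$ to $b$ cannot be avoided. For unconstrained $h_t$, a source concentrated near a single quadrature node makes $\sum_{j_t}c_{j_t}\big/\int_0^T\|b(s)\|_2\,ds$ arbitrarily large. The paper's one-line proof tacitly assumes such a condition.

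Within your argument the fix is to impose, in addition, $h_t\le \int_0^T\|b(s)\|_2\,ds\big/\bigl(T\sup_s\|b'(s)\|_2\bigr)$. This yields $\sum_{j_t}c_{j_t}\le 2\int_0^T\|b(s)\|_2\,ds$ rigorously. The only cost is a larger $M_t$, which in the paper's oracle model enters the resource count solely through the $\log M_t$ index qubits.
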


\begin{lemma}\label{thm para len-gauss scale}
Let $v_{Q}(T)$ be defined by~\eqref{para vb}.
Fix an error $\varepsilon_{Q}\in(0,1)$. Assume $$\Xi := \sup\left\{ (\| b^{(p)} \|_{\star})^{\frac{1}{p+1}} \,\big|\, p \geq 0,\, t \in [0,T] \right\} < \infty.$$ If we choose
\[
Q_t \geq \frac{1}{\ln 4} \ln \frac{ T\,\mathrm{cond}(P) \Xi}{\varepsilon_{Q}}, \quad h_t \leq \frac{4 Q_t}{e ( \| \widetilde{A} \|_2 + \Xi)},
\]then $\|v(T)-v_Q(T)\|_{\star}\leq \varepsilon_{Q}$.
\end{lemma}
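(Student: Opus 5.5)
The difference $v(T)-v_Q(T)$ involves only the inhomogeneous integral, since the homogeneous term $P^{-1}e^{-T\widetilde A}Pv_0$ appears identically in both. Writing $F(s):=P^{-1}e^{-(T-s)\widetilde A}Pb(s)$, the error is the composite Gauss--Legendre error $\int_0^T F(s)\,ds-\sum_{m_t,q_t}\frac{h_t}{2}w_{q_t}F(s_{q_t,m_t})$. Note that $c_{j_t}|b(s_{j_t})\rangle=\frac{h_t}{2}w_{q_t}b(s_{j_t})$, so the rule is applied to $F$ itself. The plan has three steps: bound the per-subinterval Gauss error by a $2Q_t$-th derivative bound, estimate that derivative via Leibniz, and sum over the $T/h_t$ subintervals.

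\textbf{Step 1 (Gauss error).} On a subinterval of length $h_t$, the standard remainder of the $Q_t$-point rule is
\[
\frac{h_t^{2Q_t+1}(Q_t!)^4}{(2Q_t+1)((2Q_t)!)^3}\,\sup|F^{(2Q_t)}|.
\]
This is componentwise, so I will take $\|\cdot\|_\star$ through a vector version, for instance by bounding each component and then passing to $\ell_2$. Using Stirling, $(Q_t!)^4/((2Q_t)!)^3\lesssim 4^{-2Q_t}\,(e/(2Q_t))^{2Q_t}$ up to polynomial factors. Hence the per-interval error is at most
\[
h_t\Big(\frac{e h_t}{4Q_t}\Big)^{2Q_t}\frac{\sup\|F^{(2Q_t)}\|_\star}{(2Q_t)!}\cdot\mathrm{poly}.
\]
The exact constant bookkeeping is what produces the factor $e/(4Q_t)$ in the assumed choice of $h_t$.

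\textbf{Step 2 (derivative of $F$).} By Leibniz,
\[
F^{(n)}(s)=P^{-1}\sum_{k=0}^n\binom nk\,\widetilde A^{\,k}e^{-(T-s)\widetilde A}\,P\,b^{(n-k)}(s).
\]
Since the Hermitian part of $\widetilde A$ is positive semidefinite, $\|e^{-(T-s)\widetilde A}\|_2\le1$, and $\widetilde A$ commutes with its exponential. Together with $\|b^{(p)}\|_\star\le\Xi^{p+1}$, this gives
\[
\|F^{(n)}\|_\star\le\mathrm{cond}(P)\sum_k\binom nk\|\widetilde A\|_2^k\,\Xi^{n-k+1}\le \mathrm{cond}(P)\,\Xi\,(\|\widetilde A\|_2+\Xi)^n .
\]
One caveat: $P$ does not commute with the $\star$-norm scaling, but $\|\cdot\|_\star$ is just a rescaled $\ell_2$ norm, so operator norms transfer directly.

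\textbf{Step 3 (summation).} Summing over the $T/h_t$ subintervals with $n=2Q_t$ gives
\[
\|v(T)-v_Q(T)\|_\star\lesssim T\,\mathrm{cond}(P)\,\Xi\Big(\frac{e h_t(\|\widetilde A\|_2+\Xi)}{4Q_t}\Big)^{2Q_t}\frac{1}{(2Q_t)!}.
\]
The hypothesis on $h_t$ makes the bracket at most $1$. The remaining decay must come from the factor $1/(2Q_t)!$, or from keeping one extra $4^{-Q_t}$ factor in the Stirling estimate. Either way the error is bounded by $T\,\mathrm{cond}(P)\,\Xi\,4^{-Q_t}$, and the hypothesis on $Q_t$ makes this at most $\varepsilon_Q$.

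\textbf{Main obstacle.} The hard part is getting the Stirling constants to line up exactly with the thresholds in the statement, so that a clean factor of $4^{-Q_t}$, rather than $2^{-Q_t}$, survives after $h_t$ absorbs the $(\|\widetilde A\|_2+\Xi)^{2Q_t}$ growth. I would first try the sharp form $(Q!)^4/((2Q)!)^3\le \sqrt{\pi Q}\,\bigl(e/(8Q)\bigr)^{2Q}\cdot\mathrm{const}$. If it falls short, I would absorb the leftover polynomial factors into the implicit constants of the $\ln$ bound.
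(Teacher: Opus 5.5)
Your architecture matches the paper's proof. You apply the Gauss--Legendre remainder to $F(s)=P^{-1}e^{-(T-s)\widetilde A}Pb(s)$. You bound $\|F^{(2Q_t)}\|_\star\le\mathrm{cond}(P)\,\Xi\,(\|\widetilde A\|_2+\Xi)^{2Q_t}$ by Leibniz, using $\|e^{-(T-s)\widetilde A}\|_2\le 1$. Then you multiply by the $T/h_t$ subintervals. Step 2 is correct.

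\textbf{The gap is the Stirling bookkeeping in Step 1, which then contaminates Step 3.} Your estimate $(Q_t!)^4/((2Q_t)!)^3\lesssim(e/(8Q_t))^{2Q_t}$ is right. The ``hence'' display $h_t\bigl(eh_t/(4Q_t)\bigr)^{2Q_t}\sup\|F^{(2Q_t)}\|_\star/(2Q_t)!$ does not follow from it: you have already turned all the factorials into powers, and then you divide by $(2Q_t)!$ a second time. Since $\binom{2Q_t}{Q_t}^{2}\approx 16^{Q_t}/(\pi Q_t)$, the remainder is, up to polynomial factors, $h_t(h_t/4)^{2Q_t}\sup\|F^{(2Q_t)}\|/(2Q_t)!$. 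Equivalently, it is $h_t(eh_t/(8Q_t))^{2Q_t}\sup\|F^{(2Q_t)}\|$, with no factorial. Your display is smaller than this by a factor of order $(Q_t/e)^{2Q_t}$, so it is not a valid bound. For $F$ a polynomial of degree $2Q_t$ the remainder formula is an equality, and your bound falls below it once $Q_t$ is moderately large. Consequently your Step 3 inequality with the trailing $1/(2Q_t)!$ is unjustified, and the option that ``the remaining decay comes from $1/(2Q_t)!$'' is an artifact of this error.

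The repair is the sharp form you mention under \emph{Main obstacle}; with it, your argument becomes exactly the paper's. Stirling's two-sided bounds give
\[
\frac{(Q_t!)^4\,h_t^{2Q_t}}{(2Q_t+1)\,((2Q_t)!)^3}\le\frac{e\sqrt{\pi Q_t}}{4Q_t+2}\Bigl(\frac{eh_t}{8Q_t}\Bigr)^{2Q_t},
\]
and $e\sqrt{\pi Q_t}/(4Q_t+2)\le 1$ for every $Q_t\ge 1$. So the total error is at most $T\,\mathrm{cond}(P)\,\Xi\,\bigl(eh_t(\|\widetilde A\|_2+\Xi)/(8Q_t)\bigr)^{2Q_t}$.

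The hypothesis on $h_t$ makes the bracket at most $1/2$, which gives exactly $4^{-Q_t}$. This is your ``extra $4^{-Q_t}$'' option, since $(x/8)^{2Q_t}=4^{-Q_t}(x/4)^{2Q_t}$. The hypothesis on $Q_t$ then finishes the proof. Your fallback of absorbing polynomial factors into constants of the $\ln$ bound would not prove the statement as written, because both thresholds are explicit. It is also unnecessary, since the prefactor is at most $1$.

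One smaller point concerns the vector-valued remainder. Bounding each component and then passing to $\ell_2$ does not yield $\sup_s\|F^{(2Q_t)}(s)\|_2$, because the suprema over $s$ would be taken componentwise. Instead, take $w$ to be the unit vector in the direction of the error. Apply the scalar remainder to $s\mapsto\langle w,F(s)\rangle$, whose $2Q_t$-th derivative is bounded by $\sup_s\|F^{(2Q_t)}(s)\|_2$. The paper uses the vector form of the remainder without comment.
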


\begin{remark}
    For $b(t)= [f(x,t)]_{x}$, the mean norm of $b^{(p)}$ is approximately equal to integral mean value of $\partial_t^p f(x,t)$, i.e.,
    \[ \| b^{(p)} \|_{\star}^2 \approx \frac{1}{a_1\cdots a_d}\int_\Omega \left(\dfrac{\partial^p}{\partial t^p}f(x,t)\right)^2 dx. \]
    So $\Xi$ can be estimated by the integral of $\partial_t^p f(x,t)$.
\end{remark}

\subsection{Oracles and Block-encoding of coefficient matrix}

We now specify the quantum access model (oracles) and construct block-encodings about the transition matrix $P$
and transformed coefficient matrix $\widetilde{A}$ in \eqref{para trans matrix}.

\paragraph{State-preparation oracles.}
Recall that $N=\prod_{l=1}^d N_l$ is the total number of spatial nodes, where these $N_l$ are chosen as in
Theorem~\ref{thm para mol}. For quantum implementation, we assume $N_l$ is a power of two:
$N_l = 2^{n_l}$. For the initial state $u_0(x)$ and $v_0=[u_0(x)]_{x}\in\mathbb{R}^N$, we assume access to the oracle 
\begin{align}\label{para oracle Ov}
    O_v: |0\rangle \rightarrow  |v_0\rangle.
\end{align}
For the inhomogeneous term $f(x,t)$ and $b(t)=[f(x,t)]_{x}\in\mathbb{R}^N$, we assume access to the oracle 
\begin{align}\label{para oracle Ob}
    O_b: |j_t\rangle |0\rangle \rightarrow |j_t\rangle |b(s_{j_t})\rangle,
\end{align} where $s_{j_t}$ is the integral nodes in \eqref{para vb}.

For the nodes and coefficients of composite Gauss-Legendre quadrature formula in \eqref{para vb}, we assume access to the oracles
\begin{align}
    O_s&\in (1,1)BE\left(\diag(1-\frac{s_0}{T},1-\frac{s_1}{T},1-\frac{s_{M_t-1}}{T})\right),\label{para oracle Os}\\
    O_c&:\ |0\rangle \rightarrow \frac{1}{\sqrt{\|c\|_1}}\sum_{j_t=0}^{M_t-1}\sqrt{c_{j_t}}|j_t\rangle,\label{para oracle Oc}
\end{align} where $\|c\|_1=\sum_{j_t=0}^{M_t-1} c_{j_t}.$

\paragraph{Gate set and cost model.}
We count basic gates from the set
\[
  \mathcal{G}:=\{X,Y,Z,S,H,\; \mathrm{C}\!-\!X,\mathrm{C}\!-\!Y,\mathrm{C}\!-\!Z,\mathrm{C}\!-\!S,\mathrm{C}\!-\!H,\;\text{Toffoli}\}.
\]
Later, we will construct controlled gates many times in LCU (Lemma \ref{lemma be lcu}). We use the standard fact that if a unitary $U$ is implemented using $s$ gates in $\mathcal{G}$, then an $n$-fold
controlled version $\mathrm{C}^n\!-\!U$ can be implemented using at most $(2n+3)s$ gates in $\mathcal{G}$.

\paragraph{Block-encoding of $P$ and $P^{-1}$.} 
Recall in \eqref{para trans matrix Pl},
\[P_l=\diag(1, \theta_l, \cdots,\theta_l^{N_l-1})=\diag(1,\theta_l^{2^{n_l-1}})\otimes \cdots \otimes \diag(1,\theta_l^{2^1})\otimes  \diag(1,\theta_l^{2^0})=\bigotimes_{j=1}^{n_l}\diag(1,\theta_l^{2^{n_l-j}}).\]
Here, each $2\times 2$ diagonal factor admits an exact $(\max\{1,\theta_l^{2^j}\},1)$ block-encoding using $\mathcal{O}(1)$ gates in $\mathcal{G}$. By Lemma \ref{lemma be tensor}, we can obtain $(\max\{1,\theta_l\}^{N_l-1},n_l)$-block-encoding of $P_l$ using $\mathcal{O}(n_l)$ gates in $\mathcal{G}$. Recall in \eqref{para trans matrix} $P=\bigotimes_{l=1}^d P_l$, by Lemma \ref{lemma be tensor}, we can obtain the block-encoding of $P$
\begin{equation}\label{para be of P}
  U_P \in \bigl(\alpha_P,\, m_P\bigr)\mathrm{BE}(P),
  \quad
  \alpha_P=\prod_{l\in S_1\cup S_2}\max\{1,\theta_l\}^{\,N_l-1},
  \quad
  m_P=\sum_{l\in S_1\cup S_2} n_l,
\end{equation} where its cost is $\mathrm{Gate}(U_P)=\mathcal{O}\!\Big(\sum_{l\in S_1\cup S_2} n_l\Big)$.
Similarly, we can obtain
\begin{equation}\label{para be of Pinv}
  U_{P^{-1}} \in \bigl(\alpha_{P^{-1}},\, m_{P^{-1}}\bigr)\mathrm{BE}(P^{-1}),
  \quad
  \alpha_{P^{-1}}=\prod_{l\in S_1\cup S_2}\max\{1,\theta_l^{-1}\}^{\,N_l-1},
  \quad
  m_{P^{-1}}=\sum_{l\in S_1\cup S_2} n_l,
\end{equation} where its cost is $\mathrm{Gate}(U_{P^{-1}})=\mathcal{O}\!\Big(\sum_{l\in S_1\cup S_2} n_l\Big)$.

\paragraph{Block-encoding of $L$ and $H$.}
For \[ \sigma_{01} =\begin{pmatrix} 0 & 1 \\ 0 & 0 \end{pmatrix}= \dfrac{1}{2}(X + iY),\quad  \sigma_{10}= \begin{pmatrix} 0 & 0 \\ 1 & 0 \end{pmatrix} = \dfrac{1}{2}(X - iY) . \]
by Lemma \ref{lemma be lcu}, we can obtain the $(1,1)$-block-encoding of $\sigma_{10}$ and $\sigma_{01}$ using $5=\mathcal{O}(1)$ gates in $\mathcal{G}$. The LCU process is 
\[ \frac{1}{\sqrt{2}}\begin{pmatrix}    
1 & 1 \\    
1 & -1
\end{pmatrix}\otimes I_2 \cdot \begin{pmatrix}
    X & \\
    & iY
\end{pmatrix}\cdot \frac{1}{\sqrt{2}}\begin{pmatrix}    
1 & 1 \\    
1 & -1
\end{pmatrix}\otimes I_2=\begin{pmatrix}
    \frac{1}{2}(X+iY) & *\\
    * & *
\end{pmatrix}. \]

For $N_l=2^{n_l}$, define the local shift
\begin{align*}
    s_j^-:=I^{\otimes (n_l-j)} \otimes \sigma_{01} \otimes \sigma_{10}^{\otimes (j-1)}.
\end{align*}
By Lemma \ref{lemma be tensor}, we can obtain a $(1,j)$-block-encoding of $s_j^-$ using $5j=\mathcal{O}(j)$ gates in $\mathcal{G}$.
Then let
\begin{align*}
t^-_l := \begin{pmatrix} 0 & 1 & & \\ & \ddots & \ddots & \\ & & \ddots & 1 \\ & & & 0 \end{pmatrix}_{N_l} = \sum_{j=1}^{n_l} I^{\otimes (n_l-j)} \otimes \sigma_{01} \otimes \sigma_{10}^{\otimes (j-1)}=\sum_{j=1}^{n_l} s_j^-.   
\end{align*} 
By Lemma \ref{lemma be lcu}, we can obtain a 
\begin{align}\label{para be of t-}
(n_l,n_l+\lceil\log n_l\rceil)\text{-block-encoding of }t^-_l.
\end{align}
In this LCU process, the operator $\sum_j |j \rangle \langle j | \otimes s_j^-$ needs $\sum_j (2\lceil\log n_l\rceil+3)\cdot 5j=\mathcal{O}(n_l^2\log n_l)$ gates in $\mathcal{G}$. And the LCU coefficient operator only needs $\mathcal{O}(\log n_l)$ gates. So to block-encode $t^-_l$, we need $\mathcal{O}(n_l^2\log n_l)$ gates in $\mathcal{G}$.
Similarly, for $t^+_l:=(t^-_l)^T$, we also can get a $(n_l,n_l+\lceil\log n_l\rceil)$-block-encoding of $t^+_l$ at the same cost.

Based on different boundary conditions, we block-encode \[L_l:=\frac{1}{2}(\widetilde{A}_l+\widetilde{A}_l^\dag),\quad H_l:=\frac{1}{2i}(\widetilde{A}_l-\widetilde{A}_l^\dag)\] separately. Throughout, let $\eta_l:=\sqrt{1-\tfrac14 c_l^2 h_l^2}=\mathcal{O}(1)$.\\
    \textbf{(i) Dirichlet Boundary ($l\in S_1$):} Recall in \eqref{para diri tilde Al}, we have $H_l=O$ and \[ L_l = \frac{1}{h_l^2}(I+I - \eta_lt^-_l - \eta_lt^+_l). \] By Lemma \ref{lemma be lcu}, using the block-encodings of $t^-_l$ and $t^+_l$, we can get a $(\frac{2}{h_l^2}(1+\eta_l n_l),n_l+\lceil \log n_l\rceil+2)$-block-encoding of $L_l$, 
which needs $\mathcal{O}(n_l^2\log n_l)$ gates in $\mathcal{G}$.\\
\textbf{(ii) Neumann Boundary ($l\in S_2$):} Recall in \eqref{para neu tilde Al}, $H_l=O$ and \[ L_l = \frac{1}{h_l^2}\left(I+I - \eta_lt^-_l - \eta_lt^+_l -(1-\frac{1}{2}c_lh_l)\sigma_{00}^{\otimes n_l}-(1+\frac{1}{2}c_lh_l)\sigma_{11}^{\otimes n_l}\right), \]
where $\sigma_{00} =\begin{pmatrix} 1 & 0 \\ 0 & 0 \end{pmatrix}= \dfrac{1}{2}(I + Z),\sigma_{11} =\begin{pmatrix} 0 & 0 \\ 0 & 1 \end{pmatrix}= \dfrac{1}{2}(I-Z) $. Similar to the construction of $\sigma_{01},\sigma_{10}$, we can obtain $(1,n_l)$-block-encoding of $\sigma_{00}^{\otimes n_l}$ and $\sigma_{11}^{\otimes n_l}$ using $\mathcal{O}(n_l)$ gates in $\mathcal{G}$. Then by Lemma \ref{lemma be lcu}, we can obtain a $(\frac{2}{h_l^2}(2+\eta_l n_l),n_l+\lceil \log n_l\rceil+3)$-block-encoding of $L_l$ using $\mathcal{O}(n_l^2\log n_l)$ gates in $\mathcal{G}$. \\
\textbf{(iii) Periodic Boundary ($l\in S_3$):} Recall in \eqref{para neu tilde Al}, \[L_l= \frac{1}{h_l^2}(I+I - t^-_l - t^+_l - \sigma_{01}^{\otimes n} - \sigma_{10}^{\otimes n}),\quad H_l= \frac{1}{2h_l}(-t^-_l + t^+_l + \sigma_{01}^{\otimes n} - \sigma_{10}^{\otimes n}).\]
By Lemma \ref{lemma be lcu}, we can obtain a $(\frac{2}{h_l^2}(2+n_l),n_l+\lceil \log n_l\rceil+3)$-block-encoding of $L_l$ and a $(\frac{1}{h_l}(1+n_l),n_l+\lceil \log n_l\rceil+2)$-block-encoding of $H_l$ using $\mathcal{O}(n_l^2\log n_l)$ gates in $\mathcal{G}.$

Recall \eqref{para trans matrix}, we have the Kronecker-sum structure \begin{equation*}
L = \sum_{l=1}^d I_{N_1}\otimes\cdots\otimes I_{N_{l-1}}\otimes L_l\otimes I_{N_{l+1}}\otimes\cdots\otimes I_{N_d}.
\end{equation*} By Lemma \ref{lemma be lcu}, across the $d$ terms, we can obtain the following block-encoding of $L$ with its cost $\mathrm{Gate}(U_L)$:
\begin{align}\label{para be of L}
  U_L &\in (\alpha_L,\; m_L)\mathrm{BE}(L),
  \quad \alpha_L=\sum_{l\in S_1}\frac{2}{h_l^2}(1+\eta_l n_l)+
  \sum_{l\in S_2}\frac{2}{h_l^2}(2+\eta_l n_l)+\sum_{l\in S_3}\frac{2}{h_l^2}(2+n_l)=\mathcal{O}\left(\sum_{l=1}^d \frac{n_lN_l^2}{a_l^2} \right)
  ,\notag \\
   m_L&=\max_{1\leq l\leq d}\bigl(n_l+\lceil\log n_l\rceil\bigr)+3+\lceil\log d\rceil,\quad
  \mathrm{Gate}(U_L)=\mathcal{O}\!\Big(\log d\sum_{l=1}^{d} n_l^2\log n_l\Big).
  \end{align}

Similarly, since \begin{equation*}
H = \sum_{l\in S_3} I_{N_1}\otimes\cdots\otimes I_{N_{l-1}}\otimes H_l\otimes I_{N_{l+1}}\otimes\cdots\otimes I_{N_d},
\end{equation*} we can obtain the following block-encoding of $H$ with its cost $\mathrm{Gate}(U_H)$:
\begin{align}\label{para be of H}
  U_H &\in (\alpha_H,\; m_H)\mathrm{BE}(H),
  \quad \alpha_H
  = \sum_{l\in S_3}\frac{1}{h_l}(1+n_l)=\mathcal{O}\left(\sum_{l\in S_3} \frac{n_lN_l}{a_l} \right),\notag \\
  m_H&=\max_{l\in S_3}\bigl(n_l+\lceil\log n_l\rceil\bigr)+2+\lceil\log|S_3|\rceil,\quad
  \mathrm{Gate}(U_H) =\mathcal{O}(\log |S_3|\cdot \sum_{l\in S_3}n_l^2 \log n_l).
  \end{align}

\subsection{Implementation of LCHS}

In this subsection, we describe how to implement the approximation to $v_{Q}(T)$ by the optimal LCHS framework. 

Recall from~\eqref{para vb} that
\begin{equation*}
  v_{Q}(T)
  =
  P^{-1}e^{-T \widetilde{A}}P\,v_0
  +
  \sum_{j_t=0}^{M_t-1}
  c_{j_t}\,
  P^{-1}e^{-(T-s_{j_t})\widetilde{A}}P\,|b(s_{j_t})\rangle .
\end{equation*}
We construct the two terms separately.

\paragraph{Homogeneous term.}
We consider block-encoding the matrix
\[M_0:=P^{-1}e^{-T \widetilde{A}}P .\]
Through the similarity transformation $\widetilde{A}=PAP^{-1}$, we have
$L=\frac{\widetilde{A}+(\widetilde{A})^\dagger}{2}\succeq 0$.
So the LCHS framework in Section~2.2 can be applied directly to block-encode $e^{-T \widetilde{A}}$.
Firstly, we run LCHS for $e^{-T \widetilde{A}}$ with internal precision
\(\frac{\varepsilon_v}{\mathrm{cond}(P)}.\) Then by Lemma \ref{lemma be times}, we construct a block-encoding of $M_0$ with precision $\varepsilon_v$.

\begin{lemma}\label{para be of homo}
Let $\varepsilon_v\in(0,1)$.
We can construct a block-encoding
\[
U_{v}\in (\alpha_v,m_v,\varepsilon_v)\mathrm{BE}\!\left(M_0\right),
\]
where
\[
\alpha_v=\mathcal{O}(\mathrm{cond}(P)),
\quad
m_v=\mathcal{O}\!\left(
\log\!
\bigl(T\|L\|_2+\log\frac{\mathrm{cond}(P)}{\varepsilon_v}\bigr)+\max_{1\leq l\leq d}\bigl(n_l+\lceil\log n_l\rceil\bigr)+\log d+\sum_{l\in S_1\cup S_2} n_l
\right).  \] 
And the gate complexity is
\[
\mathrm{Gate}(U_v)=\mathcal{O}\!\left(
\log\!\left(
T\|L\|_2+\log\frac{\mathrm{cond}(P)}{\varepsilon_v}
\right)
\left(
Q_v+\log^{5/2}\frac{\mathrm{cond}(P)}{\varepsilon_v}
\right)+\log d \sum_{l=1}^d n_l^2\log n_l \cdot Q_v
\right),
\]where
\[
Q_v=\mathcal{O}\left( \log\frac{\mathrm{cond}(P)}{\varepsilon_v}T\cdot \sum_{l=1}^d \frac{n_lN_l^2}{a_l^2} \right).
\]
\end{lemma}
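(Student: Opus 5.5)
The plan is to build $U_v$ as a product of three block-encodings, $U_{P^{-1}}\cdot U_{\mathrm{LCHS}}\cdot U_P$. Here $U_{\mathrm{LCHS}}$ block-encodes $e^{-T\widetilde{A}}$ and comes from Theorem~\ref{thm lchs}, while $U_P$ and $U_{P^{-1}}$ are the exact encodings \eqref{para be of P} and \eqref{para be of Pinv}. The three encodings are composed via Lemma~\ref{lemma be times}.

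\textbf{Step 1: LCHS encoding of $e^{-T\widetilde{A}}$.} Since $\widetilde{A}$ is time-independent and $L=\frac12(\widetilde{A}+\widetilde{A}^\dagger)\succeq 0$ by Section~\ref{section 3.3}, Theorem~\ref{thm lchs} applies with $\|L\|_{L^1}=T\|L\|_2$ and target precision $\varepsilon':=\varepsilon_v/(\alpha_P\alpha_{P^{-1}})$. It gives a block-encoding with normalization $\mathcal{O}(1)$ and $\mathcal{O}(\log(R/h))$ ancillas. By \eqref{lchs gamma} and \eqref{lchs h}, $R/h=\mathcal{O}\bigl(\log\frac1{\varepsilon'}\cdot(T\|L\|_2+\log\frac1{\varepsilon'})\bigr)$, so the ancilla count is $\mathcal{O}(\log(T\|L\|_2+\log\frac{1}{\varepsilon'}))$.

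The select oracle requires simulating $e^{-iT(\pm RL+H)}$. For this I would apply Theorem~\ref{thm qsvt hs} to an LCU block-encoding of $RL+H$ built from \eqref{para be of L} and \eqref{para be of H}, which has normalization $R\alpha_L+\alpha_H=\mathcal{O}(R\alpha_L)$ because $\alpha_H\lesssim\alpha_L$. The number of queries is then
\[
Q_v=\mathcal{O}\Bigl(TR\alpha_L+\log\tfrac1{\varepsilon'}\Bigr)=\mathcal{O}\Bigl(\log\tfrac{\mathrm{cond}(P)}{\varepsilon_v}\,T\sum_{l}\tfrac{n_lN_l^2}{a_l^2}\Bigr),
\]
using $R=\mathcal{O}(\log\frac1{\varepsilon'})$. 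Each query costs $\mathrm{Gate}(U_L)+\mathrm{Gate}(U_H)=\mathcal{O}(\log d\sum_l n_l^2\log n_l)$. Adding the overheads stated in Theorem~\ref{thm lchs}, namely the PREP cost $\log(\cdot)\log^{5/2}\frac1{\varepsilon'}$ and the $Q_v\log(\cdot)$ controlled-selection gates, yields the stated gate count. The ancillas are $m_L$ (and $m_H$, which can be reused) plus the LCHS index register plus the $+2$ from QSVT.

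\textbf{Step 2: identify $\alpha_P\alpha_{P^{-1}}$ with $\mathrm{cond}(P)$.} From \eqref{para trans matrix Pl}, for $c_l\ge0$ we have $\theta_l\ge1$, so $\max\{1,\theta_l\}^{N_l-1}=\|P_l\|_2$ and $\max\{1,\theta_l^{-1}\}^{N_l-1}=1=\|P_l^{-1}\|_2$. The case $c_l<0$ is symmetric. Hence $\alpha_P\alpha_{P^{-1}}=\prod_l\|P_l\|_2\|P_l^{-1}\|_2=\mathrm{cond}(P)$ exactly, as computed in Step~1 of Theorem~\ref{thm para mol}.

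\textbf{Step 3: compose via Lemma~\ref{lemma be times}.} The product has normalization $\alpha_{P^{-1}}\cdot\mathcal{O}(1)\cdot\alpha_P=\mathcal{O}(\mathrm{cond}(P))$. Because $U_P$ and $U_{P^{-1}}$ are exact, the error of the product is $\alpha_{P^{-1}}\alpha_P\varepsilon'=\varepsilon_v$. The ancilla counts add: $m_P=\sum_{l\in S_1\cup S_2}n_l$ (with $U_{P^{-1}}$ reusing the same register), plus the LCHS and Hamiltonian-simulation ancillas, which gives $m_v$. The $\mathcal{O}(\sum n_l)$ gates for $U_P$ and $U_{P^{-1}}$ are absorbed into the other terms.

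\textbf{Main obstacle.} The delicate step is the bookkeeping in Step~1. One must verify that the normalization $R\alpha_L$ of the Hamiltonian in the select oracle, combined with $R=\mathcal{O}(\log\frac1{\varepsilon'})$, gives exactly the stated $Q_v$. One must also check that the additive $\log(1/\varepsilon)$ term in Theorem~\ref{thm qsvt hs} is dominated. Allocating the Hamiltonian-simulation precision as part of $\varepsilon'$ only changes constants inside the logarithms.
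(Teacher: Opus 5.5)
Your proposal is correct and matches the paper's proof step for step. Both run LCHS (Theorem~\ref{thm lchs}) for $e^{-T\widetilde{A}}$ at precision $\varepsilon_v/\mathrm{cond}(P)$, realize the select oracle by QSVT simulation of $e^{-iT(RL+H)}$ (Theorem~\ref{thm qsvt hs}) to obtain $Q_v$, and then apply Lemma~\ref{lemma be times} twice with the exact encodings of $P^{\pm 1}$ and the identity $\alpha_P\alpha_{P^{-1}}=\mathrm{cond}(P)$. One small slip: $U_{P^{-1}}$ cannot in general reuse $U_P$'s ancilla register in a product of block-encodings, because the off-diagonal blocks produce cross terms. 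The ancilla count is therefore $m_1+2m_P$, as in the paper, which does not change the stated $\mathcal{O}$ bound on $m_v$.
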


\begin{proof}
We first block-encode $e^{-T \widetilde{A}}$.
Applying Theorem \ref{thm lchs} with target precision $\varepsilon_v/\mathrm{cond}(P)$,
we obtain
\[
U_{e,T}\in(\alpha_1,m_1,\varepsilon_v/\mathrm{cond}(P))\mathrm{BE}(e^{-T \widetilde{A}}),
\]
where we choose
\[
R_1=\mathcal{O}\!\left(\log\frac{\mathrm{cond}(P)}{\varepsilon_v}\right),
\quad
\gamma_1=\mathcal{O}\!\left(\sqrt{\log\frac{\mathrm{cond}(P)}{\varepsilon_v}}\right),
\quad
\frac{1}{h_1}
=
\mathcal{O}\!\left(T\|L\|_2+\log\frac{\mathrm{cond}(P)}{\varepsilon_v}\right),
\]
and
\[
\alpha_1
=
\frac{h_1}{\sqrt{2\pi}}
\sum_{j=-R_1/h_1}^{R_1/h_1}
\bigl|\widehat f_2(h_1 j;\gamma_1,c)\bigr|
=\mathcal{O}(1).
\]
By Theorem \ref{thm qsvt hs}, the Hamiltonian simulation subroutine for
\[
e^{-iT(R_1L+H)}
\]
has query complexity
\[
Q_v=
\mathcal{O}\!\left(
(R_1\alpha_L+\alpha_H)T+\log\frac{\mathrm{cond}(P)}{\varepsilon_v}
\right)=\mathcal{O}\left( \log\frac{\mathrm{cond}(P)}{\varepsilon_v}T\cdot \sum_{l=1}^d \frac{n_lN_l^2}{a_l^2} \right)
\]and ancilla qubits
\[\mathcal{O}(m_L)=\mathcal{O}\left(\max_{1\leq l\leq d}\bigl(n_l+\lceil\log n_l\rceil\bigr)+\log d\right),\]
where $\alpha_L$ and $\alpha_H$ are given in~\eqref{para be of L} and~\eqref{para be of H}.
By Theorem \ref{thm lchs}, the overall gate complexity of the LCHS implementation for $e^{-T \widetilde{A}}$ is
\begin{align*}
&\mathcal{O}\left( (\log(T\|L\|_2 + \log \frac{\mathrm{cond}(P)}{\varepsilon_v}) \log^{5/2} \frac{\mathrm{cond}(P)}{\varepsilon_v} \right)+\mathcal{O}\big(( \mathrm{Gate}(U_L)+ \mathrm{Gate}(U_H))\cdot Q_v\big)+\mathcal{O}\!\left(
\log\!\left(
T\|L\|_2+\log\frac{\mathrm{cond}(P)}{\varepsilon_v}
\right)\cdot Q_v\right)\\
&=\mathcal{O}\!\left(
\log\!\left(
T\|L\|_2+\log\frac{\mathrm{cond}(P)}{\varepsilon_v}
\right)
\left(
Q_v+\log^{5/2}\frac{\mathrm{cond}(P)}{\varepsilon_v}
\right)+\log d \sum_{l=1}^d n_l^2\log n_l \cdot Q_v
\right).
\end{align*}
The number of ancilla qubits is \[m_1=\mathcal{O}\left(\log \frac{R_1}{h_1}\right)+\mathcal{O}(m_L)=\mathcal{O}\left(\log\!
\bigl(T\|L\|_2+\log\frac{\mathrm{cond}(P)}{\varepsilon_v}\bigr)+\max_{1\leq l\leq d}\bigl(n_l+\lceil\log n_l\rceil\bigr)+\log d\right).\]

Next, we use the block-encodings of $P$ and $P^{-1}$ in \eqref{para be of P} and \eqref{para be of Pinv} to construct block-encoding of $M_0$. Since
\[
\alpha_P\alpha_{P^{-1}}
=
\prod_{l\in S_1\cup S_2}
\max\{1,\theta_l\}^{N_l-1}
\max\{1,\theta_l^{-1}\}^{N_l-1}
=\prod_{l\in S_1\cup S_2}\mathrm{cond}(P_l)
=\mathrm{cond}(P),
\]
two applications of Lemma \ref{lemma be times} yield a block-encoding of
\(M_0=P^{-1}e^{-T \widetilde{A}}P\)
with normalization factor
\[
\alpha_v=\alpha_{P^{-1}}\alpha_1\alpha_P=\mathrm{cond}(P)\alpha_1
\]
and ancilla number
\[
m_v=m_1+2m_P.
\]
Because $U_{P^{-1}}$ and $U_P$ are exact, the only approximation error comes from
$U_{e,T}$ and is amplified by the factor $\alpha_{P^{-1}}\alpha_P=\mathrm{cond}(P)$.
Hence the final error is \(\varepsilon_v.\)
The cost has the same asymptotic scaling as that of constructing $U_{e,T}$, up to the lower-order
overhead of implementing $U_P$ and $U_{P^{-1}}$.
\end{proof}

With the oracle $O_v$ from~\eqref{para oracle Ov}, the circuit for homogeneous term is
\[
|0\rangle|0\rangle
\xrightarrow{\,I\otimes O_v\,}
|0\rangle|v_0\rangle
\xrightarrow{\,U_v\,}
|0\rangle \frac{1}{\alpha_v}\widetilde M_0 |v_0\rangle + |\perp_v\rangle,
\]
where $\widetilde M_0$ denotes the matrix induced by the top-left block of $U_v$, and
\begin{align}\label{para error of hom}
\|\widetilde M_0-M_0\|_2\le \varepsilon_v.
\end{align} As a significant advantage of LCHS, here we query the initial state oracle $O_v$ only one time.

\paragraph{Inhomogeneous term.}
We now implement all matrices
\[M_{j_t}:=P^{-1}e^{-(T-s_{j_t})\widetilde{A}}P\]
simultaneously by a block-diagonal construction.
Define
\begin{equation}\label{para SEL tildeA}
\mathrm{SEL}_{\widetilde{A}}
:=
\mathrm{diag}\!\bigl((T-s_0)\widetilde{A},\dots,(T-s_{M_t-1})\widetilde{A}\bigr)
=T(\mathrm{diag}\!(
1-\frac{s_0}{T},\,\dots,\,1-\frac{s_{M_t-1}}{T}
)\otimes I_N)(I_{M_t}\otimes \widetilde{A}).
\end{equation}
Accordingly, its Hermitian and skew-Hermitian parts are
\begin{align*}
\mathrm{SEL}_{L}&:=\frac{\mathrm{SEL}_{\widetilde{A}}+\mathrm{SEL}_{\widetilde{A}}^\dagger}{2}=T(\mathrm{diag}\!(
1-\frac{s_0}{T},\,\dots,\,1-\frac{s_{M_t-1}}{T}
)\otimes I_N)(I_{M_t}\otimes L), \\
\quad
\mathrm{SEL}_{H}&:=\frac{\mathrm{SEL}_{\widetilde{A}}-\mathrm{SEL}_{\widetilde{A}}^\dagger}{2i}=T(\mathrm{diag}\!(
1-\frac{s_0}{T},\,\dots,\,1-\frac{s_{M_t-1}}{T}
)\otimes I_N)(I_{M_t}\otimes H).
\end{align*}

We consider block-encoding the matrix
\[\text{SEL}_{M}:=I\otimes P^{-1} e^{-\text{SEL}_{\widetilde{A}}} I\otimes P.\]
As in the homogeneous case, firstly we run LCHS for $e^{-\text{SEL}_{\widetilde{A}}}$ with internal precision
\(\frac{\varepsilon_b}{\mathrm{cond}(P)}.\) Then by Lemma \ref{lemma be times}, we construct block-encoding of $\text{SEL}_{M}$ with precision $\varepsilon_b$.

\begin{lemma}\label{para be of nonh}
Let $\varepsilon_b\in(0,1)$. We can construct a block-encoding
\[
U_b\in(\alpha_b,m_b,\varepsilon_b)\mathrm{BE}\!\left(\text{SEL}_{M}\right),
\]
where
\[
\alpha_b=\mathcal{O}(\mathrm{cond}(P)),
\quad
m_b=\mathcal{O}\!\left(
\log\!
\bigl(T\|L\|_2+\log\frac{\mathrm{cond}(P)}{\varepsilon_b}\bigr)+\max_{1\leq l\leq d}\bigl(n_l+\lceil\log n_l\rceil\bigr)+\log d+\sum_{l\in S_1\cup S_2} n_l
\right),
\]
with the gate complexity
\[
\mathrm{Gate}(U_b)=\mathcal{O}\!\left(
\log\!\left(
T\|L\|_2+\log\frac{\mathrm{cond}(P)}{\varepsilon_b}
\right)
\left(
Q_b+\log^{5/2}\frac{\mathrm{cond}(P)}{\varepsilon_b}
\right)+\log d \sum_{l=1}^d n_l^2\log n_l \cdot Q_b
\right),
\]where \[
Q_b=\mathcal{O}\left( \log\frac{\mathrm{cond}(P)}{\varepsilon_b}T\cdot \sum_{l=1}^d \frac{n_lN_l^2}{a_l^2} \right),
\] and query complexity for $O_s$ (in \eqref{para oracle Os}) \(Q_b.\)
\end{lemma}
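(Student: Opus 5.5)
The plan mirrors the proof of Lemma~\ref{para be of homo}, with the scalar time $T$ replaced by the block-diagonal generator $\mathrm{SEL}_{\widetilde{A}}$. First we check the LCHS hypothesis for this generator. Since $0\le s_{j_t}\le T$, every diagonal entry $1-s_{j_t}/T$ is nonnegative. Hence $\mathrm{SEL}_{L}=T\,\mathrm{diag}(1-s_{j_t}/T)\otimes L\succeq 0$, because $L\succeq 0$ after the similarity transform of Section~\ref{section 3.3}. Moreover, $e^{-\mathrm{SEL}_{\widetilde{A}}}$ is exactly $\mathrm{diag}\bigl(e^{-(T-s_{j_t})\widetilde{A}}\bigr)$. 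We view this matrix as the time-$1$ propagator of the constant generator $\mathrm{SEL}_{\widetilde{A}}=\mathrm{SEL}_L+i\,\mathrm{SEL}_H$; equivalently, it is the time-$T$ propagator of $\mathrm{SEL}_{\widetilde{A}}/T$. Theorem~\ref{thm lchs} then applies with target precision $\varepsilon_b/\mathrm{cond}(P)$. We choose $R_2,\gamma_2$ by \eqref{lchs gamma} and $h_2$ by \eqref{lchs h}. Since $\|\mathrm{SEL}_L\|_{L^1}\le T\|L\|_2$, we get $1/h_2=\mathcal{O}(T\|L\|_2+\log(\mathrm{cond}(P)/\varepsilon_b))$, and the normalization factor is $\alpha_2=\mathcal{O}(1)$.

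Next we build block-encodings of $\mathrm{SEL}_L/T$ and $\mathrm{SEL}_H/T$ as products, via Lemma~\ref{lemma be times}. The first factor is $O_s\otimes I_N$, which is a $(1,1)$-block-encoding of $\mathrm{diag}(1-s_{j_t}/T)\otimes I_N$. The second factor is $I_{M_t}\otimes U_L$ (respectively $I_{M_t}\otimes U_H$), using \eqref{para be of L} and \eqref{para be of H}. This gives block-encodings with normalizations $\alpha_L$ and $\alpha_H$, ancilla count $m_L+1$, and gate cost $\mathrm{Gate}(U_L)+\mathrm{Gate}(U_H)$ plus one query to $O_s$. The two diagonal factors commute with the $I_{M_t}\otimes$ factors, so the product is indeed the desired matrix.

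Theorem~\ref{thm qsvt hs} then implements $e^{-iT(\pm R_2\,\mathrm{SEL}_L/T+\mathrm{SEL}_H/T)}$ using
\[Q_b=\mathcal{O}\bigl((R_2\alpha_L+\alpha_H)T+\log(\mathrm{cond}(P)/\varepsilon_b)\bigr)\]
queries. With $\alpha_L=\mathcal{O}(\sum_l n_lN_l^2/a_l^2)$ dominating $\alpha_H$, this matches the stated bound on $Q_b$. Each query to the block-encoding uses $O_s$ once, so the query count to $O_s$ is also $Q_b$. Substituting into the gate formula of Theorem~\ref{thm lchs} gives the stated $\mathrm{Gate}(U_b)$, exactly as in the homogeneous case.

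Finally, we sandwich the LCHS block-encoding between $I_{M_t}\otimes U_{P^{-1}}$ and $I_{M_t}\otimes U_P$ from \eqref{para be of P} and \eqref{para be of Pinv}, applying Lemma~\ref{lemma be times} twice. The normalization becomes $\alpha_{P^{-1}}\alpha_2\alpha_P=\mathrm{cond}(P)\,\alpha_2=\mathcal{O}(\mathrm{cond}(P))$. The ancilla count is $m_2+2m_P$, which gives the stated $m_b$ (the extra index register is $\mathcal{O}(1)$ from $O_s$, and $\log M_t$ is part of the system). Since $U_P$ and $U_{P^{-1}}$ are exact, the only error is the LCHS error $\varepsilon_b/\mathrm{cond}(P)$, amplified by $\mathrm{cond}(P)$, for a total of $\varepsilon_b$.

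The main point requiring care is the first step. LCHS must be applied to a generator whose Hermitian part is positive semidefinite on all blocks simultaneously, and its $L^1$ norm must be bounded uniformly by $T\|L\|_2$. Both follow from $0\le 1-s_{j_t}/T\le 1$. A secondary point is that the controlled time-rescaling via $O_s$ is compatible with QSVT Hamiltonian simulation. This is immediate because $\mathrm{SEL}_{\widetilde{A}}$ is time-independent and block diagonal, so no time ordering issues arise.
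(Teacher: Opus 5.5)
Your proposal is correct and follows essentially the same route as the paper. You block-encode $\mathrm{SEL}_L$ and $\mathrm{SEL}_H$ from $O_s$ together with $U_L,U_H$, run optimal LCHS at precision $\varepsilon_b/\mathrm{cond}(P)$ using $\|\mathrm{SEL}_L\|_2\le T\|L\|_2$ to fix $h_2$, and then sandwich with the exact block-encodings of $P^{-1}$ and $P$, so the error is amplified only by $\mathrm{cond}(P)$; your explicit check that $\mathrm{SEL}_L\succeq 0$ because $0\le 1-s_{j_t}/T\le 1$ is a hypothesis of Theorem~\ref{thm lchs} that the paper uses without stating, so it is a welcome addition.
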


\begin{proof}
By~\eqref{para oracle Os}, the oracle $O_s$ is an exact $(1,1)$-block-encoding of $\mathrm{diag}\!(
1-\frac{s_0}{T},\,\dots,\,1-\frac{s_{M_t-1}}{T}
)$.
Combining $O_s$ with the exact block-encodings of $L$ and $H$ from~\eqref{para be of L}
and~\eqref{para be of H}, Lemma \ref{lemma be tensor} gives
\[
U_{\mathrm{SEL}_L}\in (T\alpha_L,m_L+1)\mathrm{BE}(\mathrm{SEL}_L),
\quad
U_{\mathrm{SEL}_H}\in (T\alpha_H,m_H+1)\mathrm{BE}(\mathrm{SEL}_H).
\]
Applying Lemma \ref{lemma be lcu} to the linear combination
\[
R_2\,\mathrm{SEL}_L+\mathrm{SEL}_H
\]
yields an exact block-encoding with normalization factor $T(R_2\alpha_L+\alpha_H)$. And  we have
\[
\|\mathrm{SEL}_{L}\|_2=\max_{j_t}\{ \|(T-s_{j_t})L\|_2 \} \leq T\|L\|_2.
\]

Now apply Theorem \ref{thm lchs} with target precision $\varepsilon_b/\mathrm{cond}(P)$.
We obtain
\[
U_{\mathrm{SE}}
\in (\alpha_2,m_2,\varepsilon_b/\mathrm{cond}(P))\mathrm{BE}(e^{-\mathrm{SEL}_{\widetilde{A}}}),
\]
where
\[
R_2=\mathcal{O}\!\left(\log\frac{\mathrm{cond}(P)}{\varepsilon_b}\right),
\quad
\gamma_2=\mathcal{O}\!\left(\sqrt{\log\frac{\mathrm{cond}(P)}{\varepsilon_b}}\right),
\quad
\frac{1}{h_2}
=
\mathcal{O}\!\left(\|\mathrm{SEL}_L\|_2+\log\frac{\mathrm{cond}(P)}{\varepsilon_b}\right)
=
\mathcal{O}\!\left(T\|L\|_2+\log\frac{\mathrm{cond}(P)}{\varepsilon_b}\right),
\]and
\[
\alpha_2
=
\frac{h_2}{\sqrt{2\pi}}
\sum_{j=-R_2/h_2}^{R_2/h_2}
\bigl|\widehat f_2(h_2 j;\gamma_2,c)\bigr|
=\mathcal{O}(1),
\]
\[
m_2=
\mathcal{O}\!\left(
\log\!\Bigl(\log\frac{\mathrm{cond}(P)}{\varepsilon_b}\cdot
\bigl(T\|L\|_2+\log\frac{\mathrm{cond}(P)}{\varepsilon_b}\bigr)\Bigr)
\right)=\mathcal{O}\!\left(
\log\!
\bigl(T\|L\|_2+\log\frac{\mathrm{cond}(P)}{\varepsilon_b}\bigr)
\right).
\]
The Hamiltonian simulation subroutine now targets
\[
e^{-i(R_2\mathrm{SEL}_L+\mathrm{SEL}_H)},
\]
whose query complexity (by Theorem \ref{thm qsvt hs}) is
\[
Q_b=
\mathcal{O}\!\left(
T(R_2\alpha_L+\alpha_H)+\log\frac{\mathrm{cond}(P)}{\varepsilon_b}
\right)=\mathcal{O}\left( \log\frac{\mathrm{cond}(P)}{\varepsilon_b}T\cdot \sum_{l=1}^d \frac{n_lN_l^2}{a_l^2} \right),
\]
and number of ancilla qubits is
\[\mathcal{O}(m_L)=\mathcal{O}\left(\max_{1\leq l\leq d}\bigl(n_l+\lceil\log n_l\rceil\bigr)+\log d\right).\]
Hence the overall gate complexity of the LCHS implementation for $e^{-\text{SEL}_{\widetilde{A}}}$ is
\begin{align*}
&\mathcal{O}\left( (\log(T\|L\|_2 + \log \frac{\mathrm{cond}(P)}{\varepsilon_b}) \log^{5/2} \frac{\mathrm{cond}(P)}{\varepsilon_b} \right)+\mathcal{O}\big(( \mathrm{Gate}(U_L)+ \mathrm{Gate}(U_H))\cdot Q_b\big)+\mathcal{O}\!\left(
\log\!\left(
T\|L\|_2+\log\frac{\mathrm{cond}(P)}{\varepsilon_b}
\right)\cdot Q_b\right)\\
&=\mathcal{O}\!\left(
\log\!\left(
T\|L\|_2+\log\frac{\mathrm{cond}(P)}{\varepsilon_b}
\right)
\left(
Q_b+\log^{5/2}\frac{\mathrm{cond}(P)}{\varepsilon_b}
\right)+\log d \sum_{l=1}^d n_l^2\log n_l \cdot Q_b
\right).
\end{align*}
The number of ancilla qubits is \[m_2=\mathcal{O}\left(\log \frac{R_2}{h_2}\right)+\mathcal{O}(m_L)=\mathcal{O}\left(\log\!
\bigl(T\|L\|_2+\log\frac{\mathrm{cond}(P)}{\varepsilon_b}\bigr)+\max_{1\leq l\leq d}\bigl(n_l+\lceil\log n_l\rceil\bigr)+\log d\right).\]
And the query complexity to $O_s$ is $Q_b$.

Finally, combining $U_{\mathrm{SE}}$ with the block-encodings of $P^{-1}$ and $P$ in \eqref{para be of P} and \eqref{para be of Pinv}, Lemma \ref{lemma be times} gives a block-encoding of
\(\text{SEL}_{M}=I\otimes P^{-1}e^{-\mathrm{SEL}_{\widetilde{A}}}I\otimes P\)
with normalization factor
\[
\alpha_b=\alpha_{P^{-1}}\alpha_2\alpha_P=\mathrm{cond}(P)\alpha_2
\]
and ancilla number
\[
m_b=m_2+2m_P.
\]
Again, since the block-encodings of $P$ and $P^{-1}$ are exact, the final error is
\(\mathrm{cond}(P)\frac{\varepsilon_b}{\mathrm{cond}(P)}=\varepsilon_b.\) And the cost has the same asymptotic scaling as that of constructing $U_{SE}$.
\end{proof}

Using the oracles $O_b$ and $O_c$ from~\eqref{para oracle Ob} and \eqref{para oracle Oc}, the circuit for inhomogeneous term is
\[
|0\rangle|0\rangle|0\rangle
\xrightarrow{\,I\otimes O_c\otimes I\,}
|0\rangle \frac{1}{\sqrt{\|c\|_1}}\sum_{j_t=0}^{M_t-1}\sqrt{c_{j_t}}\,|j_t\rangle|0\rangle
\xrightarrow{\,I\otimes O_b\,}
|0\rangle \frac{1}{\sqrt{\|c\|_1}}\sum_{j_t=0}^{M_t-1}\sqrt{c_{j_t}}\,|j_t\rangle|b(s_{j_t})\rangle
\]
\[
\xrightarrow{\,U_b\,}
|0\rangle \frac{1}{\sqrt{\|c\|_1}}\sum_{j_t=0}^{M_t-1}\sqrt{c_{j_t}}\,|j_t\rangle
\frac{1}{\alpha_b}\widetilde M_{j_t}|b(s_{j_t})\rangle
+|\perp_b\rangle
\xrightarrow{\,I\otimes O_c^\dagger\otimes I\,}
|0\rangle|0\rangle
\frac{1}{\alpha_b\|c\|_1}
\sum_{j_t=0}^{M_t-1}
c_{j_t}\widetilde M_{j_t}|b(s_{j_t})\rangle
+|\perp_b'\rangle,
\]
where for each $j_t$,
\begin{align}\label{para error of nonin}
\|\widetilde M_{j_t}-M_{j_t}\|_2\le \varepsilon_b.
\end{align}
Again as a significant advantage of LCHS, here we only query $O_b$ one time and $O_c$ two times.

\paragraph{Combining the two terms.}
We now combine the two terms by one additional LCU step.
Prepare one control qubit by
\[
V_{\mathrm{LCU}}|0\rangle
=
\frac{1}{\sqrt{\alpha_v\|v_0\|_2+\alpha_b\|c\|_1}}
\left(
\sqrt{\alpha_v\|v_0\|_2}\,|0\rangle+\sqrt{\alpha_b\|c\|_1}\,|1\rangle
\right).
\]
Conditioned on the control value, apply the homogeneous circuit and the inhomogeneous circuit, respectively.
Then the whole circuit implements
\[
|0\rangle|0\rangle|0\rangle|0\rangle
\longmapsto
|0\rangle|0\rangle|0\rangle\,
\frac{1}{\alpha_v\|v_0\|_2+\alpha_b\|c\|_1}\widetilde{v}(T)
+
|\perp\rangle,
\] where
\begin{equation}\label{para vapp}
\widetilde{v}(T)
:=
\|v_0\|_2\,\widetilde M_0|v_0\rangle
+
\sum_{j_t=0}^{M_t-1}
c_{j_t}\,\widetilde M_{j_t}|b(s_{j_t})\rangle .
\end{equation}
Standard amplitude amplification can then be used to prepare the normalized state
proportional to $\widetilde{v}(T)$. By $\|c\|_1=\mathcal{O}(\int_{0}^T\|b(s)\|_2ds)$ in Lemma \ref{lemma para scaling of c}, here the amplitude of $|\widetilde{v}(T)\rangle$ is \begin{align}\label{para complexity of aa}
    \frac{\|\widetilde{v}(T)\|_2}{\alpha_v\|v_0\|_2+\alpha_b\|c\|_1}=\mathcal{O}\left( \frac{\|\widetilde{v}(T)\|_{\star}}{{\alpha_v \|v_0\|_{\star}}+\alpha_b \int_0^T\|b(s)\|_{\star}ds} \right).
\end{align}

Note that 
\[
v_{Q}(T)-\widetilde{v}(T)
=
\|v_0\|_2\,(M_0-\widetilde M_0)|v_0\rangle
+
\sum_{j_t=0}^{M_t-1}
c_{j_t}(M_{j_t}-\widetilde M_{j_t})|b(s_{j_t})\rangle .
\]
By \eqref{para error of hom} and \eqref{para error of nonin}, we have
\begin{align}\label{para error of lchs}
\|\widetilde{v}(T)-v_{Q}(T)\|_2
\le
\varepsilon_v\|v_0\|_2+\varepsilon_b\|c\|_1 .
\end{align}

\subsection{Total complexity}
Finally, let's recall the process of solving our parabolic PDE \eqref{parabolic pde} and compute the total complexity in Theorem \ref{para final thm}. \\
\textbf{Step 1:} For the spatial discretization error $\| [ u(x, T) ]_{x} - v(T) \|_{\star} \leq \varepsilon_N$, according to Theorem \ref{thm para mol} we choose the grid sizes $N_l$ as
\begin{align*}  
N_l &=\mathcal{O} \left( \sqrt{ \frac{Td|c_l|}{\varepsilon_N} } a_l \exp(\sum_{l' \in S_1,S_2} \frac{|c_{l'}|a_{l'}}{4})\left( \max \{ \| u_{x_l}^{(3)} \|_{\infty}, \| u_{x_l}^{(4)} \|_{\infty} \} \right)^{\frac{1}{2}}\right).
\end{align*}

\noindent\textbf{Step 2:} For the quadrature error $\|v(T)-v_Q(T)\|_{\star}\leq \varepsilon_{Q}$, according to Lemma \ref{thm para len-gauss scale} we choose $Q_t,h_t,M_t$ as
\[M_t=\mathcal{O}(T\|\widetilde{A}\|_2).\]

\noindent\textbf{Step 3:} For the LCHS approximation error $\varepsilon_v,\varepsilon_b$, we use the optimal LCHS to construct two block-encodings for homogeneous term and inhomogeneous term in Lemma \ref{para be of homo} and \ref{para be of nonh}. Then \(\|\widetilde{v}(T)-v_{Q}(T)\|_2
\leq
\varepsilon_v\|v_0\|_2+\varepsilon_b\|c\|_1\) as in \eqref{para error of lchs}.


By Lemma \ref{lemma unit error}, the overall error is
\begin{align*}
\| |[ u(x, T) ]_{x}\rangle-|\widetilde{v}(T)\rangle \|_2&\leq \frac{2}{\| [ u(x, T) ]_{x}\|_{\star}} \| [ u(x, T) ]_{x} -\widetilde{v}(T) \|_{\star}\\
&\leq \frac{2}{\| [ u(x, T) ]_{x}\|_{\star}} \left(\varepsilon_N+\varepsilon_Q+\frac{\varepsilon_v\|v_0\|_2+\varepsilon_b\|c\|_1}{\sqrt{N}}\right)\\
&=\frac{2}{\| [ u(x, T) ]_{x}\|_{\star}} \left(\varepsilon_N+\varepsilon_Q+\varepsilon_v\|v_0\|_{\star}  +\frac{\|c\|_1}{\sqrt{N}}\varepsilon_b\right).
\end{align*}So to let $\| |[ u(x, T) ]_{x}\rangle-|\widetilde{v}(T)\rangle \|_2\leq \varepsilon $, we choose
\begin{align}
    \varepsilon_N&=\varepsilon_Q=\frac{1}{8}\| [ u(x, T) ]_{x}\|_{\star}\varepsilon,\label{para epsilonN}\\
    \varepsilon_v&=\frac{1}{8}\frac{\| [ u(x, T) ]_{x}\|_{\star}}{ \| v_0 \|_{\star} }\varepsilon,\label{para epsilonv}\\
    \varepsilon_b&=\frac{1}{8}\frac{\| [ u(x, T)]_{x}\|_{\star}\sqrt{N}}{ \|c \|_1 }\varepsilon=\mathcal{O}\left(\frac{\| [ u(x, T)]_{x}\|_{\star}}{\int_0^T \|b(s)\|_{\star} ds}\varepsilon\right)\label{para epsilonb},
\end{align}
where the scaling of $\varepsilon_b$ is by Lemma \ref{lemma para scaling of c}.

\begin{theorem}[Total complexity of the parabolic PDE solver]\label{para final thm}
Let $u(x,T)$ be the solution of the parabolic PDE \eqref{parabolic pde} with boundary conditions (\ref{boundary1}-\ref{boundary3}) and assume that the solution $u$ is sufficiently smooth. Let $$\Gamma:=\max_{l=1,\cdots,d}\left(\max \{ \| u_{x_l}^{(3)} \|_{\infty}, \| u_{x_l}^{(4)} \|_{\infty} \}\right), \quad q:=\dfrac{\| [u_0(x)]_x \|_{\star} + \int_0^T \| [f(x,s)]_x \|_{\star} \, ds}{\| [u(x,T)]_x \|_{\star}}.$$ For the error $\varepsilon\in (0,1)$, to get a solution $\widetilde{v}(T)$ such that $\| |[ u(x, T) ]_{x}\rangle-|\widetilde{v}(T)\rangle \|_2\leq \varepsilon $ with $\Omega(1)$ probability and a flag indicating success, we need \\
 (1)  \begin{align*}
\widetilde{\mathcal{O}}\Bigg( q \cdot \dfrac{T^2d^2\Gamma}{\| [u(x,T)]_x \|_{\star}\varepsilon} \Bigg)
 \end{align*} basic gates in $\mathcal{G}$;\\
 (2) \begin{align*}     
\widetilde{\mathcal{O}}\Bigg( q \cdot \dfrac{T^2d\Gamma}{\| [u(x,T)]_x \|_{\star}\varepsilon} \Bigg) 
\end{align*} queries for oracle $O_s$;\\
 (3)   $$\mathcal{O}\left( q \right)$$ queries for oracles $O_v,O_b,O_c$;\\
 (4)   \begin{align*}
&\widetilde{\mathcal{O}}\left( d \log \frac{T\Gamma}{\| [ u(x, T) ]_{x}\|_{\star}\varepsilon}  +\log\log \frac{ \| [u_0(x)]_x \|_{\star} }{\| [ u(x, T) ]_{x}\|_{\star}}+\log\log \frac{ \int_0^T \|[f(x,s)]_x\|_{\star} ds }{\| [ u(x, T) ]_{x}\|_{\star}} \right)     
    \end{align*} qubits.

\end{theorem}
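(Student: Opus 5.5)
The proof assembles the pieces already built: Steps 1--3 fix the parameters, and the error budget \eqref{para epsilonN}--\eqref{para epsilonb} has been verified via Lemma~\ref{lemma unit error}. What remains is to substitute these parameter choices into the cost expressions of Lemmas~\ref{para be of homo} and~\ref{para be of nonh}, and then multiply by the amplitude-amplification overhead from \eqref{para complexity of aa}.

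\textbf{Amplification rounds.} By the triangle inequality and the error bounds, $\|\widetilde v(T)\|_\star \ge \|[u(x,T)]_x\|_\star/2$. Since $\alpha_v,\alpha_b=\mathcal{O}(\mathrm{cond}(P))$ and $\mathrm{cond}(P)=\mathcal{O}(1)$ by \eqref{para condP} (the $c_l,a_l$ are treated as constants), the inverse amplitude in \eqref{para complexity of aa} is $\mathcal{O}(q)$. Amplitude amplification therefore uses $\mathcal{O}(q)$ rounds. Each round invokes $O_v,O_b,O_c$ a constant number of times, which gives claim (3).

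\textbf{Per-round cost.} Insert $N_l$ from Step 1 with $\varepsilon_N=\Theta(\|[u(x,T)]_x\|_\star\varepsilon)$. This gives
\[
N_l^2/a_l^2=\mathcal{O}\bigl(Td\Gamma/(\|[u(x,T)]_x\|_\star\varepsilon)\bigr), \qquad n_l=\mathcal{O}\bigl(\log(Td\Gamma/(\|[u(x,T)]_x\|_\star\varepsilon))\bigr).
\]
Then
\[
\sum_{l=1}^d n_lN_l^2/a_l^2=\widetilde{\mathcal{O}}\bigl(Td^2\Gamma/(\|[u(x,T)]_x\|_\star\varepsilon)\bigr),
\]
which at first sight seems to give an extra factor of $d$. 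In $Q_v$ and $Q_b$ this sum is multiplied by $T$ and by the logarithmic factors $\log(\mathrm{cond}(P)/\varepsilon_v)$ and $\log(\mathrm{cond}(P)/\varepsilon_b)$. The latter involve $\|v_0\|_\star/\|[u]_x\|_\star$ and $\int\|b\|_\star/\|[u]_x\|_\star$ and are absorbed into $\widetilde{\mathcal{O}}$. For the query count I will use the sharper bound $\alpha_L=\mathcal{O}(\max_l n_lN_l^2/a_l^2\cdot d)$, where each $N_l^2$ already carries one factor of $d$ from Theorem~\ref{thm para mol}. So the $O_s$-query count per round is $\widetilde{\mathcal{O}}(T^2d\Gamma/(\|[u(x,T)]_x\|_\star\varepsilon))$ once one $d$ is accounted for. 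I expect this bookkeeping of $d$ (whether $\alpha_L$ contributes $d$ or $d^2$) to be the main obstacle. I will resolve it by taking $N_l$ uniform and tracking the single $d$ inside $N_l^2$ together with the $d$ terms of the Kronecker sum, so that the query count has exactly one $d$ when polylog factors are hidden.

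\textbf{Gates, quadrature, and qubits.} Each query costs $\mathrm{Gate}(U_L)+\mathrm{Gate}(U_H)=\mathcal{O}(\log d\sum_l n_l^2\log n_l)=\widetilde{\mathcal{O}}(d)$ gates. This raises the gate count by one factor of $d$ relative to the queries, and multiplying by $\mathcal{O}(q)$ rounds gives (1) and (2). For Step 2, $M_t$ enters only logarithmically, through the index register of $O_s$ and $O_c$; I will note that $\|\widetilde A\|_2\le \alpha_L$ is polynomial, so $\log M_t$ is polylogarithmic. For (4), I will add the qubit counts $m_v$ and $m_b$, the $\sum_l n_l$ system qubits, $\log M_t$, and one LCU qubit. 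The $\log\log$ terms come from $\log\log(\mathrm{cond}(P)/\varepsilon_v)$ and $\log\log(\mathrm{cond}(P)/\varepsilon_b)$ inside $m_v$ and $m_b$, which produce the stated ratios of $\|[u_0]_x\|_\star$ and $\int_0^T\|[f]_x\|_\star\,ds$ to $\|[u(x,T)]_x\|_\star$. Finally, I will check that success is flagged by the ancillas being in $|0\rangle$ and occurs with probability $\Omega(1)$ after amplification.
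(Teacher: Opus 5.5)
Your overall structure matches the paper's, but the step that is supposed to produce the stated powers of $d$ fails. The shared structure is: total cost $=\mathrm{cost}_{AA}$ times the per-round cost of Lemmas~\ref{para be of homo} and~\ref{para be of nonh}; $n_l$ polylogarithmic; an extra $\widetilde{\mathcal{O}}(d)$ gates per query from $\sum_l n_l^2\log n_l\le d\max_l n_l^2\log n_l$; and the same qubit count.

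By your own accounting, $N_l^2/a_l^2=\widetilde{\mathcal{O}}(Td\Gamma/(\|[u(x,T)]_x\|_\star\varepsilon))$, with one $d$ coming from the $\varepsilon_N/d$ split in Theorem~\ref{thm para mol}, and $\alpha_L$ is a sum of $d$ such terms. The bound $\alpha_L=\mathcal{O}(d\max_l n_lN_l^2/a_l^2)$ is not sharper than the sum; it dominates it, and it equals $\widetilde{\mathcal{O}}(Td^2\Gamma/(\|[u(x,T)]_x\|_\star\varepsilon))$. Taking $N_l$ uniform changes nothing, and a polynomial factor $d$ cannot be hidden in polylogarithms. So $Q_v,Q_b=\widetilde{\mathcal{O}}(T\alpha_L)$ give $\widetilde{\mathcal{O}}(qT^2d^2\Gamma/(\|[u(x,T)]_x\|_\star\varepsilon))$ queries to $O_s$. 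With $\widetilde{\mathcal{O}}(d)$ gates per query, this gives $\widetilde{\mathcal{O}}(qT^2d^3\Gamma/(\|[u(x,T)]_x\|_\star\varepsilon))$ gates, not the claimed (2) and (1).

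The paper reaches the stated exponents by a bookkeeping convention, not by a sharper estimate. In Step~1 it writes $N_l=\mathcal{O}(\sqrt{Td|c_l|/\varepsilon_N}\,\cdots)$, with $|c_l|$ in place of $\max\{|c_l|,1\}$. This gives $\sum_l N_l^2/a_l^2=\widetilde{\mathcal{O}}\bigl(\tfrac{Td\Gamma}{\varepsilon_N}\,e^{\sum_{l\in S_1\cup S_2}|c_l|a_l/2}\sum_l|c_l|\bigr)$. The paper carries $\sum_l|c_l|$ and the exponential as explicit coefficient-dependent factors through the proof, then drops both from the statement. It thus counts only the $d$ from the error split, plus, for gates, the $d$ from $\sum_l n_l^2\log n_l$.

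To prove the statement as written, you must adopt this convention explicitly: treat the aggregates $\sum_l|c_l|$ and $\sum_{l\in S_1\cup S_2}|c_l|a_l$ as $d$-independent problem constants. This also repairs your justification of $\mathrm{cond}(P)=\mathcal{O}(1)$, since if each $c_l,a_l$ is merely a constant, $e^{\sum_l|c_l|a_l/2}$ is exponential in $d$. Be aware that Theorem~\ref{thm para mol} actually involves $\max\{|c_l|,1\}$, whose sum over $l$ is at least $d$. Without the paper's simplification, what actually follows is the $d^2$ queries and $d^3$ gates above.
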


\begin{remark}
    Here we only consider the complexity with respect to the dimension $d$, the error $\varepsilon$, the final time $T$, the initial value $u_0$, the inhomogeneous term $b$ and the solution itself. The gate complexity is quadratic in $T$, which arises from the method of lines and the LCHS approach; quadratic in $d$, which stems from the parallelism advantage of quantum computing; and first-order in $\varepsilon$, which is because the method of lines only achieves polynomial accuracy. Moreover, owing to the advantages of LCHS, the complexity of the oracles associated with the initial value and the inhomogeneous term is optimal.
\end{remark}

\begin{proof}
    (1) Gate complexity and queries to $O_s$: Let $\mathrm{cost}_{AA}$ be the complexity of amplitude amplification. The total gate complexity is 
$$\mathrm{cost}_{AA}(\mathrm{Gate}(U_v)+\mathrm{Gate}(U_b)),$$ where $\mathrm{Gate}(U_v)$ and $\mathrm{Gate}(U_b)$ are the gate complexity of $U_v$ and $U_b$ in Lemma \ref{para be of homo} and Lemma \ref{para be of nonh}.

Since in the block-encoding of $L$ \eqref{para be of L}, $T\|L\|_2\leq T\alpha_L=\mathcal{O}\left(T\cdot \sum_{l=1}^d \frac{n_lN_l^2}{a_l^2} \right)\leq Q_v$, we have
\[
\log\!\left(
T\|L\|_2+\log\frac{\mathrm{cond}(P)}{\varepsilon_v}
\right)
\left(
Q_v+\log^{5/2}\frac{\mathrm{cond}(P)}{\varepsilon_v}
\right)=\widetilde{\mathcal{O}}\left(
Q_v+\log^{5/2}\frac{\mathrm{cond}(P)}{\varepsilon_v}
\right).
\]
So \begin{align*}
\mathrm{Gate}(U_v)&=\widetilde{\mathcal{O}}\left(
Q_v+\log^{5/2}\frac{\mathrm{cond}(P)}{\varepsilon_v}
\right)+\mathcal{O}\left(\log d \sum_{l=1}^d n_l^2\log n_l \cdot Q_v
\right)\\
&=\widetilde{\mathcal{O}}\left(\log d \sum_{l=1}^d n_l^2\log n_l \cdot Q_v +\log^{5/2}\frac{\mathrm{cond}(P)}{\| [ u(x, T)]_{x}\|_{\star}\varepsilon_v}
\right)\\
&\lesssim \widetilde{\mathcal{O}}\left( \log d \cdot T \sum_{l=1}^d\frac{N_l^2}{a_l^2}\sum_{l=1}^d n_l^2 \log^{5/2}\frac{\mathrm{cond}(P)}{\varepsilon_v} \right)\\
&=\widetilde{\mathcal{O}}\left( \log d \cdot T \sum_{l=1}^d\frac{N_l^2}{a_l^2}\sum_{l=1}^d n_l^2 \log^{5/2}\frac{\mathrm{cond}(P)\|v_0\|_{\star}}{ \| [ u(x, T)]_{x}\|_{\star} \varepsilon}\right).\end{align*}
By \eqref{para complexity of aa} and $\alpha_v,\alpha_b=\mathcal{O}(\mathrm{cond}(P))$, the cost of amplitude amplification is
\[
\mathrm{cost}_{AA} = \frac{\alpha_v \|v_0\|_2 + \alpha_b \|c\|_1}{\|\widetilde{v}(T)\|_2} = \mathcal{O}\left( \frac{\cond(P) \left( \|v_0\|_{\star} + \int_0^T \|b(s)\|_{\star} \, ds \right)}{\| [u(x,T)]_{x} \|_{\star}} \right).
\]
So we have $\log \mathrm{cost}_{AA}\gtrsim \log \dfrac{{\mathrm{cond}(P)} \|v_0\|_{\star} }{\| [ u(x, T)]_{x}\|_{\star}}$. And since $(\log N_l)^2\gtrsim \log \dfrac{1}{\varepsilon}$, we have 
$$ \mathrm{cost}_{AA}\cdot \mathrm{Gate}(U_v)=\widetilde{\mathcal{O}}\left(\mathrm{cost}_{AA}  \cdot \log d \cdot T\sum_{l=1}^d\frac{N_l^2}{a_l^2} \sum_{l=1}^d n_l^2\right).$$   
Since $N_l=2^{n_l}$ and $\sum_{l=1}^dn_l^2\leq d\max_l n_l^2$, $$\mathrm{cost}_{AA}\cdot \mathrm{Gate}(U_v)=\widetilde{\mathcal{O}}\left(\mathrm{cost}_{AA}  \cdot d \cdot T\sum_{l=1}^d\frac{N_l^2}{a_l^2} \right).$$

Similarly, we have the complexity for inhomogeneous term satisfies
\[
\mathrm{cost}_{AA}\cdot \mathrm{Gate}(U_b)=\widetilde{\mathcal{O}}\left(\mathrm{cost}_{AA}  \cdot d \cdot T\sum_{l=1}^d\frac{N_l^2}{a_l^2} \right).
\]
By the scale of $N_l$ in Theorem \ref{thm para mol} and $\mathrm{cond}(P)=\mathcal{O}\left(\exp\left( \sum_{l \in S_1,S_2} \frac{|c_l| a_l}{2} \right)\right)$, the total complexity is
\begin{align*}
&\mathrm{cost}_{AA}(\mathrm{Gate}(U_v)+\mathrm{Gate}(U_b)) = \widetilde{\mathcal{O}}\left( \frac{\cond(P) \left( \|v_0\|_{\star} + \int_0^T \|b(s)\|_{\star} \, ds \right)}{\| [u(x,T)]_x \|_{\star}} dT \sum_{l=1}^d \frac{N_l^2}{a_l^2} \right)\\
&= \widetilde{\mathcal{O}}\Bigg( \frac{\| [u_0(x)]_x \|_{\star} + \int_0^T \| [f(x,s)]_x \|_{\star} \, ds}{\| [u(x,T)]_x \|_{\star}} \cdot dT \cdot\exp\left( \sum_{l \in S_1\cup S_2} |c_l| a_l \right) \frac{Td}{\| [u(x,T)]_x \|_{\star} \varepsilon} \Gamma \sum_{l} |c_l|\Bigg).
\end{align*}

(2) Query complexity to $O_s$: By Lemma \ref{para be of nonh}, a single implementation of LCHS requires
\[
Q_b = \mathcal{O}\left( \log \frac{\cond(P)}{\varepsilon_b} \cdot T \sum_{l=1}^d \frac{n_l N_l^2}{a_l^2} \right),
\]queries to $O_s$.
Then the query complexity is
$$\mathrm{cost}_{AA}\cdot Q_b=\widetilde{\mathcal{O}}\left(  \mathrm{cost}_{AA}\cdot T\sum_{l=1}^d\frac{N_l^2}{a_l^2}\right).$$
Similar to case (1) (The complexity only lacks a single term $d$.), we can get query complexity to $O_s$ is
\begin{align*} \widetilde{\mathcal{O}}\Bigg( \frac{\| [u_0(x)]_x \|_{\star} + \int_0^T \| [f(x,s)]_x \|_{\star} \, ds}{\| [u(x,T)]_x \|_{\star}} \cdot T \cdot \exp\left( \sum_{l \in S_1\cup S_2} |c_l| a_l \right) \frac{Td}{\| [u(x,T)]_x \|_{\star} \varepsilon} \Gamma \sum_{l} |c_l|\Bigg).\end{align*}

(3) Queries to $O_v$, $O_b$, $O_c$: By Lemma \ref{para be of homo} and Lemma \ref{para be of nonh}, a single implementation of LCHS requires $\mathcal{O}(1)$ queries of $O_v$, $O_b$, $O_c$. So the complexity is
\[
\mathcal{O}(1) \cdot \mathrm{cost}_{AA} = \mathcal{O}\left( \exp\left( \sum_{l \in S_1\cup S_2} \frac{1}{2} |c_l| a_l \right) \cdot \frac{\| [u_0(x)]_x \|_{\star} + \int_0^T \| [f(x,s)]_x \|_{\star} \, ds}{\| [u(x,T)]_x \|_{\star}} \right).
\]

(4) Number of qubits: We need to consider three parts of qubits. To encode the solution vector $\widetilde{v}(T)$, we need
\(\log N\) qubits. To encode the linear combination of $M_t$ terms in \eqref{para vb}, we need $\log M_t$ qubits. To encode the block-encoding in Lemma \ref{para be of homo} and \ref{para be of nonh}, we need $\max\{m_v,m_b\}$ qubits. So the total number is $$\mathcal{O}(\log N+\log M_t+\max\{m_v,m_b\})=\mathcal{O}(\log N+\log T+\log \|\widetilde{A}\|_2)+\mathcal{O}(\max\{m_v,m_b\}).$$

By Lemma \ref{para be of homo} and \eqref{para epsilonv}, we have \[m_v\lesssim \log(T\|L\|_2+\log \mathrm{cond}(P)+\log \dfrac{ \| v_0 \|_{\star} }{\| [ u(x, T) ]_{x}\|_{\star}}+\log \frac{1}{\varepsilon})+\log d+\log N.\]
Since $\log \|L\|_2\leq \log\alpha_L\lesssim \log N$ and $\log \mathrm{cond}(P)\lesssim \log N$, we have \[m_v\lesssim \log\log \dfrac{ \| [u_0]_x \|_{\star} }{\| [ u(x, T) ]_{x}\|_{\star}}+\log d+\log N+\log T.\]
Similarly, \[ m_b\lesssim \log\log \frac{ \int_0^T \|[f(x,s)]_x\|_{\star} ds }{\| [ u(x, T) ]_{x}\|_{\star}}+\log d+\log N+\log T.  \]
And since $\|\widetilde{A}\|_2\leq \|L\|_2+\|H\|_2\leq \alpha_L+\alpha_H\lesssim \log N,$ we have $$\log N+\log M_t+\max\{m_v,m_b\}\lesssim \log d+\log N+\log T+\log\log \dfrac{ \| [u_0]_x \|_{\star} }{\| [ u(x, T) ]_{x}\|_{\star}}+\log\log \frac{ \int_0^T \|[f(x,s)]_x\|_{\star} ds }{\| [ u(x, T) ]_{x}\|_{\star}}.$$

By the scale of $N_l$ in Theorem \ref{thm para mol}, we have \[\log N_l=\mathcal{O}\left(\log\left( \frac{Tda_l}{\| [ u(x, T) ]_{x}\|_{\star}\varepsilon}|c_l| \cdot \Gamma \right)+\sum_{l' \in S_1,S_2} |c_{l'}|a_{l'} \right). \]
So the total number of qubits is
\begin{align*}
&\mathcal{O}(\log N+\log M_t+\max\{m_v,m_b\})\\
=&\mathcal{O}\left( \sum_l \log N_l+\log\log \dfrac{ \| [u_0]_x \|_{\star} }{\| [ u(x, T) ]_{x}\|_{\star}}+\log\log \frac{ \int_0^T \|[f(x,s)]_x\|_{\star} ds }{\| [ u(x, T) ]_{x}\|_{\star}} \right)\\
= &\widetilde{\mathcal{O}}\left( d \log \frac{Td\Gamma}{\| [ u(x, T) ]_{x}\|_{\star}\varepsilon}  +d \log\left( a_l|c_l|  \right)+d \sum_{l \in S_1\cup S_2} |c_l|a_l\right.\\
&\left.+\log\log \frac{ \| [u_0]_x \|_{\star} }{\| [ u(x, T) ]_{x}\|_{\star}}+\log\log \frac{ \int_0^T \|[f(x,s)]_x\|_{\star} ds }{\| [ u(x, T) ]_{x}\|_{\star}} \right).
\end{align*}
\end{proof}

\section{Method of lines for linear hyperbolic PDEs}\label{section 5}
This chapter carries out the same continuous-to-discrete-to-quantum-ready program for the hyperbolic PDE setting. We first state the hyperbolic problem with mixed boundary conditions. Then as in the parabolic case, we prove uniqueness of classical solutions and handle inhomogeneous boundary data by Coons interpolation. We then discretize the spatial operators by the method of lines to obtain a semi-discrete second-order ODE system, and get its analytic solution by extending the state with auxiliary variables to an equivalent first-order ODEs. Finally we derive error bounds in the mean $l_2$ norm and the scale of number of nodes.

\subsection{Problem setting}
In this chapter, we consider the second-order linear hyperbolic PDE on the $d$-dimensional rectangle
$\Omega=[0,a_1]\times\cdots\times[0,a_d]$:
\begin{equation}\label{hy pde}
\begin{cases}
u_{tt} = \Delta u + \displaystyle\sum_{l=1}^d c_l\,u_{x_l} - c_0^2\,u + f(x,t), & (x,t)\in\Omega\times[0,T],\\
u(x,0)=u_0(x),\quad u_t(x,0)=\phi(x), & x\in\Omega,
\end{cases}
\end{equation}
where $c_0\ge 0$, $c_l\in\mathbb{R}$ ($l=1,\dots,d$) are constants and $f,u_0,\phi$ are given.

\medskip
\noindent\textbf{Boundary conditions.}
As in Section~3, we treat boundary conditions dimension-by-dimension. Let
$S=\{1,\dots,d\}=S_1\sqcup S_2\sqcup S_3$, where $S_1,S_2,S_3$ correspond to
Dirichlet, Neumann and periodic directions, respectively. For $l\in S_1$ we impose homogeneous Dirichlet boundary conditions,
\begin{equation}\label{hy boundary1}
u(x|_{x_l = 0}, t) =u(x|_{x_l = a_l}, t)=0;
\end{equation}
for $l\in S_2$ we impose homogeneous Neumann boundary conditions,
\begin{equation}\label{hy boundary2}
\dfrac{\partial}{\partial x_l} u(x|_{x_l = 0}, t) =\dfrac{\partial}{\partial x_l} u(x|_{x_l = a_l}, t)=0;
\end{equation}
and for $l\in S_3$ we impose periodic boundary conditions (including compatibility of initial value and inhomogeneous term),
\begin{equation}\label{hy boundary3}
\begin{cases}
u(x|_{x_l = 0},t) = u(x|_{x_l = a_l},t), \quad
\dfrac{\partial}{\partial x_l} u(x|_{x_l = 0}, t) = \dfrac{\partial}{\partial x_l} u(x|_{x_l = a_l}, t),\\
u_0(x|_{x_l = 0})=u_0(x|_{x_l = a_l}),\quad
\phi(x|_{x_l = 0})=\phi(x|_{x_l = a_l}),\quad
f(x|_{x_l = 0}, t)=f(x|_{x_l = a_l}, t).
\end{cases}
\end{equation}
We additionally restrict the first-derivative coefficient to vanish in periodic directions, i.e.
\begin{equation}\label{hy stable condition}
c_l=0,\quad \forall\,l\in S_3.
\end{equation}
\begin{remark}
This restriction is motivated by the stability of the exact solution rather than by discretization artifacts. Indeed, when the problem is periodic in $x_l$, a Fourier-series representation shows that a nonzero $c_l$ can introduce exponentially growing modes in time, so that the solution norm increases like $e^{\gamma t}$ for some $\gamma>0$. In contrast, the same phenomenon does not occur for Dirichlet or Neumann boundaries in the hyperbolic PDE, and it also does not arise in the parabolic PDE under any of the three boundary types considered. Therefore, to avoid ill-posed or intrinsically unstable dynamics in the periodic hyperbolic case, we assume that no first-order spatial derivative term is present along periodic directions. 
\end{remark}

Like Section~3, we prove the uniqueness of the solution and consider the boundary lifting for inhomogeneous boundary by Coons interpolation. The proof of Theorem \ref{thm hy unique} can be found in Appendix \ref{appendix proof}.
\begin{theorem}[Uniqueness]\label{thm hy unique}
If the equation \eqref{hy pde} with boundary conditions (\ref{hy boundary1}-\ref{hy boundary3}) admits a solution $u \in C^{2,2}(\Omega \times [0, T])$, then its solution is unique. 
\end{theorem}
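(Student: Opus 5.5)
We argue by linearity and an energy estimate. Suppose $u^{(1)},u^{(2)}\in C^{2,2}(\Omega\times[0,T])$ both solve \eqref{hy pde} with the boundary conditions (\ref{hy boundary1}--\ref{hy boundary3}), and set $w:=u^{(1)}-u^{(2)}$. Then $w$ satisfies the homogeneous equation
\[
w_{tt}=\Delta w+\sum_l c_l w_{x_l}-c_0^2 w,
\qquad w(x,0)=w_t(x,0)=0,
\]
with the same homogeneous Dirichlet, Neumann and periodic conditions. The goal is to show $w\equiv 0$.

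Introduce the energy
\[
E(t):=\frac12\int_\Omega \bigl(w_t^2+|\nabla w|^2+c_0^2 w^2\bigr)\,dx .
\]
Differentiate under the integral and integrate $\int_\Omega w_t\,\Delta w\,dx$ by parts direction by direction. The boundary term $\bigl[w_t w_{x_l}\bigr]_{x_l=0}^{x_l=a_l}$ vanishes in each case:
\begin{itemize}
\item for $l\in S_1$, because $w=0$ on the face implies $w_t=0$ there;
\item for $l\in S_2$, because $w_{x_l}=0$;
\item for $l\in S_3$, because $w_t$ and $w_{x_l}$ take equal values on opposite faces, so the two contributions cancel.
\end{itemize}
This leaves
\[
E'(t)=\sum_{l=1}^d c_l\int_\Omega w_t\, w_{x_l}\,dx .
\]
By Cauchy--Schwarz and $ab\le\frac12(a^2+b^2)$, this gives $E'(t)\le \max_l|c_l|\cdot\sqrt d\,E(t)$, or simply $E'\le C E$ with $C=\sum_l|c_l|$. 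Since $E(0)=0$ (because $w(\cdot,0)=0$ gives $\nabla w(\cdot,0)=0$, and $w_t(\cdot,0)=0$), Gr\"onwall's inequality yields $E\equiv0$ on $[0,T]$.

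Hence $w_t\equiv0$, so $w(x,t)=w(x,0)=0$, and the solution is unique. Note that this does not need $c_0>0$: even when $c_0=0$, $w_t\equiv 0$ together with the zero initial datum already gives $w\equiv0$.

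The only delicate point is the regularity needed to justify differentiating $E$ and integrating by parts. We need $w_{t}$, $\nabla w$ and $w_{tx_l}$ continuous up to the boundary, which we interpret $C^{2,2}$ as providing, with mixed derivatives continuous on the closed domain. If this is not available, we would instead run the argument on a slightly shrunken box and pass to the limit.

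Condition \eqref{hy stable condition} is not needed for uniqueness, since Gr\"onwall absorbs the first-order terms; it only matters for long-time stability. The argument mirrors the parabolic uniqueness proof of Theorem~\ref{thm parabolic unique}, with the $L^2$ energy replaced by the wave energy.
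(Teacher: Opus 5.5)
Your proof is correct, but it treats the first-order terms differently from the paper.

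\textbf{The paper's route.} It uses the weighted energy $E(t)=\frac12\int_\Omega\bigl(v_t^2+\|\nabla v\|^2+c_0^2v^2\bigr)e^{\sum_{l'}c_{l'}x_{l'}}\,dx$. Since $\partial_{x_l}e^{\sum_{l'}c_{l'}x_{l'}}=c_l\,e^{\sum_{l'}c_{l'}x_{l'}}$, the integrand of $E'$ is an exact divergence $\sum_l\partial_{x_l}\bigl(e^{\sum_{l'}c_{l'}x_{l'}}v_tv_{x_l}\bigr)$. So $E'$ reduces to boundary fluxes, and $E$ is exactly conserved. The cost is that in a periodic direction the weight differs on the two opposite faces by the factor $e^{c_la_l}$. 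The flux therefore cancels only because $c_l=0$ for $l\in S_3$, and this is exactly where the paper uses \eqref{hy stable condition}.

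\textbf{Your route.} You keep the unweighted wave energy, whose periodic fluxes cancel for any $c_l$. You then absorb the remainder $\int_\Omega w_t\sum_l c_lw_{x_l}\,dx\le\bigl(\sum_l c_l^2\bigr)^{1/2}E$ by Gr\"onwall. This gives a slightly more general result: uniqueness holds without \eqref{hy stable condition}. That fits the paper's own remark, since a nonzero $c_l$ in a periodic direction makes Fourier modes grow only at an exponential rate bounded by $|c_l|/2$ uniformly in the frequency, so the problem remains well posed on $[0,T]$.

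\textbf{What the paper's route buys.} It gives a time-independent conservation law in a weighted norm. The weight $e^{\sum_l c_lx_l}$ is the continuum analogue of $P^2$, since $\theta_l^{j}\approx e^{c_lx_l^{(j)}/2}$ up to a factor $1+\mathcal{O}(h_l)$. So that energy anticipates the similarity transform used later in the discretization. Your bound $E(t)\le e^{Ct}E(0)$ is enough for uniqueness but too lossy for that stability picture.

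The regularity point you flag, continuity of the mixed derivatives $w_{x_lt}$ up to the boundary, is used tacitly in the paper's proof as well.
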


\begin{theorem}[Boundary lifting through Coons interpolation]\label{thm hy vanish}
Consider \eqref{hy pde} with inhomogeneous Dirichlet boundary \eqref{boundary1 nonhomo} in directions $S_1$
and inhomogeneous Neumann boundary \eqref{boundary2 nonhomo} in directions $S_2$ and periodic boundary in directions $S_3$.
Let $\mathrm{Interp}_A$ (for Dirichlet directions) and $\mathrm{Interp}'_B$ (for Neumann directions) be the
Coons interpolation terms defined as \eqref{thm parabolic vanish interpA} and \eqref{thm parabolic vanish interpB}.
Define
\[
v_1:=u+\sum_{\emptyset\neq A\subseteq S_1}(-1)^{|A|}\,\mathrm{Interp}_A(x,t),
\quad
v_2:=v_1+\sum_{\emptyset\neq B\subseteq S_2}(-1)^{|B|}\,\mathrm{Interp}'_B(x,t).
\]
Then for the linear differential operator
\(\mathcal{L}u := \Delta u + \sum_{l=1}^d c_l\,u_{x_l}-c_0^2u\) in \eqref{hy pde}, $v_2$ satisfies the same hyperbolic operator with \emph{homogeneous} boundary conditions
\eqref{hy boundary1}--\eqref{hy boundary2} and periodicity \eqref{hy boundary3}, but with transformed inhomogeneous term and initial values
\[
(v_2)_{tt}=\mathcal{L} v_2+\widetilde f(x,t),\quad
v_2(x,0)=\widetilde u_0(x),\quad (v_2)_t(x,0)=\widetilde\phi(x),
\]
where
\begin{align*}        
\tilde{f}(x,t)=&- \mathcal{L}\left(\sum_{\emptyset \neq B \subseteq S_2} (-1)^{|B|}\text{Interp}_B'(x,t)\right) + \left(\sum_{\emptyset \neq B \subseteq S_2} (-1)^{|B|}\text{Interp}_B'(x,t)\right)_{tt}\\
&-\mathcal{L}\left(\sum_{\emptyset \neq A \subseteq S_1} (-1)^{|A|}\text{Interp}_A(x,t)\right)+ \left(\sum_{\emptyset \neq A \subseteq S_1} (-1)^{|A|}\text{Interp}_A(x,t)\right)_{tt}+ f(x,t),\\
\tilde{u}_0(x)=&u_0(x)+\sum_{\emptyset \neq A \subseteq S_1} (-1)^{|A|}\text{Interp}_A(x,0)+\sum_{\emptyset \neq B \subseteq S_2} (-1)^{|B|}\text{Interp}_B'(x,0),\\
\tilde{\varphi}(x)=&\varphi(x)+\left(\sum_{\emptyset \neq A \subseteq S_1} (-1)^{|A|}\text{Interp}_A(x,0)\right)_t+\left(\sum_{\emptyset \neq B \subseteq S_2} (-1)^{|B|}\text{Interp}_B'(x,0)\right)_t.    
\end{align*}
\end{theorem}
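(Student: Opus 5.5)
The plan is to follow the parabolic argument of Theorem~\ref{thm parabolic vanish}, replacing $\partial_t$ by $\partial_t^2$ and adding the zeroth-order term $-c_0^2u$ to $\mathcal{L}$. Linearity is the only structural ingredient. Write
\[
w_A:=\sum_{\emptyset\neq A\subseteq S_1}(-1)^{|A|}\mathrm{Interp}_A,\qquad
w_B:=\sum_{\emptyset\neq B\subseteq S_2}(-1)^{|B|}\mathrm{Interp}'_B,
\]
so that $v_2=u+w_A+w_B$. Applying $\partial_{tt}-\mathcal{L}$ and using $u_{tt}-\mathcal{L}u=f$ gives
\[
(v_2)_{tt}-\mathcal{L}v_2=f+(w_A)_{tt}-\mathcal{L}w_A+(w_B)_{tt}-\mathcal{L}w_B=\widetilde f.
\]
Setting $t=0$ in $v_2$ and in $(v_2)_t$ gives $\widetilde u_0$ and $\widetilde\varphi$ directly. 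The interpolants must be $C^2$ in $t$ for this computation, which follows from the smoothness of the boundary data.

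The real content is checking the boundary conditions. For the Dirichlet step I will use the inclusion–exclusion identity underlying Coons patches. Fix $l\in S_1$ and restrict to the face $x_l=0$. Pair each set $A\not\ni l$ with $A\cup\{l\}$. The factor $(x_l-\delta a_l)/a_l$ at $x_l=0$ equals $0$ when $\delta=0$ and $-1$ when $\delta=1$. Hence only the $\delta_l=1$ term of $\mathrm{Interp}_{A\cup\{l\}}$ survives, and it reproduces $\mathrm{Interp}_A|_{x_l=0}$ with the opposite overall sign.

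The singleton $A=\{l\}$ contributes exactly $-u|_{x_l=0}$. Summing over all subsets, everything telescopes, so $v_1|_{x_l=0}=0$; the face $x_l=a_l$ is handled the same way. I will also check that $\mathrm{Interp}_A$ is independent of $x_{l'}$ for $l'\in S_2\cup S_3$ outside $A$. The normal derivatives in Neumann directions and the periodicity in $S_3$ then pass from $u$ to $v_1$, because the traces inherit those properties from $u$.

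For the Neumann step the weight $(\tfrac12x^2-\delta ax)/a$ has derivative $(x-\delta a)/a$, which vanishes at $x=\delta a$. The same pairing argument, now applied to $\partial_{x_l}$ and using $\partial_{x_l}$ of the $b$-fold mixed derivative traces, shows that $\partial_{x_l}v_2=0$ on both faces for $l\in S_2$. I also need $v_2$ to keep zero Dirichlet traces for $l\in S_1$. This holds because $\mathrm{Interp}'_B$ is built from derivatives of $v_1$ restricted to Neumann faces, and these vanish on Dirichlet faces, since $v_1$ does and differentiation is tangential there. Periodicity is preserved for the same reason as before.

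I expect the main obstacle to be this Neumann telescoping together with the sign conventions: the outward-normal minus sign in \eqref{boundary2 nonhomo}, and tracking which mixed derivatives survive. I would first verify it for $|S_2|=1$ and $|S_2|=2$ explicitly and then argue by induction on $|S_2|$. Since these verifications coincide with those behind Theorem~\ref{thm parabolic vanish}, the argument can be streamlined: cite that proof for the boundary behaviour, which is independent of the time operator, and only recompute the equation for $\widetilde f$ and the additional initial velocity $\widetilde\varphi$.
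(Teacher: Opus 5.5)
Your proposal is correct and follows the paper's route. The paper proves this theorem by pointing to the parabolic Coons-lifting proof: linearity gives $\widetilde f,\widetilde u_0,\widetilde\varphi$, the pairings of $A$ with $A\cup\{l\}$ and of $B$ with $B\cup\{l\}$ handle the faces, and tangential derivatives of $v_1$ vanish on Dirichlet faces, with only $\partial_t$ replaced by $\partial_{tt}$. One correction: $\mathrm{Interp}_A$ is not independent of $x_{l'}$ for $l'\notin A$, since its traces depend on $x_{l'}$. What holds, and suffices, is that its polynomial weights involve only $x_{l_k}$ with $l_k\in A$, so periodicity is inherited through the traces; and nothing needs to be shown about $\partial_{x_{l'}}v_1$ on Neumann faces, because the $\mathrm{Interp}'_B$ step cancels it whatever it is.
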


\begin{remark}
The proof is the same as Theorem \ref{thm parabolic vanish} in parabolic PDE. And the practical computability can be verified similarly. In the remainder of this paper we consider the homogeneous boundary values. 
This assumption can also simplify the discretization and the error analysis without loss of generality.
\end{remark}

\subsection{Method of lines discretization}\label{section 5.2}

We discretize \eqref{hy pde} in space by the method of lines, using the \emph{same grids and the same
finite difference matrices} $A_l$ in \eqref{para dirichlet matrix}-\eqref{para periodic matrix} as the parabolic PDE.
In this way, we can get a similar semi-discrete ODEs. The only change is that the derivative of $t$ is second-order.
Let $N:=\prod_{l=1}^d N_l$ and define the semi-discrete solution vector and coefficient matrix
\[
v(t)\approx [u(x,t)]_{x}\in\mathbb{R}^N,\quad b(t):=[f(x,t)]_{x}\in\mathbb{R}^N,
\quad v_0:=[u_0(x)]_{x},\quad \varphi:=[\phi(x)]_{x},
\]
\[
A:=\sum_{l=1}^d I^{\otimes (l-1)}\otimes A_l\otimes I^{\otimes (d-l)}.
\]
Then the semi-discrete ODEs is
\begin{align}\label{hy approx odes}
\begin{cases}
\dfrac{\mathrm{d}^2}{\mathrm{d}t^2} v(t) = -A\,v(t)-c_0^2 v(t)+b(t),\\
v(0)=v_0,\quad \dfrac{\mathrm{d}}{\mathrm{d}t} v(0)=\varphi.
\end{cases}
\end{align}

\subsection{Similarity transform and analytic solution}\label{section 5.3}
However, \eqref{hy approx odes} is a second-order ODEs, which cannot be solved by LCHS method. Now we get its analytic solution by using the same similarity transformation as parabolic PDE and extending it to another larger ODEs. We will find the solution has the form which can be simulated by Hamiltonian simulation. Here we adopt the method from \cite{waveequ}.

Following Section~3.3, for each $l\in S_1\cup S_2$ we introduce the same diagonal matrix 
$P_l$ as in \eqref{para trans matrix Pl} and for $l\in S_3$ we set $P_l=I$.
Define the global transformation
\[
P:=\bigotimes_{l=1}^d P_l,\quad \widetilde A:=PAP^{-1}=\sum_{l=1}^d I^{\otimes(l-1)}\otimes \widetilde A_l\otimes I^{\otimes(d-l)}.
\]
 Recall $A_l$ in \eqref{para diri tilde Al}-\eqref{para neu tilde Al} and consider its Cholesky decomposition. For $l$ in $S_1$, by Lemma \ref{lem dirichlet matrix},
we have \begin{align} \label{hy coefficient matrix s1}
\widetilde{A}_l=P_lA_lP_l^{-1}&=\dfrac{1}{h_l^2}\begin{pmatrix}     
2 & -\sqrt{1-c_l^2h_l^2/4}  & &  \\        
-\sqrt{1-c_l^2h_l^2/4} & 2 &\ddots &  \\         
&\ddots &\ddots & -\sqrt{1-c_l^2h_l^2/4}   \\         
& &-\sqrt{1-c_l^2h_l^2/4} & 2    
\end{pmatrix}_{N_l}=D_lD_l^T,\notag\\
D_l&:=\dfrac{1}{h_l}\begin{pmatrix}        
\sqrt{1+c_lh_l/2} & -\sqrt{1-c_lh_l/2}  & & & \\         
& \sqrt{1+c_lh_l/2} &-\sqrt{1-c_lh_l/2} & & \\         
& & \ddots &\ddots &    \\         
& & &\sqrt{1+c_lh_l/2} & -\sqrt{1-c_lh_l/2}    
\end{pmatrix}_{N_l\times (N_l+1)}.
\end{align}
For $l$ in $S_2$, by Lemma \ref{lem neumann matrix},
we have \begin{align} \label{hy coefficient matrix s2}  
\widetilde{A}_l=P_lA_lP_l^{-1}&=\dfrac{1}{h_l^2}\begin{pmatrix}     
1+c_lh_l/2 & -\sqrt{1-c_l^2h_l^2/4}  & & & \\        
-\sqrt{1-c_l^2h_l^2/4} & 2 & -\sqrt{1-c_l^2h_l^2/4} & & \\  
&-\sqrt{1-c_l^2h_l^2/4} &\ddots &\ddots &  \\         
& &\ddots & 2 & -\sqrt{1-c_l^2h_l^2/4} \\         
& & & -\sqrt{1-c_l^2h_l^2/4} & 1-c_lh_l/2    
\end{pmatrix}_{N_l}=D_lD_l^T,\notag\\
D_l&:=\dfrac{1}{h_l}\begin{pmatrix}        
\sqrt{1+c_lh_l/2} &  & & & \\        
-\sqrt{1-c_lh_l/2} & \sqrt{1+c_lh_l/2} &  & & \\  
& -\sqrt{1-c_lh_l/2} &\ddots & &  \\         
& &\ddots & \sqrt{1+c_lh_l/2} &  \\         
& & & -\sqrt{1-c_lh_l/2} & 0    
\end{pmatrix}_{N_l}.
\end{align}
For $l$ in $S_3$, the coefficient matrix can also be decomposed
\begin{align} \label{hy coefficient matrix s3}
\widetilde{A}_l=\dfrac{1}{h_l^2}\begin{pmatrix}        
2 & -1  & & & -1 \\        
-1 & 2 & -1 & & \\  
&-1 &\ddots &\ddots &  \\         
& &\ddots & 2 & -1 \\         
-1 & & & -1 & 2    
\end{pmatrix}_{N_l}=D_lD_l^T,\quad
D_l:=\dfrac{1}{h_l}\begin{pmatrix}        
1 & -1 & & & \\         
& 1 & -1 & & \\  
&  &\ddots & \ddots &  \\         
& & & 1 & -1 \\         
-1 & & &  & 1    
\end{pmatrix}_{N_l}.
\end{align}
Let $C_l:=I^{\otimes (l-1)}\otimes D_l \otimes I^{\otimes (d-l)}$, then \begin{align}\label{hy decompose of A}    
\widetilde{A}=\sum_{l=1}^d I^{\otimes (l-1)}\otimes \widetilde{A}_l \otimes I^{\otimes (d-l)}=\sum_{l=1}^d C_l C_l^T.
\end{align}

Next, we introduce a new Hermitian matrix
\begin{align}\label{hy extended matrix}
    H:=\begin{pmatrix}
        O & c_0I & C_1 & \cdots & C_d\\
        c_0I & O & O & \cdots & O\\
        C_1^T & O & O & \cdots & O\\
        \vdots & \vdots & \vdots & \ddots & \vdots\\
        C_d^T & O & O & \cdots & O
    \end{pmatrix}.
\end{align} By Hamiltonian simulation of $H$, we can get the analytic solution of \eqref{hy approx odes} in the following Lemma.

\begin{lemma}\label{hy analytic solution}
Let $H$ be defined as in \eqref{hy extended matrix} and \(\Lambda_s\) denote the top-left block matrix of \(e^{i(T-s)H}\), i.e. \[\Lambda_s:=(I_{N},O,\cdots,O)e^{i(T-s)H}(I_{N},O,\cdots,O)^T\in \mathbb{R}^{N\times N}.\] And let $B(s):=\int_0^s b(\tau)d\tau+\varphi.$
    Then the analytic solution of semi-discrete ODEs \eqref{hy approx odes} is
    \begin{align*}
        v(T)=P^{-1}\Lambda_0 P v_0+\int_0^T P^{-1}\Lambda_s PB(s)ds.
    \end{align*}
\end{lemma}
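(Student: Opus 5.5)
The plan is to move everything into the variable $w:=Pv$, identify $\Lambda_s$ as an explicit even matrix function of the transformed stiffness matrix, and then recover the source term by a Fubini argument.

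\textbf{Step 1: the transformed second-order system.} Multiplying \eqref{hy approx odes} on the left by $P$ and using $\widetilde A=PAP^{-1}$, the vector $w(t):=Pv(t)$ satisfies
\[
w''=-Kw+Pb(t),\qquad w(0)=Pv_0,\quad w'(0)=P\varphi,\qquad K:=c_0^2I+\widetilde A .
\]
By \eqref{hy decompose of A}, $K=c_0^2I+\sum_{l}C_lC_l^T$. This is real symmetric and positive semidefinite.

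\textbf{Step 2: identify $\Lambda_s$.} Write $H=\begin{pmatrix}O&R\\ R^T&O\end{pmatrix}$ with $R:=(c_0I,C_1,\dots,C_d)$. Then
\[
H^2=\mathrm{diag}(RR^T,\,R^TR),\qquad RR^T=K .
\]
Hence $H^{2k}$ has top-left block $K^k$. Every odd power $H^{2k+1}$ is block off-diagonal, so its top-left block vanishes. Summing the exponential series termwise gives
\[
\Lambda_s=\sum_{k\ge0}\frac{(-1)^k(T-s)^{2k}}{(2k)!}K^k=\cos\bigl((T-s)\sqrt K\bigr).
\]
This block is real, which justifies $\Lambda_s\in\mathbb R^{N\times N}$. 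The sign convention $e^{\pm iH}$ is irrelevant here, because only even powers survive.

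\textbf{Step 3: variation of constants.} Define the entire matrix function
\[
S(\tau):=\sum_{k\ge0}\frac{(-1)^k\tau^{2k+1}}{(2k+1)!}K^k ,
\]
which plays the role of $\sin(\tau\sqrt K)/\sqrt K$. Defining it by this power series avoids any problem if $K$ is singular (for example $c_0=0$ with Neumann or periodic directions). By termwise differentiation,
\[
S'(\tau)=\cos(\tau\sqrt K),\qquad \tfrac{d}{d\tau}\cos(\tau\sqrt K)=-KS(\tau).
\]
The standard Duhamel formula then gives
\[
w(T)=\cos(T\sqrt K)\,w(0)+S(T)\,w'(0)+\int_0^TS(T-\tau)\,Pb(\tau)\,d\tau .
\]
One can verify this directly by differentiating twice.

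\textbf{Step 4: match the claimed formula.} Since $B(s)=\varphi+\int_0^s b(\tau)\,d\tau$, we split $\int_0^T\Lambda_sPB(s)\,ds$ into two pieces.

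The first piece is $\int_0^T\cos((T-s)\sqrt K)\,ds\,P\varphi=S(T)P\varphi$.

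For the second piece, exchange the order of integration (Fubini on the triangle $0\le\tau\le s\le T$):
\[
\int_0^T\!\!\int_0^s\cos\bigl((T-s)\sqrt K\bigr)Pb(\tau)\,d\tau\,ds=\int_0^T\Bigl(\int_\tau^T\cos\bigl((T-s)\sqrt K\bigr)ds\Bigr)Pb(\tau)\,d\tau=\int_0^TS(T-\tau)Pb(\tau)\,d\tau .
\]

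Adding $\Lambda_0Pv_0$ to these two pieces reproduces exactly the Duhamel formula for $w(T)$ from Step 3. Applying $P^{-1}$ then gives the stated expression for $v(T)$.

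The main point needing care is Step 2. There one must check that the top-left block of $H^{2k}$ is exactly $K^k$, with no cross terms leaking in from the other blocks. This follows because $H^2$ is block diagonal, but the identification requires the explicit block structure of $H$ and the decomposition $\widetilde A=\sum_l C_lC_l^T$ from \eqref{hy decompose of A}.
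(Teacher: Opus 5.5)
Your proof is correct, but it takes a different route from the paper. The paper never computes $\Lambda_s$ explicitly. It lifts $w=Pv$ to $W=(w_*,w_0,w_1,\dots,w_d)$, which solves the first-order system $W'=iHW+\widetilde B$ with forcing $\widetilde B=(PB(t),0,\dots,0)^T$ and initial value $W(0)=(Pv_0,0,\dots,0)^T$. From the component equations $w_0'=ic_0w_*$ and $w_l'=iC_l^Tw_*$ it checks that $w_*'(0)=P\varphi$ and $w_*''=-(\widetilde A+c_0^2I)w_*+Pb$. Uniqueness then gives $w_*=w$, and the formula follows by projecting the first-order Duhamel formula $W(T)=e^{iTH}W(0)+\int_0^Te^{i(T-s)H}\widetilde B(s)\,ds$ onto the first block.

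You instead use $H^2=\mathrm{diag}(RR^T,R^TR)$ with $RR^T=K$ to identify $\Lambda_s=\cos((T-s)\sqrt K)$ in closed form. You then apply the second-order Duhamel formula and use Fubini to rewrite $S(T)P\varphi+\int_0^TS(T-\tau)Pb(\tau)\,d\tau$ as $\int_0^T\Lambda_sPB(s)\,ds$.

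The paper's argument needs no matrix functional calculus and no change of integration order. It also explains where $B(s)$ comes from: $B(s)$ is just the natural forcing of the lifted first-order system, which is exactly what the quantum algorithm simulates.

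Your argument yields an explicit formula for the block. That formula makes immediate the realness of $\Lambda_s$ asserted in the statement, the bound $\|\Lambda_s\|_2\le 1$, and the fact that the sign in $e^{\pm iH}$ does not matter. Moreover, your $S(\tau)$ is exactly the sine propagator $g(\tau,\widetilde A+c_0^2I)$ that the paper introduces separately in the proof of Theorem~\ref{thm hy mol}. Your route therefore ties the representation formula and the spatial error analysis to the same pair of matrix functions.
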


\begin{proof}
Let $w=Pv$, then \eqref{hy approx odes} is transformed into
\begin{equation}\label{hy trans odes}
\begin{cases}
\dfrac{\mathrm{d}^2}{\mathrm{d}t^2} w(t) = -\widetilde A\,w(t)-c_0^2 w(t)+P\,b(t),\\
w(0)=P v_0,\quad \dfrac{\mathrm{d}}{\mathrm{d}t} w(0)=P\varphi.
\end{cases}
\end{equation}
We extend the unknown quantity $w(t)$ to $W(t):=(w_*(t),w_0(t),w_1(t),\cdots,w_d(t))^T$ and construct the extended ODEs:
\begin{align*}
    \begin{cases}
    \dfrac{\mathrm{d}}{\mathrm{d}t}W=iHW+\tilde{B},\\
    W(0)=W_0.
    \end{cases},\quad \tilde{B}:=\begin{pmatrix}
        PB(t)\\ 0\\ 0\\ \vdots \\0
    \end{pmatrix},\quad W_0:=\begin{pmatrix}
w(0)\\ 0\\ 0\\ \vdots \\0    
\end{pmatrix}.
\end{align*}
As a first-order linear ODE system, it has a unique solution and we prove that the first term of its solution $w_*(t)$ is $w(t)$. The extended ODEs is 
\[\begin{pmatrix}
\frac{\mathrm{d}}{\mathrm{d}t} w_*(t) \\
\frac{\mathrm{d}}{\mathrm{d}t} w_0(t) \\
\frac{\mathrm{d}}{\mathrm{d}t} w_1(t) \\
\vdots \\
\frac{\mathrm{d}}{\mathrm{d}t} w_d(t)
\end{pmatrix}
= i\begin{pmatrix}
0 & c_0 I & C_1 & \cdots & C_d \\
c_0 I & 0 & 0 & \cdots & 0 \\
C_1^T & 0 & 0 & \cdots & 0 \\
\vdots & \vdots & \vdots & & \vdots \\
C_d^T & 0 & 0 & \cdots & 0
\end{pmatrix}
\begin{pmatrix}
w_*(t) \\
w_0(t) \\
w_1(t) \\
\vdots \\
w_d(t)
\end{pmatrix}
+\begin{pmatrix}
P B(t) \\
0 \\
0 \\
\vdots \\
0
\end{pmatrix}
\]
\[
\Rightarrow\begin{cases}
\frac{\mathrm{d}}{\mathrm{d}t} w_*(t) = i\left( c_0 w_0(t) + C_1 w_1(t) + \cdots + C_d w_d(t) \right) + P\left( \int_0^t b(\tau) d\tau + \varphi \right), \\
\frac{\mathrm{d}}{\mathrm{d}t} w_0(t) = i c_0 w_*(t), \\
\frac{\mathrm{d}}{\mathrm{d}t} w_l(t) = i C_l^T w_*(t), \quad l = 1, \dots, d.
\end{cases}\]
By \( w_l(0) = 0 \), setting \( t = 0 \) gives \( \frac{\mathrm{d}}{\mathrm{d}t} w_*(0) = P\varphi \).
Differentiating \( \frac{\mathrm{d}}{\mathrm{d}t} w_*(t) \), by \eqref{hy decompose of A}, we obtain
\[\begin{aligned}
\frac{\mathrm{d}^2}{\mathrm{d}t^2} w_*(t)
&= i\left( c_0 \frac{\mathrm{d}}{\mathrm{d}t} w_0(t) + C_1 \frac{\mathrm{d}}{\mathrm{d}t} w_1(t) + \cdots + C_d \frac{\mathrm{d}}{\mathrm{d}t} w_d(t) \right) + P b(t) \\
&= - \left( c_0^2 w_*(t) + C_1 C_1^T w_*(t) + \cdots + C_d C_d^T w_*(t) \right) + P b(t) \\
&= - \left( \widetilde{A} + c_0^2 I \right) w_*(t) + P b(t).
\end{aligned}\]
Since \( w_*(0) = w(0) \), \( w_*(t) \) satisfies the ODEs \eqref{hy trans odes}. So \( w_*(t) = w(t) \).

Then, solving the extended ODEs, we have
\[
W(t) = e^{i t H} W_0 + \int_0^t e^{i(t-s)H} \widetilde{B}(s) ds\]
\begin{align*}\Rightarrow
w(t) =\begin{pmatrix} I & 0 & \cdots & 0 \end{pmatrix}W(t)&=
\begin{pmatrix} I & 0 & \cdots & 0 \end{pmatrix}
e^{i t H}
\begin{pmatrix} I & 0 & \cdots & 0 \end{pmatrix}^T
w(0)\\
&+\int_0^t
\begin{pmatrix} I & 0 & \cdots & 0 \end{pmatrix}
e^{i(t-s)H}
\begin{pmatrix} I & 0 & \cdots & 0 \end{pmatrix}^T
P B(s) ds\end{align*}
\[\Rightarrow
v(t) = P^{-1} w(t) = P^{-1} \Lambda_0 P v_0 + \int_0^t P^{-1} \Lambda_s P B(s) ds.\]
\end{proof}

\subsection{Error analysis}\label{section 5.4}

Then we analyze the error between the PDE \eqref{hy pde} and the ODEs \eqref{hy approx odes} and compute the scale of $N_l$.

\begin{theorem}\label{thm hy mol}
Assume $u$ is sufficiently smooth. 
For error $\varepsilon_N$, if $N_l$ satisfy \begin{align}     
N_l &\gtrsim \sqrt{ \frac{T^2d}{\varepsilon_N} } a_l \exp(\sum_{l' \in S_1,S_2} \frac{|c_{l'}|a_{l'}}{4})\left( \max \{ |c_l|, 1 \} \cdot \max \{ \| u_{x_l}^{(3)} \|_{\infty}, \| u_{x_l}^{(4)} \|_{\infty} \} \right)^{\frac{1}{2}},\label{thm hy mol N1}
\end{align}
then the overall error between the solution of PDE \eqref{hy pde} and the solution of its semi-discrete ODEs \eqref{hy approx odes} can be controlled by \( \varepsilon_N \), i.e. 
\[
\| [ u(x, T) ]_{x} - v(T) \|_{\star} \leq \varepsilon_N.
\]
\end{theorem}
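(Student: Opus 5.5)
We follow the proof of Theorem \ref{thm para mol}, replacing the parabolic propagator $e^{-tA}$ by the propagator of the transformed second-order system, which we read off from Lemma \ref{hy analytic solution}. Define the local truncation error and the global error
\[
\tau(t):=\Bigl[\Delta u+\sum_{l=1}^d c_l u_{x_l}\Bigr]_x + A[u(x,t)]_x,\qquad e(t):=[u(x,t)]_x-v(t).
\]
Subtracting \eqref{hy approx odes} from the PDE sampled on the grid gives
\[
e''=-(A+c_0^2I)e+\tau,\qquad e(0)=e'(0)=0.
\]
Apply Lemma \ref{hy analytic solution} to this system, with $v_0=0$, $\varphi=0$ and $b=\tau$. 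This yields
\[
e(T)=\int_0^T P^{-1}\Lambda_s P\Bigl(\int_0^s\tau(r)\,dr\Bigr)ds .
\]
Since $\Lambda_s$ is a diagonal block of the unitary $e^{i(T-s)H}$, we have $\|\Lambda_s\|_2\le 1$. Hence
\[
\starnorm{e(T)}\le \cond(P)\int_0^T\!\!\int_0^s\starnorm{\tau(r)}\,dr\,ds\le \tfrac{T^2}{2}\cond(P)\sup_r\starnorm{\tau(r)}.
\]
This already accounts for the $T^2$ in \eqref{thm hy mol N1}. We split $\tau=\sum_l\tau_l$ by direction and require each direction to contribute at most $\varepsilon_N/d$.

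\textbf{Dirichlet and periodic directions.} For $l\in S_1\cup S_3$ the estimate \eqref{thm hy mol 1} gives $\starnorm{\tau_l}\lesssim \max\{|c_l|,1\}\max\{\|u^{(3)}_{x_l}\|_\infty,\|u^{(4)}_{x_l}\|_\infty\}a_l^2/N_l^2$. The bound \eqref{para condP} gives $\cond(P)\lesssim\exp(\sum_{l'\in S_1\cup S_2}|c_{l'}|a_{l'}/2)$. Inserting both into the estimate above and solving for $N_l$ gives exactly \eqref{thm hy mol N1}.

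\textbf{Neumann directions (main obstacle).} For $l\in S_2$ the truncation error is only $\mathcal{O}(h_l)$ at the two boundary nodes, so the crude bound above loses an order. We use Lemma \ref{lem para neu mol}(1) in direction $l$:
\[
\tau_l=\sqrt{1+\alpha_l}\,P^{-1}C_lP\,\eta_l+\rho_l,
\]
with $\starnorm{\eta_l},\starnorm{\rho_l}=\mathcal{O}(h_l^2)$. The $\rho_l$ part is handled exactly as in the Dirichlet case. For the $\eta_l$ part, after the similarity transform we must bound the top-left block of $e^{i(T-s)H}$ applied to $C_l\eta_l$. The idea is that $C_l$ is itself an off-diagonal block of $H$. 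Put $\widetilde\eta$ in the $l$-th auxiliary slot of the extended vector, so that $E_*^T H\,(E_l\widetilde\eta)=C_l\widetilde\eta$, where $E_*$ and $E_l$ denote the block-inclusion maps into the first and the $l$-th auxiliary slots. Then
\[
\int_0^s\Lambda_r\,C_l\,g\,dr=E_*^T\Bigl(\int_0^s e^{irH}H\,dr\Bigr)E_lg=-i\,E_*^T\bigl(e^{isH}-I\bigr)E_lg,
\]
and this has norm at most $2\|g\|$. Here $g$ is the time integral of $P\eta_l$, and the ordering of the two time integrals is arranged so that the $s$-integration hits the propagator. When $\eta_l$ varies in time, we first exchange the order of the two integrals (Fubini), then integrate by parts in time. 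This produces terms bounded by $\sup\starnorm{\eta_l}$ and $\sup\starnorm{\partial_t\eta_l}$, each $\mathcal{O}(h_l^2)$ under the smoothness assumption. It also plays the role that $\|e^{-tDD^T}D\|\lesssim t^{-1/2}$ played in the parabolic lemma. The resulting contribution is $\mathcal{O}(T\cond(P)h_l^2)$, which is dominated by the $T^2$ bound when $T\gtrsim1$.

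Finally, we sum over $l$ and absorb constants into $\gtrsim$. The extra factor $\cond(P)$ coming from the conjugation $P^{-1}C_lP$ is again controlled by \eqref{para condP}. The Neumann step is the delicate one: we must keep the $h_l^{-1}$ factor hidden inside $C_l$ from multiplying the $\mathcal{O}(h_l)$ boundary error, which the block-structure identity above achieves.
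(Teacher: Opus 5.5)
Your proof is correct. Its skeleton matches the paper's: the error equation, the directional split $\tau=\sum_l\tau_l$, the bound $\tfrac{T^2}{2}\cond(P)\sup\starnorm{\tau_l}$ for $S_1\cup S_3$, and Lemma~\ref{lem para neu mol}(1) for $S_2$. The genuine difference is how you control the Neumann $\eta_l$-term.

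The paper works with the sine propagator. It writes $e(T)=\int_0^T P^{-1}g(T-s,B)P\tau(s)\,ds$, where $g(t,B)=\sin(t\sqrt B)/\sqrt B$ and $B=\widetilde A+c_0^2I$. It then bounds $\norm{g(T-s,B)C_l}_2\le 1$ spectrally, using $C_lC_l^T\preceq B$ and $gBg=\sin^2((T-s)B^{1/2})$.

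You instead use the cosine propagator $\Lambda_s=\cos((T-s)\sqrt B)$ from Lemma~\ref{hy analytic solution}, acting on $\int_0^s\tau$. After Fubini you need a uniform bound on $\int_r^T\Lambda_sC_l\,ds$. You get it from the block identity $HE_l=E_*C_l$ together with unitarity: $\int_r^T\Lambda_sC_l\,ds=iE_*^T\bigl(I-e^{i(T-r)H}\bigr)E_l$, whose norm is at most $2$.

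Since $\int_r^T\Lambda_s\,ds=g(T-r,B)$, both arguments estimate the same operator. Your route reuses the Hamiltonian embedding and needs no operator inequality. It also makes transparent why the $h_l^{-1}$ inside $C_l$ is harmless: $C_l$ is a block of the generator. The paper's route avoids the double integral and gives the sharper constant $1$.

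Two small corrections.
\begin{itemize}
\item $\Lambda_s$ is the block of $e^{i(T-s)H}$, not of $e^{isH}$, so your identity should be written with that time argument.
\item Once Fubini is applied, the integration by parts is superfluous. The bound on $\int_r^T\Lambda_sC_l\,ds$ holds pointwise in $r$, so you only need $\sup_r\starnorm{\eta_l(r)}$, and no control of $\partial_t\eta_l$.
\end{itemize}

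Your remark that the $\mathcal{O}(T\cond(P)h_l^2)$ term is absorbed by the $T^2$ requirement only when $T\gtrsim1$ applies equally to the paper. Its final bound carries the same $T\norm{u^{(3)}_{x_l}}_\infty$ term and leaves this assumption implicit.
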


\begin{proof}
Let the local truncation error and the overall error be
\[  
\tau(t) := \bigl[\Delta u(x,t) + \sum_{l=1}^d c_l u_{x_l}(x,t)\bigr]_x + A[u(x,t)]_x,  
\quad  
e(t) := [u(x,t)]_x - v(t).
\]
Then $e(t)$ satisfies
\[e''(t)=-(A+c_0^2I)e(t)+\tau(t),\quad e(0)=e'(0)=0.\]
Solving the second-order ODEs gives
\begin{align*}
e(T)&=\int_0^T \sum_{i=0}^{\infty} \frac{1}{(2i+1)!}(-A-c_0^2I)^i (T-s)^{2i+1}\tau(s)\,ds\\
&=\int_0^T P^{-1}g(T-s,B)P\tau(s)\,ds,
\end{align*} where $B:=\widetilde{A}+c_0^2I$ and $ g(T-s,B):=\sum_{i=0}^{\infty} \frac{1}{(2i+1)!}(-B)^i (T-s)^{2i+1}.$
Then again we decompose $$\tau=\sum_{l=1}^d \tau_l,\ \tau_l(t):=  
[u_{x_lx_l}(x,t)+c_lu_{x_l}(x,t)]_x  
+ I^{\otimes(l-1)}\otimes A_l\otimes I^{\otimes(d-l)}[u(x,t)]_x,$$ and get
$$e(T)=\sum_{l=1}^d \int_0^T P^{-1}g(T-s,B)P\tau_l(s)\,ds.$$ 
To guarantee $\starnorm{e(T)}\le \varepsilon_N$, it suffices to show that, for each $l$,
\begin{align} \label{thm hy mol 2}
\left\|\int_0^T P^{-1}g(T-s,B)P\tau_l(s)\,ds\right\|_{\star}  
\le \frac{\varepsilon_N}{d}.
\end{align}

\medskip
\noindent
\textbf{Step 1: Dirichlet and periodic boundaries.}
For $l\in S_1\cup S_3$, similar to Theorem \ref{thm para mol}, we can get \eqref{thm hy mol 1}, i.e. 
\[\|\tau_l(t)\|_\star
\le \left(
\frac{1}{12}\|u^{(4)}_{x_l}\|_\infty
+\frac{|c_l|}{6}\|u^{(3)}_{x_l}\|_\infty
\right)\frac{a_l^2}{N_l^2}.
\]
Since $\widetilde{A}$ is symmetric and positive semi-definite, so is $B=\widetilde{A}+c_0^2I$. If $B$ has an eigenvalue/singular value $\lambda$, then $g(T-s,B)=\sum_{i=0}^\infty \frac{1}{(2i+1)!}(-B)^i(T-s)^{2i+1}$ has corresponding eigenvalue/singular value $$\sum_{i=0}^\infty \frac{1}{(2i+1)!}(-\lambda)^i(T-s)^{2i+1}=\begin{cases}    
\dfrac{\sin(T-s)\sqrt{\lambda}}{\sqrt{\lambda}},\ \text{if }\lambda>0\\    
T-s,\ \text{if }\lambda=0
\end{cases}\leq T-s.$$
So $\left\|g(T-s,B)\right\|_2\leq T-s.$
Therefore,
\[
\left\|
\int_0^T P^{-1}g(T-s,B)P\,\tau_l(s)\,ds
\right\|_\star
\le
\int_0^T \operatorname{cond}(P)\,(T-s)\|\tau_l(s)\|_\star\,ds
\le
\frac{T^2}{2}\operatorname{cond}(P)
\left(
\frac{1}{12}\|u^{(4)}_{x_l}\|_\infty
+
\frac{|c_l|}{6}\|u^{(3)}_{x_l}\|_\infty
\right)h_l^2.
\]
Hence to let \eqref{thm hy mol 2} hold true, it is enough to require
\[  
N_l \geq  
\sqrt{\frac{C T^2 d}{2\varepsilon_N}}  
\, a_l  
\exp\!\left(\sum_{l'\in S_1\cup S_2}\frac{|c_{l'}|a_{l'}}{4}\right)  
\left(    
\frac1{12}\|u_{x_l}^{(4)}\|_{\infty} + \frac{|c_l|}{6}\|u_{x_l}^{(3)}\|_{\infty}  
\right)^{1/2}.
\]

\medskip
\noindent
\textbf{Step 2: Neumann boundaries.}
For $l\in S_2$, we apply Lemma \ref{lem para neu mol} in the $l$-th direction. By (1) of Lemma \ref{lem para neu mol}, we have
\[
\tau_l(t)=
\sqrt{1+c_lh_l/2}\,
I^{\otimes(l-1)}\otimes P_l^{-1}D_lP_l\otimes I^{\otimes(d-l)}\,\eta_l(t)
+\rho_l(t),
\]
where
\[
\|\eta_l(t)\|_\star = \mathcal{O}\!\left(h_l^2\|u^{(3)}_{x_l}\|_\infty\right),
\quad
\|\rho_l(t)\|_\star
=
\mathcal{O}\!\left(
h_l^2\|u^{(4)}_{x_l}\|_\infty
+
h_l^2|c_l|\|u^{(3)}_{x_l}\|_\infty
\right).
\]
Let
\[
C_l:=I^{\otimes(l-1)}\otimes D_l\otimes I^{\otimes(d-l)}.
\]
Then, using the tensor-product structure of $P$,
\[
\tau_l(t)
=
\sqrt{1+c_lh_l/2}\,
P^{-1}C_lP\,\eta_l(t)+\rho_l(t).
\]
Hence
\begin{align*}
\left\|
\int_0^T P^{-1}g(T-s,B)P\,\tau_l(s)\,ds
\right\|_\star
&\le
\left\|
\int_0^T\left( \sqrt{1+c_lh_l/2}P^{-1}g(T-s,B)C_lP\eta+P^{-1}g(T-s,B)P\rho \right)ds
\right\|_\star\\
&\le
\int_0^T
\operatorname{cond}(P)
\left(
\sqrt{1+c_lh_l/2}\,
\|g(T-s,B)C_l\|_2\,\|\eta_l(s)\|_\star
+
\|g(T-s,B)\|_2\,\|\rho_l(s)\|_\star
\right)\,ds.
\end{align*}

We already get $\left\|g(T-s,B)\right\|_2\leq T-s.$ Now we estimate $\|g(T-s,B)C_l\|_2$. Since
\(B=\widetilde A+c_0^2I=\sum_{l'=1}^d C_{l'}C_{l'}^T+c_0^2I,\)
we have $C_lC_l^T\preceq B$. Therefore,
\[
\|g(T-s,B)C_l\|_2^2
=
\lambda_{\max}\!\Bigl(g(T-s,B)C_lC_l^T g(T-s,B)\Bigr)
\le
\lambda_{\max}\!\Bigl(g(T-s,B)Bg(T-s,B)\Bigr).
\]
Since $g(T-s,B)$ commutes with $B$, 
\(g(T-s,B)Bg(T-s,B)=
\sin^2\!\bigl((T-s)B^{1/2}\bigr).\)
Hence
\[
\|g(T-s,B)C_l\|_2\le 1.
\]
Combining this with $\|g(T-s,B)\|_2\le T-s$, we obtain
\[
\left\|
\int_0^T P^{-1}g(T-s,B)P\,\tau_l(s)\,ds
\right\|_\star
\le
\operatorname{cond}(P)
\int_0^T
\left(
\sqrt{1+c_lh_l/2}\,
\|\eta_l(s)\|_\star
+
(T-s)\|\rho_l(s)\|_\star
\right)\,ds
\]
\[
\le
\operatorname{cond}(P)
\left(
T\sqrt{1+c_lh_l/2}\,
\sup_{0\le s\le T}\|\eta_l(s)\|_\star
+
\frac{T^2}{2}\sup_{0\le s\le T}\|\rho_l(s)\|_\star
\right).
\]
Thus
\[
\left\|
\int_0^T P^{-1}g(T-s,B)P\,\tau_l(s)\,ds
\right\|_\star
\lesssim
\operatorname{cond}(P)\,h_l^2
\Bigl(
T\|u^{(3)}_{x_l}\|_\infty
+
T^2\|u^{(4)}_{x_l}\|_\infty
+
T^2|c_l|\|u^{(3)}_{x_l}\|_\infty
\Bigr).
\]
Finally, using
\(
\operatorname{cond}(P)\le
C\exp\!\Bigl(\sum_{l\in S_1\cup S_2}\frac{|c_l|a_l}{2}\Bigr)
\) in \eqref{para condP}, let \eqref{thm hy mol 2} hold true and get scale of $N_l$.
\end{proof}

\section{Quantum algorithm for linear hyperbolic PDEs}\label{section 6}
In this chapter, we compute the analytic solution $v(T)$ in Lemma \ref{hy analytic solution}. We first employ the Gaussian quadrature formula to handle the inhomogeneous term. Next, we specify the oracles and construct block encodings of the extended matrix $H$ using elementary gates. We then simulate the solution via Hamiltonian simulation method based on QSVT.
Finally, we derive the total complexity of the overall PDE algorithm.

\subsection{Discretized implementation}
In this subsection, we approximate the solution of the semi-discrete ODEs~\eqref{hy approx odes} at a final time $T$.
In Lemma \ref{hy analytic solution}, we get the analytic solution of it      \begin{align*}
        v(T)=P^{-1}\Lambda_0 P v_0+\int_0^T P^{-1}\Lambda_s PB(s)ds.
    \end{align*}
    Similar to parabolic PDE, we discretize the inhomogeneous integral in $v(T)$ by a composite Gauss-Legendre quadrature.
Fix a step size $h_t>0$ and let $T/h_t$ be the number of time subintervals.
On each subinterval, we again use a $Q_t$-point Gauss--Legendre rule on $[-1,1]$ with nodes $\{y_q\}_{q=0}^{Q_t-1}$
and weights $\{w_q\}_{q=0}^{Q_t-1}$.
Define the quadrature nodes and coefficients
\begin{equation*}
s_{q_t,m_t} \;:=\; h_t\Big(m_t + \tfrac{1}{2} + \tfrac{1}{2}y_q\Big),  
\quad  
c_{q_t,m_t} \;:=\; \frac{h_t}{2}\, w_{q_t}\, \|B(s_{q_t,m_t})\|_2,  
\quad  
0\leq m_t\leq T/h_t-1,\; 0\leq q_t\leq Q_t-1.
\end{equation*}
Then 
\begin{align*}
    v(T)\approx v_Q(T):= P^{-1}\Lambda_0 Pv_0+ \sum_{m_t=0}^{T/h_t-1} \sum_{q_t=0}^{Q_t-1} c_{q_t,m_t} P^{-1} \Lambda_{s_{q_t,m_t}} P |B(s_{q_t,m_t})\rangle.
\end{align*} 
Let $M_t=\dfrac{T}{h_t}Q_t$ and re-index $q_t,m_t$ by a single index $j_t$. Then we write \begin{align}\label{hy vb}
    v_Q(T)=P^{-1}\Lambda_0 Pv_0+ \sum_{j_t=0}^{M_t-1} c_{j_t} P^{-1}\Lambda_{s_{j_t}} P |B(s_{j_t})\rangle.
\end{align} 

In the following Lemmas, we give the scale of $\sum c_{j_t}$ for subsequent LCU operation and the scales of $Q_t$ and $h_t$. The proof is the same as Lemma \ref{lemma para scaling of c} and Lemma \ref{thm para len-gauss scale}.

\begin{lemma}\label{lemma hy scaling of c}
    Let $c_{j_t},j_t=0,\cdots,M_t-1$ be the quadrature coefficients defined in \eqref{hy vb}, then $c_{j_t}\geq 0$ and $\sum_{j_t=0}^{M_t-1}c_{j_t}=\mathcal{O}(\int_0^T \|B(s)\|_2 ds).$
\end{lemma}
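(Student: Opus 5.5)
The claim is elementary, so I will argue directly from the definition of the coefficients in \eqref{hy vb}, exactly as for Lemma~\ref{lemma para scaling of c}. Each coefficient is
\[
c_{q_t,m_t}=\tfrac{h_t}{2}\,w_{q_t}\,\|B(s_{q_t,m_t})\|_2 .
\]
Nonnegativity then follows at once. We have $h_t>0$ and $\|B(\cdot)\|_2\ge 0$, and Gauss--Legendre weights are strictly positive, which is the classical fact that $w_q=\int_{-1}^1 \ell_q(y)^2\,dy>0$ for the Lagrange basis polynomials $\ell_q$ at the Gauss nodes. Hence $c_{j_t}\ge 0$.

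For the sum, observe that $\sum_{j_t}c_{j_t}$ is exactly the composite $Q_t$-point Gauss--Legendre approximation of $\int_0^T\|B(s)\|_2\,ds$. Indeed, the affine map $y\mapsto h_t(m_t+\tfrac12+\tfrac12 y)$ sends $[-1,1]$ onto the $m_t$-th subinterval with Jacobian $h_t/2$. I would then compare this quadrature sum to the integral.

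I will do this comparison subinterval by subinterval. Write $F(s):=\|B(s)\|_2$ and let $I_m$ be the $m$-th subinterval. I will use two properties of the weights: they are positive, and $\sum_q w_q=2$ (exactness on constants). From these, the local rule satisfies
\[
\Bigl|\sum_q \tfrac{h_t}{2}w_q F(s_{q,m})\Bigr|\le h_t\max_{I_m}F .
\]
To pass from the maximum of $F$ to its integral, I note that $F$ is Lipschitz, since $\bigl|\,\|B(s)\|_2-\|B(s')\|_2\bigr|\le\|b\|_{\infty}|s-s'|$ because $B'=b$. This gives $h_t\max_{I_m}F\le\int_{I_m}F+h_t^2\sup\|b\|_2$.

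Summing over the $T/h_t$ subintervals then yields
\[
\sum_{j_t} c_{j_t}\le\int_0^TF(s)\,ds+Th_t\sup_s\|b(s)\|_2 .
\]
The main obstacle is absorbing the extra term $Th_t\sup_s\|b(s)\|_2$ into $\mathcal O\bigl(\int_0^T F\bigr)$. I see two routes. The first is to argue as in Lemma~\ref{thm para len-gauss scale}: with the chosen $h_t$ and $Q_t$, the quadrature error is at most a small fraction of the integral itself. The second is to note that $h_t\le 4Q_t/(e(\|\widetilde A\|_2+\Xi))$ is small relative to the variation scale of $B$, so that $\max_{I_m}F\le 2\min_{I_m}F$ on each subinterval. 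If neither route works uniformly, for example when $F$ nearly vanishes, I would instead read the $\mathcal O$-statement as being up to this lower-order quadrature error. In any case this error is bounded by the $\varepsilon_Q$-accuracy already established for the rule, which suffices for its later use in the LCU normalization.
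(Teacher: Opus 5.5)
Your nonnegativity argument, and your identification of $\sum_{j_t}c_{j_t}$ as the composite Gauss--Legendre approximation of $\int_0^T\|B(s)\|_2\,ds$, are exactly the paper's proof (it reuses the argument for Lemma~\ref{lemma para scaling of c}). The paper then simply asserts that the sum equals the integral up to ``the error of the composite Gauss--Legendre quadrature formula'' and reads this as $\mathcal{O}(\int_0^T\|B(s)\|_2\,ds)$. You go further, and that part is a real improvement. Positivity of the weights with $\sum_q w_q=2$, together with the Lipschitz bound $\bigl|\|B(s)\|_2-\|B(s')\|_2\bigr|\le\sup_\tau\|b(\tau)\|_2\,|s-s'|$ from $B'=b$, gives the rigorous estimate $\sum_{j_t}c_{j_t}\le\int_0^T\|B(s)\|_2\,ds+Th_t\sup_s\|b(s)\|_2$. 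This uses no smoothness of $s\mapsto\|B(s)\|_2$. The paper's appeal to the Gauss--Legendre error formula tacitly needs such smoothness, even though $\|B(s)\|_2$ is only Lipschitz where $B$ vanishes.

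The absorption step, however, cannot be done with a universal constant as the lemma is literally stated, so neither of your two routes can succeed in general. The parameters $h_t$ and $Q_t$ are unconstrained there. With a single subinterval and $Q_t=1$, the sum is $T\|B(T/2)\|_2$, which exceeds $\int_0^T\|B(s)\|_2\,ds$ by an arbitrary factor when $\|B(\cdot)\|_2$ is concentrated near $T/2$ (take $\varphi=0$ and $B$ a narrow bump). Your fallback is thus at the same level of rigor as the paper, but its stated justification is off. Lemma~\ref{thm hy len-gauss scale} controls the quadrature error for the vector integrand $P^{-1}\Lambda_sPB(s)$ in $\|\cdot\|_{\star}$, not for the scalar $\|B(s)\|_2$, so $\varepsilon_Q$ does not bound your extra term. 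Also, in the hyperbolic case the step-size condition involves $\|H\|_2$, not $\|\widetilde{A}\|_2$.

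The cleanest fix is to keep your additive bound as the conclusion. Downstream, only an upper bound on $\|c\|_1$ is needed (in $\mathrm{cost}_{AA}$ and in $\varepsilon_b$). With the paper's choice $h_t\le 4Q_t/(e\|H\|_2)$, the extra term after dividing by $\sqrt{N}$ is at most of order $TQ_t\sup_s\|b(s)\|_{\star}/\|H\|_2$. This vanishes under grid refinement because $\|H\|_2\gtrsim\max_l 1/h_l$.
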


\begin{lemma}\label{thm hy len-gauss scale}
Let $v_{Q}(T)$ be defined by~\eqref{hy vb}.
Fix an error $\varepsilon_{Q}\in(0,1)$. Assume $$\Xi := \sup\left\{ (\| B^{(p)} \|_{\star})^{\frac{1}{p+1}} \,\big|\, p \geq 0,\, t \in [0,T] \right\} < \infty.$$ If we choose
\[
Q_t \geq \frac{1}{\ln 4} \ln \frac{ T\,\mathrm{cond}(P) \Xi}{\varepsilon_{Q}}, \quad h_t \leq \frac{4 Q_t}{e ( \| H \|_2 + \Xi)},
\]then $\|v(T)-v_Q(T)\|_{\star}\leq \varepsilon_{Q}$.
\end{lemma}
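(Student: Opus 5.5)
We bound the quadrature error subinterval by subinterval, using the standard remainder for the $Q_t$-point Gauss--Legendre rule, and then sum over the $T/h_t$ subintervals. Write the integrand as $F(s):=P^{-1}\Lambda_s P B(s)$. Since $\|B(s_{j_t})\|_2|B(s_{j_t})\rangle=B(s_{j_t})$, the sum in \eqref{hy vb} is exactly the composite Gauss--Legendre rule applied to $F$. Hence
\[
v(T)-v_Q(T)=\sum_{m_t}\Bigl(\int_{m_th_t}^{(m_t+1)h_t}F(s)\,ds-\tfrac{h_t}{2}\sum_{q}w_qF(s_{q,m_t})\Bigr).
\]
For a $Q_t$-point rule on an interval of length $h_t$, the remainder is bounded by $\frac{h_t^{2Q_t+1}(Q_t!)^4}{(2Q_t+1)((2Q_t)!)^3}\sup\|F^{(2Q_t)}\|$. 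Using Stirling's formula, this is at most $h_t\bigl(\tfrac{e h_t}{8Q_t}\bigr)^{2Q_t}\sup\|F^{(2Q_t)}\|$ up to constants. We apply this componentwise and pass to the $\|\cdot\|_\star$ norm.

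The key step is bounding the derivatives of $F$. Write $F(s)=P^{-1}E\,e^{i(T-s)H}E^TP\,B(s)$ with $E=(I_N,O,\dots,O)$. Since $P$ and $E$ do not depend on $s$, the Leibniz rule gives
\[
F^{(p)}(s)=\sum_{k=0}^{p}\binom{p}{k}P^{-1}E\,(-iH)^k e^{i(T-s)H}E^TP\,B^{(p-k)}(s).
\]
The matrix $H$ is Hermitian, so $e^{i(T-s)H}$ is unitary, and $\|E\|_2=1$. This gives
\[
\|F^{(p)}\|_\star\le \mathrm{cond}(P)\sum_k\binom{p}{k}\|H\|_2^k\,\|B^{(p-k)}\|_\star.
\]
The norm $\|\cdot\|_\star$ is preserved when we embed into the extended space and project back, because the normalization by $\sqrt N$ matches on the first block. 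By the definition of $\Xi$, $\|B^{(p-k)}\|_\star\le \Xi^{p-k+1}$. Therefore $\|F^{(p)}\|_\star\le \mathrm{cond}(P)\,\Xi\,(\|H\|_2+\Xi)^p$. This is where unitarity replaces the contraction $\|e^{-t\widetilde A}\|_2\le 1$ used in the parabolic Lemma \ref{thm para len-gauss scale}; otherwise the argument is identical.

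Summing over the $T/h_t$ subintervals gives
\[
\|v(T)-v_Q(T)\|_\star\lesssim T\,\mathrm{cond}(P)\,\Xi\Bigl(\frac{e h_t(\|H\|_2+\Xi)}{8Q_t}\Bigr)^{2Q_t}.
\]
The choice $h_t\le 4Q_t/(e(\|H\|_2+\Xi))$ makes the bracket at most $1/2$, so the whole bound is at most $T\,\mathrm{cond}(P)\,\Xi\,4^{-Q_t}$. The condition $Q_t\ge \frac{1}{\ln4}\ln\frac{T\,\mathrm{cond}(P)\Xi}{\varepsilon_Q}$ then yields $\varepsilon_Q$. The main obstacle is bookkeeping constants: the Stirling bound on the Gauss--Legendre error constant must give exactly the base $e h_t/(8Q_t)$. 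If it comes out with a slightly different constant, the factor $4$ in $h_t$ or the $\ln 4$ in $Q_t$ absorbs it, and the result remains valid up to the implicit constants as in the parabolic case.
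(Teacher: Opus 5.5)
Your proposal is correct and is essentially the paper's own argument, namely the proof of Lemma~\ref{thm para len-gauss scale} transplanted to $F(s)=P^{-1}\Lambda_sPB(s)$. It uses the composite Gauss--Legendre remainder with the same Stirling estimate, which in fact gives exactly the base $eh_t/(8Q_t)$ with prefactor at most $T$, so no constants need absorbing. It then uses the Leibniz rule, with unitarity of $e^{i(T-s)H}$ replacing $\|e^{-(T-s)\widetilde{A}}\|_2\le 1$, and $\Xi$ to collapse the binomial sum into $\mathrm{cond}(P)\,\Xi\,(\|H\|_2+\Xi)^{2Q_t}$.

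One small fix: apply the remainder bound to the vector-valued integrand directly, not componentwise. The direct bound holds in any norm, because the Gauss remainder is an integral of $F^{(2Q_t)}$ against a nonnegative kernel whose total mass equals the scalar error constant. Componentwise intermediate points only give $\sqrt{\sum_i\sup_s|F_i^{(2Q_t)}(s)|^2}$, which your pointwise-in-$s$ derivative bound does not control.
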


\subsection{Oracles and Block-encoding of coefficient matrix}

In this subsection, we specify the oracles and construct block-encodings about the Hermitian matrix $H$ \eqref{hy extended matrix}.

\paragraph{State-preparation oracles.}
Let $N_l$ be chosen as in
Theorem~\ref{thm hy mol} and assume $N_l= 2^{n_l}$ is a power of two for quantum implementation. For the initial state $u_0(x)$ and $v_0=[u_0(x)]_{x}\in\mathbb{R}^N$, we assume access to the oracle 
\begin{align}\label{hy oracle Ov}
    O_v: |0\rangle \rightarrow  |v_0\rangle.
\end{align}
For the initial derivative $\phi(x)$ and the inhomogeneous term $f(x,t)$. In Lemma \ref{hy analytic solution}, the solution depends on $B(s)=\int_0^s b(\tau)d\tau+\varphi$. So we assume access to the oracle 
\begin{align}\label{hy oracle Ob}
    O_b: |j_t\rangle |0\rangle \rightarrow |j_t\rangle |B(s_{j_t})\rangle,
\end{align} where $s_{j_t}$ is the integral nodes in \eqref{hy vb}.

For the nodes and coefficients of composite Gauss-Legendre quadrature formula in \eqref{hy vb}, we also assume access to the oracles
\begin{align}
    O_s&\in (1,1)BE\left(\diag(1-\frac{s_0}{T},1-\frac{s_1}{T},1-\frac{s_{M_t-1}}{T})\right),\label{hy oracle Os}\\
    O_c&:\ |0\rangle \rightarrow \frac{1}{\sqrt{\|c\|_1}}\sum_{j_t=0}^{M_t-1}\sqrt{c_{j_t}}|j_t\rangle,\label{hy oracle Oc}
\end{align} where $\|c\|_1=\sum_{j_t=0}^{M_t-1} c_{j_t}.$

\paragraph{Block-encoding of $H$.} The overall idea is to construct block-encoding of every $C_l$ and to add them together by LCU. And note that in \eqref{hy coefficient matrix s1} when $
l\in S_1$, $D_l$ is not square. So when $l\in S_1$, $C_l$ is not square and have more columns than that in $S_2,S_3$. In the extended matrix $H$, compared to using one block to store $C_l,\ l\in S_2,S_3$, we use two blocks in $\mathbb{R}^{N\times N}$ to store $C_l,\ \in S_1$. So the order of matrix $H$ is $(2+2|S_1|+|S_2|+|S_3|)N=\mathcal{O}(d)\cdot N$. Strictly speaking, we will construct a larger matrix than $H$. However, since the added rows and columns are all zero, this does not affect the calculation.

Firstly, for $r=\lceil \log(2+2|S_1|+|S_2|+|S_3|) \rceil=\mathcal{O}(\log d)$, we block-encode the matrix $|0\rangle\langle j|$ where $j=(j_1j_2\cdots j_r)_2$. We have \[
|0\rangle\langle j|
=
|0^r\rangle\langle j_1j_2\cdots j_r|
=
\bigotimes_{t=1}^r |0\rangle\langle j_t|
=
\bigotimes_{t=1}^r \sigma_{0\,j_t}.
\]
We have obtained the $(1,1)$-block-encoding of $\sigma_{00},\sigma_{01}$ using $\mathcal{O}(1)$ basic gates. By Lemma \ref{lemma be tensor}, we can construct a $(1,r)$-block-encoding of $|0\rangle\langle j|$ using $r=\mathcal{O}(\log d)$ basic gates.

Then we construct block-encoding of three boundaries respectively. Throughout, let $\eta_l^+:=\sqrt{1+\tfrac12 c_l h_l}=\mathcal{O}(1),\ \eta_l^-:=\sqrt{1-\tfrac12 c_l h_l}=\mathcal{O}(1)$. And recall in \eqref{para be of t-}
$$t^-_l := \begin{pmatrix} 0 & 1 & & \\ & \ddots & \ddots & \\ & & \ddots & 1 \\ & & & 0 \end{pmatrix}_{N_l},\quad t^+_l=(t^-_l)^T$$ and we have obtained $(n_l,n_l+\lceil\log n_l\rceil)$-block-encoding of $t^-_l$ or $t^+_l$ using $\mathcal{O}(n_l^2\log n_l)$ basic gates.\\
\textbf{(i) Dirichlet Boundary ($l\in S_1$):} In \eqref{hy coefficient matrix s1}, we regard $D_l$ as consisting of two $N$-th order blocks $$D_l=\frac{1}{h_l}(\eta_l^+ I-\eta_l^- t^-_l\ \ -\eta_l^- \sigma_{10}^{\otimes n_l})\in \mathbb{R}^{N\times 2 N}.$$
By Lemma \ref{lemma be tensor}, we can construct a $(\eta_l^-,n_l)$-block-encoding of $\eta_l^- \sigma_{10}^{\otimes n_l}$ using $\mathcal{O}(n_l)$ basic gates.
By Lemma \ref{lemma be lcu}, we can construct a $(\eta_l^+ +n_l\eta_l^-,n_l+\lceil\log n_l\rceil+1)$-block-encoding of $\eta_l^+ I-\eta_l^- t^-_l$ using $\mathcal{O}(n_l^2\log n_l)$ basic gates. Then for some $0\leq j\leq r-1$, we insert matrix block $C_l=I^{\otimes (l-1)}\otimes D_l \otimes I^{\otimes (d-l)}$ into matrix $H$. We consider $$\begin{pmatrix}
    O & \cdots & O & C_l & O & \cdots \\
    O \\
    \vdots\\
    O
\end{pmatrix}=|0\rangle \langle j|\otimes\left(I^{\otimes (l-1)}\otimes \frac{1}{h_l}(\eta_l^+ I-\eta_l^- t^-_l)\otimes I^{\otimes (d-l)} \right)-|0\rangle \langle j+1|\otimes\left( I^{\otimes (l-1)}\otimes \frac{1}{h_l}\eta_l^- \sigma_{10}^{\otimes n_l}\otimes I^{\otimes (d-l)}\right).$$
Through the block-encoding of $|0\rangle \langle j|$ and LCU (Lemma \ref{lemma be lcu}), we can construct a $(\frac{1}{h_l}(\eta^+_l+\eta^-_l+\eta^-_l n_l),n_l+\lceil\log n_l\rceil+2+r)$-block-encoding of it using $\mathcal{O}(n_l^2\log n_l)$ basic gates.

\textbf{(ii) Neumann Boundary ($l\in S_2$):} In \eqref{hy coefficient matrix s2}, we have $$D_l=\frac{1}{h_l}(\eta_l^+(I-\sigma_{11}^{\otimes n_l})-\eta_l^- t_l^+).$$
By Lemma \ref{lemma be lcu}, we can construct a $(\frac{1}{h_l}(2\eta^+_l+\eta^-_l n_l),n_l+\lceil\log n_l\rceil+2)$-block-encoding of $D_l$ using $\mathcal{O}(n_l^2\log n_l)$ basic gates. Then for some $0\leq j\leq r-1$, we insert matrix block $C_l$ into matrix $H$. Through the block-encoding of $|0\rangle \langle j|$, we can construct a $(\frac{1}{h_l}(2\eta^+_l+\eta^-_l n_l),n_l+\lceil\log n_l\rceil+2+r)$-block-encoding of $|0\rangle \langle j|\otimes C_l$. 

\textbf{(iii) Periodic Boundary ($l\in S_3$):} In \eqref{hy coefficient matrix s3}, we have $$D_l=\frac{1}{h_l}(I-t^-_l-\sigma_{10}^{\otimes n_l}).$$
By Lemma \ref{lemma be lcu}, we can construct a $(\frac{1}{h_l}(2+ n_l),n_l+\lceil\log n_l\rceil+2)$-block-encoding of $D_l$ using $\mathcal{O}(n_l^2\log n_l)$ basic gates. Then through the block-encoding of $|0\rangle \langle j|$, we can construct a $(\frac{1}{h_l}(2+ n_l),n_l+\lceil\log n_l\rceil+2+r)$-block-encoding of $|0\rangle \langle j|\otimes C_l$.

Finally, in \eqref{hy extended matrix}, $H$ can be decomposed into a summation of $2d+2$ term
$$H=c_0\begin{pmatrix}
    O & I & O & \cdots\\
    O\\
    \vdots
\end{pmatrix}+c_0\begin{pmatrix}
    O & I & O & \cdots\\
    O\\
    \vdots
\end{pmatrix}^T+\sum_l \begin{pmatrix}
    O & \cdots & O & C_l & O & \cdots \\
    O \\
    \vdots\\
    O
\end{pmatrix}+\sum_l\begin{pmatrix}
    O & \cdots & O & C_l & O & \cdots \\
    O \\
    \vdots\\
    O
\end{pmatrix}^T.$$
By Lemma \ref{lemma be lcu}, we can obtain the following block-encoding of $H$ with its cost $\mathrm{Gate}(U_H)$:
\begin{align}\label{hy be of H}
  U_H &\in (\alpha_H,\; m_H)\mathrm{BE}(H),
  \quad \alpha_H=\mathcal{O}\left(\sum_{l=1}^d \frac{n_lN_l}{a_l} \right)
  ,\notag \\
   m_H&=\max_{1\leq l\leq d}(n_l+\lceil\log n_l\rceil)+2+r+\lceil\log (2d+2)\rceil,\quad
  \mathrm{Gate}(U_H)=\mathcal{O}\!\Big(\log d\sum_{l=1}^{d} n_l^2\log n_l\Big).
  \end{align}

\subsection{Implementation of Hamiltonian simulation}
In this subsection, we implement the approximation to $v_Q(T)=P^{-1}\Lambda_0 Pv_0+ \sum_{j_t=0}^{M_t-1} c_{j_t} P^{-1}\Lambda_{s_{j_t}} P |B(s_{j_t})\rangle$ in \eqref{hy vb} by Hamiltonian simulation based on QSVT.
\paragraph{Homogeneous term.}We consider block-encoding the matrix \[M_0:=P^{-1}\Lambda_0 P .\] We have constructed the block-encoding of $H$ in \eqref{hy be of H}. Firstly, by Theorem \ref{thm qsvt hs}, we simulate $e^{iTH}$ with internal precision
\(\frac{\varepsilon_v}{\mathrm{cond}(P)}.\)
We can construct a $(1,m_H+2,\frac{\varepsilon_v}{\mathrm{cond}(P)})$-block-encoding of $e^{iTH}$ using $\mathcal{O}\left((\alpha_H T+\log \frac{\mathrm{cond}(P)}{\varepsilon_v})\cdot \mathrm{Gate}(U_H)\right)$ basic gates.
Since $\Lambda_0$ is the top-left block of $e^{iTH}$, it is a $(1,m_H+2+r,\frac{\varepsilon_v}{\mathrm{cond}(P)})$-block-encoding of $\Lambda_0$. 

In parabolic situation, we have already constructed the block-encoding of $P$ and $P^{-1}$ in \eqref{para be of P} and \eqref{para be of Pinv}. By Lemma \ref{lemma be times}, we have the following block-encoding
\begin{align}\label{hy be of Uv}
U_{v}&\in (\alpha_v,m_v,\varepsilon_v)\mathrm{BE}\!\left(M_0\right),\quad \alpha_v= \alpha_P \alpha_{P^{-1}}=\mathcal{O}(\mathrm{cond}(P)),\notag\\
m_v&=m_H+2+r+m_P+m_{P^{-1}}=\mathcal{O}\!\left(\max_{1\leq l\leq d}(n_l+\lceil\log n_l\rceil)+\log d+\sum_{l\in S_1\cup S_2} n_l\right),\notag\\ 
\mathrm{Gate}(U_v)&=\mathcal{O}\left( \left(\alpha_H T+\log \frac{\mathrm{cond}(P)}{\varepsilon_v}\right)\cdot \mathrm{Gate}(U_H)\right)=\mathcal{O}\left( \left(T\sum_{l=1}^d\frac{n_lN_l}{a_l}+\log \frac{\mathrm{cond}(P)}{\varepsilon_v}\right)\cdot \log d\sum_{l=1}^dn_l^2\log n_l \right). 
\end{align} 

With the oracle $O_v$ from~\eqref{hy oracle Ov}, the circuit for homogeneous term is
\[
|0\rangle|0\rangle
\xrightarrow{\,I\otimes O_v\,}
|0\rangle|v_0\rangle
\xrightarrow{\,U_v\,}
|0\rangle \frac{1}{\alpha_v}\widetilde M_0 |v_0\rangle + |\perp_v\rangle,
\]
where $\widetilde M_0$ denotes the matrix induced by the top-left block of $U_v$, and
\begin{align}\label{hy error of hom}
\|\widetilde M_0-M_0\|_2\le \varepsilon_v.
\end{align}

\paragraph{Inhomogeneous term.}We now implement all matrices
\[M_{j_t}:=P^{-1}\Lambda_{s_{j_t}}P\]
simultaneously by a block-diagonal construction.
Define
\begin{align}\label{hy SEL H}
\mathrm{SEL}_{H}
&:=
\mathrm{diag}\!\bigl((T-s_0)H,\dots,(T-s_{M_t-1})H\bigr)
=T(\mathrm{diag}\!(
1-\frac{s_0}{T},\,\dots,\,1-\frac{s_{M_t-1}}{T}
)\otimes I_N)(I_{M_t}\otimes H),\\
\mathrm{SEL}_{M}&:=I\otimes P^{-1} \diag(\Lambda_{s_0},\cdots,\Lambda_{s_{M_t-1}})I\otimes P.
\end{align}
Firstly we simulate $e^{i\text{SEL}_{H}}$ with internal precision \(\frac{\varepsilon_b}{\mathrm{cond}(P)}.\) By the oracle \eqref{hy oracle Os} and the block-encoding of $H$ \eqref{hy be of H}, we can obtain a $(T\alpha_H, m_H+1)$-block-encoding of $\mathrm{SEL}_{H}$. Then using QSVT framework (Theorem \ref{thm qsvt hs}), we can obtain a $(1,m_H+3,\frac{\varepsilon_b}{\mathrm{cond}(P)})$-block-encoding of $e^{i\text{SEL}_{H}}$ using $\mathcal{O}\left( (T\alpha_H+\log \frac{\mathrm{cond}(P)}{\varepsilon_b})\cdot \mathrm{Gate}(U_H)\right)$ basic gates and $\mathcal{O}\left( T\alpha_H+\log \frac{\mathrm{cond}(P)}{\varepsilon_b}\right)$ queries for $O_s$.
Since $\diag(\Lambda_{s_0},\cdots,\Lambda_{s_{M_t-1}})$ can be regarded as the top-left block of $e^{i\mathrm{SEL}_{H}}$, it is a $(1,m_H+3+r,\frac{\varepsilon_b}{\mathrm{cond}(P)})$-block-encoding of $\diag(\Lambda_{s_0},\cdots,\Lambda_{s_{M_t-1}})$. 

Then through the block-encoding of $P$ and $P^{-1}$ in \eqref{para be of P} and \eqref{para be of Pinv}, we can obtain the following block-encoding
\begin{align}\label{hy be of Ub}
U_{b}&\in (\alpha_b,m_b,\varepsilon_b)\mathrm{BE}\!\left(\mathrm{SEL}_{M}\right),\quad \alpha_b= \alpha_P \alpha_{P^{-1}}=\mathcal{O}(\mathrm{cond}(P)),\notag\\
m_b&=m_H+3+r+m_P+m_{P^{-1}}=\mathcal{O}\!\left(\max_{1\leq l\leq d}(n_l+\lceil\log n_l\rceil)+\log d+\sum_{l\in S_1\cup S_2} n_l\right),\notag\\ 
\mathrm{Gate}(U_b)&=\mathcal{O}\left( \left(\alpha_H T+\log \frac{\mathrm{cond}(P)}{\varepsilon_b}\right)\cdot \mathrm{Gate}(U_H)\right)=\mathcal{O}\left( \left(T\sum_{l=1}^d\frac{n_lN_l}{a_l}+\log \frac{\mathrm{cond}(P)}{\varepsilon_b}\right)\cdot \log d\sum_{l=1}^dn_l^2\log n_l \right), 
\end{align}
and we additionally need $\mathcal{O}\left(T\sum_{l=1}^d\frac{n_lN_l}{a_l}+\log \frac{\mathrm{cond}(P)}{\varepsilon_b}\right)$ queries for $O_s$.

Using the oracles $O_b$ and $O_c$ from~\eqref{hy oracle Ob} and \eqref{hy oracle Oc}, the circuit for inhomogeneous term is
\[
|0\rangle|0\rangle|0\rangle
\xrightarrow{\,I\otimes O_c\otimes I\,}
|0\rangle \frac{1}{\sqrt{\|c\|_1}}\sum_{j_t=0}^{M_t-1}\sqrt{c_{j_t}}\,|j_t\rangle|0\rangle
\xrightarrow{\,I\otimes O_b\,}
|0\rangle \frac{1}{\sqrt{\|c\|_1}}\sum_{j_t=0}^{M_t-1}\sqrt{c_{j_t}}\,|j_t\rangle|B(s_{j_t})\rangle
\]
\[
\xrightarrow{\,U_b\,}
|0\rangle \frac{1}{\sqrt{\|c\|_1}}\sum_{j_t=0}^{M_t-1}\sqrt{c_{j_t}}\,|j_t\rangle
\frac{1}{\alpha_b}\widetilde M_{j_t}|B(s_{j_t})\rangle
+|\perp_b\rangle
\xrightarrow{\,I\otimes O_c^\dagger\otimes I\,}
|0\rangle|0\rangle
\frac{1}{\alpha_b\|c\|_1}
\sum_{j_t=0}^{M_t-1}
c_{j_t}\widetilde M_{j_t}|B(s_{j_t})\rangle
+|\perp_b'\rangle,
\]
where for each $j_t$,
\begin{align}\label{hy error of nonin}
\|\widetilde M_{j_t}-M_{j_t}\|_2\le \varepsilon_b.
\end{align}

\paragraph{Combining the two terms.}
We now combine the two terms by one additional LCU step.
Similar to parabolic situation, we prepare one control qubit by
\[
V_{\mathrm{LCU}}|0\rangle
=
\frac{1}{\sqrt{\alpha_v\|v_0\|_2+\alpha_b\|c\|_1}}
\left(
\sqrt{\alpha_v\|v_0\|_2}\,|0\rangle+\sqrt{\alpha_b\|c\|_1}\,|1\rangle
\right).
\]
and apply the homogeneous circuit and the inhomogeneous circuit, respectively.
We get
\[
|0\rangle|0\rangle|0\rangle|0\rangle
\longmapsto
|0\rangle|0\rangle|0\rangle\,
\frac{1}{\alpha_v\|v_0\|_2+\alpha_b\|c\|_1}\widetilde{v}(T)
+
|\perp\rangle,
\] where
\begin{equation}\label{hy vapp}
\widetilde{v}(T)
:=
\|v_0\|_2\,\widetilde M_0|v_0\rangle
+
\sum_{j_t=0}^{M_t-1}
c_{j_t}\,\widetilde M_{j_t}|b(s_{j_t})\rangle .
\end{equation}
Standard amplitude amplification can then be used to prepare the normalized state
proportional to $\widetilde{v}(T)$. By $\|c\|_1=\mathcal{O}(\int_{0}^T\|B(s)\|_2ds)$ in Lemma \ref{lemma hy scaling of c}, the amplitude of $|\widetilde{v}(T)\rangle$ is \begin{align}\label{hy complexity of aa}
    \frac{\|\widetilde{v}(T)\|_2}{\alpha_v\|v_0\|_2+\alpha_b\|c\|_1}=\mathcal{O}\left( \frac{\|\widetilde{v}(T)\|_{\star}}{{\alpha_v \|v_0\|_{\star}}+\alpha_b \int_0^T\|B(s)\|_{\star}ds} \right).
\end{align}
By 
\[
v_{Q}(T)-\widetilde{v}(T)
=
\|v_0\|_2\,(M_0-\widetilde M_0)|v_0\rangle
+
\sum_{j_t=0}^{M_t-1}
c_{j_t}(M_{j_t}-\widetilde M_{j_t})|b(s_{j_t})\rangle ,
\]we have
\begin{align}\label{hy error of hs}
\|\widetilde{v}(T)-v_{Q}(T)\|_2
\le
\varepsilon_v\|v_0\|_2+\varepsilon_b\|c\|_1 .
\end{align}

\subsection{Total complexity}
Finally, we recall the process of solving the hyperbolic PDE \eqref{hy pde} and compute the total complexity in Theorem \ref{hy final thm}. \\
\textbf{Step 1:} For the spatial discretization error $\| [ u(x, T) ]_{x} - v(T) \|_{\star} \leq \varepsilon_N$, according to Theorem \ref{thm hy mol} we choose the grid sizes $N_l$ as
\begin{align*}  
N_l &=\mathcal{O} \left( \sqrt{ \frac{T^2d|c_l|}{\varepsilon_N} } a_l \exp(\sum_{l' \in S_1,S_2} \frac{|c_{l'}|a_{l'}}{4})\left( \max \{ \| u_{x_l}^{(3)} \|_{\infty}, \| u_{x_l}^{(4)} \|_{\infty} \} \right)^{\frac{1}{2}}\right).
\end{align*}

\noindent\textbf{Step 2:} For the quadrature error $\|v(T)-v_Q(T)\|_{\star}\leq \varepsilon_{Q}$, according to Lemma \ref{thm hy len-gauss scale} we choose $Q_t,h_t,M_t$ as
\[M_t=\mathcal{O}(T\|H\|_2).\]

\noindent\textbf{Step 3:} For the approximation error $\varepsilon_v,\varepsilon_b$, we use Hamiltonian simulation to construct two block-encodings for homnogeneous term and inhomogeneous term. Then \(\|\widetilde{v}(T)-v_{Q}(T)\|_2
\leq
\varepsilon_v\|v_0\|_2+\varepsilon_b\|c\|_1\) as in \eqref{hy error of hs}.

By Lemma \ref{lemma unit error}, the overall error is
\begin{align*}
\| |[ u(x, T) ]_{x}\rangle-|\widetilde{v}(T)\rangle \|_2&\leq \frac{2}{\| [ u(x, T) ]_{x}\|_{\star}} \| [ u(x, T) ]_{x} -\widetilde{v}(T) \|_{\star}\\
&\leq \frac{2}{\| [ u(x, T) ]_{x}\|_{\star}} \left(\varepsilon_N+\varepsilon_Q+\frac{\varepsilon_v\|v_0\|_2+\varepsilon_b\|c\|_1}{\sqrt{N}}\right)\\
&=\frac{2}{\| [ u(x, T) ]_{x}\|_{\star}} \left(\varepsilon_N+\varepsilon_Q+\varepsilon_v\|v_0\|_{\star}  +\frac{\|c\|_1}{\sqrt{N}}\varepsilon_b\right).
\end{align*}So to let $\| |[ u(x, T) ]_{x}\rangle-|\widetilde{v}(T)\rangle \|_2\leq \varepsilon $, we choose
\begin{align}    
\varepsilon_N&=\varepsilon_Q=\frac{1}{8}\| [ u(x, T) ]_{x}\|_{\star}\varepsilon,\label{hy epsiponN}\\    
\varepsilon_v&=\frac{1}{8}\frac{\| [ u(x, T) ]_{x}\|_{\star}}{ \| v_0 \|_{\star} }\varepsilon,\label{hy epsiponv}\\    
\varepsilon_b&=\frac{1}{8}\frac{\| [ u(x, T)]_{x}\|_{\star}\sqrt{N}}{ \|c \|_1 }\varepsilon=\mathcal{O}\left(\frac{\| [ u(x, T)]_{x}\|_{\star}}{\int_0^T \|B(s)\|_{\star} ds}\varepsilon\right),\label{hy epsiponb}
\end{align}
where the scaling of $\varepsilon_b$ is by Lemma \ref{lemma hy scaling of c}.

\begin{theorem}[Total complexity of the hyperbolic PDE solver]\label{hy final thm}
Let $u(x,T)$ be the solution of the hyperbolic PDE \eqref{hy pde} with boundary conditions (\ref{hy boundary1}-\ref{hy boundary3}) and assume that the solution $u$ is sufficiently smooth.. Let $$\Gamma:=\max_{l=1,\cdots,d}\left(\max \{ \| u_{x_l}^{(3)} \|_{\infty}, \| u_{x_l}^{(4)} \|_{\infty} \}\right), \quad q:=\dfrac{\| [u_0(x)]_x \|_{\star} +T \| [\phi(x)]_x \|_{\star}+\int_0^T \int_0^s \| [f(x,\tau)]_x \|_{\star} \,d\tau ds}{\| [u(x,T)]_x \|_{\star}}.$$
For the error $\varepsilon\in (0,1)$, to get a solution $\widetilde{v}(T)$ such that $\| |[ u(x, T) ]_{x}\rangle-|\widetilde{v}(T)\rangle \|_2\leq \varepsilon $ with $\Omega(1)$ probability and a flag indicating success, we need \\ 
(1)  \begin{align*}     
\widetilde{\mathcal{O}}\Bigg( q \cdot T^2d^{3/2} \cdot \left( \dfrac{\Gamma}{\| [u(x,T)]_x \|_{\star}\varepsilon} \right)^{\frac{1}{2}} \Bigg) 
\end{align*} basic gates in $\mathcal{G}$;\\ 
(2) \begin{align*}     
\widetilde{\mathcal{O}}\Bigg( q \cdot T^2d^{1/2} \cdot \left( \dfrac{\Gamma}{\| [u(x,T)]_x \|_{\star}\varepsilon} \right)^{\frac{1}{2}} \Bigg) 
\end{align*}  queries to oracle $O_s$;\\
(3)   $$\mathcal{O}\left( q \right)$$ queries for oracles $O_v,O_b,O_c$;\\ 
(4)   \begin{align*}
\widetilde{\mathcal{O}}\left(d\log\frac{T\Gamma}{\| [ u(x, T) ]_{x}\|_{\star}\varepsilon} \right)     
\end{align*} qubits.
\end{theorem}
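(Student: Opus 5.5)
The proof follows the template of Theorem~\ref{para final thm}. The main difference is that the Hamiltonian simulation cost scales with $\alpha_H=\mathcal{O}(\sum_l n_lN_l/a_l)$, which is linear in $N_l$, instead of $\alpha_L\sim N_l^2$. We fix the error splitting \eqref{hy epsiponN}--\eqref{hy epsiponb}, so that by Lemma~\ref{lemma unit error}, Theorem~\ref{thm hy mol}, Lemma~\ref{thm hy len-gauss scale} and \eqref{hy error of hs} the final state error is at most $\varepsilon$. It then remains to count resources. The total cost is $\mathrm{cost}_{AA}\cdot(\mathrm{Gate}(U_v)+\mathrm{Gate}(U_b))$. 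By \eqref{hy complexity of aa} and $\alpha_v,\alpha_b=\mathcal{O}(\mathrm{cond}(P))$,
\[
\mathrm{cost}_{AA}=\mathcal{O}\!\left(\frac{\mathrm{cond}(P)\bigl(\|v_0\|_\star+\int_0^T\|B(s)\|_\star ds\bigr)}{\|[u(x,T)]_x\|_\star}\right).
\]
Since $\|B(s)\|_\star\le\|\varphi\|_\star+\int_0^s\|b(\tau)\|_\star d\tau$, integrating over $[0,T]$ gives exactly the numerator of $q$. Hence $\mathrm{cost}_{AA}=\mathcal{O}(\mathrm{cond}(P)\,q)$, and $\mathrm{cond}(P)$ is a constant depending only on $c_l,a_l$ by \eqref{para condP}.

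For the gate count we use \eqref{hy be of Uv} and \eqref{hy be of Ub}:
\[
\mathrm{Gate}(U_v),\mathrm{Gate}(U_b)=\mathcal{O}\Bigl(\bigl(T\textstyle\sum_l n_lN_l/a_l+\log\frac{\mathrm{cond}(P)}{\varepsilon_{v,b}}\bigr)\log d\sum_l n_l^2\log n_l\Bigr).
\]
The logarithmic terms and the factors $n_l=\log N_l$ are absorbed into $\widetilde{\mathcal{O}}$; the $\log(1/\varepsilon_{v,b})$ terms are dominated because $N_l$ grows polynomially in $1/\varepsilon$. Since $\sum_l n_l^2\log n_l\le d\max_l n_l^2\log n_l$, this leaves $\widetilde{\mathcal{O}}(d\,T\sum_l N_l)$. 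Substituting $N_l=\mathcal{O}\bigl(\sqrt{T^2d\Gamma/\varepsilon_N}\bigr)$ from Theorem~\ref{thm hy mol} with $\varepsilon_N=\tfrac18\|[u(x,T)]_x\|_\star\varepsilon$ gives $\sum_lN_l=\mathcal{O}\bigl(d\cdot T d^{1/2}(\Gamma/(\|[u]\|_\star\varepsilon))^{1/2}\bigr)$, with the constants $a_l,c_l$ suppressed as in the parabolic case. This yields the stated $q\,T^2d^{3/2}(\cdots)^{1/2}$ after multiplying by $\mathrm{cost}_{AA}$.

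\emph{Main obstacle.} The $d$-power must be handled consistently. Counted naively, the factor $d$ from $\sum_l n_l^2$, the sum over $d$ directions in $\sum_lN_l$, and the $\sqrt d$ inside each $N_l$ would give $d^{5/2}$. The stated exponent $3/2$ only results if one tracks, as the parabolic proof implicitly does, a single overall sum $\sum_l N_l$ estimated by $\max_l N_l$ with constant $a_l$-weights absorbed. I will follow the parabolic bookkeeping exactly so that the exponents match. The queries to $O_s$ come only from $U_b$, namely $\mathcal{O}(T\alpha_H+\log(\cdot))$ per use. This is the gate count without the $\log d\sum_l n_l^2\log n_l$ factor, so it loses one power of $d$ and gives $T^2d^{1/2}$.

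\emph{Remaining counts.} Each of $U_v$ and $U_b$ uses $O_v$, $O_b$ and $O_c$ a constant number of times, so amplitude amplification gives $\mathcal{O}(q)$ queries to these oracles. For qubits we sum $\log N=\sum_l n_l$, $\log M_t=\mathcal{O}(\log(T\|H\|_2))$ with $\|H\|_2\le\alpha_H$, and $\max\{m_v,m_b\}=\mathcal{O}(\sum_l n_l+\log d)$. Each $n_l=\mathcal{O}(\log(T\Gamma/(\|[u]\|_\star\varepsilon)))$, which gives $\widetilde{\mathcal{O}}(d\log\frac{T\Gamma}{\|[u(x,T)]_x\|_\star\varepsilon})$. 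Unlike the parabolic case, no $\log\log$ terms in $q$ appear, because Hamiltonian simulation has no LCHS truncation register depending on $\varepsilon_v$ or $\varepsilon_b$.
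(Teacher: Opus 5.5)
Your proposal is correct and follows the paper's proof essentially step for step. You use the same bound $\mathrm{cost}_{AA}=\mathcal{O}(\mathrm{cond}(P)\,q)$ via $\int_0^T\|B(s)\|_\star\,ds\le T\|\varphi\|_\star+\int_0^T\!\int_0^s\|b(\tau)\|_\star\,d\tau\,ds$, the same step $\sum_l n_l^2\log n_l\le d\max_l n_l^2\log n_l$, an $O_s$ count that drops that factor of $d$, and a qubit count with no $\log\log$ terms because $m_v,m_b$ do not depend on $\varepsilon_v,\varepsilon_b$.

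The exponent $d^{3/2}$ arises in the paper through exactly the convention you flag. In \eqref{hy final 1} the sum over directions survives only as a factor $\sum_l\sqrt{|c_l|}$, which is then dropped, together with the $\exp(\cdot)$ factors, as a coefficient-dependent constant. You should write $\sum_l N_l/a_l$ in that form. As displayed, your $\sum_l N_l=\mathcal{O}\bigl(d\cdot Td^{1/2}(\cdots)^{1/2}\bigr)$ times the prefactor $dT$ gives $d^{5/2}$, which contradicts the sentence that follows it.
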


\begin{proof}
(1) Gate complexity: Let $\mathrm{cost}_{AA}$ be the complexity of amplitude amplification. The total gate complexity is 
$$\mathrm{cost}_{AA}(\mathrm{Gate}(U_v)+\mathrm{Gate}(U_b)),$$ where $\mathrm{Gate}(U_v)$ and $\mathrm{Gate}(U_b)$ are the complexity of $U_v$ and $U_b$ in \eqref{hy be of Uv} and \eqref{hy be of Ub}.
By \eqref{hy epsiponv} and \eqref{hy epsiponb}, we have
\begin{align*}\mathrm{Gate}(U_v)+\mathrm{Gate}(U_b)&=\mathcal{O}\left( \left(T\sum_{l=1}^d\frac{n_lN_l}{a_l}+\log \frac{\mathrm{cond}(P)}{\varepsilon_v}+\log \frac{\mathrm{cond}(P)}{\varepsilon_b}\right)\cdot \log d\sum_{l=1}^dn_l^2\log n_l \right)\\
&=\mathcal{O} \left(T\sum_{l=1}^d\frac{n_lN_l}{a_l}\cdot \log d\sum_{l=1}^dn_l^2\log n_l \right)\\
&+\mathcal{O} \left(  \left( \log\frac{ {\mathrm{cond}(P)}\|v_0\|_{\star} }{\| [ u(x, T)]_{x}\|_{\star}}+\log \frac{{\mathrm{cond}(P)}\int_0^T \|B(s)\|_{\star} ds}{\| [ u(x, T)]_{x}\|_{\star}}+\log \frac{1}{\varepsilon}\right)  \cdot \log d\sum_{l=1}^dn_l^2\log n_l \right)
\end{align*}
Since $N_l=2^{n_l}$ and $\sum_{l=1}^dn_l^2\log n_l\leq d\max_l n_l^2\log n_l$, we have
\[\mathrm{Gate}(U_v)+\mathrm{Gate}(U_b)=\widetilde{\mathcal{O}}\left( Td\sum_{l=1}^d\frac{N_l}{a_l} \right)+\widetilde{\mathcal{O}}\left( \left( \log\frac{ {\mathrm{cond}(P)}\|v_0\|_{\star} }{\| [ u(x, T)]_{x}\|_{\star}}+\log \frac{{\mathrm{cond}(P)}\int_0^T \|B(s)\|_{\star} ds}{\| [ u(x, T)]_{x}\|_{\star}}+\log \frac{1}{\varepsilon}\right)\cdot \log d\sum_{l=1}^dn_l^2 \right).\]
In \eqref{hy complexity of aa} and by $\alpha_v,\alpha_b=\mathcal{O}(\mathrm{cond}(P))$, the cost of amplitude amplification is  $$\mathrm{cost}_{AA}=\mathcal{O}\left(\frac{{\alpha_v \|v_0\|_{\star}}+\alpha_b \int_0^T\|B(s)\|_{\star}ds}{\|\widetilde{v}(T)\|_{\star}}\right)=\mathcal{O}\left(\frac{ \mathrm{cond}(P)(\|v_0\|_{\star}+\int_0^T \|B(s)\|_{\star} ds )}{\| [ u(x, T)]_{x}\|_{\star}}\right).$$
So we have $\log \mathrm{cost}_{AA}\gtrsim \log \dfrac{{\mathrm{cond}(P)} \|v_0\|_{\star} }{\| [ u(x, T)]_{x}\|_{\star}}$ and $\log \mathrm{cost}_{AA}\gtrsim \log \dfrac{{\mathrm{cond}(P)}\int_0^T \|B(s)\|_{\star} ds}{\| [ u(x, T)]_{x}\|_{\star}}$. And since $(\log N_l)^2\gtrsim \log \dfrac{1}{\varepsilon}$, we have $$ \mathrm{cost}_{AA}(\mathrm{Gate}(U_v)+\mathrm{Gate}(U_b))=\widetilde{\mathcal{O}}\left(\mathrm{cost}_{AA}  \cdot Td\sum_{l=1}^d\frac{N_l}{a_l} \right).$$
Then by Theorem \ref{thm hy mol}, we have
\begin{align}\label{hy final 1} \sum_{l=1}^d \frac{ N_l}{a_l}=\widetilde{\mathcal{O}}\left( \exp(\sum_{l\in S_1\cup S_2}\frac{1}{4}|c_l|a_l)\left(\frac{T^2d\Gamma}{\varepsilon_N}\right)^{\frac{1}{2}} \sum_{l}\sqrt{|c_l|} \right). \end{align}
By $\mathrm{cond}(P) = \mathcal{O}( \exp( \sum_{l \in S_1,S_2} \frac{|c_l| a_l}{2} ))$ and $B(s)=\int_0^s b(\tau)d\tau+\varphi$, we have 
\begin{align}\label{hy final 2}
\mathrm{cost}_{AA}&=\mathcal{O}\left(\frac{ \mathrm{cond}(P)(\|v_0\|_{\star}+\int_0^T \|B(s)\|_{\star} ds )}{\| [ u(x, T)]_{x}\|_{\star}}\right)\notag \\
&=\mathcal{O}\left(\frac{ \exp( \sum_{l \in S_1,S_2} \frac{|c_l| a_l}{2} )(\|[u_0]_x\|_{\star}+T \| [\phi]_x \|_{\star}+\int_0^T \int_0^s \| [f(x,\tau)]_x \|_{\star} \,d\tau ds )}{\| [ u(x, T)]_{x}\|_{\star}}\right).
\end{align}
Then the total gate complexity is
\begin{align*}&\mathrm{cost}_{AA}(\mathrm{Gate}(U_v)+\mathrm{Gate}(U_b))=\widetilde{\mathcal{O}}\left(\left(\frac{ \|[u_0]_x\|_{\star}+T \| [\phi]_x \|_{\star}+\int_0^T \int_0^s \| [f(x,\tau)]_x \|_{\star} \,d\tau ds }{\| [ u(x, T)]_{x}\|_{\star}}\right)\cdot Td\right.\\
&\left.\cdot  \exp(\sum_{l\in S_1\cup S_2}\frac{3}{4}|c_l|a_l)\left(\frac{T^2d\Gamma}{\| [u(x,T)]_x \|_{\star}\varepsilon}\right)^{\frac{1}{2}} \sum_{l}\sqrt{|c_l|} \right).\end{align*}

(2) Query complexity to $O_s$: By \eqref{hy be of Ub}, the query complexity is
$$\mathrm{cost}_{AA}\cdot \mathcal{O}\left(T\sum_{l=1}^d\frac{n_lN_l}{a_l}+\log \frac{\mathrm{cond}(P)}{\varepsilon_v}\right)=\widetilde{\mathcal{O}}\left(  \mathrm{cost}_{AA}\cdot T\sum_{l=1}^d\frac{N_l}{a_l}\right).$$ Here we obtain the $\widetilde{\mathcal{O}}$ by similar analysis to case (1). Then by \eqref{hy final 1} and \eqref{hy final 2}, the complexity is
\begin{align*}
\widetilde{\mathcal{O}}\left(\left(\frac{ \|[u_0]_x\|_{\star}+T \| [\phi]_x \|_{\star}+\int_0^T \int_0^s \| [f(x,\tau)]_x \|_{\star} \,d\tau ds }{\| [ u(x, T)]_{x}\|_{\star}}\right)\cdot T\cdot  \exp(\sum_{l\in S_1\cup S_2}\frac{3}{4}|c_l|a_l)\left(\frac{T^2d\Gamma}{\| [u(x,T)]_x \|_{\star}\varepsilon}\right)^{\frac{1}{2}} \sum_{l}\sqrt{|c_l|})\right).\end{align*}

(3) Queries to $O_v$, $O_b$, $O_c$: A single implementation of Hamiltonian simulation requires $\mathcal{O}(1)$ queries of $O_v$, $O_b$, $O_c$. So the complexity is
\[
\mathcal{O}(1) \cdot \mathrm{cost}_{AA} = \mathcal{O}\left(\exp\left( \sum_{l \in S_1\cup S_2} \frac{1}{2} |c_l| a_l \right) \frac{\| [u_0(x)]_x \|_{\star} +T \| [\phi(x)]_x \|_{\star}+\int_0^T \int_0^s \| [f(x,\tau)]_x \|_{\star} \,d\tau ds}{\| [u(x,T)]_x \|_{\star}} \right).
\]

(4) Number of qubits: We need to consider three parts of qubits. To encode the solution vector $\widetilde{v}(T)$, we need
\(\log N\) qubits. To encode the linear combination of $M_t$ terms in \eqref{hy vb}, we need $\log M_t$ qubits. To encode the block-encoding in \eqref{hy be of Uv} and \eqref{hy be of Ub}, we need $m_b$ qubits. So the total number is $$\log N+\log M_t+m_b=\mathcal{O}(\log N+\log T+\log \|H\|_2+\log d).$$
In the block-encoding of $H$ \eqref{hy be of H}, we have $\|H\|_2\leq \alpha_H=\mathcal{O}\left(\sum_{l=1}^d\frac{n_lN_l}{a_l}\right)\leq \widetilde{\mathcal{O}}(N).$ So
$$\log N+\log M_t+m_b=\widetilde{\mathcal{O}}(\log N+\log T+\log d).$$
Then by Theorem \ref{thm hy mol}, $$\log N=\sum_{l=1}^d\log N_l=\mathcal{O}\left(d\log\frac{Td\Gamma}{\| [ u(x, T) ]_{x}\|_{\star}\varepsilon}+\sum^d_{l=1}\log(a_l|c_l|)+d\sum_{l\in S_1\cup S_2}a_l|c_l|\right).$$
So the number of qubits is
$$\log N+\log M_t+m_b=\widetilde{\mathcal{O}}\left(d\log\frac{T\Gamma}{\| [ u(x, T) ]_{x}\|_{\star}\varepsilon}+\sum^d_{l=1}\log(a_l|c_l|)+d\sum_{l\in S_1\cup S_2}a_l|c_l|\right).$$
\end{proof}

\section{Numerical experiment}\label{section 7}
In this chapter, we will conduct some numerical experiments on the convection-diffusion equation, inhomogeneous heat equation, and Klein-Gordon equation, to verify some results presented in the previous sections and verify the validity of our algorithms.

\subsection{Parabolic PDE}
\paragraph{Validity of similarity transform.} In Section \ref{section 3.3}, we use similarity transform to make coefficient matrix positive semi-definite. If we don't do this and use transformation $\tilde{v}=e^{-ct}v$, what will happen? We consider the following experiment.

Let $\Omega=[0,1]^2$ and define \[\phi_1(x_1)=e^{-c_1x_1/2}\left(\cos(\pi x_1)+\frac{c_1}{2\pi}\sin(\pi x_1)\right),
\quad
\phi_2(x_2)=e^{-c_2x_2/2}\left(\cos(\pi x_2)+\frac{c_2}{2\pi}\sin(\pi x_2)\right),\quad \Lambda:=2\pi^2+\frac{c_1^2+c_2^2}{4}.\]
We consider the following two-dimensional convection–diffusion equation with homogeneous Neumann boundary condition
\[
\begin{cases}
u_t = \Delta u + \bm{c}\cdot \nabla u + f(x,t),
& (x,t)\in [0,1]^2\times[0,T],\\[4pt]
u(x,0)=u_0(x),
& 
\end{cases}
\]
where \[f(x,t)=
\bigl(1+\Lambda(1+t)\bigr)\phi_1(x_1)\phi_2(x_2),\quad u_0(x)=\phi_1(x_1)\phi_2(x_2).\]
Then the analytic solution is
\[u(x,t)=(1+t)\,\phi_1(x_1)\phi_2(x_2).\]

We solve the equation with similarity transform and without similarity transform respectively. About parameter settings, we choose the first-order coefficient $\bm{c}_1=\bm{c}_2=2$, the Gaussian quadrature parameter $Q_t = 6,\ h_t = 0.025$ and the LCHS parameter $\varepsilon_{lchs} = 0.0001,\ c = 1.0,\ h = 0.025$. 
Let the numbers of nodes $N_l$ be $32$ or $64$ and we get two semi-log plots--Figure \ref{para1 1} and Figure \ref{para1 2}. The abscissa is the time $T$ and the ordinate is normalized error $\||\widetilde{v}\rangle-|u\rangle\|_2$ or unnormalized error $\|\widetilde{v}-u\|_{\star}$.

\begin{figure}[H]
    \centering
    \includegraphics[width=0.95\textwidth]{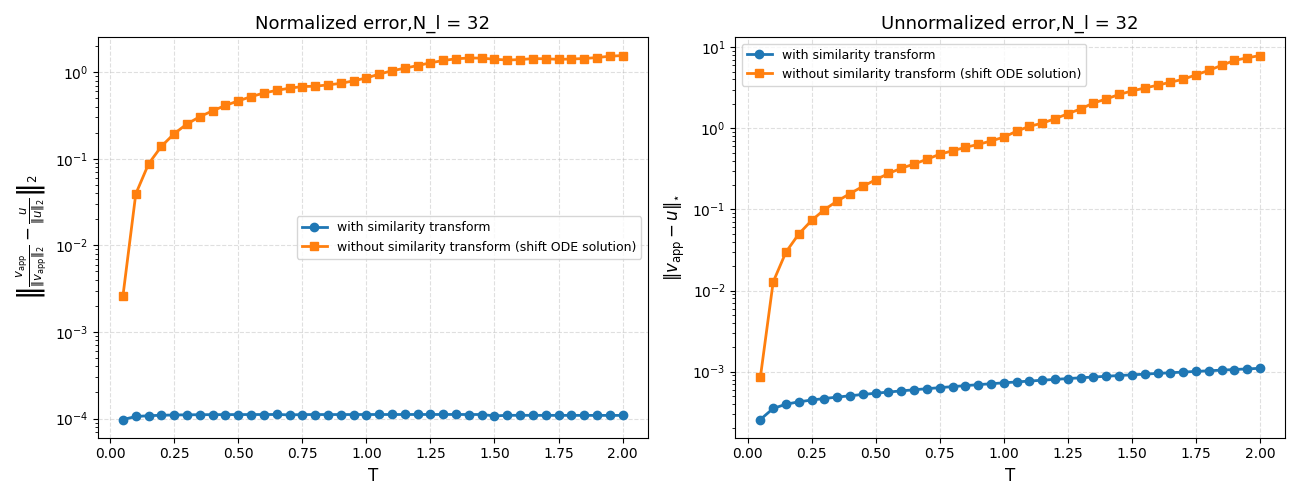}
    \caption{Validity of similarity transform, $N_l=32$}
    \label{para1 1}
\end{figure}
\begin{figure}[H]
    \centering
    \includegraphics[width=0.95\textwidth]{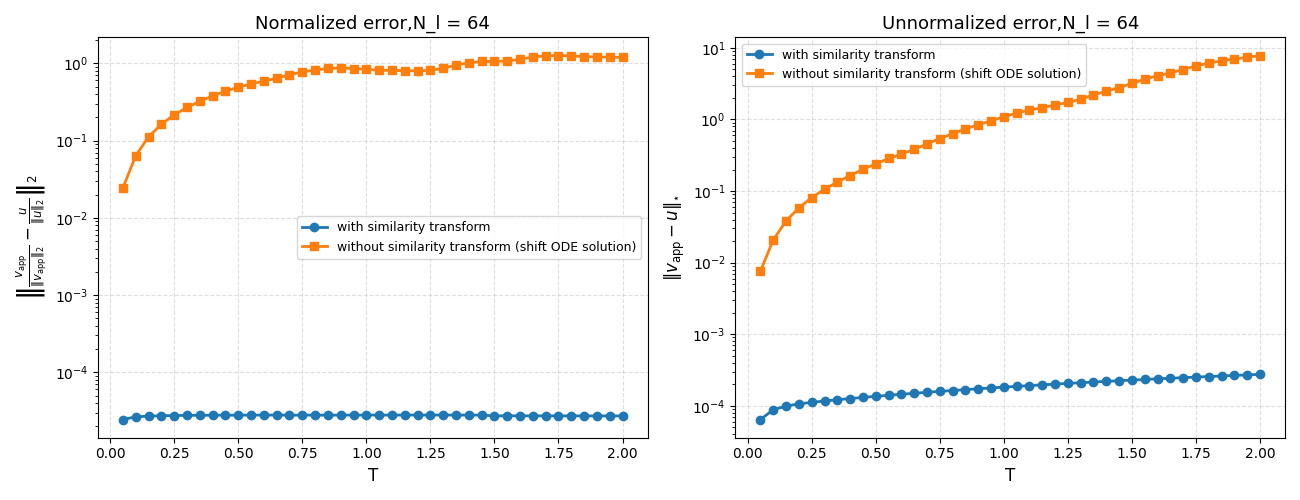}
    \caption{Validity of similarity transform, $N_l=64$}
    \label{para1 2}
\end{figure}

From this plot, we can find that when $T$ increases, the error without similarity transform becomes worse and worse. As $T$ rises to $1.0$, the numerical solution becomes meaningless. But the error with similarity transform is very stable. So although the LCHS framework allows the eigenvalues of coefficient matrix to be negative, it leads to an error growing exponentially with time $T$.

\paragraph{Neumann: midpoint vs ghost-point.} In Section \ref{section 3.2}, we use the midpoint difference for Neumann boundary. What will happen if we use ghost-point difference? Here the matrix of it is \[\frac{1}{h^2}\begin{pmatrix}
    2 & -2 \\
    -1 & 2 & -1 \\
     & -1 & \ddots & \ddots \\
     & & \ddots & 2 & -1\\
     & & & -2 &2
\end{pmatrix},\]which does not have positive semi-definite real part. We consider the following experiment.

Let $\Omega=[0,1]^2$ and define
\[
\Phi(x)=\cos(\pi x_1)\cos(\pi x_2)+0.37\,\cos(2\pi x_1)\cos(\pi x_2).
\]
We consider the inhomogeneous heat equation with homogeneous Neumann boundary condition
\[
\left\{
\begin{aligned}
&u_t = \Delta u + f(x,t), \quad x\in[0,1]^2,\ t\in[0,T],\\
&u(x,0)= u_0(x),
\end{aligned}
\right.
\]
where
\[
f(x,t)=\bigl(1+2\pi^2(1+t)\bigr)\cos(\pi x_1)\cos(\pi x_2)
+0.37\,\bigl(1+5\pi^2(1+t)\bigr)\cos(2\pi x_1)\cos(\pi x_2),\quad u_0(x)=\Phi(x).
\]
Then the analytic solution is
\[
u(x,t)=(1+t)\Phi(x).
\]

We solve the equation using midpoint difference and ghost-point difference. In ghost-point method, we must use transformation $\tilde{v}=e^{-ct}v$. About parameter settings, we choose the fixed time $T=1$, the Gaussian quadrature parameter $Q_t = 6,\ h_t = 0.05$ and the LCHS parameter $\varepsilon_{lchs} = 0.0001,\ c = 1.0,\ h = 0.05$.
Then we get the semi-log plot--Figure \ref{para2 1}. The abscissa is the numbers of nodes $N_l$ and the ordinate is normalized error $\||\widetilde{v}\rangle-|u\rangle\|_2$ or unnormalized error $\|\widetilde{v}-u\|_{\star}$.

\begin{figure}[H]
    \centering
    \includegraphics[width=0.95\textwidth]{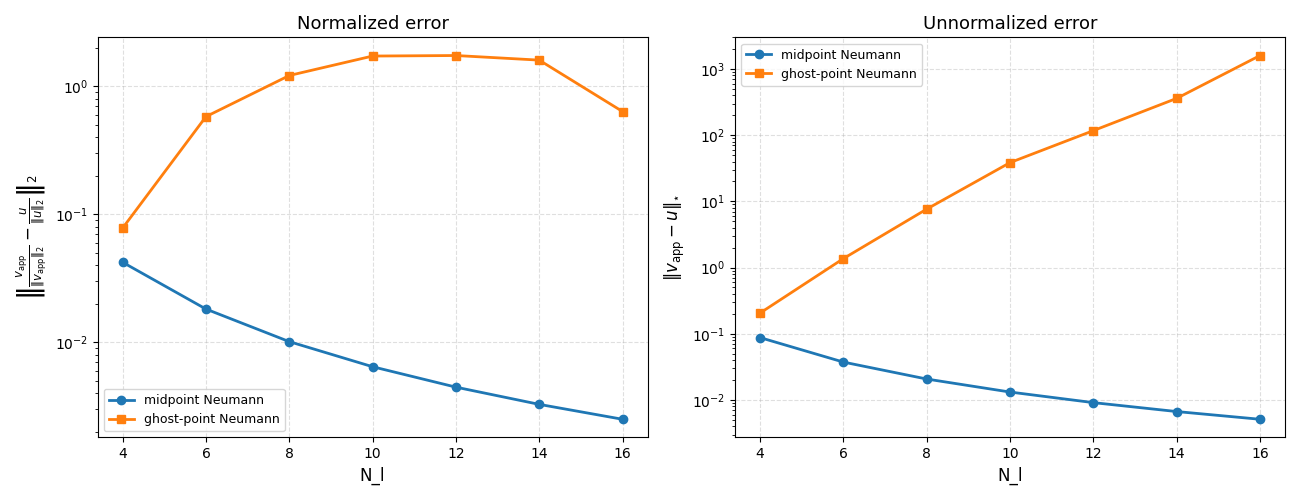}
    \caption{Neumann: midpoint vs ghost-point.}
    \label{para2 1}
\end{figure}

Here when $N=8$, the solution of ghost-point becomes meaningless. Whereas the method without similarity transform remains valid at small times, the ghost-point method breaks down entirely. Similarly, all methods without positive semi-definite real part of coefficient matrix will break down. Because the negative eigenvalue will be amplified by $\mathcal{O}(N^2)$, which is catastrophic. 

For classical methods, we only need that the eigenvalues of coefficient matrix have nonnegative real part. But for LCHS method, we need that the real part of coefficient matrix is positive semi-definite, which is a stronger condition.

\paragraph{Different kernel functions.} Different kernel functions have been proposed during the development of LCHS method. We compare the practical performance of different kernel functions in \cite{lchs1},\cite{lchs2} and \cite{lchs3}. In \cite{lchs1}, the authors use the kernel \[
g_1(k)=\frac{1}{\pi(1+k^2)};
\]
In \cite{lchs2}, the authors use the kernel \[
g_2(k)=\frac{1}{C_\beta(1-ik)e^{(1+ik)^\beta}},
\quad 0<\beta<1;
\]
and in \cite{lchs3}, the authors use \begin{equation*}
g_3(k)=\widehat{f}_2(k;\gamma,c)
=\sqrt{\frac{2}{\pi}}\,\frac{e^{-c(ik-1)}}{1+k^2}\,
\exp\!\left(-\frac{k^2+1}{4\gamma^2}\right),\quad \gamma,c>0,
\end{equation*}which is the kernel used in this paper.

We consider the one-dimensional homogeneous heat equation with homogeneous Dirichlet boundary condition
\[
\left\{
\begin{aligned}
&u_t = u_{xx}, \quad (x,t)\in[0,1]\times[0,T],\\
&u(x,0)=u_0(x),
\end{aligned}
\right.
\]
where the initial value is chosen as
\[
u_0(x)=x(1-x)e^x.
\]
In this experiment, our purpose is to compare the kernel functions themselves rather than the spatial discretization error. Therefore, for each numbers of nodes $N$, we use the exact solution of the semi-discrete ODEs $v$ as the reference solution.
For all three kernel choices, we use the uniform trapezoidal quadrature with interval length $h$ on the truncated interval $[-R,R]$. Because \cite{lchs3} has proven that the uniform trapezoidal quadrature has the same performance as Gaussian quadrature.

In the experiment, we fix $N=64$, $T=0.5$, and $R=50$ or $100$, and vary only $h$, or equivalently, the number of $k$-nodes $M=2R/h+1$. This allows us to compare the three kernel functions under the same numerical backend and the same truncation range.
And we choose $\beta=0.55$ in $g_2(k)$ and $c=0.3$ in $g_3(k)$, at which the errors are relatively small.
We plot two kind of errors against the number of $k$-nodes $M$ for the three kernel functions--Figure \ref{para3 1} and Figure \ref{para3 2}.

\begin{figure}[H]
    \centering
    \includegraphics[width=0.95\textwidth]{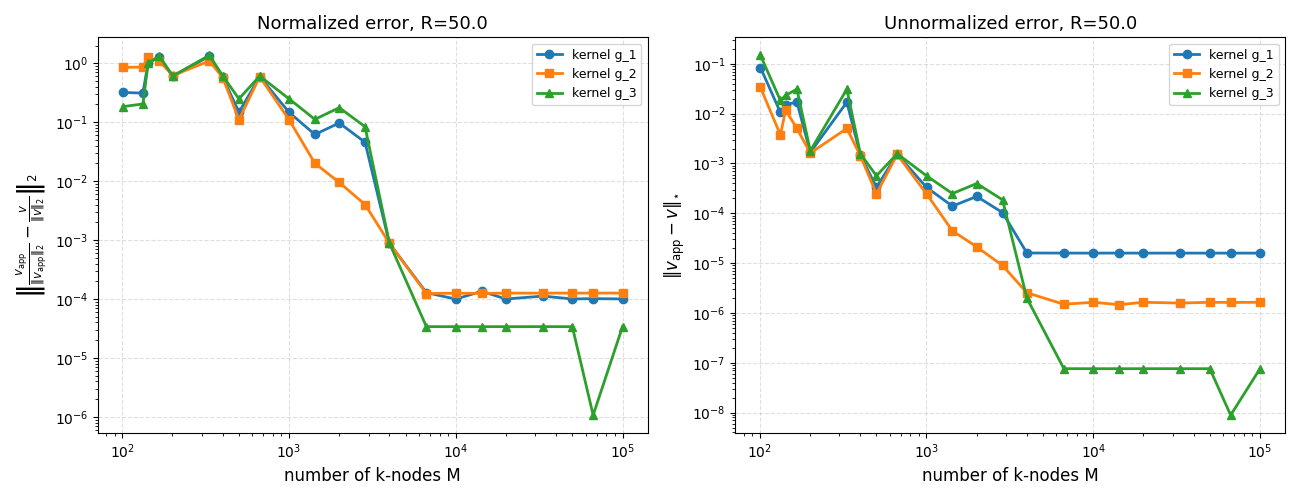}
    \caption{Different kernel functions, $R=50$}
    \label{para3 1}
\end{figure}
\begin{figure}[H]
    \centering
    \includegraphics[width=0.95\textwidth]{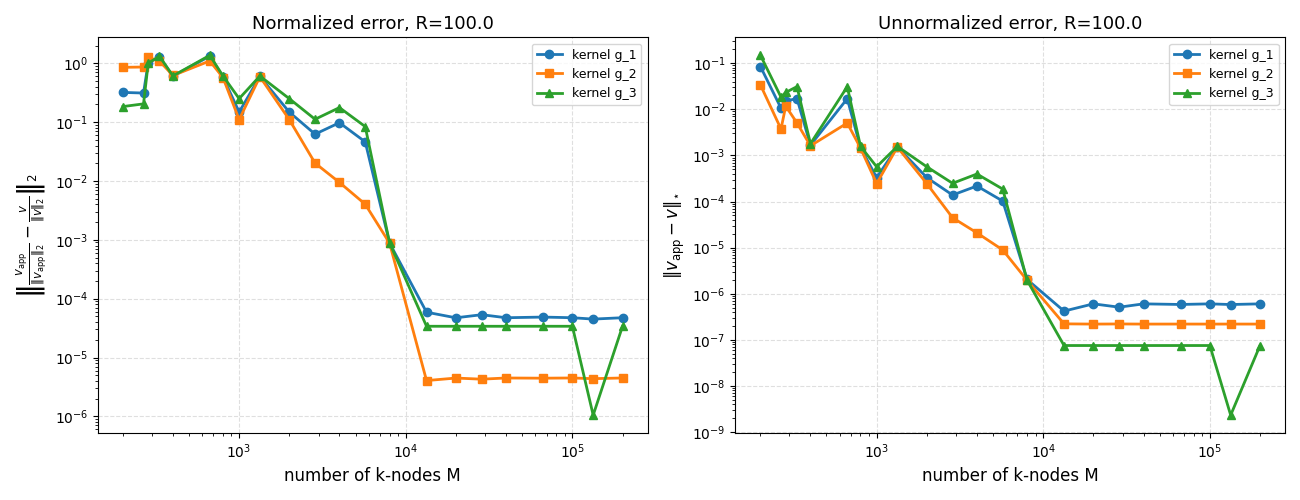}
    \caption{Different kernel functions, $R=100$}
    \label{para3 2}
\end{figure}
This example is constructed as a spatiotemporally inseparable case with a non-trigonometric solution. We can find that the kernel $g_3$ performs best most of the time and $g_2$ also has decent performance in this example. But the performance of kernel functions is dependent on the specific problem. When the spatial part of the analytical solution is trigonometric, $g_1$ usually outperforms $g_2$ and $g_3$. Nevertheless, owing to the theoretical guarantees of $g_2$ and $g_3$, their errors remain acceptable under all circumstances.

\paragraph{Second-order overall error.} 
In Section~\ref{section 3.4}, we show the second-order overall error, even though the truncated errors in Neumann directions are not second-error. In this experiment, we verify this point. Consider the following experiments.

(1) Dirichlet: Let $\Omega=[0,1]^2$ and define
\[
\Phi_{D,1}(x)=e^{-x_1/2}\sin(\pi x_1)\cdot e^{-2x_2/2}\sin(\pi x_2),\quad
\Phi_{D,2}(x)=e^{-x_1/2}\sin(2\pi x_1)\cdot e^{-2x_2/2}\sin(\pi x_2).
\]
Set
\[
\lambda_1=2\pi^2+\frac{5}{4},\quad
\lambda_2=5\pi^2+\frac{5}{4}.
\]
We consider the convection--diffusion equation with homogeneous Dirichlet boundary condition
\[
\left\{
\begin{aligned}
\frac{\partial}{\partial t}u_D
&=\Delta u_D+\frac{\partial}{\partial x_1}u_D+2\frac{\partial}{\partial x_2}u_D+f(x,t),
\quad (x,t)\in[0,1]^2\times [0,T],\\
u_D(x,0)&=u_0(x),
\end{aligned}
\right.
\]
where
\[
u_0(x)=\Phi_{D,1}(x)+0.37\,\Phi_{D,2}(x),
\]
\[
f(x,t)
=
\bigl(1+\lambda_1(1+t)\bigr)\Phi_{D,1}(x)
+0.37\,\bigl(2t+\lambda_2(1+t^2)\bigr)\Phi_{D,2}(x).
\]
The analytic solution is
\[
u_D(x,t)=(1+t)\Phi_{D,1}(x)+0.37(1+t^2)\Phi_{D,2}(x).
\]

(2) Neumann: Let $\Omega=[0,1]^2$ and define
\begin{align*}
\Phi_{N,1}(x)=e^{-x_1/2}\left(\cos(\pi x_1)+\frac{1}{2\pi}\sin(\pi x_1)\right)\cdot e^{-2x_2/2}\left(\cos(\pi x_2)+\frac{1}{\pi}\sin(\pi x_2)\right),\\
\Phi_{N,2}(x)=e^{-x_1/2}\left(\cos(2\pi x_1)+\frac{1}{4\pi}\sin(2\pi x_1)\right)\cdot e^{-2x_2/2}\left(\cos(\pi x_2)+\frac{1}{\pi}\sin(\pi x_2)\right).
\end{align*}
Set
\[\lambda_1=2\pi^2+\frac54,\quad \lambda_2=5\pi^2+\frac54.\]
We consider the convection--diffusion equation with homogeneous Neumann boundary condition
\[
\left\{
\begin{aligned}
\frac{\partial}{\partial t}u_N
&=\Delta u_N+\frac{\partial}{\partial x_1}u_N+2\frac{\partial}{\partial x_2}u_N+f(x,t),
\quad (x,t)\in[0,1]^2\times [0,T],\\
u_N(x,0)&=u_0(x),
\end{aligned}
\right.
\]
where
\[
u_0(x)=\Phi_{N,1}(x)+0.37\,\Phi_{N,2}(x),
\]
\[
f(x,t)
=
\bigl(1+\lambda_1(1+t)\bigr)\Phi_{N,1}(x)
+0.37\,\bigl(2t+\lambda_2(1+t^2)\bigr)\Phi_{N,2}(x).
\]
The analytic solution is
\[
u_N(x,t)=(1+t)\Phi_{N,1}(x)+0.37(1+t^2)\Phi_{N,2}(x).
\]

We solve the two equations numerically. About parameter settings, we choose the fixed time $T=1$, the Gaussian quadrature parameter $Q_t = 7,\ h_t = 0.025$ and the LCHS parameter $R=15,\ \gamma=5,\ c = 1,\ h = 0.05$. Then we get the log-log plot--Figure \ref{para4 1}. 
The abscissa is the numbers of nodes $N_l$ and the ordinate is normalized error $\||\widetilde{v}\rangle-|u\rangle\|_2$ or unnormalized error $\|\widetilde{v}-u\|_{\star}$.
\begin{figure}[H]    
\centering    
\includegraphics[width=0.95\textwidth]{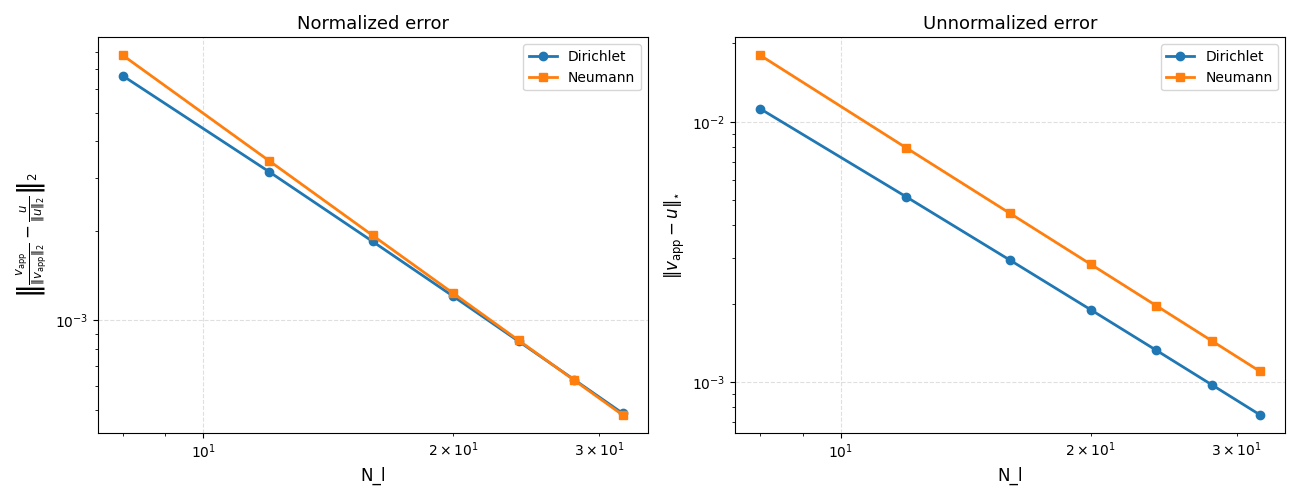} 
\caption{Second-order overall error}    
\label{para4 1}
\end{figure}

The error-$N_l$ curves in Figure \ref{para4 1} are all straight lines with the slope close to $-2$, which agrees well with the derived second-order overall error in Section \ref{section 3.4}.

\subsection{Hyperbolic PDE}
\paragraph{Validity and second-order overall error.} For hyperbolic equations, we don't need to do experiments about similarity transform and midpoint difference. Because if not using these, we can't get the matrix $H$ in \eqref{hy extended matrix} and solve numerically. So we only verify the validity and the second-order overall error in Section \ref{section 5.4}. And hyperbolic equations don't need LCHS substep, thus resulting in a smaller computational cost. So we can perform classical simulations for high-dimensional hyperbolic equations to verify the validity. Consider the following experiments.

(1) Dirichlet: Let $\Omega=[0,1]^5$ and define
\[p_D(x)=e^x-1-(e-1)x,\quad
q_D(x)=e^{2x}-1-(e^2-1)x,\]
\[\Phi_{D,1}(x)=\prod_{l=1}^5 p_D(x_l),\quad
\Phi_{D,2}(x)=q_D(x_1)q_D(x_2)\prod_{l=3}^5 p_D(x_l).\]
We consider the Klein--Gordon equation with homogeneous Dirichlet boundary
\[
\left\{
\begin{aligned}
\frac{\partial^2}{\partial t^2}u_D
&=\Delta u_D-u_D+f(x,t),\quad x\in[0,1]^5,\\
u_D(x,0)&=u_0(x),\quad
\frac{\partial}{\partial t}u_D(x,0)=\phi(x),
\end{aligned}
\right.
\]
where
\[u_0(x)=\Phi_{D,1}(x)+0.37\Phi_{D,2}(x),\quad
\phi(x)=0,\]
\[f(x,t)=-\cos t\,\Delta\Phi_{D,1}(x)
-0.37\cos(2t)\bigl(\Delta\Phi_{D,2}(x)+3\Phi_{D,2}(x)\bigr).\]
The analytic solution is
\[u_D(x,t)=\cos t\,\Phi_{D,1}(x)+0.37\cos(2t)\,\Phi_{D,2}(x).\]

(2) Neumann: Let $\Omega=[0,1]^5$ and define
\[p_N(x)=e^x-x-\frac{e-1}{2}x^2,\quad
q_N(x)=e^{2x}-1-2x-(e^2-1)x^2,\]
\[\Phi_{N,1}(x)=\prod_{l=1}^5 p_N(x_l),\quad
\Phi_{N,2}(x)=q_N(x_1)q_N(x_2)\prod_{l=3}^5 p_N(x_l).\]
We consider the Klein--Gordon equation with homogeneous Neumann boundary
\[
\left\{
\begin{aligned}
\frac{\partial^2}{\partial t^2}u_N
&=\Delta u_N-u_N+f(x,t),\quad x\in[0,1]^5,\\
u_N(x,0)&=u_0(x),\quad
\frac{\partial}{\partial t}u_N(x,0)=\phi(x),
\end{aligned}
\right.\]
where
\[u_0(x)=\Phi_{N,1}(x)+0.37\Phi_{N,2}(x),\quad
\phi(x)=0,\]
\[f(x,t)=-\cos t\,\Delta\Phi_{N,1}(x)
-0.37\cos(2t)\bigl(\Delta\Phi_{N,2}(x)+3\Phi_{N,2}(x)\bigr).\]
The analytic solution is
\[u_N(x,t)=\cos t\,\Phi_{N,1}(x)+0.37\cos(2t)\,\Phi_{N,2}(x).\]

We solve the two equations numerically. About parameter settings, we choose the fixed time $T=1$ and the Gaussian quadrature parameter $Q_t = 8,\ h_t = 0.025$. Then we get the log-log plot--Figure \ref{hyper1 1}. The abscissa is the numbers of nodes $N_l$ and the ordinate is normalized error $\||\widetilde{v}\rangle-|u\rangle\|_2$ or unnormalized error $\|\widetilde{v}-u\|_{\star}$.

\begin{figure}[H]    
\centering    
\includegraphics[width=0.95\textwidth]{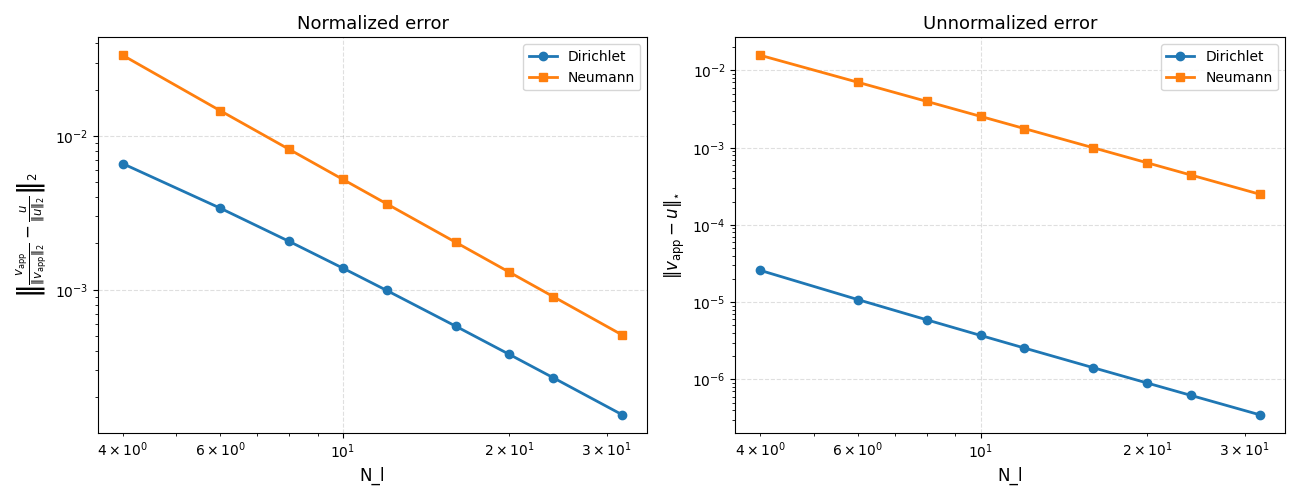} 
\caption{Validity and second-order overall error}    
\label{hyper1 1}
\end{figure}

The error-$N_l$ curves in the log-log plots are all straight lines with the slope close to $-2$. This agrees well with the derived second-order overall error. Moreover, the decline of errors is stable, which indicates that the algorithm is effective in high-dimensional cases.

\section{Discussion}\label{section 8}

In this chapter, we discuss several natural questions and directions for future research.

First, a natural next step is to extend the present framework to more general Robin boundary conditions. Doing so would require new structure-preserving discretization ideas. For the parabolic problem, the main difficulty is to impose the Robin condition while keeping the real part of the semi-discrete coefficient matrix positive semi-definite, which is precisely the structural requirement needed by the LCHS framework. For the hyperbolic problem, the corresponding issue is how to obtain a semi-discrete formulation that still admits an efficient Hamiltonian-simulation-based implementation. By contrast, approaches based on linear-system formulations may be less sensitive to this specific matrix-structure requirement and may therefore handle Robin conditions more directly. 

Second, it is highly desirable to extend the framework to variable coefficients and more general spatial domains. Even for the first-order drift coefficients $c_l(x)$, it is not yet clear under what natural assumptions one can design a spatial discretization such that the resulting semi-discrete system satisfies the stable assumptions required by LCHS. The same question becomes even more subtle on nonrectangular domains, where the tensor-product structure exploited by our current construction is no longer directly available.

Third, our use of the method of lines leads to a polynomial dependence on the target accuracy $\varepsilon$, even though the downstream LCHS and QSVT-based Hamiltonian simulation methods have only polylogarithmic complexity. This suggests the possibility of improving the overall error dependence by replacing finite differences with higher-order or spectral discretizations. In particular, under stronger smoothness assumptions on the solution, it is possible to combine the present ODE-based quantum algorithms with spectral discretization ideas in the spirit of high-precision quantum algorithms for elliptic PDEs \cite{childs2021pde}. Developing such a time-dependent spectral framework could be a route toward optimal or near-optimal quantum algorithms for broader classes of evolutionary PDEs.

Finally, our framework can apply to a number of physically relevant model problems within the two PDE classes studied in this paper. On the parabolic side, it covers linear convection--diffusion equations with source terms, which arise in transport, heat conduction with uniform drift, and linearized diffusion models. It also includes constant-coefficient reaction--diffusion-type equations after shifting the reaction term into the source or zero-order part when appropriate. On the hyperbolic side, the framework applies to wave-type and Klein--Gordon-type equations with transport terms and external forcing, which can be used to model wave propagation, scalar field dynamics, and linearized oscillatory media. More broadly, whenever a model can be reduced to one of the two forms treated here, together with mixed Dirichlet, Neumann, and periodic boundary conditions, the present method provides a systematic route from the PDE to a quantum algorithm with explicit resource bounds.

\section*{Acknowledgments}
JPL acknowledges support from Quantum Science and Technology--National Science and Technology Major Project under Grant No.~2024ZD0300500, Excellent Young Scientists Fund Program, start-up funding from Tsinghua University and Beijing Institute of Mathematical Sciences and Applications.

\bibliographystyle{unsrt} 
\bibliography{references}

\begin{appendices}

\section{Proof of Theorem \ref{thm parabolic unique}, Theorem \ref{thm parabolic vanish} and Theorem \ref{thm hy unique}}\label{appendix proof}

\paragraph{Proof of Theorem \ref{thm parabolic unique}.}
   Let $u_1,u_2$ be two solutions and set $v=u_1-u_2$. We only need to prove $v = 0$. By linearity, we know that $v$ satisfies the equation where $f$ and $u_0$ both vanish
    \[
v_t = \Delta v + \sum_{l=1}^d c_l\,v_{x_l},\quad v(x,0)=0,
\]together with the homogeneous boundary conditions induced by \eqref{boundary1}--\eqref{boundary3}.
 Let the energy be $$E(t) = \frac{1}{2} \int_\Omega v^2 \exp\left(\sum_{l' \in S_2} c_{l'} x_{l'}\right) dx$$ and we will prove $E(t)\equiv 0$. Differentiating and using the PDE for $v$ gives
\begin{align*}
\frac{\mathrm{d}E}{\mathrm{d}t} &= \int_\Omega v v_t \exp\left(\sum_{l' \in S_2} c_{l'} x_{l'}\right) dx \\
&= \sum_{l=1}^d \int_\Omega v (v_{x_{l} x_{l}} + c_{l} v_{x_{l}}) \exp\left(\sum_{l' \in S_2} c_{l'} x_{l'}\right) dx.
\end{align*}

For the summation $\sum_{l=1}^d$ above, we treat each direction $l$ separately. For $l$ in $S_2$, we observe that
\[
\,(v_{x_lx_l}+c_lv_{x_l})\,e^{c_l x_l} = \frac{\partial}{\partial x_l}\!\big(v_{x_l}e^{c_l x_l}\big).
\]
Integrating by parts in $x_l$ and using the Neumann boundary condition $v_{x_l}=0$ at $x_l=0,a_l$ yields
\begin{align*}
\int_{0}^{a_l} v (v_{x_{l} x_{l}}+c_lv_{x_l}) e^{c_l x_l} dx_l=\int_{0}^{a_l} v  d\big(v_{x_l}e^{c_l x_l}\big) =-\int_0^{a_l}v_{x_l}^2e^{c_lx_l}dx_l\leq 0. 
\end{align*}
So $\int_\Omega v (v_{x_{l} x_{l}} + c_{l} v_{x_{l}}) \exp\left(\sum_{l' \in S_2} c_{l'} x_{l'}\right) dx\leq 0$ for $l\in S_2.$

For \( l \) in \( S_1 \cup S_3 \), we also compute the integral about $x_{l}$ by parts
\begin{align*}
\int_{0}^{a_l} v (v_{x_{l} x_{l}}+c_lv_{x_l})dx_l=\left[vv_{x_l}+\frac{1}{2}c_lv^2\right]_0^{a_l}-\int_0^{a_l} v_{x_l}^2dx_l=-\int_0^{a_l} v_{x_l}^2dx_l\leq 0, 
\end{align*}where the term $\left[vv_{x_l}+\frac{1}{2}c_lv^2\right]_0^{a_l}=0$ is by the Dirichlet boundary \eqref{boundary1} or the periodic boundary \eqref{boundary3}. So $\int_\Omega v (v_{x_{l} x_{l}} + c_{l} v_{x_{l}}) \exp\left(\sum_{l' \in S_2} c_{l'} x_{l'}\right) dx\leq 0$ for $l\in S_1\cup S_3.$

Summing over $l=1,\dots,d$ we obtain $\dfrac{\mathrm{d}E}{\mathrm{d}t} \leq 0$. Since $E(t)\geq 0$ and $E(0)=0$, we have $E(t)\equiv 0$. Hence $v\equiv0.$ 

\paragraph{Proof of Theorem \ref{thm parabolic vanish}.}
    1. Firstly, we verify the conditions $v_1$ meets. For the inhomogeneous term and initial value, we have
\begin{align*}
\frac{\partial v_1}{\partial t} &= \mathcal{L}u + f + \left(\sum_{\emptyset \neq A \subseteq S_1} (-1)^{|A|}\text{Interp}_A(x,t)\right)_t\\
&= \mathcal{L}v_1 - \mathcal{L}\left(\sum_{\emptyset \neq A \subseteq S_1} (-1)^{|A|}\text{Interp}_A(x,t)\right) +  f + \left(\sum_{\emptyset \neq A \subseteq S_1} (-1)^{|A|}\text{Interp}_A(x,t)\right)_t\\
&=\mathcal{L}v_1+\hat{f}(x,t),\\
v_1(x,0)&=u_0(x)+\sum_{\emptyset \neq A \subseteq S_1} (-1)^{|A|}\text{Interp}_A(x,0)=\hat{u}_0(x).
\end{align*}

For $l$ in $S_1$, a standard inclusion--exclusion argument shows that, when restricting to the face $\{x_l=0\}$,
terms corresponding to $A$ and $A\cup\{l\}$ cancel each other, and the remaining $A=\{l\}$ term cancels the trace of $u$.
More explicitly, for $A \subseteq S_1$, let $A^+ := A \cup \{l\}$. We have $$\{A \mid \emptyset \neq A \subseteq S_1\} = \{A=\{l\}\} \cup \{A, A^+ \mid l \notin A \neq \emptyset, A \subseteq S_1\}.$$
So \begin{align}\label{thm parabolic vanish 1}
v_1 = u + (-1)\text{Interp}_{\{l\}}(x,t) + \sum_{\substack{l \notin A \subseteq S_1}} (-1)^{|A|}(\text{Interp}_A(x,t) - \text{Interp}_{A^+}(x,t)).    
\end{align}
For $A=\{l\}$ we have \begin{align}\label{thm parabolic vanish 2}
\text{Interp}_{\{l\}}(x,t)=\dfrac{x_l}{a_l}u|_{x_l=a_l}-\dfrac{x_l-a_l}{a_l}u|_{x_l=0}\Rightarrow \  \left.\text{Interp}_{\{l\}}(x,t)\right|_{x_l = 0} = \left.u \right|_{x_l = 0}.    
\end{align}
For $A^+ = \{l_1,\cdots,l_a, l\}$, let $l_{a + 1} = l$ and $\delta_{a + 1}$ be 0 and 1 to compute $\text{Interp}_{A^+}(x,t)$: 
\begin{align*}
\text{Interp}_{A^+}(x,t) &= \sum_{\delta_1,\cdots,\delta_a} \left( \prod_{k = 1}^a \frac{x_{l_k}- \delta_k a_{l_k}}{a_{l_k}} \right) \cdot \left[ \frac{x_l}{a_l}(-1)^{\delta_1 + \cdots + \delta_a} \left.u\right|_{x_{l_k} = a_{l_k} - \delta_ka_{l_k},x_l = a_l} + \right. \\
&\left.\frac{x_l- a_l}{a_l}(-1)^{\delta_1 + \cdots + \delta_a + 1} \left.u \right|_{x_{l_k} = a_{l_k} - \delta_ka_{l_k},x_l = 0} \right]. 
\end{align*}
Let $x_l=0$, then \begin{align}\label{thm parabolic vanish 4}
\left.\text{Interp}_{A^+}(x,t)\right|_{x_l = 0} &= \sum_{\delta_1,\cdots,\delta_a} \left( \prod_{k = 1}^a \frac{x_{l_k} - \delta_k a_{l_k} }{a_{l_k}} \right)(-1)^{\delta_1 + \cdots + \delta_a} \left.u\right|_{x_{l_k} = a_{l_k} - \delta_ka_{l_k},x_l =0}= \left.\text{Interp}_A(x,t)\right|_{x_l=0 }.  
\end{align}
By (\ref{thm parabolic vanish 1}-\ref{thm parabolic vanish 4}), we have $\left.v_1\right|_{x_l = 0} = 0$. Similarly, we can get $\left.v_1\right|_{x_l= a_l} = 0$.

Finally, since $u$ and the interpolation operators $\operatorname{Interp}_A$ preserve periodicity in directions
$l\in S_3$, their linear combination $v_1$ satisfies the periodic boundary conditions \eqref{boundary3}.

2. Then we verify the conditions that $v_2$ meets. For the inhomogeneous term and initial value, we have \begin{align*}
\frac{\partial v_2}{\partial t} &= \mathcal{L}v_1 + \hat{f} + \left(\sum_{\emptyset \neq B \subseteq S_2} (-1)^{|B|}\text{Interp}_B'(x,t)\right)_t\\
&= \mathcal{L}v_2 - \mathcal{L}\left(\sum_{\emptyset \neq B \subseteq S_2} (-1)^{|B|}\text{Interp}_B'(x,t)\right) + \hat{f} + \left(\sum_{\emptyset \neq B \subseteq S_2} (-1)^{|B|}\text{Interp}_B'(x,t)\right)_t\\
&=\mathcal{L}v_2+\tilde{f}(x,t),\\
v_2(x,0)&=v_1(x,0)+\sum_{\emptyset \neq B \subseteq S_2} (-1)^{|B|}\text{Interp}_B'(x,0)=\tilde{u}_0(x).
\end{align*}

For $l$ in $S_1$, since $v_1$ vanishes on each Dirichlet face, hence all tangential derivatives of $v_1$ vanish on that face, i.e.
$$\left.v_1\right|_{x_l = 0} = \left.v_1\right|_{x_l = a_l}=0\ \Rightarrow \ \left.\dfrac{\partial^{b} v_1}{\partial x_{l_1}\cdots\partial x_{l_b}}\right|_{x_l = 0 \text{ or } a_l} = 0\ \text{for } l_1,\cdots,l_b \in S_2.$$So $\left.\text{Interp}_B'(x,t)\right|_{x_l = 0 \text{ or } a_l} = 0$. Then we get $\left.v_2\right|_{x_l = 0} = \left.v_2\right|_{x_l = a_l}=0$ for $l$ in $S_1$.

For $l$ in $S_2$, we use an argument analogous to the Dirichlet situation (now applied to $\partial_{x_l}v_2$ and the family of index sets $B\subseteq S_2$). More explicitly, similarly we have
\begin{align}\label{thm parabolic vanish 5}
\frac{\partial v_2}{\partial x_l} = \frac{\partial v_1}{\partial x_l} - \frac{\partial}{\partial x_l}\text{Interp}'_{\{l\}}(x,t) + \sum_{\substack{l \notin B \subseteq S_2}} (-1)^{|B|}\left(\frac{\partial}{\partial x_l}\text{Interp}_B'(x,t) - \frac{\partial}{\partial x_l}\text{Interp}_{B^+}' (x,t)\right).
\end{align}
Then we get \begin{align}\label{thm parabolic vanish 6}
\text{Interp}'_{\{l\}}(x,t)=\dfrac{\frac{1}{2} x_l^2}{a_l}\left.\dfrac{\partial v_1}{\partial x_l}\right|_{x_l=a_l}-\dfrac{\frac{1}{2} x_l^2-a_lx_l}{a_l}\left.\dfrac{\partial v_1}{\partial x_l}\right|_{x_l=0}\Rightarrow\ 
\left.\frac{\partial}{\partial x_l}\text{Interp}'_{\{l\}}(x,t)\right|_{x_l = 0} = \left.\frac{\partial v_1}{\partial x_l}\right|_{x_l = 0}.    
\end{align}
For $B^+ = \{l_1,\cdots,l_b, l\}$, let $l_{b + 1} = l$ and $\delta_{b + 1} = 0,1$ to compute:
\begin{align*}
&\text{Interp}'_{B^+}(x,t) = \sum_{\delta_1,\cdots,\delta_b} \left( \prod_{k = 1}^b \frac{\frac{1}{2} x_{l_k}^2 - \delta_k a_{l_k}x_{l_k}}{ a_{l_k}} \right)(-1)^{\delta_1 + \cdots + \delta_b} \times \\
&\left( \frac{\frac{1}{2}x_l^2}{a_l} \left.\frac{\partial^{b + 1} v_1}{\partial x_{l_1}\cdots\partial x_{l_b}\partial x_l}\right|_{x_{l_k} = a_{l_k} - \delta_ka_{l_k},x_l = a_l} - \frac{\frac{1}{2}x_l^2 - a_lx_l}{ a_l} \left.\frac{\partial^{b + 1} v_1}{\partial x_{l_1}\cdots\partial x_{l_b}\partial x_l}\right|_{x_{l_k} = a_{l_k} - \delta_ka_{l_k},x_l = 0} \right).    
\end{align*}
Differentiate about $x_l$ and let $x_l=0$, then \begin{align}\label{thm parabolic vanish 7}
\left.\frac{\partial}{\partial x_l}\text{Interp}'_{B^+} (x,t)\right|_{x_l = 0} &= \sum_{\delta_1,\cdots,\delta_b} \left( \prod_{k = 1}^b \frac{\frac{1}{2} x_{l_k}^2 - \delta_k a_{l_k}x_{l_k}}{ a_{l_k}} \right) (-1)^{\delta_1 + \cdots + \delta_b} \left.\frac{\partial^{b + 1} v_1}{\partial x_{l_1}\cdots\partial x_{l_b}\partial x_l}\right|_{x_{l_k} = a_{l_k} - \delta_ka_{l_k},x_l = 0}\notag \\
&= \left.\frac{\partial}{\partial x_l}\text{Interp}'_B (x,t)\right|_{x_l = 0}.  
\end{align}
By (\ref{thm parabolic vanish 5}-\ref{thm parabolic vanish 7}), we have $\left.\dfrac{\partial v_2}{\partial x_l}\right|_{x_l = 0} = 0$. Similarly, we can get $\left.\dfrac{\partial v_2}{\partial x_l}\right|_{x_l = a_l} = 0$.

Finally, periodicity in directions $S_3$ is preserved because $v_1$ is periodic and the coefficients in
$\operatorname{Interp}'_B$ involve only polynomials in Neumann variables multiplied by traces of periodic functions. Thus $v_2$ satisfies \eqref{boundary3}. This completes the proof.

\paragraph{Proof of Theorem \ref{thm hy unique}.}
Let $u_1,u_2$ be two solutions and set $v=u_1-u_2$. Then $v$ satisfies the homogeneous equation
\[
v_{tt}=\Delta v+\sum_{l=1}^d c_l v_{x_l}-c_0^2 v,\quad v(x,0)=0,\quad v_t(x,0)=0,
\]
together with the corresponding homogeneous boundary conditions in each direction.
Define the weighted energy
\[
E(t):=\frac12\int_\Omega\Big(v_t^2+\|\nabla v\|^2+c_0^2 v^2\Big)\exp\left( \sum_{l'=1}^d c_{l'} x_{l'} \right)\,dx.
\]
Differentiating and using the PDE for $v$ gives, \begin{align*}
\frac{\mathrm{d}E}{\mathrm{d}t} &= \int_{\Omega} \left( v_t v_{tt} + \sum_{l=1}^d v_{x_l} v_{x_l t} + c_0^2 v v_t \right) \exp\left( \sum_{l'=1}^d c_{l'} x_{l'} \right) dx \\
&= \int_{\Omega} \left( v_t \left( \Delta v + \sum_{l=1}^d c_l v_{x_l} - c_0^2 v \right) + \sum_{l=1}^d v_{x_l} v_{x_l t} + c_0^2 v v_t \right) \exp\left( \sum_{l'=1}^d c_{l'} x_{l'} \right) dx\\
&= \sum_{l=1}^d \int_{\Omega} \left( v_t v_{x_l x_l} + c_l v_t v_{x_l} + v_{x_l} v_{x_l t} \right) \exp\left( \sum_{l'=1}^d c_{l'} x_{l'} \right) dx.
\end{align*}
Note that \[
\int_0^{a_l} \left( v_t v_{x_l x_l} + c_l v_t v_{x_l} + v_{x_l} v_{x_l t} \right) e^{c_l x_l} dx_l = \left[ e^{c_l x_l} v_t v_{x_l} \right]_0^{a_l}.\]
For Dirichlet boundary, Neumann boundary and periodic boundary together with the condition \eqref{hy stable condition}, we can all get \(\left[ e^{c_l x_l} v_t v_{x_l} \right]_0^{a_l}=0\).
So $\dfrac{\mathrm{d}E}{\mathrm{d}t} =0$. Since $E(0) = 0$, we conclude that $E(t) \equiv 0$ for all $t \geq 0$.
Therefore, $v(t) \equiv 0$, which implies $u_1 = u_2$.

\section{Proof of Lemma \ref{lemma para scaling of c} and Lemma \ref{thm para len-gauss scale}}\label{appendix proof of gauss-len}
\paragraph{Proof of Lemma \ref{lemma para scaling of c}.}
Since in Gauss quadrature, the weights $w_{q_t}>0$, we get $c_{q_t,m_t}\geq 0$.
    By the composite Gauss--Legendre quadrature, for any function $f(s)$ in $[0,T]$,\[
    \int_0^T f(s)ds\approx \sum_{m_t=0}^{T/h_t-1} \sum_{q_t=0}^{Q_t-1} c_{q_t,m_t}\frac{1}{\|b(s_{q_t,m_t})\|_2}f(s_{q_t,m_t}).\]Let $f(s)=\|b(s)\|_2$ and we have $\int_0^T \|b(s)\|_2 ds\approx\sum_{m_t=0}^{T/h_t-1} \sum_{q_t=0}^{Q_t-1} c_{q_t,m_t},$ where the error is the error of the composite Gauss–Legendre quadrature formula. So $\sum_{j_t=0}^{M_t-1}c_{j_t}=\mathcal{O}(\int_0^T \|b(s)\|_2 ds).$

\paragraph{Proof of Lemma \ref{thm para len-gauss scale}.}
Define
\(
h(s):= P^{-1} e^{-(T-s)\widetilde{A}} P\, b(s).
\)
By the standard error formula for the composite Gauss--Legendre rule (with $Q_t$ nodes per subinterval), we have \begin{align*}
    \|v(T)-v_Q(T)\|_2
=&\left\| \int_0^T P^{-1}e^{-(T - s)\widetilde{A}} P b(s) \, ds -\sum_{m_t=0}^{T/h_t-1} \sum_{q_t=0}^{Q_t-1} c_{q_t,m_t} P^{-1}e^{-(T - s_{q_t,m_t})\widetilde{A}} P |b(s_{q_t,m_t})\rangle \right\|_2\\
\leq & \frac{T (Q_t!)^4 h_t^{2Q_t}}{(2Q_t + 1) ((2Q_t)!)^3} \max_{s\in [0,T]} \| h^{(2Q_t)}(s) \|_2.
\end{align*}
 By Stirling's inequality $\sqrt{2\pi n}\left(\frac{n}{e}\right)^n < n! < \sqrt{2\pi n}\left(\frac{n}{e}\right)^n e^{\frac{1}{12n}}$, \begin{align*}
    \frac{T (Q_t!)^4 h_t^{2Q_t}}{(2Q_t + 1) ((2Q_t)!)^3}
    \leq \frac{T(2\pi Q_t)^2(Q_t/e)^{4Q_t}e^{\frac{1}{3Q_t}}h_t^{2Q_t}}{(2Q_t+1)(4\pi Q_t)^{3/2}(2Q_t/e)^{6Q_t}}\leq \frac{eT\sqrt{\pi Q_t}}{4Q_t+2}\cdot \left(\frac{eh_t}{8Q_t}\right)^{2Q_t}.
\end{align*}
So
\begin{align}\label{thm len-gauss 1}
    \|v(T)-v_Q(T)\|_{\star}\leq \frac{eT\sqrt{\pi Q_t}}{4Q_t+2}\cdot \left(\frac{eh_t}{8Q_t}\right)^{2Q_t} \max_s \| h^{(2Q_t)}(s) \|_{\star}.
\end{align}

Next, we have  
\[
\frac{\partial^p }{\partial s^p} e^{-(T-s)\widetilde{A}}= \widetilde{A}^p e^{-(T-s)\widetilde{A}}.
\] And by $L=\frac{1}{2}(\widetilde{A}+\widetilde{A}^\dag)\succeq 0$, $\|e^{-(T-s)\widetilde{A}}\|_2\leq 1$.
So 
\[
\left\| \frac{\partial^p }{\partial s^p} e^{-(T-s)\widetilde{A}} \right\|_2 \leq \|\widetilde{A}\|_2^p.
\]
By the product rule, \begin{align*}
\| h^{(2Q_t)}(s) \|_{\star} = \left\| \sum_{p=0}^{2Q_t} \binom{2Q_t}{p} P^{-1} \frac{\partial^p }{\partial s^p} e^{-(T-s)\widetilde{A}} P b^{(2Q_t - p)}(s) \right\|_{\star}  \leq \text{cond}(P)\sum_{p=0}^{2Q_t} \binom{2Q_t}{p} \|\widetilde{A}\|_2^p \|b^{(2Q_t - p)}(s)\|_{\star}.
\end{align*}

If 
$\Xi := \sup\left\{ (\| b^{(p)} \|_{\star})^{\frac{1}{p+1}} \,\big|\, p \geq 0,\, t \in [0,T] \right\} < \infty$
then $\|b^{(2Q_t - p)}(s)\|_{\star}\leq \Xi^{2Q_t - p + 1}.$ So
\begin{align*}
\| h^{(2Q_t)}(s) \|_{\star}
\leq \text{cond}(P)\sum_{p=0}^{2Q_t} \binom{2Q_t}{p} \|\widetilde{A}\|_2^p \Xi^{2Q_t - p + 1} 
= \text{cond}(P)\Xi \left( \|\widetilde{A}\|_2 + \Xi \right)^{2Q_t}.
\end{align*}
By \eqref{thm len-gauss 1}, 
\[
\|v(T)-v_Q(T)\|_{\star}\leq \frac{eT\sqrt{\pi Q_t}}{4Q_t+2}\cdot \text{cond}(P)\Xi \cdot \left(\frac{eh_t(\|\widetilde{A}\|_2 + \Xi)}{8Q_t}\right)^{2Q_t}\leq T\cdot \text{cond}(P)\Xi \cdot \left(\frac{eh_t(\|\widetilde{A}\|_2 + \Xi)}{8Q_t}\right)^{2Q_t}.
\]
Choose 
\begin{align}\label{thm len-gauss 2}
\frac{e h_t (\|\widetilde{A}\|_2 + \Xi)}{8 Q_t} \leq \frac{1}{2}, \quad  T\cdot \text{cond}(P)\Xi \left( \frac{1}{2} \right)^{2Q_t} \leq \varepsilon_{Q}.
\end{align}
Then we can get $\|v(T)-v_Q(T)\|_{\star} \leq \varepsilon_{Q}$. 
Solving \eqref{thm len-gauss 2}, we have
\[
Q_t \geq \frac{1}{\ln 4} \ln \frac{ T\,\mathrm{cond}(P) \Xi}{\varepsilon_{Q}} , \quad h_t \leq \frac{4 Q_t}{e (\| \widetilde{A} \|_2 + \Xi)}.
\]

\section{Lemmas about the coefficient matrices}

\begin{lemma}\label{lem dirichlet matrix}
For the matrix $(-1<\alpha<1)$
\[
A = 
\begin{pmatrix}
 2 & -1-\alpha &        &        \\
      -1+\alpha & 2 & \ddots &        \\
           & \ddots & \ddots & -1-\alpha \\
           &        & -1+\alpha & 2
\end{pmatrix}_{N},\]
let 
\( P = \text{diag}(1, \theta, \dots, \theta^{N-1}) \) and \( \theta = \sqrt{\dfrac{1+\alpha}{1-\alpha}} \)
. Then

(1) The similar matrix of 
\( A \)
\[
\widetilde{A} = PAP^{-1} = \begin{pmatrix}

 2 & -\sqrt{1-\alpha^2} &        &        \\
      -\sqrt{1-\alpha^2} & 2 & \ddots &        \\
           & \ddots & \ddots & -\sqrt{1-\alpha^2} \\
           &        & -\sqrt{1-\alpha^2} & 2\end{pmatrix}_{N}\]
is symmetric positive definite. Moreover, the minimum eigenvalue of 
\( \widetilde{A} \)
 satisfies
\[
\lambda_{\min}(\widetilde{A}) = 2 - 2\sqrt{1-\alpha^2}\cos\frac{\pi}{N+1} \geq \frac{4\sqrt
{1-\alpha^2}}{(N+1)^2};
\]the maximum eigenvalue of 
\( \widetilde{A} \)
 satisfies
\[
\lambda_{\max}(\widetilde{A}) = 2 + 2\sqrt{1-\alpha^2}\cos\frac{\pi}{N+1} \leq 4.
\]

(2) 
\( \widetilde{A} \) has a decomposition \( \widetilde{A} = DD^T \)
, where
\[D = 
\begin{pmatrix}
\sqrt{1+\alpha} & -\sqrt{1-\alpha} &        &        &        \\
     & \sqrt{1+\alpha} & -\sqrt{1-\alpha} &        &        \\
     &      & \ddots & \ddots &        \\
     &      &        & \sqrt{1+\alpha} & -\sqrt{1-\alpha}
     \end{pmatrix}_{N\times (N+1)}.\]
\end{lemma}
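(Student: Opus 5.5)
The plan is to verify the three claims directly from the tridiagonal structure, in the order (1) similarity, (2) spectrum, (3) factorization.

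\textbf{Similarity.} Since $P$ is diagonal, conjugation rescales entries: $(PAP^{-1})_{ij}=\theta^{i-j}A_{ij}$ (rows and columns indexed from $0$). The diagonal entries $2$ are unchanged. The superdiagonal entry $-(1+\alpha)$ becomes
\[
-(1+\alpha)\theta^{-1}=-(1+\alpha)\sqrt{\tfrac{1-\alpha}{1+\alpha}}=-\sqrt{1-\alpha^2},
\]
and the subdiagonal entry $-(1-\alpha)$ becomes
\[
-(1-\alpha)\theta=-\sqrt{1-\alpha^2}.
\]
Thus $\widetilde A$ is the symmetric Toeplitz tridiagonal matrix $2I-\beta(t^-+t^+)$ with $\beta:=\sqrt{1-\alpha^2}\in(0,1]$.

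\textbf{Spectrum.} I would use the classical eigenpairs of $t^-+t^+$ of size $N$: the eigenvalues are $2\cos\frac{k\pi}{N+1}$, $k=1,\dots,N$, with eigenvectors $(\sin\frac{jk\pi}{N+1})_{j=1}^N$. These can be checked via $\sin((j-1)x)+\sin((j+1)x)=2\cos x\sin(jx)$, where the boundary terms vanish because $\sin 0=\sin((N+1)x)=0$. Hence the eigenvalues of $\widetilde A$ are $2-2\beta\cos\frac{k\pi}{N+1}$. The maximum is attained at $k=N$, giving
\[
\lambda_{\max}=2+2\beta\cos\tfrac{\pi}{N+1}\le 4 .
\]
The minimum is attained at $k=1$. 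For the lower bound, write
\[
2-2\beta\cos x\ \ge\ 2\beta(1-\cos x)=4\beta\sin^2(x/2)
\]
(using $\beta\le1$), and then apply Jordan's inequality $\sin(x/2)\ge x/\pi$ for $x=\frac{\pi}{N+1}\in(0,\pi/2]$. This gives $\lambda_{\min}\ge 4\beta/(N+1)^2>0$, so $\widetilde A$ is positive definite.

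\textbf{Factorization.} Row $i$ of $D$ has entry $\sqrt{1+\alpha}$ in column $i$ and $-\sqrt{1-\alpha}$ in column $i+1$. Then
\[
(DD^T)_{ii}=(1+\alpha)+(1-\alpha)=2 .
\]
For adjacent rows $i$ and $i+1$, the only shared column is $i+1$, so
\[
(DD^T)_{i,i+1}=-\sqrt{1-\alpha}\sqrt{1+\alpha}=-\beta .
\]
Rows two or more apart share no columns, so those entries vanish. This matches $\widetilde A$ exactly.

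The only step needing care is the lower bound on $\lambda_{\min}$, specifically the inequality chain $1-\cos x\ge 2x^2/\pi^2$. Everything else is direct computation.
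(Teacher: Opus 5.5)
The paper states this lemma in its appendix without giving a proof, so there is no argument of the authors' to compare against. Your proposal is correct and complete, and it supplies the missing verification.

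Each step checks out:
\begin{itemize}
\item The entrywise rule $(PAP^{-1})_{ij}=\theta^{i-j}A_{ij}$ gives exactly the claimed off-diagonal entries $-\sqrt{1-\alpha^2}$.
\item The sine eigenvectors of $t^-+t^+$ give the spectrum $2-2\beta\cos\frac{k\pi}{N+1}$, $k=1,\dots,N$.
\item With $x=\frac{\pi}{N+1}$, your chain $2-2\beta\cos x\ge 4\beta\sin^2(x/2)\ge 4\beta x^2/\pi^2$ yields precisely the stated bound $4\sqrt{1-\alpha^2}/(N+1)^2$. It uses $\beta\le 1$, which you noted, and the upper bound $\lambda_{\max}\le 4$ uses the same fact.
\item The computation of $DD^T$ is right, including the observation that rows two or more apart have disjoint supports.
\end{itemize}

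Two small remarks. First, to be sure the $N$ values $2-2\beta\cos\frac{k\pi}{N+1}$ exhaust the spectrum, state explicitly that they are distinct, because $\cos$ is strictly decreasing on $(0,\pi)$. Second, positive definiteness also follows from part (2) without any eigenvalue computation. The leading $N\times N$ block of $D$ is upper triangular with diagonal entries $\sqrt{1+\alpha}>0$, so $D$ has full row rank and $DD^T\succ 0$. The explicit spectrum is only needed for the quantitative bounds on $\lambda_{\min}$ and $\lambda_{\max}$.
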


\begin{lemma}\label{lem neumann matrix}
For the matrix $(-1<\alpha<1)$
\[
A = 
\begin{pmatrix}
1+\alpha & -1-\alpha &        &        &        \\
-1+\alpha & 2 & -1-\alpha &        &        \\
     & -1+\alpha & \ddots & \ddots &        \\
     &      & \ddots & 2 & -1-\alpha \\
     &      &        & -1+\alpha & 1-\alpha\end{pmatrix}_{N},\]
let 
\( P = \text{diag}(1, \theta, \dots, \theta^{N-1}) \) and \( \theta = \sqrt{\dfrac{1+\alpha}{1-\alpha}} \). Then

(1) The similar matrix of 
\( A \)
\[
\widetilde{A} = PAP^{-1} = \begin{pmatrix}
1+\alpha & -\sqrt{1-\alpha^2} &        &        &        \\
-\sqrt{1-\alpha^2} & 2 & -\sqrt{1-\alpha^2} &        &        \\
     & -\sqrt{1-\alpha^2} & \ddots & \ddots &        \\
     &      & \ddots & 2 & -\sqrt{1-\alpha^2} \\
     &      &        & -\sqrt{1-\alpha^2} & 1-\alpha
\end{pmatrix}_{N}\]
is symmetric positive semi-definite and $\|\widetilde{A}\|\leq 4$.

(2) 
\( \widetilde{A} \) has a decomposition \( \widetilde{A} = DD^T \), where
\[
D = 
\begin{pmatrix}
\sqrt{1+\alpha} &        &        &        &        \\
-\sqrt{1-\alpha} & \sqrt{1+\alpha} &        &        &        \\
     & -\sqrt{1-\alpha} & \ddots &        &        \\
     &      & \ddots & \sqrt{1+\alpha} &        \\
     &      &        & -\sqrt{1-\alpha} & 0
\end{pmatrix}_{N}. 
\]
\end{lemma}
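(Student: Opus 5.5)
We verify both claims of Lemma~\ref{lem neumann matrix} by direct matrix computation, with positive semi-definiteness coming for free from the factorization in part (2).

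\textbf{Step 1: the similarity transform.} Since $P=\diag(1,\theta,\dots,\theta^{N-1})$, conjugation gives
\[
(PAP^{-1})_{ij}=\theta^{i-j}A_{ij}.
\]
The diagonal entries $1+\alpha,2,\dots,2,1-\alpha$ are therefore unchanged. The superdiagonal becomes $\theta^{-1}(-1-\alpha)=-\sqrt{(1-\alpha)/(1+\alpha)}\,(1+\alpha)=-\sqrt{1-\alpha^2}$. The subdiagonal becomes $\theta(-1+\alpha)=-\sqrt{(1+\alpha)/(1-\alpha)}\,(1-\alpha)=-\sqrt{1-\alpha^2}$. This yields the stated symmetric $\widetilde A$; here $-1<\alpha<1$ ensures all square roots are real and positive.

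\textbf{Step 2: the factorization $\widetilde A=DD^T$.} Row $i$ of $D$ is $r_1=(\sqrt{1+\alpha},0,\dots)$ for $i=1$, and for $2\le i\le N$ it has $-\sqrt{1-\alpha}$ in column $i-1$ and $\sqrt{1+\alpha}$ in column $i$. The exception is column $N$ of the last row, which is $0$. We compute the inner products of rows directly.
\begin{itemize}
\item $\langle r_1,r_1\rangle=1+\alpha$.
\item For $2\le i\le N-1$, $\langle r_i,r_i\rangle=(1-\alpha)+(1+\alpha)=2$.
\item $\langle r_N,r_N\rangle=1-\alpha$, because its $\sqrt{1+\alpha}$ entry was replaced by $0$.
\item For adjacent rows, $\langle r_i,r_{i+1}\rangle=\sqrt{1+\alpha}\cdot(-\sqrt{1-\alpha})=-\sqrt{1-\alpha^2}$. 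The $i$-th column entry of $r_{i+1}$ is present even when $i+1=N$.
\item All other pairs of rows have disjoint supports, so their inner products vanish.
\end{itemize}
Hence $\widetilde A=DD^T$. Since $x^T\widetilde A x=\|D^Tx\|_2^2\ge0$, $\widetilde A$ is symmetric positive semi-definite.

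\textbf{Step 3: the norm bound.} Apply Gershgorin's theorem to the symmetric matrix $\widetilde A$. Every eigenvalue lies in a disc of the form $|\lambda-a_{ii}|\le\sum_{j\ne i}|a_{ij}|$.
\begin{itemize}
\item Interior rows give $\lambda\le 2+2\sqrt{1-\alpha^2}\le4$.
\item The first row gives $\lambda\le 1+\alpha+\sqrt{1-\alpha^2}\le 1+\sqrt2<4$.
\item The last row gives $\lambda\le 1-\alpha+\sqrt{1-\alpha^2}\le 1+\sqrt2<4$.
\end{itemize}
Together with $\widetilde A\succeq0$, this gives $\|\widetilde A\|_2=\lambda_{\max}\le4$. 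Alternatively, $\|D\|_2^2\le\|D\|_1\|D\|_\infty\le(\sqrt{1+\alpha}+\sqrt{1-\alpha})^2\le4$.

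The only point needing care is the boundary row and column of $D$, namely the zero in position $(N,N)$, which makes the last diagonal entry $1-\alpha$ rather than $2$. This is a bookkeeping check rather than a genuine obstacle.
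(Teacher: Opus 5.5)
Your proof is correct and follows essentially the route the paper relies on. The appendix states this lemma without proof, and elsewhere (e.g.\ in the proof of Lemma~\ref{lem para neu mol}) the paper simply uses $\widetilde A=PAP^{-1}=DD^T$ and $\widetilde A\succeq 0$, which is exactly what your entrywise computation of $\theta^{i-j}A_{ij}$ and of the row inner products of $D$ establishes; your Gershgorin bound (or $\|D\|_2^2\le\|D\|_1\|D\|_\infty\le 2+2\sqrt{1-\alpha^2}$) then supplies the norm estimate $\|\widetilde A\|_2\le 4$ that the paper asserts without justification.
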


\begin{lemma}\label{lem periodic matrix}
    The matrix $(-1<\alpha<1)$
\[\widetilde{A} = 
\begin{pmatrix}
2 & -1-\alpha &        &        &  -1+\alpha      \\
-1+\alpha & 2 & -1-\alpha &        &        \\
     & -1+\alpha & \ddots & \ddots &        \\
     &      & \ddots & 2 & -1-\alpha \\
  -1-\alpha   &      &        & -1+\alpha & 2
     \end{pmatrix}_{N},\]
      has $\|\widetilde{A}\|\leq 4$. Moreover, $L=\frac{1}{2}(\widetilde{A}+\widetilde{A}^\dag)$ is symmetric positive semi-definite and $\|L\|_2\leq 4.$
\end{lemma}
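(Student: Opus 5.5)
We prove Lemma~\ref{lem periodic matrix} in three steps: write down the Hermitian part of $\widetilde A$, identify it with a shifted circulant, and bound both norms through circulant eigenvalues.

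\textbf{Step 1: the Hermitian part.} Since the matrix is real, $L=\frac12(\widetilde A+\widetilde A^T)$. The off-diagonal entries $-1-\alpha$ and $-1+\alpha$ sit in transposed positions, both in the tridiagonal band and in the two corner entries. Averaging them therefore gives $-1$ in each case. Hence
\[
L=\mathrm{circ}(2,-1,0,\dots,0,-1)=D_{P,\Delta}(N).
\]
We can also write $L=DD^T$, where $D$ is the periodic forward-difference matrix appearing in \eqref{hy coefficient matrix s3} without the $1/h$ factor. This already shows $L\succeq 0$. Alternatively, the circulant eigenvalues are $2-2\cos(2\pi k/N)=4\sin^2(\pi k/N)\in[0,4]$, which gives both $L\succeq 0$ and $\|L\|_2\le 4$.

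\textbf{Step 2: the full matrix.} The matrix $\widetilde A$ is itself circulant, with first row $(2,-1-\alpha,0,\dots,0,-1+\alpha)$. A circulant matrix is normal, since it is diagonalized by the unitary DFT. Its $2$-norm therefore equals the largest modulus among its eigenvalues. With $\omega=e^{2\pi i k/N}$, these eigenvalues are
\[
\lambda_k = 2-(1+\alpha)\omega-(1-\alpha)\bar\omega = 2-2\cos\tfrac{2\pi k}{N} - 2i\alpha\sin\tfrac{2\pi k}{N}.
\]
The exact sign of the imaginary part depends on the indexing convention, but only its modulus enters the norm. Writing $\phi=2\pi k/N$,
\[
|\lambda_k|^2=(2-2\cos\phi)^2+4\alpha^2\sin^2\phi .
\]
For $|\alpha|<1$ this is at most
\[
(2-2\cos\phi)^2+4\sin^2\phi = 8-8\cos\phi \le 16 .
\]
Hence $|\lambda_k|\le 4$, and so $\|\widetilde A\|_2\le 4$.

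A cruder route avoids the spectral computation. Write $\widetilde A = L + \alpha K$ with $K$ skew-symmetric; here $K$ is the periodic matrix $D_{P,\pm}(N)$ up to sign. Then $\|\widetilde A\|_2\le \|L\|_2+|\alpha|\,\|K\|_2\le 4+2|\alpha|$. This does not reach the stated constant $4$, so we prefer the spectral argument above.

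The only step requiring any care is confirming that the corner entries are placed consistently with the circulant structure. This must hold so that the DFT diagonalization applies and the imaginary parts of the corner and band entries combine into a single $\sin\phi$ term. Everything else is a routine trigonometric bound.
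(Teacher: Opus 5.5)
Your proof is correct. For $N\ge 3$ the displayed matrix is circulant with first row $(2,-1-\alpha,0,\dots,0,-1+\alpha)$. Its symmetric part is exactly $D_{P,\Delta}(N)$, equivalently $DD^T$ with the periodic forward difference $D$ of \eqref{hy coefficient matrix s3}. The bound $|\lambda_k|^2=(2-2\cos\phi)^2+4\alpha^2\sin^2\phi\le 8-8\cos\phi\le 16$, combined with normality, gives $\|\widetilde A\|_2\le 4$. The paper states this lemma in the appendix without any proof, so your argument fills that gap rather than offering an alternative to an existing route.

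Your ``cruder route'' can in fact reach the constant $4$ without diagonalizing anything. Since $|1+\alpha|+|1-\alpha|=2$ for $|\alpha|<1$, every absolute row sum and every absolute column sum of $\widetilde A$ equals $4$. Hence $\|\widetilde A\|_2\le\sqrt{\|\widetilde A\|_1\|\widetilde A\|_\infty}=4$. Gershgorin applied to the symmetric matrix $L$ (diagonal $2$, off-diagonal absolute row sum $2$) then places its spectrum in $[0,4]$.

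That route is shorter and does not use the circulant structure, so your corner-placement check becomes unnecessary. Your spectral computation gives more information: the exact eigenvalues, normality of $\widetilde A$, and the fact that $L$ has eigenvalue $0$ (the constant mode). That last fact shows $L\succeq 0$ cannot be strengthened to definiteness in periodic directions.
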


\section{Lemmas about block-encoding}\label{appendix blockencoding}

\begin{lemma}[{\cite[Lemma 53]{qsvt}}]\label{lemma be times}
Suppose \( A,B \in \mathbb{C}^{N \times N} \), \( U_A \in (\alpha_1, m_1, \varepsilon_1)BE (A) \) and \( U_B \in (\alpha_2, m_2, \varepsilon_2)BE (B) \).
Then we can construct a \( (\alpha_1 \alpha_2, m_1 + m_2, \alpha_2\varepsilon_1+\alpha_1\varepsilon_2+\varepsilon_1\varepsilon_2)\)-block-encoding of \(AB \), using an oracle $U_A$ and an oracle $U_B$. The circuit is in Figure \ref{fig times}.
\begin{figure}[htbp]
\centering
\begin{tikzpicture}[
    x=1cm,y=1cm,
    gate/.style={draw, minimum width=1.2cm, minimum height=0.7cm, inner sep=0pt}]
\node[left] at (1.8, 0.0) {$\ket{0^{m_2}}$};
\node[left] at (1.8,-1.4) {$\ket{0^{m_1}}$};
\node[left] at (1.8,-2.8) {$\ket{0^{n}}$};
\node[gate] (UBtop) at (6.0,0.0) {$U_B$};
\node[gate] (UBbot) at (6.0,-2.8) {$U_B$};
\node[
    draw,
    minimum width=1.0cm,
    minimum height=2.2cm,
    inner sep=0pt] (UAbig) at (4.1,-2.1) {$U_A$};
\draw (2.4,0.0) -- (5.4,0.0);
\draw (6.6,0.0) -- (8.2,0.0);
\draw (2.4,-1.4) -- (3.6,-1.4);
\draw (4.6,-1.4) -- (8.2,-1.4);
\draw (2.4,-2.8) -- (3.6,-2.8);
\draw (4.6,-2.8) -- (5.4,-2.8);
\draw (6.6,-2.8) -- (8.2,-2.8);
\end{tikzpicture}
\caption{Circuit for Lemma \ref{lemma be times}.}
\label{fig times}
\end{figure}
\end{lemma}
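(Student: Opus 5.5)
The statement is Lemma~\ref{lemma be times}: a block-encoding of a product $AB$. I will realize the circuit of Figure~\ref{fig times} and verify its top-left block. Order the registers as (ancilla of $U_B$, $m_2$ qubits) $\otimes$ (ancilla of $U_A$, $m_1$ qubits) $\otimes$ (system, $n$ qubits). Let $\widetilde U_A := I_{2^{m_2}}\otimes U_A$ act on the last two registers and $\widetilde U_B$ act as $U_B$ on the first and last registers, with identity on the middle one. The product unitary is $W:=\widetilde U_B\,\widetilde U_A$; the figure draws $U_B$ on the right, so this is the correct order for the matrix product $AB$.

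\textbf{Step 1: compute the block.} The key observation is that $\bigl(\langle 0^{m_1+m_2}|\otimes I\bigr)W\bigl(|0^{m_1+m_2}\rangle\otimes I\bigr)$ factorizes as $\bigl((\langle 0^{m_1}|\otimes I)U_A(|0^{m_1}\rangle\otimes I)\bigr)\bigl((\langle 0^{m_2}|\otimes I)U_B(|0^{m_2}\rangle\otimes I)\bigr)=:\widehat A\,\widehat B$. To see this, insert the projector $|0^{m_1}\rangle\langle 0^{m_1}|$ on the middle register between $\widetilde U_B$ and $\widetilde U_A$. This insertion is harmless because $\widetilde U_B$ acts trivially on that register, and we project onto $\langle 0^{m_1}|$ at the end anyway. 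The same commuting argument then lets us project the first register onto $|0^{m_2}\rangle$ before and after $\widetilde U_A$, since $\widetilde U_A$ acts as the identity there. Thus $W$ is a block-encoding of $\widehat A\widehat B$ with $m_1+m_2$ ancillas, using one query to each oracle.

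\textbf{Step 2: error bound.} Write $A=\alpha_1\widehat A+E_A$ and $B=\alpha_2\widehat B+E_B$, where $\|E_A\|_2\le\varepsilon_1$ and $\|E_B\|_2\le\varepsilon_2$. Then
\[
AB-\alpha_1\alpha_2\widehat A\widehat B=E_A\,\alpha_2\widehat B+\alpha_1\widehat A\,E_B+E_AE_B.
\]
Since $\widehat A$ and $\widehat B$ are submatrices of unitaries, $\|\widehat A\|_2,\|\widehat B\|_2\le 1$. Hence the norm is at most $\alpha_2\varepsilon_1+\alpha_1\varepsilon_2+\varepsilon_1\varepsilon_2$, as claimed.

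\textbf{Main obstacle.} The only delicate point is the register bookkeeping in Step~1: we must make sure that neither ancilla register is disturbed by the other oracle. Keeping them disjoint handles this; it is also the reason the ancilla counts add rather than being shared.
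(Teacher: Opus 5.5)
\textbf{There is one concrete error: Step~1 has the product in the wrong order.} Apart from that, your strategy is the standard argument behind \cite[Lemma 53]{qsvt}. You use disjoint ancilla registers and one query to each oracle. You factor the top-left block by inserting $|0\rangle\langle 0|$ on the register the other oracle does not touch. You then finish with the exact three-term expansion. Step~2 is correct and gives exactly the stated bound, including the $\varepsilon_1\varepsilon_2$ term.

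The problem is in Step~1. In $W=\widetilde U_B\widetilde U_A$ the rightmost factor $\widetilde U_A$ acts first. Carrying out your own projector-insertion argument carefully gives
\[
\bigl(\langle 0^{m_1+m_2}|\otimes I\bigr)\,\widetilde U_B\widetilde U_A\,\bigl(|0^{m_1+m_2}\rangle\otimes I\bigr)
=\bigl((\langle 0^{m_2}|\otimes I)\,U_B\,(|0^{m_2}\rangle\otimes I)\bigr)\bigl((\langle 0^{m_1}|\otimes I)\,U_A\,(|0^{m_1}\rangle\otimes I)\bigr)=\widehat B\,\widehat A .
\]
So the unitary you wrote down block-encodes $BA$, not $AB$. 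Your claimed factorization $\widehat A\widehat B$ is false unless $\widehat A$ and $\widehat B$ commute. This is not cosmetic in this paper: the lemma is applied to products such as $P^{-1}e^{-T\widetilde A}P$, whose factors do not commute.

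The fix is to apply $U_B$ first, i.e.\ take $W:=\widetilde U_A\widetilde U_B$. This is the unitary $(I\otimes U_A)(I\otimes U_B)$ used in \cite[Lemma 53]{qsvt}. With this order, inserting $|0^{m_2}\rangle\langle 0^{m_2}|$ and $|0^{m_1}\rangle\langle 0^{m_1}|$ between the two factors yields $\widehat A\widehat B$, and Step~2 goes through verbatim.

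The slip enters where you match the figure (``$U_B$ is drawn on the right, so this is the correct order for $AB$''). Read left to right as time, the figure applies $U_A$ first, and your $W=\widetilde U_B\widetilde U_A$ is a faithful transcription of it. But a circuit that applies $U_A$ first encodes $BA$: the gate drawn later is the leftmost factor of the operator product. To obtain $AB$ as stated, $U_B$ must act before $U_A$, so the figure's gate order should be reversed as well.
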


\begin{lemma}\label{lemma be tensor}
Suppose \( A \in \mathbb{C}^{N_1 \times N_1} \), \( B \in \mathbb{C}^{N_2 \times N_2} \), \( U_A \in (\alpha_1, m_1, \varepsilon_1)BE (A) \) and \( U_B \in (\alpha_2, m_2, \varepsilon_2)BE (B) \). Then we can construct a \( (\alpha_1 \alpha_2, m_1 + m_2, \alpha_2\varepsilon_1+\alpha_1\varepsilon_2+\varepsilon_1\varepsilon_2)\)-block-encoding of \( A \otimes B \), using an oracle $U_A$ and an oracle $U_B$. The circuit is in Figure \ref{fig tensor}.
\begin{figure}[htbp]
\centering
\begin{tikzpicture}[
    x=1cm,y=1cm,
    gate/.style={draw, minimum width=1.2cm, minimum height=0.7cm, inner sep=0pt}]
\node[left] at (1.8, 0.0) {$\ket{0^{m_2}}$};
\node[left] at (1.8,-1.2) {$\ket{0^{m_1}}$};
\node[left] at (1.8,-2.4) {$\ket{0^{n_2}}$};
\node[left] at (1.8,-3.6) {$\ket{0^{n_1}}$};
\coordinate (UB1) at (6.0, 0.0);
\coordinate (UA1) at (4.0,-1.2);
\coordinate (UB2) at (6.0,-2.4);
\coordinate (UA2) at (4.0,-3.6);
\draw (2.4, 0.0) -- (5.4, 0.0);
\draw (6.6, 0.0) -- (8.4, 0.0);
\draw (2.4,-1.2) -- (3.4,-1.2);
\draw (4.6,-1.2) -- (8.4,-1.2);
\draw (2.4,-2.4) -- (5.4,-2.4);
\draw (6.6,-2.4) -- (8.4,-2.4);
\draw (2.4,-3.6) -- (3.4,-3.6);
\draw (4.6,-3.6) -- (8.4,-3.6);
\node[gate] at (UB1) {$U_B$};
\node[gate] at (UA1) {$U_A$};
\node[gate] at (UB2) {$U_B$};
\node[gate] at (UA2) {$U_A$};
\end{tikzpicture}
\caption{Circuit for Lemma \ref{lemma be tensor}.}
\label{fig tensor}
\end{figure}
\end{lemma}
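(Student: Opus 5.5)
The plan is to implement $U_A\otimes U_B$ directly, acting on disjoint registers, and then check that its top-left block (with all $m_1+m_2$ ancillas projected onto $\ket{0}$) is the tensor product of the individual top-left blocks. Concretely, $U_A$ acts on an $m_1$-qubit ancilla register together with the $n_1$-qubit system register of $A$. $U_B$ acts on a separate $m_2$-qubit ancilla register together with the $n_2$-qubit system register of $B$. Since the registers are disjoint, the two unitaries commute and can be applied in either order, which is what Figure \ref{fig tensor} depicts. Only the ordering of wires needs care, and it can be handled by a fixed relabelling (or SWAP network) that identifies the joint ancilla $\ket{0^{m_1+m_2}}$ and the joint system register $\mathbb{C}^{N_1}\otimes\mathbb{C}^{N_2}$. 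Each oracle is used once.

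Next I compute the block. Write
\[
\widetilde A:=(\bra{0^{m_1}}\otimes I_{N_1})U_A(\ket{0^{m_1}}\otimes I_{N_1}),\qquad
\widetilde B:=(\bra{0^{m_2}}\otimes I_{N_2})U_B(\ket{0^{m_2}}\otimes I_{N_2}).
\]
Because $\ket{0^{m_1+m_2}}=\ket{0^{m_1}}\otimes\ket{0^{m_2}}$ and projections factor over tensor products, the top-left block of $U_A\otimes U_B$, after the relabelling above, is exactly $\widetilde A\otimes\widetilde B$. By Definition \ref{def blockencoding}, $\|A-\alpha_1\widetilde A\|_2\le\varepsilon_1$ and $\|B-\alpha_2\widetilde B\|_2\le\varepsilon_2$. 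Moreover $\|\widetilde A\|_2,\|\widetilde B\|_2\le 1$, since they are blocks of unitaries.

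For the error bound I use the telescoping identity
\[
A\otimes B-\alpha_1\alpha_2\,\widetilde A\otimes\widetilde B=(A-\alpha_1\widetilde A)\otimes \alpha_2\widetilde B + A\otimes(B-\alpha_2\widetilde B).
\]
The spectral norm is multiplicative on tensor products, so this is bounded by $\alpha_2\varepsilon_1+\|A\|_2\varepsilon_2$. Finally, $\|A\|_2\le\alpha_1\|\widetilde A\|_2+\varepsilon_1\le\alpha_1+\varepsilon_1$, which gives the total $\alpha_2\varepsilon_1+\alpha_1\varepsilon_2+\varepsilon_1\varepsilon_2$. The ancilla count is $m_1+m_2$ and the normalization factor is $\alpha_1\alpha_2$, as claimed.

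There is no real obstacle here. The only points needing attention are the register bookkeeping, so that the projected block is genuinely $\widetilde A\otimes\widetilde B$ in the intended ordering of $\mathbb{C}^{N_1}\otimes\mathbb{C}^{N_2}$, and the use of $\|A\|_2\le\alpha_1+\varepsilon_1$ rather than $\|A\|_2\le\alpha_1$ in the error estimate.
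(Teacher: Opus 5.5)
Your proof is correct and is essentially the paper's argument. The paper factors $A\otimes B=(A\otimes I)(I\otimes B)$, notes that $U_A\otimes I$ and $I\otimes U_B$ block-encode the two factors, and invokes Lemma~\ref{lemma be times}; this gives the same circuit and the same telescoping estimate with $\|A\|_2\le\alpha_1+\varepsilon_1$, which you have simply carried out directly for the tensor product, making the argument self-contained.
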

\begin{proof}
    Since \( U_A\otimes I \in (\alpha_1, m_1, \varepsilon_1)BE (A\otimes I) ,\ I \otimes U_B \in (\alpha_2, m_2, \varepsilon_2)BE (I \otimes B)\), this Lemma follows from Lemma \ref{lemma be times}.
\end{proof}

\begin{lemma}[Linear Combination of Unitaries {\cite[Lemma 52]{qsvt}}]\label{lemma be lcu}
Let \( A = \sum_{i=1}^s c_i A_i \) with \( c_i > 0 \) for all \( i \) and \(U_i\in (\alpha_i,m,\varepsilon_i)BE(A_i) \). Define \( V \) as an operator satisfying \( V|0 \rangle := \frac{1}{\sqrt{\alpha}} \sum_{i=1}^s \sqrt{c_i\alpha_i} |i \rangle \) where \( \alpha := \sum_{i=1}^s c_i\alpha_i \) and \( U := \sum_i |i \rangle \langle i | \otimes U_i \). Then \[ V^\dagger U V\in (\alpha,m+\lceil\log s\rceil,\sum_i c_i \varepsilon_i)BE(A). \]
\end{lemma}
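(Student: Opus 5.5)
The statement to prove is the final one in the appendix, Lemma~\ref{lemma be lcu} (LCU). The plan is to compute the top-left block of $V^\dagger U V$ directly and then bound the error by the triangle inequality.

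\textbf{Step 1: reduce to the exact top-left blocks.} For each $i$, write $\widetilde A_i := \alpha_i(\langle 0^m|\otimes I)U_i(|0^m\rangle\otimes I)$. By hypothesis, $\|A_i-\widetilde A_i\|_2\le \varepsilon_i$.

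\textbf{Step 2: compute the block of $V^\dagger U V$.} Project the $\lceil\log s\rceil$ control qubits and the $m$ ancillas onto $|0\rangle$. Let $\beta_i:=\sqrt{c_i\alpha_i/\alpha}$, so that $V|0\rangle=\sum_i\beta_i|i\rangle$. Because $U=\sum_i|i\rangle\langle i|\otimes U_i$ acts block-diagonally in the control register, we get
\[
(\langle 0|\langle 0^m|\otimes I)\,V^\dagger U V\,(|0\rangle|0^m\rangle\otimes I)=\sum_i \overline{\beta_i}\,\beta_i\,(\langle 0^m|\otimes I)U_i(|0^m\rangle\otimes I)=\sum_i \frac{c_i\alpha_i}{\alpha}\cdot\frac{\widetilde A_i}{\alpha_i}=\frac1\alpha\sum_i c_i\widetilde A_i .
\]
Here the $\beta_i$ are real and nonnegative, since $c_i>0$ and $\alpha_i>0$.

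\textbf{Step 3: bound the error.} Multiplying by $\alpha$ and applying the triangle inequality gives
\[
\Bigl\|A-\sum_i c_i\widetilde A_i\Bigr\|_2\le \sum_i c_i\|A_i-\widetilde A_i\|_2\le \sum_i c_i\varepsilon_i .
\]
This matches the claimed error. The ancilla count is $m+\lceil\log s\rceil$, because all $U_i$ share the same $m$ ancillas, and the control register needs $\lceil\log s\rceil$ qubits.

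The only subtle point is the normalization bookkeeping. We must check that $V$ is a legitimate unitary extension: $\sum_i c_i\alpha_i/\alpha=1$, so $V|0\rangle$ is a unit vector, and any completion of it to a unitary works. We must also check that the amplitude of $V|0\rangle$ is used on one side and its conjugate $V^\dagger$ on the other, so their product recovers $c_i\alpha_i/\alpha$. No other obstacle is expected.
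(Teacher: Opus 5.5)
Your proof is correct. Since $\langle 0|V^\dagger=\sum_i\overline{\beta_i}\langle i|$ and $U$ is block-diagonal in the control register, the top-left block is $\sum_i|\beta_i|^2\,\widetilde A_i/\alpha_i=\frac1\alpha\sum_i c_i\widetilde A_i$. The triangle inequality then gives exactly the error $\sum_i c_i\varepsilon_i$, with $m+\lceil\log s\rceil$ ancillas.

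The paper gives no proof of its own and simply cites \cite[Lemma~52]{qsvt}. Your direct computation is the argument that actually justifies the form stated here. The cited lemma uses a state-preparation pair and assumes a common normalization and a common error for all the $U_j$. Applying it literally (after rescaling each $U_i$ to a block-encoding of $A_i/\alpha_i$ with error $\varepsilon_i/\alpha_i$) only yields an error of order $\alpha\max_i(\varepsilon_i/\alpha_i)\ge\sum_i c_i\varepsilon_i$. Your calculation gives the sharper bound for unequal $\alpha_i$ directly.

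One small bookkeeping point to add: when $s$ is not a power of two, $U=\sum_{i=1}^s|i\rangle\langle i|\otimes U_i$ is not unitary as written. It must be completed on the unused control states, e.g.\ by the identity there. This does not affect the block, because $V|0\rangle$ has no support on those states.
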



\begin{lemma} \label{lemma unit error}
Let $\psi$ and $\varphi$ be two non-zero vectors. Then
\[
\left\| \frac{\psi}{\|\psi\|_2} - \frac{\varphi}{\|\varphi\|_2} \right\|_2 \leq \frac{2\|\psi-\varphi\|_2}{\|\psi\|_2}.
\]
\end{lemma}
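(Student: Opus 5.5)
The claim is $\bigl\|\psi/\|\psi\|_2-\varphi/\|\varphi\|_2\bigr\|_2\le 2\|\psi-\varphi\|_2/\|\psi\|_2$ for nonzero $\psi,\varphi$. I will prove it by inserting an intermediate vector and applying the triangle inequality. The natural pivot is $\varphi/\|\psi\|_2$, which agrees with the first normalized vector in its denominator and with the second in its numerator. This gives
\[
\left\|\frac{\psi}{\|\psi\|_2}-\frac{\varphi}{\|\varphi\|_2}\right\|_2
\le \left\|\frac{\psi}{\|\psi\|_2}-\frac{\varphi}{\|\psi\|_2}\right\|_2
+\left\|\frac{\varphi}{\|\psi\|_2}-\frac{\varphi}{\|\varphi\|_2}\right\|_2 .
\]

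The first term equals $\|\psi-\varphi\|_2/\|\psi\|_2$ exactly.

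For the second term, factor out $\varphi$ to get
\[
\|\varphi\|_2\left|\frac{1}{\|\psi\|_2}-\frac{1}{\|\varphi\|_2}\right|
=\frac{\bigl|\|\varphi\|_2-\|\psi\|_2\bigr|}{\|\psi\|_2}.
\]
By the reverse triangle inequality, $\bigl|\|\varphi\|_2-\|\psi\|_2\bigr|\le\|\psi-\varphi\|_2$, so the second term is also at most $\|\psi-\varphi\|_2/\|\psi\|_2$. Adding the two bounds gives the factor $2$.

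The argument is short, and the only real decision is the choice of pivot. Using $\psi/\|\varphi\|_2$ instead would put $\|\varphi\|_2$ in the denominator, which does not match the asymmetric form of the claim. With the pivot $\varphi/\|\psi\|_2$, every quantity is divided by $\|\psi\|_2$, so the bound comes out directly in the stated form.
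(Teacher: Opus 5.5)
Your proof is correct and matches the paper's argument step for step. It uses the same pivot $\varphi/\|\psi\|_2$, the same triangle-inequality split, and the same reverse-triangle-inequality bound on $\bigl|\|\varphi\|_2-\|\psi\|_2\bigr|$ to get the factor $2$.
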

\begin{proof}
By the triangle inequality,
\[\left\| \frac{\psi}{\|\psi\|_2} - \frac{\varphi}{\|\varphi\|_2} \right\|_2
=
\left\| \frac{\psi}{\|\psi\|_2}-\frac{\varphi}{\|\psi\|_2}+\frac{\varphi}{\|\psi\|_2}-\frac{\varphi}{\|\varphi\|_2}\right\|_2
\leq
\frac{\|\psi-\varphi\|_2}{\|\psi\|_2}
+
\|\varphi\|_2\left|\frac{1}{\|\psi\|_2}-\frac{1}{\|\varphi\|_2}\right|.
\]
Moreover,
\[
\|\varphi\|_2\left|\frac{1}{\|\psi\|_2}-\frac{1}{\|\varphi\|_2}\right|
=
\frac{\bigl|\|\varphi\|_2-\|\psi\|_2\bigr|}{\|\psi\|_2}
\leq
\frac{\|\psi-\varphi\|_2}{\|\psi\|_2}.
\]
Then we get the Lemma.
\end{proof}

\end{appendices}

\end{document}